\documentclass[dvipsnames,11pt]{article}
\usepackage{amsmath,amssymb,amsthm,color,bm,mathrsfs,extarrows,tikz,graphicx,mathtools,enumitem,fancybox,makecell}
\usepackage[square,sort,comma,numbers]{natbib}
\usepackage{subcaption}
\usepackage[ruled]{algorithm2e}
\usepackage[margin=1in]{geometry}
\usepackage{xcolor,bbm}
\usepackage[T1]{fontenc}
\usepackage[pagebackref]{hyperref}
\usepackage{comment}
\usepackage{soul}
\usepackage{ifthen}
\usepackage{mathdots}
\usepackage{enumitem}
\usepackage{mathpazo}

\allowdisplaybreaks

\setlist[itemize]{itemsep=0pt}
\setlist[enumerate]{itemsep=0pt}
\hypersetup{colorlinks=true,urlcolor=blue,linkcolor=blue,citecolor=[rgb]{.42,.56,.14},}
\usetikzlibrary{decorations.pathreplacing,decorations.markings,decorations.pathmorphing,decorations.shapes,arrows.meta,positioning,patterns}

\usepackage[capitalise,nameinlink]{cleveref}
\Crefname{lemma}{Lemma}{Lemmas}
\Crefname{fact}{Fact}{Facts}
\Crefname{theorem}{Theorem}{Theorems}
\Crefname{corollary}{Corollary}{Corollaries}
\Crefname{claim}{Claim}{Claims}
\Crefname{example}{Example}{Examples}
\Crefname{problem}{Problem}{Problems}
\Crefname{definition}{Definition}{Definitions}
\Crefname{notation}{Notation}{Notations}
\Crefname{assumption}{Assumption}{Assumptions}
\Crefname{subsection}{Subsection}{Subsections}
\Crefname{section}{Section}{Sections}
\Crefformat{equation}{(#2#1#3)}
\Crefname{figure}{Figure}{Figures}

\newtheorem{theorem}{Theorem}[section]
\newtheorem*{theorem*}{Theorem}

\newtheorem{proposition}[theorem]{Proposition}
\newtheorem*{proposition*}{Proposition}

\newtheorem*{property*}{Property}
\newtheorem{lemma}[theorem]{Lemma}
\newtheorem*{lemma*}{Lemma}
\newtheorem{corollary}[theorem]{Corollary}
\newtheorem*{corollary*}{Corollary}
\newtheorem*{conjecture*}{Conjecture}
\newtheorem{fact}[theorem]{Fact}
\newtheorem*{fact*}{Fact}

\newtheorem*{exercise*}{Exercise}
\newtheorem{hypothesis}[theorem]{Hypothesis}
\newtheorem*{hypothesis*}{Hypothesis}

\theoremstyle{definition}
\newtheorem{definition}[theorem]{Definition}

\newtheorem{exercise-easy}[theorem]{Exercise}
\newtheorem{exercise-med}[theorem]{Exercise}
\newtheorem{exercise-hard}[theorem]{Exercise$^\star$}
\newtheorem{claim}[theorem]{Claim}
\newtheorem*{claim*}{Claim}

\newtheorem{remark}[theorem]{Remark}
\newtheorem*{remark*}{Remark}

\newtheorem*{observation*}{Observation}

\DeclareSymbolFont{extraup}{U}{zavm}{m}{n}
\DeclareMathSymbol{\varheart}{\mathalpha}{extraup}{86}
\DeclareMathSymbol{\vardiamond}{\mathalpha}{extraup}{87}

\DeclareMathOperator*{\E}{\mathbb E}

\renewcommand{\Pr}{\operatorname*{\mathbf{Pr}}}

\newcommand{\eps}{\varepsilon}
\newcommand{\abs}[1]{\left| #1 \right|}

\newcommand{\abra}[1]{\left\langle #1 \right\rangle}
\newcommand{\pbra}[1]{\left( #1 \right)}
\newcommand{\sbra}[1]{\left[ #1 \right]}
\newcommand{\cbra}[1]{\left\{ #1 \right\}}
\newcommand{\floorbra}[1]{\left\lfloor #1 \right\rfloor}
\newcommand{\ceilbra}[1]{\left\lceil #1 \right\rceil}

\renewcommand{\mid}{\,\middle\vert\,}

\newcommand{\bin}{\{0,1\}}

\newcommand{\poly}{\mathsf{poly}}
\newcommand{\polylog}{\mathsf{polylog}}

\newcommand{\indicator}{\mathbbm{1}}

\newcommand{\LDTest}{\mathsf{LDTest}}
\newcommand{\AugLDTest}{\mathsf{AugLDTest}}
\newcommand{\vcsp}{VecCSP}
\newcommand{\svcsp}{SVecCSP}
\newcommand{\linearc}{\mathsf{l}}
\newcommand{\parallelc}{\mathsf{p}}

\newcommand{\pairx}{\mathsf{x}}
\newcommand{\pairy}{\mathsf{y}}
\newcommand{\pcppdelta}{\delta_\star}
\newcommand{\Enc}{\mathrm{Enc}}
\newcommand{\Dec}{\mathrm{Dec}}
\newcommand{\ckt}{\mathsf{ckt}}
\newcommand{\ldt}{\mathsf{ldt}}
\newcommand{\pcppckt}{\Pcal_{\sf ckt}}
\newcommand{\pcpptabletest}{\Pcal_{\sf mt}}
\newcommand{\pcppldt}{\Pcal_{\sf ldt}}
\newcommand{\eccim}{\mathrm{Im}}
\newcommand{\charF}{\mathsf{char}}
\newcommand{\pcppdoubletest}{\Pcal_{\sf dt}}
\newcommand{\pcppsingletest}{\Pcal_{\sf st}}

\newcommand{\Fbb}{\mathbb{F}}
\newcommand{\Hbb}{\mathbb{H}}
\newcommand{\Lbb}{\mathbb{L}}
\newcommand{\Nbb}{\mathbb{N}}

\newcommand{\Acal}{\mathcal{A}}

\newcommand{\Ecal}{\mathcal{E}}

\newcommand{\Pcal}{\mathcal{P}}

\newcommand{\Scal}{\mathcal{S}}

\newcommand{\val}{{\rm val}}
\newcommand{\nrm}{{\sf RM}}

\newcommand{\prm}{{\nrm}^{\mathbb F,m,d,t}}

\newcommand{\pcircuit}{{\sc CktVal}}

\newcommand{\probttt}{{\sc MultiTest}}
\newcommand{\singletest}{{\sc SingleTest}}
\newcommand{\doubletest}{{\sc DoubleTest}}
\newcommand{\svsat}{{\sc SVSat}}
\newcommand{\cktsize}{|\mathbb F|^m\poly|\mathbb F|}

\newcommand{\gapkcsp}{{$\varepsilon$-Gap $k$-Variable CSP}}

\renewcommand{\paragraph}[1]{\medskip\noindent\textbf{#1.}}

\renewcommand{\tilde}{\widetilde}
\renewcommand{\bar}{\overline}
\renewcommand{\hat}{\widehat}
\newcommand{\lowerbound}[1]{#1/2^{\omega({\sqrt{\log #1}\log \log #1})}}

\newcommand{\lowerboundK}{\lowerbound{K}}

\title{Almost Optimal Time Lower Bound for Approximating Parameterized Clique, CSP, and More, under ETH}

\author{
Venkatesan Guruswami\thanks{Simons Institute for the Theory of Computing, and Departments of EECS and Mathematics, UC Berkeley. Email: {\tt venkatg@berkeley.edu}. Research supported in part by NSF grants CCF-2228287 and CCF-2211972 and a Simons Investigator award. }
\and 
Bingkai Lin\thanks{State Key Laboratory for Novel Software Technology, Nanjing University. Email: \texttt{lin@nju.edu.cn}}
\and 
Xuandi Ren\thanks{Department of EECS, UC Berkeley. Email: \texttt{xuandi\_ren@berkeley.edu}. Supported in part by NSF grant CCF-2228287.} 
\and
Yican Sun\thanks{School of Computer Science, Peking University. Email: \texttt{sycpku@pku.edu.cn}}
\and
Kewen Wu\thanks{Department of EECS, UC Berkeley. Email: \texttt{shlw\_kevin@hotmail.com}. Supported by a Sloan Research Fellowship and NSF CAREER Award CCF-2145474.}
}

\date{}

\parskip=0.6ex
\begin{document}

\maketitle

\begin{abstract}
The Parameterized Inapproximability Hypothesis (PIH), which is an analog of the PCP theorem in parameterized complexity, asserts that, there is a constant $\varepsilon> 0$ such that for any computable function $f:\mathbb{N}\to\mathbb{N}$, no $f(k)\cdot n^{O(1)}$-time algorithm can, on input a $k$-variable  CSP instance with domain size $n$, find an assignment satisfying $1-\varepsilon$ fraction of the constraints. A recent work by Guruswami, Lin, Ren, Sun, and Wu (STOC'24) established PIH under the Exponential Time Hypothesis (ETH). 

In this work, we improve the quantitative aspects of PIH and prove (under ETH) that approximating sparse parameterized CSPs within a constant factor requires $n^{k^{1-o(1)}}$ time. This immediately implies that, assuming ETH, finding a $(k/2)$-clique in an $n$-vertex graph with a $k$-clique requires $n^{k^{1-o(1)}}$ time. We also prove almost optimal time lower bounds for approximating $k$-ExactCover and Max $k$-Coverage.

Our proof follows the blueprint of the previous work to identify a "vector-structured" ETH-hard CSP whose satisfiability can be checked via an appropriate form of "parallel" PCP. Using further ideas in the reduction, we guarantee additional structures for constraints in the CSP. We then leverage this to design a parallel PCP of almost linear size based on Reed-Muller codes and derandomized low degree testing. 
\end{abstract}

\clearpage

\tableofcontents

\clearpage

\section{Introduction}\label{sec:intro}

 One of the goals of complexity theory is to pinpoint the asymptotically optimal time (or other resource) needed to solve basic computational problems or classes of problems. The theory of NP-completeness attacks this at a coarse level, but modern complexity theory also has tools to give more fine-grained information on computational complexity. 

 A common setting for fine-grained understanding of computational hardness is considering \emph{parameterized problems}. Under this setting, each instance is attached with an additional parameter $k$ indicating some specific quantities (e.g., the optimum or the treewidth). We treat $k$ as some super constant that is much smaller than the instance size $n$ and consider the existence or absence of algorithms with running time depends both on $n$ and $k$ (e.g., with running time $2^{2^k}n^{O(1)}$, or $n^{\sqrt{k}}$). The hardness of parameterized problems is studied under the realm of  \emph{parameterized complexity theory}~\cite{FG06}. It is a central challenge to figure out the minimal time (depending on both $n$ and $k$) to solve prototypical parameterized problems.

 A representative example is the {\sc $k$-Clique} problem parameterized by the optimum $k$, which is one of the most fundamental problems in parameterized complexity theory: given an $n$-vertex graph as input, determine if it has a clique of size $k$. The naive brute force algorithm takes roughly $n^k$ time.
 Using fast matrix multiplication, there are better algorithms that take $n^{\omega k/3}$ time, where $\omega$ is the matrix multiplication exponent. 
 On the hardness side, it is known that no algorithm can decide {\sc $k$-Clique} within running time $f(k)n^{o(k)}$, for any computable function $f(k)$, under the widely-considered Exponential Time Hypothesis (ETH) \cite{CHKX06}, which states that algorithms cannot solve {\sc 3SAT} formulas on $n$ variables within running time $2^{o(n)}$. The optimal running time for $k$-clique is therefore pinpointed to be $n^{\Omega(k)}$, assuming ETH.

 Can one design a faster algorithm if one settles for \emph{approximating} {\sc $k$-Clique}?
 For example, what if we only want to find a clique of size $k/2$ in a graph that is promised to have a $k$-clique.\footnote{This can also be formulated as a ``gap" decision problem of distinguishing graphs with a $k$-clique from those which do not even have a $(k/2)$-clique.} 

 It was shown that such an approximation to {\sc $k$-Clique} still requires the tightest $f(k)n^{\Omega(k)}$ time~\cite{CCK+17}, under the very strong assumption Gap-ETH. 
 The Gap-ETH postulates an exponential time lower bound of approximating {\sc Max 3-SAT} within a constant ratio.
 The constant gap baked into the assumption is then transformed into a constant gap for approximating {\sc $k$-Clique}.
 Though Gap-ETH has been proved under particular strengthening of ETH~(smooth version of ETH)~\cite{App17},
 a theoretically more satisfactory result is to obtain the hardness of approximating {\sc $k$-Clique} under the original ETH.  Under ETH, weaker time lower bounds were known for constant-factor approximations of {\sc $k$-Clique}: $f(k)n^{\Omega\pbra{\sqrt[6]{\log k}}}$ in \cite{Lin21}, which was later improved to $f(k)n^{\Omega(\log k)}$ in \cite{LRSW22,CFLL23} and $f(k)n^{k^{\Omega(1/\log \log k)}}$ in \cite{LRSW23}.\footnote{There is another line of work on improving the inapproximability factor of {\sc $k$-Clique} under the minimal hypothesis {\sf W[1]$\neq$FPT}: constant factor in \cite{Lin21}, and the improved $k^{o(1)}$ in \cite{KK22,CFLL23}.}
 However, all these approaches cannot obtain lower bounds better than $f(k)n^{\Omega(\sqrt{k})}$ due to the coding theoretic barriers~\cite{KT00}.

 However, this paper significantly improves this lower bound, assuming only the original ETH.
 \begin{theorem}
 \label{thm:kclique-harndess}
     Assume ETH. For any constant $\varepsilon >0$ and any computable function $f(k)$, any algorithm that approximates {\sc $k$-Clique} within an $\varepsilon$ ratio must take runtime $f(k)n^{k^{1-o(1)}}$.
 \end{theorem}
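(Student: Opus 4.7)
The plan is to derive \Cref{thm:kclique-harndess} from the paper's central inapproximability result for sparse parameterized CSPs via the classical FGLSS reduction. First I would invoke the paper's main CSP hardness theorem: under ETH, for every constant $\varepsilon > 0$, distinguishing satisfiable from $\varepsilon$-satisfiable instances of a $k$-variable CSP of constant arity, domain size $n$, and $m = \Theta(k)$ constraints requires $n^{k^{1-o(1)}}$ time.

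Next, I would perform the FGLSS transformation on such a gap CSP $\Phi$ with constraints $C_1,\dots,C_m$. Build a graph $G$ on vertex set
\[
V(G) = \cbra{(C_i,\alpha) : 1 \le i \le m,\ \alpha \text{ is a satisfying local assignment to } C_i},
\]
with an edge between $(C_i,\alpha)$ and $(C_j,\beta)$ iff $i \neq j$ and $\alpha,\beta$ agree on every shared variable. A standard analysis yields $\omega(G) = \max_\sigma \abs{\cbra{i : \sigma \text{ satisfies } C_i}}$, since any clique selects at most one local view per constraint and all selected views are pairwise consistent, hence extend to a partial assignment satisfying every chosen constraint. Consequently $\omega(G) = m$ if $\Phi$ is satisfiable and $\omega(G) \le \varepsilon m$ if $\Phi$ is only $\varepsilon$-satisfiable, so any algorithm approximating max clique in $G$ within ratio better than $\varepsilon$ decides the gap CSP.

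Finally I would check parameter bookkeeping: $G$ has $N = m \cdot n^{O(1)} = n^{O(1)}$ vertices and target clique size $k' := m = \Theta(k)$, so an $f(k') \cdot N^{o((k')^{1-o(1)})}$-time algorithm for $\varepsilon$-approximating $k'$-Clique would yield an $f'(k) \cdot n^{o(k^{1-o(1)})}$-time algorithm for the gap CSP, contradicting ETH via the CSP lower bound. The main obstacle is the underlying CSP hardness itself---the paper's core technical contribution, built from an almost-linear-size parallel PCP using Reed--Muller codes and derandomized low-degree testing. Once that is in hand, the FGLSS step is classical and gap-preserving, and the only nontrivial sanity check is that the sparsity guarantee $m = \Theta(k)$ forces $k' = \Theta(k)$, so the lower bound transfers to $k'$-Clique without loss in the exponent.
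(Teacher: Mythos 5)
Your proposal identifies the right tool (FGLSS) and the right parameter bookkeeping, but it relies on a strengthening of the paper's CSP theorem that the paper does not in fact prove, and consequently misses the gap-amplification step.

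Specifically, you invoke ``under ETH, for every constant $\varepsilon > 0$, distinguishing satisfiable from $\varepsilon$-satisfiable instances\ldots{} requires $n^{k^{1-o(1)}}$ time.'' The paper's \Cref{thm:main-intro} (formally \Cref{thm:main}) only establishes this for \emph{some fixed} constant $\eps_0\in(0,1)$, and the gap is between fully satisfiable and $(1-\eps_0)$-satisfiable, where $\eps_0$ may well be tiny. The paper's \Cref{cor:main-intro} does boost the soundness to an arbitrary constant $\varepsilon$ via parallel repetition, but at the cost of degrading the exponent to $k^{\Omega(1/\log(1/\varepsilon))}$, which is not $k^{1-o(1)}$ once $\varepsilon$ is small. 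So the strong ``for every $\varepsilon$'' CSP statement you feed into FGLSS is not available. Applying FGLSS to what \emph{is} available only yields $k$-Clique hardness for distinguishing clique size $m$ from clique size $\le (1-\eps_0)m$, i.e.\ an approximation ratio of $1-\eps_0$ close to $1$, not the arbitrary constant ratio $\varepsilon>0$ demanded by \Cref{thm:kclique-harndess}.

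The paper closes this gap by composing FGLSS with an expander-based graph-product gap amplification (in the style of \cite{AFWZ95,LRSW22}), which amplifies the clique approximation gap from $1-\eps_0$ to any constant $\varepsilon>0$ while blowing up the parameter $k$ by only a constant factor depending on $\varepsilon$; since that constant is absorbed into $k^{1-o(1)}$, the $f(k)\,n^{k^{1-o(1)}}$ lower bound survives. This amplification step is stated explicitly in the paper (``\Cref{thm:kclique-harndess} follows immediately by combining \Cref{thm:main-intro}, the FGLSS reduction, and the expander-based gap-amplification procedure'') and is missing from your proposal. Everything else you wrote --- the FGLSS vertex/edge construction, the observation that $\omega(G)$ equals the maximum number of simultaneously satisfiable constraints, and the check that $N=n^{O(1)}$ and $k'=\Theta(k)$ so the exponent is preserved --- is correct and matches the paper's intended route.
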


To prove the theorem above, we follow a similar idea of proving the NP-hardness of approximating cliques, which relies on the NP-hardness of constant approximating CSP (a.k.a., the PCP theorem) and a subsequent FGLSS reduction~\cite{FGLSS96}. 
In the parameterized setting, we can apply an analogous reduction.
\begin{itemize}
    \item We first establish a near-optimal lower bound for approximating (sparse) ``parameterized CSPs'' with $k$ variables, $O(k)$ constraints, an alphabet of size $n$, and some constant inapproximability factor.
    \item Then \Cref{thm:kclique-harndess} follows immediately by the FGLSS reduction~\cite{FGLSS96} and an expander-based gap-amplification procedure \cite{AFWZ95,LRSW22}.
\end{itemize}

The first step is equivalent to establishing a quantitative version of the Parameterized Inapproximability Hypothesis (PIH)~\cite{LRSZ20}, which plays the role of the PCP theorem in the parameterized complexity theory. 
The first quantitative version of PIH was established under Gap-ETH~\cite{CCK+17}, with the lower bound $f(k)n^{\Omega(\log k)}$. A recent improvement proved PIH under ETH~\cite{GLRSW24}, with a weaker lower bound $f(k)n^{\Omega\pbra{\sqrt{\log \log k}}}$. However, both results are too weak to establish~\Cref{thm:kclique-harndess}. In this paper, we make a significant quantitative improvement to the reduction in~\cite{GLRSW24}, obtaining our main theorem stated below.

\begin{theorem}[Informal version of ~\Cref{thm:main}]
\label{thm:main-intro}
Assume ETH. For some constant $\eps \in (0,1)$, for any computable function $f(k)$, every algorithm that takes as input a satisfiable parameterized 2CSP instance with $k$ variables, $O(k)$ constraints and size-$n$ alphabet finds an assignment satisfying $1-\eps$ fraction of the constraints, must take $f(k)n^{k^{1-o(1)}}$ time.
\end{theorem}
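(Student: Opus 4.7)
The plan is to follow the blueprint of Guruswami-Lin-Ren-Sun-Wu (STOC'24)—reduce a standard ETH-hard problem to a structured CSP, then compose with a short parallel PCP—but to sharpen both pieces so that the alphabet blowup becomes nearly linear. Starting from ETH, $3$SAT on $N$ variables requires $2^{\Omega(N)}$ time, and the aim is to produce in polynomial time a satisfiable parameterized $2$CSP with $k$ variables, $O(k)$ constraints, alphabet size $n$, and a constant gap, where $k$ and $n$ are coupled so that $n^{k^{1-o(1)}} \geq 2^{\Omega(N)}$; an $f(k)\,n^{k^{1-o(1)}}$-time approximation algorithm for such a $2$CSP would then contradict ETH.

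First I would construct an intermediate \emph{vector-structured} satisfiable CSP (the $\svcsp$ in the paper's notation), whose variables are vectors in $\Fbb^m$ for a suitable prime power $|\Fbb|$ and dimension $m$, and whose constraints are restricted to a short list of algebraic types---linear projections ($\linearc$), coordinate-wise or \emph{parallel} equalities ($\parallelc$), and canonical coordinate permutations ($\canopermc$), matching the paper's macros. The reduction partitions the $N$ variables of the starting $3$SAT instance into $k$ blocks of size roughly $N/k$, identifies each block's assignment with a vector, and rewrites each $3$SAT clause as a constant-arity constraint on the blocks holding its literals; a careful encoding guarantees that every constraint falls in the allowed algebraic class. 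This produces a satisfiability-hard vector-CSP with $k$ variables, $O(k)$ constraints, and alphabet roughly $2^{N/k}$---a step already available from prior work, but which I would strengthen by imposing additional regularity on the constraints so that they interact cleanly with the parallel PCP to follow.

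The heart of the argument is a \emph{parallel PCP} for this vector-CSP whose proof length is nearly linear in the alphabet size. Each vector-valued variable is replaced by the evaluation table of a degree-$d$ polynomial $\Fbb^m \to \Fbb$---a Reed-Muller codeword---and the verifier performs a derandomized low-degree test ($\LDTest$, and an augmented variant $\AugLDTest$ handling several codewords in lockstep) together with a handful of local consistency checks that exploit the $\linearc/\parallelc$ structure of the underlying constraints. Aggregating the verifier's queries into a new $2$CSP yields the desired $k$-variable, $O(k)$-constraint instance with alphabet $n = |\Fbb|^m \cdot \poly|\Fbb|$ and constant soundness; tuning $m, d, |\Fbb|$ so that $n \approx 2^{N/k^{1-o(1)}}$ produces the final $n^{k^{1-o(1)}}$ lower bound.

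The main obstacle is the derandomization of the low-degree test and its soundness analysis in the parallel setting. A naive line-versus-point test uses $\Theta(m \log |\Fbb|)$ random bits per query and thereby generates roughly $n^{\Theta(1)}$ test constraints after composition, which would wipe out the gains of the vector structure. To bring the number of test constraints down to $n^{1+o(1)}$, the collection of test lines must itself be pseudorandom---drawn, say, from a small-bias family or a carefully chosen subspace---while still admitting a constant-soundness analysis when several Reed-Muller encodings are queried against a common random seed. Making this derandomized low-degree test compatible with the parallel checks inherited from the $\svcsp$ structure is the step I expect to be the chief technical difficulty, and where the quantitative improvement over \cite{GLRSW24} is ultimately extracted.
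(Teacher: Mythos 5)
Your proposal follows essentially the same blueprint as the paper: an ETH-hard starting point is sparsified and blocked into a vector-structured CSP with $k$ variables, $O(k)$ constraints, and alphabet $\approx 2^{N/k}$; a parallel Reed–Muller encoding of the assignment is then checked by a derandomized (small-bias) low-degree test plus local consistency tests, and the resulting PCPP is converted back into a gap 2CSP. The paper's extra care is in making the linear constraints a perfect matching $y_i = M_i x_i$ so that a single auxiliary RM-encoded word $\hat z = \hat y - \hat M \hat x$ suffices to verify them with almost-linear overhead, and in using a sparsification lemma (from Marx and Karthik–Manurangsi–Pratt–Saurabh) to keep the grouped instance at $O(k)$ constraints with only a $\log k$ alphabet overhead — both of which your sketch gestures toward but does not pin down.
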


Combining with the parallel repetition in projection games~\cite{rao08}, we can immediately boost the soundness to any constant $\eps>0$ with lower bound $f(k)n^{k^{\Omega\left(1/\log(1/\varepsilon)\right)}}$.

\begin{corollary}
\label{cor:main-intro}
Assume ETH. For any computable function $f(k)$ and any constant $\eps\in (0,1)$, every algorithm that takes as input a satisfiable parameterized 2CSP instance with $k$ variables, $O(k)$ constraints and size-$n$ alphabet finds an assignment satisfying $1-\eps$ fraction of the constraints, must take $f(k)n^{k^{\Omega\left(1/\log(1/\varepsilon)\right)}}$ time.
\end{corollary}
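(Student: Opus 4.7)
The plan is to amplify the constant soundness from \Cref{thm:main-intro} down to an arbitrary $\eps\in(0,1)$ by Rao's parallel repetition theorem for projection games~\cite{rao08}. I would start with the satisfiable 2CSP $\Phi$ given by \Cref{thm:main-intro}, ensuring (or arranging via a standard label-extension gadget) that $\Phi$ is a projection game with $k$ variables, $O(k)$ constraints, alphabet of size $n$, and soundness $1-\eps_0$ for some fixed constant $\eps_0>0$. Viewing $\Phi$ as a two-prover game in the usual way (verifier samples a constraint, sends one endpoint to each prover, accepts iff the two answers satisfy the constraint), the $t$-fold parallel repetition $\Phi^{\otimes t}$ becomes a new 2CSP whose variables are $t$-tuples of original variables, whose constraints are $t$-tuples of original constraints, and whose alphabet is $\Sigma^t$. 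Its parameters are $k' = k^t$ variables, $(O(k))^t = O(k')$ constraints (with a constant depending on $\eps$), and alphabet size $n' = n^t$.

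Next, I would invoke Rao's theorem, which for projection games guarantees value decay $(1-\poly(\eps_0))^{\Omega(t)}$. Choosing $t = C\log(1/\eps)$ for a sufficiently large constant $C$ depending only on $\eps_0$ drives the soundness of $\Phi^{\otimes t}$ below $\eps$. The concluding step is a parameter chase. Suppose for contradiction there is an algorithm solving $\Phi^{\otimes t}$ in time $f(k')(n')^{(k')^{\alpha/\log(1/\eps)}}$ for a constant $\alpha>0$ to be chosen. Substituting $k'=k^t$, $n'=n^t$, and $t = C\log(1/\eps)$,
\[
f(k^t)\,(n^t)^{(k^t)^{\alpha/\log(1/\eps)}} \;=\; f'(k)\cdot n^{C\log(1/\eps)\cdot k^{C\alpha}}.
\]
Setting $\alpha = 1/(2C)$ gives $C\alpha = 1/2$, so the exponent of $n$ is $O(k^{1/2}) = k^{1-\Omega(1)}$, contradicting the $n^{k^{1-o(1)}}$ lower bound of \Cref{thm:main-intro}.

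The main obstacle I anticipate is the projection-game requirement: Rao's theorem exploits projection structure to get the strong $e^{-\Omega(t)}$ decay, whereas general 2CSP parallel repetition bounds are weaker and would degrade the final exponent in $k$. I would handle this by verifying that the instance produced by \Cref{thm:main-intro} is already (or can be converted into) a projection game via standard label-cover gadgets that blow up $k$, the number of constraints, and the alphabet size only by constant factors. Secondary bookkeeping points are that $(O(k))^t$ really is $O(k')$ once one absorbs the $\eps$-dependent constant $c^t$, and that the substitution carries the additive $t$ factor in the exponent harmlessly since $t$ is constant in $k$. Once projection structure is in hand, the substitution above completes the proof.
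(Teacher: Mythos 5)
Your proposal is correct and matches the paper's (one-line) justification exactly: the paper also obtains this corollary by applying Rao's parallel repetition for projection games to \Cref{thm:main-intro}, with precisely the parameter substitution $k'=k^t$, $n'=n^t$, $t=O(\log(1/\eps))$ that you carry out. The only nitpick is that the standard conversion to a projection (label-cover) game squares the alphabet (constraint-side labels live in $\Sigma^2$) rather than blowing it up by a constant factor, but as your own bookkeeping shows this only changes the exponent of $n$ by a constant and does not affect the conclusion.
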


By the discussion above, \Cref{thm:kclique-harndess} follows immediately by combining \Cref{thm:main-intro}, the FGLSS reduction, and the expander-based gap-amplification procedure~\cite{AFWZ95,LRSW22}. Moreover, using \Cref{thm:main-intro} as the foundation, we can obtain, via reductions, strong inapproximability results for other fundamental parameterized problems. Below, we list two application highlights and refer interested readers to \cite{GLRSW24} for more detailed discussions.

\paragraph{Application Highlight: {\sc $k$-ExactCover}} 
{\sc $k$-ExactCover} (a.k.a., {\sc $k$-UniqueSetCover}) is one of the canonical problems in the parameterized world. It is a weaker version of the {\sc $k$-SetCover} problem.
For the $\rho$-approximation version of {\sc $k$-ExactCover} with $\rho \ge 1$, denoted by $(k, \rho\cdot k)$-{\sc ExactCover}, the instance consists of a universe $U$ and a collection $\mathcal S$ of subsets of $U$, with a goal to distinguish the following two cases.
\begin{itemize}
    \item There exists $k$ \emph{disjoint} subsets from $\mathcal S$ whose union equals the whole universe $U$.
    \item The union of any $\rho\cdot k$ subsets of $\Scal$ is a proper subset of $U$. 
\end{itemize}
Here, the parameter is the optimum $k$. We remark that the additional disjointness requirement in the completeness part makes {\sc $(k, \rho\cdot k)$-ExactCover} an excellent intermediate problem for proving the hardness of other problems \cite{ABSS97, manurangsi2020tight}. 

On the algorithmic side, the $(k, \rho\cdot k)$-{\sc ExactCover} has a brute-force $|\Scal|^{\Omega(k)}$-time algorithm. However, no $|\Scal|^{o(k)}$-time algorithms are known. 
Thus, it is natural to consider whether we can establish a matching $|\Scal|^{\Omega(k)}$ lower bound.
Our work almost establishes a lower bound for $(k, \rho\cdot k)$-{\sc ExactCover}, under ETH, for some constant $\rho$. Previously, this was only known under the Gap-ETH assumption~\cite{manurangsi2020tight}.

\begin{theorem}\label{thm:hardness-exactcover}
    Assume ETH. There exists some constant $\rho\ge 1$, such that for any computable function $f(k)$, any algorithm deciding $(k, \rho\cdot k)$-{\sc ExactCover} must take runtime $f(k)|\Scal|^{k^{1-o(1)}}$.
\end{theorem}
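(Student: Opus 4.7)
The plan is to reduce from the approximate parameterized 2-CSP hardness of \Cref{cor:main-intro} to $(k, \rho k)$-{\sc ExactCover} via a classical Feige/ABSS-style set-cover gadget reduction, adapted to preserve the exact-cover (disjointness) structure. Starting from a gap parameterized 2-CSP instance $\Psi$ on $k$ variables $x_1,\ldots,x_k$, $O(k)$ constraints over an alphabet $\Sigma$ of size $n$, with perfect completeness and soundness $1-\eps$ for some constant $\eps$ given by \Cref{cor:main-intro}, I would construct a Set Cover instance $(U,\Scal)$ whose universe is the disjoint union of constraint-gadgets $U_C$ (one per constraint $C$ of $\Psi$), and whose sets are indexed by variable-value pairs: for each $(x_j,a)\in [k]\times\Sigma$ there is a set $S_{x_j,a}$ that, inside every $U_C$ in which $x_j$ appears, covers exactly the portion of $U_C$ that is ``consistent'' with assigning $a$ to $x_j$.

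Each gadget $U_C$ for a binary constraint on $(x_u,x_v)$ with satisfying relation $\psi\subseteq\Sigma\times\Sigma$ is built from a bipartite partition-based set system: $U_C$ is split into equal-size blocks indexed by pairs $(a,b)\in\psi$, and each $S_{x_u,a}$ and $S_{x_v,b}$ is encoded to cover complementary disjoint halves of the $(a,b)$-block. In the YES case, for a satisfying assignment $\sigma$ the canonical $k$ sets $\{S_{x_j,\sigma(x_j)}\}_{j\in[k]}$ produce a \emph{disjoint} cover of $U$, exactly one set per variable. On top of this bipartite skeleton, a Feige-style $(m,L)$-cover amplification is composed so that whenever no assignment satisfies more than $1-\eps$ fraction of constraints, any collection of fewer than $\rho k$ sets must leave some gadget $U_C$ uncovered, for a constant $\rho=\rho(\eps)>1$.

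Parameter tracking is straightforward: the YES case uses exactly $k$ disjoint sets; $|\Scal|=k\cdot|\Sigma|=O(kn)$ is polynomial in $n$; and $\rho$ depends only on $\eps$. Hence any $f(k)|\Scal|^{k^{1-o(1)}}$-time algorithm for $(k,\rho k)$-{\sc ExactCover} would yield an $f(k)\cdot\poly(n)^{k^{1-o(1)}}=f'(k)\,n^{k^{1-o(1)}}$-time algorithm for $\Psi$, contradicting \Cref{cor:main-intro} under ETH.

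The main technical obstacle is constructing a constraint gadget that simultaneously delivers \emph{both} exact-cover completeness (a fully disjoint cover by the $k$ canonical sets) and a multiplicative soundness gap: generic Feige-style set systems achieve the soundness gap but allow overlap in the completeness cover, and the disjointness requirement is precisely what distinguishes {\sc ExactCover} reductions from ordinary {\sc SetCover} ones. Fortunately the refined gadget of \cite{manurangsi2020tight} (as used in the reduction of \cite{GLRSW24}) achieves both properties with only an $O(1)$ multiplicative blow-up in $k$ and a $\poly(n)$ blow-up in universe size, so the $k^{1-o(1)}$-in-the-exponent lower bound from \Cref{cor:main-intro} transfers essentially unchanged, giving \Cref{thm:hardness-exactcover}.
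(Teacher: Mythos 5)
Your proposal follows the same high-level route the paper takes: start from the ETH-hard parameterized gap 2-CSP and invoke a black-box Feige-style, parameter-preserving reduction to $(k,\rho k)$-{\sc ExactCover}, then track parameters. The paper's actual proof is a one-liner: it cites the reduction of \cite{guruswami2023baby}, which imitates Feige's set-cover reduction and goes through an ``Average Baby PIH over rectangular constraints'' intermediate step, applied to \Cref{thm:main-intro}. You instead attribute the gadget to \cite{manurangsi2020tight}; that paper does contain an ExactCover/UniqueSetCover reduction, but it is engineered for the Gap-ETH setting via agreement-testing machinery, whereas the reduction the paper uses is the one tailored to PIH-style hardness in \cite{guruswami2023baby}. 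This is a citation-level discrepancy rather than a structural one, since both bottom out in a Feige-style composable gadget.

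One caveat on your explicit gadget sketch: as written, splitting $U_C$ into blocks indexed by satisfying pairs $(a,b)\in\psi$ and having $S_{x_u,a}$, $S_{x_v,b}$ cover complementary halves of the $(a,b)$-block does not make the two canonical sets cover all of $U_C$ in the YES case---they would only cover the single $(\sigma(x_u),\sigma(x_v))$-block, leaving the other blocks uncovered, so the exact-cover completeness would fail. This is exactly the subtlety you flag as ``the main technical obstacle,'' and since you immediately defer to a known gadget, the argument still goes through; but the informal picture should not be taken literally. The rest of your accounting ($|\Scal|=\poly(n,k)$, parameter $k\mapsto O(k)$, so the exponent $k^{1-o(1)}$ transfers) matches the paper.
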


To prove the theorem above, We note that the previous work~\cite{guruswami2023baby} achieves a parameter-preserving reduction from PIH to $(k, \rho\cdot k)$-{\sc ExactCover} for any constant $\rho$, by imitating the beautiful reduction of Feige~\cite{feige1998threshold}.\footnote{In fact, a weaker variant of PIH, named Average Baby PIH over rectangular constraints, suffices.}

Therefore, \Cref{thm:hardness-exactcover} follows by combining the reduction of \cite{guruswami2023baby} with our \Cref{thm:main-intro}. Applying \Cref{cor:main-intro}, we can boost the ratio $\rho$ to any constant with lower bound $f(k)|\Scal|^{k^{\Omega(1/{\log(\rho)})}}$.

\begin{proposition}\label{thm:hardness-exactcover-boost}
    Assume ETH. For any constant $\rho\ge 1$ and computable function $f(k)$, any algorithm deciding $(k, \rho\cdot k)$-{\sc ExactCover} must take runtime $f(k)|\Scal|^{k^{\Omega(1/{\log(\rho)})}}$.
\end{proposition}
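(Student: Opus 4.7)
The plan is to derive Proposition~\ref{thm:hardness-exactcover-boost} by feeding the soundness-boosted Corollary~\ref{cor:main-intro}, rather than the base Theorem~\ref{thm:main-intro}, into the Feige-style parameter-preserving reduction from (a rectangular, Average Baby flavor of) PIH to $(k, \rho k)$-\textsc{ExactCover} established in \cite{guruswami2023baby, feige1998threshold}, and then carefully aligning the two quantitative parameters. Since this reduction already produces a $(k, \rho k)$-\textsc{ExactCover} instance for any constant $\rho$ from a 2CSP whose soundness is a suitable function of $\rho$, the only real content is to pick $\varepsilon = \varepsilon(\rho)$ so that the two lower bounds line up.

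Concretely, I would first invoke Corollary~\ref{cor:main-intro} with $\varepsilon = \varepsilon(\rho) \in (0,1)$ to be chosen, obtaining an $f(k) n^{k^{\Omega(1/\log(1/\varepsilon))}}$ time ETH lower bound for $(1-\varepsilon)$-approximating parameterized 2CSPs with $k$ variables, $O(k)$ constraints, and alphabet of size $n$. Applying the reduction of \cite{guruswami2023baby} in time $\poly(n)$, this produces a $(k', \rho k')$-\textsc{ExactCover} instance with $k' = O(k)$ and universe size $|\Scal| = \poly(n)$, where the achievable gap $\rho$ is controlled by the $1/\varepsilon$ soundness slack of the input CSP. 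Inspection of the reduction shows that $\varepsilon = 1/\poly(\rho)$ suffices to push the \textsc{ExactCover} gap up to $\rho$; in particular $\log(1/\varepsilon) = O(\log \rho)$, and hence $1/\log(1/\varepsilon) = \Omega(1/\log \rho)$. Any algorithm solving the resulting $(k', \rho k')$-\textsc{ExactCover} instance in time $g(k') |\Scal|^{k'^{o(1/\log \rho)}}$ would therefore decide the original 2CSP in time $g'(k) n^{k^{o(1/\log \rho)}}$, contradicting Corollary~\ref{cor:main-intro} and yielding the claimed $f(k) |\Scal|^{k^{\Omega(1/\log \rho)}}$ lower bound.

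I expect the main obstacle to be the interface between Corollary~\ref{cor:main-intro} and the reduction of \cite{guruswami2023baby}: that reduction consumes CSPs with \emph{rectangular} (projection-like) constraints in a specific Average Baby PIH format, so one has to confirm that the 2CSPs produced by Rao's parallel repetition~\cite{rao08} (used to boost the soundness in Corollary~\ref{cor:main-intro}) already satisfy these structural hypotheses, or else prepend a cheap transformation that enforces them without disturbing the quantitative lower bound. Beyond this bookkeeping of structural conditions and of the precise polynomial relating $\varepsilon$ to $\rho$, the proposition is a direct quantitative corollary of the two cited pieces, and no new coding-theoretic or PCP machinery should be required.
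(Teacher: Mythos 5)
Your proposal takes essentially the same approach as the paper: feed the soundness-boosted Corollary~\ref{cor:main-intro} (with $\varepsilon$ chosen as a suitable function of $\rho$ so that $\log(1/\varepsilon)=\Theta(\log\rho)$) into the parameter-preserving Feige-style reduction of \cite{guruswami2023baby} from (Average Baby, rectangular) PIH to $(k,\rho k)$-\textsc{ExactCover}. The paper treats this as a one-line consequence of Corollary~\ref{cor:main-intro} and the reduction of \cite{guruswami2023baby}, exactly as you do; the interface concern you flag (that parallel repetition of projection games preserves the rectangular constraint structure needed by \cite{guruswami2023baby}) is a legitimate but routine check that the paper implicitly elides.
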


\paragraph{Application Highlight: {\sc Max $k$-Coverage}} 
The {\sc Max $k$-Coverage} is the maximization variant of the $k$-{\sc SetCover} problem. For the $\rho$-approximation version of {\sc Max $k$-Coverage} with $\rho < 1$, denoted by {\sc Max $(\rho, k)$-Coverage}, the instances are the same as {\sc $k$-ExactCover} above, but the goal changes to distinguish the following two case:
\begin{itemize}
    \item There exists $k$ subsets from $\mathcal S$ whose union equals $\mathcal U$.
    \item Any $k$ subsets from $\mathcal S$ has the union size at most $\rho\cdot |\mathcal U|$.
\end{itemize}
{\sc Max $(\rho, k)$-Coverage} has been widely studied in previous literature. There exists a simple greedy algorithm solving {\sc Max $(1-\frac{1}{e}, k)$-Coverage} within polynomial runtime~\cite{Hoc97}. 

On the hardness side, a celebrated result of Feige~\cite{feige1998threshold} showed the NP-hardness of {\sc Max $(1-\frac{1}{e}+\varepsilon, k)$-Coverage} for any $\varepsilon>0$, thus proving a tight inapproximability result.

In the parameterized world, one can solve {\sc Max $k$-Coverage} in $|\Scal|^{k}$ time by brute force enumeration. On the other hand, Cohen-Addad, Gupta, Kumar, Lee, and Li~\cite{CGK+19} showed that assuming Gap-ETH, {\sc Max $(1-\frac{1}{e}+\varepsilon, k)$-Coverage} requires $f(k)|\Scal|^{k^{{\rm poly}(1/\varepsilon)}}$ runtime. 
Manurangsi~\cite{manurangsi2020tight} further improved this lower bound to the tightest $f(k)|\Scal|^{\Omega(k)}$ under Gap-ETH. 

Our work implies an almost-optimal time lower bound for {\sc Max $(\rho, k)$-Coverage} under ETH for some constant $\rho$.

\begin{theorem}
\label{hardness:maxkcoverage}
    Assume ETH. There exists some constant $\rho\in (0,1)$, such that for any computable function $f(k)$, any algorithm deciding {\sc Max $(\rho, k)$-Coverage} must take runtime $f(k)|\Scal|^{k^{1-o(1)}}$.
\end{theorem}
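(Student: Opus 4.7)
The plan is to deduce \Cref{hardness:maxkcoverage} from the parameterized CSP inapproximability of \Cref{thm:main-intro} by adapting the classical Feige-style reduction from label cover to Max Coverage, specialized to the parameterized regime following the approach of \cite{CGK+19} and \cite{manurangsi2020tight}. Since these prior works only required a gap-CSP hardness assumption (the role historically played by Gap-ETH), plugging in our new ETH-based gap-CSP hardness will upgrade their conditional conclusions.

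First, I would invoke \Cref{thm:main-intro} to obtain, under ETH, a gap-preserving hard instance of parameterized 2CSP with $k$ variables, $O(k)$ constraints, alphabet size $n$, and soundness $1-\eps_0$ for some absolute constant $\eps_0>0$, for which no algorithm can decide satisfiability in time $f(k)n^{k^{1-o(1)}}$. I then apply Feige's partition-system construction on top of this CSP: for each constraint $C$ of the CSP, introduce a fresh gadget universe $U_C$ of size $m_F = \poly(1/\eps_0)$ equipped with a Feige partition system (one partition per alphabet symbol, each into $L$ parts, such that any union of $L-1$ parts drawn from distinct partitions covers at most $(1-1/e+o(1))\cdot m_F$ elements). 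The universe of the Max Coverage instance is $U=\bigcup_C U_C$, of total size $O(k\cdot m_F)$. For each variable $x_u$ and each alphabet value $a$, create one subset $S_{u,a}\subseteq U$ that, for every constraint $C$ involving $u$, picks the block of $U_C$ indexed by the coordinate of $a$ dictated by $C$. The set family then has size $|\Scal| = O(kn)$, and we ask whether $k$ sets (morally, one per variable) cover $U$.

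Completeness is immediate: a satisfying assignment of the CSP yields $k$ sets that hit every part of every partition system, hence cover $U$. For soundness, a standard decoding argument shows that if some collection of $k$ sets covered more than a $(1-1/e+\delta)$ fraction of $U$, then averaging over constraints and rounding the chosen sets into a CSP assignment would satisfy strictly more than an $\eps_0$ fraction of constraints, contradicting the gap from \Cref{thm:main-intro}. Setting $\rho=1-1/e+\delta$ for sufficiently small $\delta$ relative to $\eps_0$ completes the reduction. Since $k$ is preserved exactly and $|\Scal|=O(kn)$, any algorithm for Max $(\rho,k)$-Coverage running in time $f(k)|\Scal|^{o(k^{1-o(1)})}$ would translate to an algorithm for the CSP of time $f'(k)n^{o(k^{1-o(1)})}$, violating the lower bound.

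The main obstacle is the parameter-preserving nature of the reduction: in the classical Feige framework one is free to blow $k$ up by logarithmic or polynomial factors, but in our regime even a mild $k\mapsto k\log k$ blow-up would weaken the exponent $k^{1-o(1)}$, and a blow-up from soundness amplification (if needed) must be controlled. Fortunately, the parameterized instantiations of Feige's construction in \cite{CGK+19,manurangsi2020tight} already keep the cardinality parameter exactly equal to the number of CSP variables, and the required constant soundness gap can be arranged within \Cref{thm:main-intro} (optionally boosted via \Cref{cor:main-intro} using a constant number of rounds of projection-game parallel repetition), so these pieces slot together without any loss in the exponent.
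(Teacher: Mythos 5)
Your proposal follows essentially the same route as the paper: reduce from the parameterized gap-CSP hardness of \Cref{thm:main-intro} via a Feige partition-system gadget in a cardinality-preserving fashion. This is precisely the reduction that the paper delegates to \cite[Sections 9.1 and 9.2]{manurangsi2020tight} and cites as a black box, so the high-level plan is sound.

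Two details in your soundness argument need correction. First, the decoding direction is reversed: \Cref{thm:main-intro} guarantees that in the no case at most a $1-\eps_0$ fraction of constraints is satisfiable, so the decoded assignment must satisfy \emph{strictly more than a $1-\eps_0$ fraction} of constraints (not more than $\eps_0$) to contradict the gap. Second, the claim that one can take $\rho = 1-\tfrac1e+\delta$ directly from the constant-gap 2CSP via the basic Feige gadget over-reaches. The partition-system gadget applied to a 2CSP with soundness $1-\eps_0$ only yields a coverage gap on the order of $\rho = 1-\Omega(\eps_0)$; pushing $\rho$ down to $1-\tfrac1e+\eps$ requires first shrinking the CSP soundness parameter via parallel repetition (i.e., \Cref{cor:main-intro}), which degrades the exponent in the time lower bound from $k^{1-o(1)}$ to $k^{\eps'}$. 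That is exactly why the paper states \Cref{hardness:maxkcoverage} for \emph{some} constant $\rho$ with the $k^{1-o(1)}$ exponent and defers the tight $1-\tfrac1e+\eps$ ratio to the following Proposition, which pays a polynomial loss in the exponent. Finally, your one-sentence soundness sketch glosses over the possibility that the $k$ chosen sets do not correspond to one set per variable; ruling this out is a nontrivial part of the analysis in \cite{manurangsi2020tight} and should be at least acknowledged.
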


\Cref{hardness:maxkcoverage} follows from our \Cref{thm:main-intro} and the analysis in \cite[Sections 9.1 and 9.2]{manurangsi2020tight} that accomplishes a gap-preserving reduction from {\sc $k$-CSP} to {\sc Max $k$-Coverage}. Applying \Cref{cor:main-intro}, we can boost the ratio to the tightest $1-\frac1e + \eps$ with lower bound $f(k)|\Scal|^{k^{\eps'}}$.

\begin{proposition}
    Assume ETH. for any constant $\eps\in (0,1)$ and computable function $f(k)$, there exists a constant $\eps' = \eps'(\eps)$, any algorithm deciding {\sc Max $(1-\frac1e+\eps, k)$-Coverage} must take runtime $f(k)|\Scal|^{k^{\eps'}}$.
\end{proposition}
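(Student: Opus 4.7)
The plan is to chain \Cref{cor:main-intro} with the gap-preserving reduction from parameterized $k$-CSP to {\sc Max $k$-Coverage} from \cite[Sections 9.1 and 9.2]{manurangsi2020tight}, exactly as in \Cref{hardness:maxkcoverage}, but feeding in the stronger boosted form of the CSP hardness instead of the fixed-gap version. That reduction takes a satisfiable $k$-variable 2CSP of soundness $\delta$ over a size-$n$ alphabet and produces a {\sc Max $k'$-Coverage} instance with $k' = O(k)$ and $|\Scal| = n^{O(1)}$ in which the YES/NO gap is between full coverage and coverage ratio $1 - 1/e + h(\delta)$, where $h(\delta) \to 0$ as $\delta \to 0$ by Feige's analysis adapted in that paper.

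Given any $\eps \in (0,1)$, I would first pick $\delta = \delta(\eps) > 0$ small enough that $h(\delta) \le \eps$, so that applying the reduction to a $\delta$-gap 2CSP yields exactly a {\sc Max $(1 - 1/e + \eps, k')$-Coverage} instance. By \Cref{cor:main-intro}, deciding such a $\delta$-gap 2CSP under ETH requires $f(k)\cdot n^{k^{\Omega(1/\log(1/\delta))}}$ time. Because the reduction is parameter-preserving up to a constant factor and size-preserving up to a polynomial, any algorithm deciding {\sc Max $(1 - 1/e + \eps, k)$-Coverage} in time $f(k)\,|\Scal|^{k^{\eps'}}$ would compose with the reduction to give a 2CSP algorithm running in time at most $f'(k)\, n^{k^{O(\eps')}}$, contradicting \Cref{cor:main-intro} as long as we take $\eps' = c/\log(1/\delta(\eps))$ for a sufficiently small constant $c > 0$. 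This defines the $\eps' = \eps'(\eps)$ promised in the statement.

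The main item requiring care is the explicit quantitative dependence of the coverage soundness on the CSP soundness, i.e., the function $h(\delta)$; this is precisely Feige's $(1 - 1/e)$-hardness machinery in the parameterized guise of \cite[Sections 9.1 and 9.2]{manurangsi2020tight}, and once $h$ is in hand the choice of $\delta(\eps)$ and therefore $\eps'(\eps)$ is immediate quantitative bookkeeping. No new technical ingredients beyond \Cref{cor:main-intro} and the cited reduction are required.
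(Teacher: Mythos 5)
Your proposal is correct and follows essentially the same route the paper takes: the paper also obtains this proposition by feeding the boosted soundness bound of \Cref{cor:main-intro} into the gap-preserving reduction of \cite[Sections 9.1 and 9.2]{manurangsi2020tight}, choosing the CSP soundness small enough (as a function of $\eps$) so that the coverage gap becomes $1-\frac1e+\eps$, which yields $\eps'=\eps'(\eps)$ via the $k^{\Omega(1/\log(1/\delta))}$ exponent. Your additional bookkeeping about the parameter- and size-preservation of the reduction is exactly the (implicit) content of the paper's one-line derivation.
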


\paragraph{New PCP Characterizations} 
An interesting byproduct of \Cref{thm:main-intro} is a new PCP theorem for 3SAT as follows.

\begin{theorem}[Informal Version of \Cref{thm:new-pcp-restate}]\label{thm:new-pcp}
For any parameter $k\ll n$, 
 {\sc 3SAT} has a constant-query PCP verifier with alphabet size $|\Sigma| = 2^{n/k^{1-o(1)}}$, runtime ${\rm poly}(|\Sigma|, n)$, and $
\log k+O\pbra{\sqrt{\log k}\log \log k}
$ random coins, which has perfect completeness and soundness $\frac12$.
\end{theorem}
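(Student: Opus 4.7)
The plan is to recognize that the statement is essentially a direct reformulation, in the language of PCPs, of the reduction underlying \Cref{thm:main-intro}. That reduction takes an $n$-variable 3SAT instance and produces a parameterized 2CSP $\Phi$ with $k$ variables, $O(k)$ constraints, and alphabet $\Sigma$. ETH-tightness forces $|\Sigma| = 2^{n/k^{1-o(1)}}$: a smaller alphabet would let the $f(k)|\Sigma|^{k^{1-o(1)}}$ algorithm solve 3SAT in $2^{o(n)}$ time, contradicting ETH.

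The PCP verifier is then defined by interpreting $\Phi$ directly: it treats the purported proof as an assignment $\pi \in \Sigma^{k}$ to the variables of $\Phi$, uses $\lceil\log(\text{number of constraints of }\Phi)\rceil$ random coins to select a uniformly random constraint $C$, queries the two positions of $\pi$ on which $C$ depends, and accepts iff the queried values satisfy $C$. This yields a $2$-query verifier with perfect completeness and runtime $\mathrm{poly}(|\Sigma|, n)$; the randomness complexity fits within $\log k + O\pbra{\sqrt{\log k}\log\log k}$, with the slack beyond $\log k + O(1)$ either absorbed unused, or accounting for the polylogarithmic overhead that the detailed reduction of \Cref{thm:main-intro} incurs inside its $k^{1-o(1)}$ parameter.

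To push the soundness from the $1-\eps$ guaranteed by \Cref{thm:main-intro} down to $\tfrac{1}{2}$, I would apply a constant-length expander-walk gap amplification over the constraint graph of $\Phi$, as in Dinur's proof of the PCP theorem: the verifier samples a walk of length $t = O(1/\eps) = O(1)$ on a constant-degree expander placed on the constraint vertices, queries every constraint on the walk, and accepts iff they all pass. The soundness then drops to $(1 - \Omega(\eps))^{t} \le \tfrac{1}{2}$, the query complexity remains constant, and the randomness grows only by an additive $O(t) = O(1)$, comfortably within the target bound.

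The main obstacle is precisely this soundness boost: naive sequential repetition would multiply the random-coin count by a constant and blow past $\log k + O\pbra{\sqrt{\log k}\log\log k}$, so one must use an additive scheme like expander walks (or invoke \Cref{cor:main-intro} to fold a projection-games parallel repetition into the reduction itself). All remaining steps are routine bookkeeping on the parameters output by the reduction of \Cref{thm:main-intro}.
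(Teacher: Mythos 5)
Your route — apply the reduction behind \Cref{thm:main-intro}, view the resulting sparse 2CSP as a two-query verifier that samples a random constraint, then boost soundness to $\tfrac12$ via randomness-efficient repetition — is essentially the paper's own; the paper just skips the intermediate 2CSP and uses the PCP verifier $\Pcal'$ from the proof of \Cref{thm:main} directly, and it invokes a randomness-efficient query-repetition lemma from \cite{BGH06}, which is exactly your expander-walk idea instantiated.

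The one step you should rethink is the justification for the alphabet size. Arguing ``ETH-tightness forces $|\Sigma|=2^{n/k^{1-o(1)}}$'' establishes only a \emph{lower} bound on $|\Sigma|$: a reduction with a smaller alphabet, combined with the proven time lower bound, would refute ETH. It does not say the construction actually achieves this as an \emph{upper} bound, which is what the PCP theorem needs. A perfectly valid ETH-tight reduction could in principle have a far larger alphabet and still give the lower bound of \Cref{thm:main-intro}. The correct source of the alphabet bound is the explicit parameter choice inside the proof of \Cref{thm:main}: with $|\Fbb|=2^{1000}\log k\cdot2^{\sqrt{\log k}}$, $m=\sqrt{\log k}$, $d=2^{30}\sqrt{\log k}\cdot 2^{\sqrt{\log k}}$ one computes $N=|\Fbb|^{O(dn\log k/k)}=2^{n\cdot 2^{O(\sqrt{\log k})}/k}$. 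That same bookkeeping is also what produces the $\log k+O(\sqrt{\log k}\log\log k)$ randomness bound — it is not slack ``absorbed unused,'' but precisely $\log K$ for the actual count $K=k\cdot2^{O(\sqrt{\log k}\log\log k)}$ of variables and constraints in the output CSP. In short: same plan, but the alphabet and randomness bounds must be read off the construction, not inferred from ETH.
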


\Cref{thm:new-pcp} generalizes the classic PCP theorem and gives a smooth trade-off between the proof length and the alphabet size, connecting parameterized complexity and the classical complexity theory. 

\paragraph{Paper Organization}
In \Cref{sec:tech-overview}, we provide an overview for our proof of \Cref{thm:main-intro} and discuss future works.
In \Cref{sec:prelim}, we formalize necessary notation and concepts.
The formal structure for proving \Cref{thm:main-intro} is presented in \Cref{sec:pf_main}, with most technical statements deferred to other sections: the reduction from ETH-hard problems to special vector-valued CSPs is provided in \Cref{sec:prepro}, and the PCPP verifier for a helper language is constructed in \Cref{sec:pcpp-probttt}.
In \Cref{sec:lowdeg}, we give details of derandomized parallel low degree tests which is a key component for the PCPP construction.
\section{Technical Overview}\label{sec:tech-overview}

In this part, we provide general ideas of proving \Cref{thm:main-intro}. Due to the equivalence between the existence of PCP systems and the inapproximability of constraint satisfaction problems~\cite{Din07,AB09}, \Cref{thm:new-pcp} directly follows from \Cref{thm:main-intro}.

We follow the spirit of \cite{GLRSW24} to prove \Cref{thm:main-intro}. The proof framework is as follows.
\begin{itemize}
    \item First, we reduce an ETH-hard problem to a CSP problem with specific structures. 
    \item Then, leveraging the special structures, we construct a probabilistic checkable proof of proximity (PCPP) verifier to check whether the encoding of some given solution satisfies all constraints. 
    \Cref{thm:main-intro} follows by converting the PCP verifier into CSP instances~\cite{AB09,Din07}.
\end{itemize}
For the first step, \cite{GLRSW24} reduces {\sc 3SAT} to vector-valued CSPs (\vcsp{}s for short), whose variables take values in a vector space $\Fbb_4^t$. In addition, each constraint is either a coordinate-wise parallel constraint or a linear constraint.
\begin{itemize}
    \item A parallel constraint (over variables $x$ and $y$) is defined by a sub-constraint $\Pi^{sub}: \Fbb_4\times \Fbb_4\to \bin$ and a subset of coordinate $Q\subseteq[t]$. It checks whether $\Pi^{sub}(x_i,y_i) = 1$ for every coordinate $i\in Q$.
    \item A linear constraint enforces that two vector-valued variables satisfy a linear equation specified by a matrix $M\in\Fbb_4^{t\times t}$, i.e., $y=Mx$.
\end{itemize}
Then, in the second step, \cite{GLRSW24} encodes the solution by parallel Walsh-Hadamard code and constructs a PCPP verifier with double-exponential proof length, resulting in an $f(k)n^{\Omega(\sqrt{\log \log k})}$ lower bound for \gapkcsp{}. 

\subsection{More Refined Vector Structure}

Unfortunately, the vector structure used in~\cite{GLRSW24} is far from being enough for obtaining an almost-optimal lower bound due to the following reasons.
\begin{itemize}
    \item For parallel constraints, \vcsp{} sets up the sub-constraints over \emph{a subset of the coordinates}. 
    There might be $2^{|E_{\sf p}|}$ (where $|E_{\sf p}|$ is the number of parallel constraints) different sub-CSP instances over all coordinates, each of which requires an individual PCPP verifier. To simultaneously check the satisfiability of all these sub-instances, they tuple these verifiers into a giant verifier, resulting in an exponential blowup of the proof length. Hence, an almost linear proof size is impossible.
    \item For linear constraints, \vcsp{} defined in ~\cite{GLRSW24} allow them to be over arbitrary pairs of variables. They introduce auxiliary variables for all pairs of variables and their corresponding linear constraints to check such unstructured constraints. 
    The number of auxiliary variables is $|V|\cdot|E_{\sf l}|$, where $|V|, |E_{\sf l}|$ are the numbers of variables and linear constraints, respectively. This means that the proof length is at least quadratic, making an almost linear proof size impossible.
    \item Furthermore, the \vcsp{} instance in \cite{GLRSW24} has parameters $|V| = O(k^2), |E| = O(k^2)$, which is a starting point with already a quantitative loss for any subsequent constructions. 
\end{itemize}

The analysis above urges us to mine more vector structures and devise new reductions with smaller parameter blowups. 
In this work, we further engineer \vcsp{} and obtain special vector-valued CSPs~(\svcsp{} for short) with three more features.
\begin{itemize}
    \item First, \svcsp{} partitions the variables into two disjoint parts $\{x_1,\ldots x_k\}\dot\cup \{y_1,\ldots y_k\}$.
    \item Second, for parallel constraints, \svcsp{} sets up the sub-constraint on \emph{all} coordinates, which implies a unified sub-CSP instance among all coordinates. As a result, we avoid the tuple procedure, enabling a highly succinct proof.
    \item Third, for linear constraints, \svcsp{} only sets up linear constraints over $x_i$ and $y_i$, with the same index $i$. After encoding $x$ and $y$ by the parallel Reed-Muller code, we can leverage this alignment and introduce auxiliary proofs, which is also a codeword of the parallel Reed-Mulle code, to check the validity of linear constraints efficiently, with an almost-linear blowup.
\end{itemize}

In addition, to decrease the parameter blowup, we apply the reduction in~\cite{Mar10, KMPS23} to obtain ETH-hard \emph{sparse} parameterized CSP instances. Then, we imitate the reduction in~\cite{GLRSW24} to obtain a sparse \vcsp{} instance with $|V| = O(k), |E| = O(k)$, which ultimately reduces to \svcsp{}, with an almost optimal lower bound, by properly duplicating variables and relocating constraints.

\subsection{Applying The Parallel Reed-Muller Code}
After establishing the almost optimal runtime lower bound for \svcsp{}, we need to design a PCPP verifier certifying the encoding of a given assignment satisfies all constraints. However, the previous work~\cite{GLRSW24} designs a PCPP verifier that requires a proof with a double-exponential length, which is far from obtaining \Cref{thm:main-intro}.

This paper applies the parallel Reed-Muller (RM) encoding with an almost-linear codeword length. Recall that the variables in the \svcsp{} instance are divided into two disjoint parts $\{x_1,\ldots x_k\}\dot\cup \{y_1,\ldots y_k\}$, the input proof of the PCPP verifier consists of $\hat{x}$ and $\hat{y}$, which are supposed to be the parallel RM encoding of the assignment over $x$ and $y$ respectively. The verification procedure is demonstrated as follows.

First, to ensure that $\hat{x}$ and $\hat{y}$ are indeed codewords of the parallel RM code, we apply parallel low degree testing. The standard low degree testing causes a quadratic blowup in the proof length. To ensure an almost-linear proof length, we use the {parallel version} of the \emph{derandomized low degree testing}~\cite{ben2003randomness} (\Cref{sec:lowdeg}). 

Second, we check whether $\hat{x}\circ \hat{y}$ satisfies all parallel constraints. Since a unified sub-CSP instance exists among all coordinates, as long as we have constructed a PCPP verifier for this sub-CSP, we can simulate it in parallel to check all coordinates at the same time, with no blowup in the proof length (as opposed to an exponential blowup in~\cite{GLRSW24}). 
Our sub-CSP has an alphabet of constant size. Thus, we can apply the existing approach~(see, e.g., \cite[Theorem 3.3]{BGH06}) in a black box way to construct an almost-linear PCPP verifier for all parallel constraints in our sub-CSP.

Finally, we check whether $\hat{x}\circ \hat{y}$ satisfies all linear constraints. Recall that \svcsp{} only sets up linear constraints over $x_i$ and $y_i$, with the same index $i$. 
Denote $M_i$ as the matrix for the index $i$, i.e., the $i$-th linear constraint is $y_i = M_ix_i$.
We introduce an auxiliary proof $\hat{z}$, satisfying 
$$
\hat{z} = \hat{y} - \hat{M}\hat{x}
$$
where $\hat{M}$ is the parallel RM encoding for matrices $\{M_1, M_2, \dots, M_k\}$.
Then, all systematic parts of $\hat{z}$, i.e., the codeword entries corresponding to $z_i:=y_i-M_ix_i$ for $i\in[k]$, should be $\vec{0}$.
The key observation is that if $\hat{x}$ and $\hat{y}$ are codewords of parallel RM code, then so does $\hat{z}$. 
Hence, we can apply parallel derandomized low degree testing for $\hat{z}$ and apply another PCPP, as in the parallel part, to check whether all systematic parts of $\hat{z}$ are $\vec{0}$. 
Finally, we check whether $\hat{z}$ satisfies the equation above by simply querying a random index, for which the soundness and completeness are guaranteed by Schwartz-Zippel lemma~\cite{Sch80}. 
In this way, we have a highly efficient PCPP verifier that checks all linear constraints with an almost linear proof length.

The overall PCPP verifier is the combination of the verifiers for parallel constraints and for linear constraints.
A more detailed framework of our proof is in \Cref{sec:pf_main}.

\subsection{Future Works}

Our work gives almost optimal time lower bounds for the approximation version of many canonical parameterized problems under ETH, including {\sc $k$-Clique}, {\sc $k$-ExactCover}, {\sc $k$-Variable CSPs}, and {\sc Max $k$-Coverage}. 

Technically, we prove the almost optimal time lower bound for constant-gap $k$-Variable CSPs. Using this result as the cornerstone, we obtain the inapproximability for other problems by existing reductions. 
One open question is whether almost optimal time lower bounds for other problems also follow from this result, e.g., $k$-{\sc Balanced Biclique}~\cite{CCK+17,feldman2020survey}.

The second question is to obtain the (actual) optimal $f(k)n^{\Omega(k)}$ time lower bounds for constant-gap $k$-Variable CSPs. 
This problem can be seen as the parameterized extension of the long-standing linear-size PCP conjecture~\cite{BEK+12}. 
In the non-parameterized world, the state-of-the-art PCP theorem with the shortest proof length is due to~\cite{Din07} with quasilinear proof length.
It is also interesting if this optimal bound (i.e., the parameterized extension of the linear-size PCP conjecture) can be established assuming the existence of linear-size PCP.

For more interesting directions, please refer to~\cite{GLRSW24}.

\section{Preliminaries}\label{sec:prelim}

For a positive integer $n$, we use $[n]$ to denote the set $\cbra{1,2,\ldots,n}$.
We use $\log$ to denote the logarithm with base $2$.
For a prime power $q=p^c$ where $p$ is a prime and $c\ge1$ is an integer, we use $\Fbb_q$ to denote the finite field of order $p^c$ and characteristic $\charF(\Fbb)=p$.

For an event $\Ecal$, $\indicator_\Ecal$ is defined the indicator function, which equals $1$ if $\Ecal$ happens and $0$ otherwise.
For a finite set $S\ne\emptyset$, we use $x\sim S$ to denote a uniformly random element from $S$.

For disjoint sets $S$ and $T$, we use $S\dot\cup T$ to denote their union while emphasizing $S\cap T=\emptyset$.

\paragraph{Asymptotics}
Throughout the paper, we use $O(\cdot),\Theta(\cdot),\Omega(\cdot)$ to hide absolute constants that do not depend on any other parameter.
We also use $\poly(\cdot)$ to denote some implicit polynomial in terms of the parameters within, e.g., $\poly(f,g)$ is upper bounded by $(f^2+g^2+C)^C$ for some absolute constant $C\ge0$.

\subsection{Constraint Satisfaction Problem}

In this paper, we only focus on constraint satisfaction problems (CSPs) of arity two. Formally, a CSP instance $G$ is a quadruple $(V, E, \Sigma, \{\Pi_{e}\}_{e\in E})$, where:
\begin{itemize}
    \item $V$ is for the set of variables.
    \item $E$ is for the set of constraints.
    Each constraint $e=\cbra{u_e,v_e} \in E$ connects two distinct variables $u_e,v_e\in V$.

    The \emph{constraint graph} is the undirected graph on vertices $V$ with edges $E$. Note that we allow multiple constraints between the same pair of variables; thus, the constraint graph may have parallel edges.
    \item $\Sigma$ is for the alphabet of each variable in $V$. For convenience, we sometimes have different alphabets for different variables, and we will view them as a subset of a grand alphabet with some natural embedding.
    \item $\{\Pi_{e}\}_{e\in E}$ is the set of constraint validity functions. 
    Given a constraint $e\in E$, the function $\Pi_e\colon\Sigma\times\Sigma\to\bin$ checks whether the constraint $e$ between $u_e$ and $v_e$ is satisfied.
\end{itemize}
We use $|G|=(|V|+|E|)\cdot|\Sigma|$ to denote the \emph{size} of a CSP instance $G$.

\paragraph{Assignment and Satisfiability Value}
An \emph{assignment} is a function $\sigma\colon V\to \Sigma$ that assigns each variable a value in the alphabet. 
We use $$\val(G, \sigma)=\frac1{|E|}\sum_{e\in E} \Pi_e(\sigma(u_e),\sigma(v_e))$$ to denote the \emph{satisfiability value} for an assignment $\sigma$. 
The satisfiability value for $G$ is $\val(G) = \max_{\sigma\colon V\to \Sigma} \val(G, \sigma)$. 
We say that an assignment $\sigma$ is a \emph{solution} if $\val(G,\sigma) = 1$, and $G$ is \emph{satisfiable} iff $G$ has a solution. When the context is clear, we omit $\sigma$ in the description of a constraint, i.e., $\Pi_e(u_e,v_e)$ stands for $\Pi(\sigma(u_e),\sigma(v_e))$. 

\paragraph{Boolean Circuits}
Throughout the paper, we consider standard Boolean circuits with AND/OR gate of fan-in two and fan-out two, and NOT gate with fan-out two.

\paragraph{Exponential Time Hypothesis (ETH)}
\emph{Exponential Time Hypothesis (ETH)}, first proposed by Impagliazzo and Paturi \cite{IP01}, is a famous strengthening of the $\mathsf{P}\neq\mathsf{NP}$ hypothesis and that has since found numerous applications in  modern complexity theory, especially in fine-grained complexity.

The ETH postulates that the general {\sc 3SAT} problem has no sub-exponential algorithm.
In this paper, we use the ETH-based hardness of the {\sc 4-Regular 3-Coloring} problem.

\begin{definition}[{\sc 4-Regular 3-Coloring}]\label{def:4regular3coloring}
A 2CSP instance $G=(V,E,\Sigma,\{\Pi_e\}_{e\in E})$ is an instance of {\sc 4-Regular 3-Coloring} if (1) $\Sigma=\Fbb_4$, (2) the constraint graph is $4$-regular, and (3) each $\Pi_e$ checks whether the two endpoints of $e$ are assigned with different colors from $\Lambda$, where $\Lambda\subset\Fbb_4$ has size three and is fixed in advance.
\end{definition}

We remark that usually {\sc 3-Coloring} is defined directly with a ternary alphabet $\Lambda$. Here for simplicity of later reductions, we assume the alphabet is $\Fbb_4$. This is without loss of generality since the coloring constraint is also upgraded to check additionally whether the colors are from $\Lambda$.

\begin{theorem}[ETH Lower Bound for {\sc 4-Regular 3-Coloring} \cite{CFG+16}]\label{thm:eth_3color}
    Assuming ETH, no algorithm can decide {\sc 4-Regular 3-Coloring} in $2^{o(|V|)}$ time.
\end{theorem}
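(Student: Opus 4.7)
The plan is to chain two standard reductions of constant size blow-up, starting from the ETH-hardness of $3$SAT. Under ETH, $3$SAT on $n$ variables has no $2^{o(n)}$-time algorithm. First I would invoke the Sparsification Lemma of Impagliazzo--Paturi--Zane to strengthen the hypothesis to: there is no $2^{o(n)}$-time algorithm for $3$SAT on $n$ variables with $m=O(n)$ clauses. Any subsequent reduction with $O(n+m)=O(n)$ vertex blow-up will therefore inherit a $2^{o(|V|)}$ lower bound on the target instance size.

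Next I would apply the textbook Karp-style reduction from $3$SAT to $3$-Coloring: introduce a ``palette'' triangle whose three vertices represent the truth values and a base color; for each variable $x$, attach a literal pair $(x,\bar x)$ forced by the palette into the two truth colors; for each clause, attach a constant-size OR-gadget that is $3$-colorable iff at least one of its three literals is assigned $\True$. The resulting graph $G_0$ has $O(n+m)=O(n)$ vertices, and is $3$-colorable iff the input $3$SAT is satisfiable.

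Now I would transform $G_0$ into a $4$-regular graph while preserving $3$-colorability exactly. For each vertex $v$ of degree $d\ne 4$, replace $v$ by a small degree-equalization gadget that exposes $d$ external ports, all of which are forced to take the same color in every proper $3$-coloring of the gadget; split high-degree vertices into several copies joined by an ``equality chain'' of triangles, and pad low-degree vertices by attaching constant-size $3$-colorable stubs that add the required incidences without creating spurious colorings. Each gadget has constant size, so the final graph has $|V|=O(n)$ vertices and is $4$-regular. \emph{This step is the main obstacle}: the gadgets must simultaneously (i) make every vertex have degree exactly $4$, (ii) force all copies of a logical vertex to agree in color (no spurious decoupling), and (iii) avoid introducing multi-edges or self-loops that could trivialize a constraint. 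Standard degree-reduction gadgets for $3$-Coloring (as used in \cite{CFG+16}) accomplish this.

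Finally, I would embed the three proper colors as any fixed three-element subset $\Lambda\subset\Fbb_4$, and reinterpret each edge predicate as ``both endpoints lie in $\Lambda$ and they are distinct'', which exactly matches \Cref{def:4regular3coloring}. Composing the chain of reductions with the sparsified ETH-hardness of $3$SAT yields the desired $2^{o(|V|)}$ lower bound for {\sc 4-Regular 3-Coloring}.
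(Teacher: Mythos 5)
The paper does not prove this theorem---it is quoted directly from \cite{CFG+16}---so there is no in-paper proof to compare against, but your proposal is the standard argument and it is correct in outline: sparsify to make $m=O(n)$, apply the textbook palette/clause-gadget reduction to 3-Coloring with $O(n)$ blow-up, then regularize degrees via constant-size equality and padding gadgets, and finally reinterpret the alphabet inside $\Fbb_4$ to match \Cref{def:4regular3coloring}. You correctly identify both the place the Sparsification Lemma is indispensable (otherwise the $O(n+m)$ vertex count could be superlinear) and the genuinely delicate step (the degree-equalization gadgets must force color equality across copies, keep every vertex at degree exactly $4$, and avoid multi-edges), though you leave the concrete gadget design at the level of a sketch, which is fine for a statement the paper itself treats as a black-box citation.
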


\subsection{Parameterized Complexity Theory}

In parameterized complexity theory, we consider a promise language $L_{\rm yes}\dot\cup L_{\rm no}$ equipped with a computable function $\kappa$, which returns a parameter $\kappa(x)\in\Nbb$ for every input instance $x$. We use $(L_{\rm yes}\dot\cup L_{\rm no}, \kappa)$ to denote a parameterized language. We think of $\kappa(x)$ as a growing parameter that is much smaller than the instance size $|x|$. 

A parameterized promise language $(L_{\rm yes}\dot\cup L_{\rm no},\kappa)$ is \emph{fixed parameter tractable} (FPT) if there is an algorithm such that for every input $(x, \kappa(x))\in (L_{\rm yes}\dot\cup L_{\rm no}, \kappa)$, it decides whether $x\in L_{\rm yes}$ in $f(\kappa(x))\cdot |x|^{O(1)}$ time for some computable function $f$.

An \emph{FPT reduction} from $(L_{\sf yes}\cup L_{\sf no}, \kappa)$ to $(L'_{\sf yes}\cup L'_{\sf no}, \kappa')$ is an algorithm $\mathcal A$ which, on every input $(x, \kappa(x))$ outputs another instance $(x', \kappa'(x'))$ such that:
\begin{itemize}
    \item \textsc{Completeness.} If $x\in L_{\sf yes}$, then $x'\in L'_{\sf yes}$.
    \item \textsc{Soundness.} If $x\in L_{\sf no}$, then $x'\in L'_{\sf no}$.
    \item \textsc{FPT.} There exist universal computable functions $f$ and $g$ such that  $|\kappa'(x')|\le g(\kappa(x))$ and the runtime of $\mathcal A$ is bounded by $f(\kappa(x))\cdot |x|^{O(1)}$.
\end{itemize}

We refer to \cite{FG06} for the backgrounds on fixed parameter tractability and FPT reductions. 

\paragraph{$\eps$-Gap $k$-Variable CSP} 
We mainly focus on the gap version of the parameterized CSP problem. Formally, an $\eps$-Gap $k$-Variable CSP problem is the following parameterized promise language $(L_{\rm yes}\dot\cup L_{\rm no}, \kappa)$.
\begin{itemize}
    \item $L_{\rm yes}$ consists of all CSPs $G$ with $\val(G) = 1$.
    \item $L_{\rm no}$ consists of all CSPs $G$ with $\val(G) < 1-\eps$.
    \item $\kappa(G)$ equals the number of variables in $G$.
\end{itemize}
In other words, we need to decide whether a given CSP instance $(G,|V|)$ with $k$ variables satisfies $\val(G)=1$ or $\val(G)<1-\varepsilon$. 

\paragraph{Parameterized Inapproximability Hypothesis (PIH)} 
\emph{Parameterized Inapproximability Hypothesis (PIH)} is a folklore conjecture generalizing the celebrated PCP theorem to the parameterized complexity.
It was first rigorously formulated in \cite{LRSZ20}. Below, we present a slight reformulation, asserting fixed parameter intractability (rather than $W[1]$-hardness specifically) of gap CSP.

\begin{hypothesis}[PIH]\label{hypo:pih}
For an absolute constant $0<\varepsilon<1$, no FPT algorithm can decide \gapkcsp{}.
\end{hypothesis}

\subsection{Parallel Reed-Muller Code}

\paragraph{Word}
We say $x$ is a word (a.k.a., vector) with alphabet $\Sigma$ if $x$ is a string of finite length and each entry is an element from $\Sigma$; and $\Sigma^*$ contains all words with alphabet $\Sigma$.
Assume $x$ has length $m$. For each $I\subseteq[m]$, we use $x_I$ to denote the sub-string of $x$ on entries in $I$.
When $I=\cbra{i}$ is a singleton set, we simply use $x_i$ to denote $x_{\cbra{i}}$. 
For a word $x$ over a vector alphabet $\Sigma^t$, for each entry $x_i$ and $j\in [t]$, we define $x_i[j]$ as the $j$-th coordinate of $x_i$. We define $x[j]$ as $x_1[j] \circ x_2[j]\circ \cdots \circ x_{m}[j]$, which is a word over $\Sigma$.

Let $y$ be another word. We use $x\circ y$ to denote the concatenation of $x$ and $y$.
If $y$ is also of length $m$, we define $\Delta(x,y) := \Pr_{i\in [m]}[x_i\ne y_i]$ as their relative Hamming distance.
For a set $S$ of words, if $\Delta(x,z)\ge\delta$ holds for every $z\in S$, we say $x$ is $\delta$-far from $S$; otherwise we say $x$ is $\delta$-close to $S$. In particular, if $S=\emptyset$, then $x$ is $1$-far from $S$.
Below, we recall the notion of error correcting codes. 

\begin{definition}[Error Correcting Code (ECC)]\label{def:ecc}
An error correcting code is the image of the encoding map $C\colon\Sigma_1^k\to \Sigma_2^K$ with message length $k$ and codeword length $K$. We say that the ECC has a relative distance $\delta$ if $\Delta(C(x),C(y))\ge \delta$ holds for any distinct $x,y\in\Sigma_1^k$. We use $\delta(C)$ to denote the relative distance of the image map of $C$ and use $\rm{Im}(C)$ to denote the codewords of $C$.
\end{definition}

\paragraph{Reed-Muller Code} 
We use the parallel Reed-Muller (RM) code to construct PCPPs. 
This parallel operation is called interleaving in coding theory.
Below, we present the formal definition.

For an $m$-variate parallel-output function $f\colon\Fbb^m\to\Fbb^{t}$, we denote $f[1],\ldots,f[t]\colon\Fbb^m\to\Fbb$ as its single-output components, i.e., $f(x)=(f[1](x),\ldots,f[t](x))$
We say $f$ is of \emph{parallel degree-$d$} if $f[1],\dots,f[t]$ are degree-$d$ polynomials, where a polynomial is degree-$d$ if all monomials with (total) degree larger than $d$ have zero coefficients.

If $|\Fbb|>d$ and $|\Fbb|^m\ge \binom{m+d}{d}$, by a dimension argument, there exist $\binom{m+d}{d}$ distinct points (a.k.a., interpolation set) $\{\xi_1, \dots, \xi_{\binom{m+d}{d}}\}\subseteq \Fbb^m$ whose values $a=(a_1,\ldots,a_{\binom{m+d}d})\in(\Fbb^{t})^{\binom{m+d}d}$ can uniquely determine the polynomial $f_a$ of parallel degree $d$.

\begin{definition}[Parallel RM Code]\label{def:prm}
    Assume $|\Fbb|>d$ and $|\Fbb|^m\ge \binom{m+d}{d}$. Let $\{\xi_1,\dots, \xi_{\binom{m+d}{d}}\}$ be the set above. The $(\Fbb, m, d, t)$-parallel RM code is the image of the following encoding map:
    $$
    \prm: \left(\Fbb^{t}\right)^{\binom{m+d}{d}}\to \left(\Fbb^{t}\right)^{|\Fbb|^m},
    $$
    where for each $a = (a_1,\dots, a_{\binom{m+d}{d}})\in (\Fbb^{t})^{\binom{m+d}{d}}$, the encoding $\prm(a)$ is the truth table of $f_a$ over the whole space $\Fbb^m$. 

    In addition, $a$ is the systematic part of the parallel RM encoding, which can be read off directly: for each $j\in \binom{m+d}{d}$, $a_j$ equals the entry indexed by $\xi_j$ in the codeword.
\end{definition}

By Schwartz-Zippel lemma\footnote{Schwartz-Zippel lemma is usually stated only for single-output polynomials, but it naturally generalizes to the parallel case.}, the relative distance of $(\Fbb, m, d, t)$-parallel RM code is 
$$
\delta(\prm)=1-\frac{d}{|\Fbb|}.
$$
Furthermore, there is an efficient codeword testing procedure (\Cref{thm:ldt-maintext}) for $\prm$, the proof of which is in \Cref{sec:lowdeg}.

\begin{theorem}[Codeword Testing]\label{thm:ldt-maintext}
    Assume $\charF(\Fbb)=2$ and $|\Fbb|\ge\max\cbra{6md,2^{100}m\log|\Fbb|}$.
    Let $\Sigma=\Fbb^{d+1}$ be the set of univariate degree-$d$ polynomials over $\Fbb$.
    There exists an efficient verifier $\pcppldt$ with the following properties.
    \begin{itemize}
        \item The input of $\pcppldt$ is $T\circ \pi$, where $T\in (\Fbb^t)^{|\Fbb|^m}$ is supposed to be a codeword of $\prm$ and $\pi\in (\Sigma^t)^{|\Fbb|^m\cdot (m\log |\Fbb|)^{O(1)}}$ is the auxiliary proof.
        \item $\pcppldt$ tosses $m\log |\Fbb|+O\pbra{\log \log |\Fbb| +\log m}$ unbiased coins and makes $2$ queries on $T\circ \pi$.
        \item If $T\in \eccim(\prm)$, then there exists some $\pi$ such that $\pcppldt(T\circ \pi)$ always accepts.
        \item If $T$ is $\delta$-far from $\eccim(\prm)$, then $\Pr[\pcppldt(T\circ \pi)\ \text{rejects}]\ge 2^{-40}\delta$ for any $\pi$.
    \end{itemize}
\end{theorem}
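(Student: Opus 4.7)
\noindent\textit{Proof plan.}
My plan is to start from the classical line--point low degree test, derandomize the choice of line direction via a small-bias pseudorandom set in the style of Ben-Sasson--Sudan--Vadhan--Wigderson (BSVW), and lift the analysis to the parallel output setting by combining a coordinate-wise decoding lemma with a vector-valued coupling. Concretely, fix an $\varepsilon$-biased set $H \subseteq \Fbb^m \setminus \{0\}$ of size $(m\log|\Fbb|)^{O(1)}$ with bias $\varepsilon = (m\log|\Fbb|)^{-\Theta(1)}$; such sets exist explicitly since $\charF(\Fbb) = 2$ allows us to view $\Fbb^m$ as a vector space over $\Fbb_2$ of dimension $m\log|\Fbb|$.

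The auxiliary proof $\pi$ will have $|\Fbb|^m \cdot |H|$ entries in $\Sigma^t$ indexed by pairs $(x, h) \in \Fbb^m \times H$; the intended value of $\pi(x, h)$ is the $t$-tuple of univariate degree-$d$ polynomials restricting $T$ to the line $\ell_{x,h} = \{x + \alpha h : \alpha \in \Fbb\}$, normalized so that $\alpha = 0$ corresponds to $x$. The verifier $\pcppldt$ samples $x \sim \Fbb^m$ and $h \sim H$ (using $m\log|\Fbb| + O(\log m + \log\log|\Fbb|)$ coins in total), queries $T(x)$ and $\pi(x, h)$, and accepts iff $T(x)$ equals the tuple of constant coefficients of $\pi(x, h)$. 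Completeness with the honest $\pi$ is immediate: the genuine line restrictions agree with $T(x)$ at $\alpha = 0$.

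For soundness, fix an adversarial $\pi$ and let $p$ be the rejection probability. The key idea is to build a self-corrected function $\tilde T \colon \Fbb^m \to \Fbb^t$ by setting $\tilde T[j](x)$ to the majority value of $\{\pi(x, h)[j](0) : h \in H\}$ (with arbitrary tie-breaking), and to prove both (i) $\tilde T \in \eccim(\prm)$ and (ii) $\Delta(T, \tilde T) \le 2p$, which together yield $\delta \le 2p$. For (i), the projection of the verifier onto coordinate $j$ is exactly a single-output derandomized low degree test on $T[j]$ with proof $\pi[j]$, whose rejection probability is at most $p$; hence provided $p$ lies below the absolute threshold $\eta_0$ from the BSVW decoding lemma, each $\tilde T[j]$ is a genuine degree-$d$ polynomial and so $\tilde T \in \eccim(\prm)$. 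For (ii), let $B = \{x : \Pr_h[\pi(x,h)(0) \ne T(x)] \ge 1/2\}$, where $\ne$ denotes vector inequality in $\Fbb^t$; Markov gives $|B|/|\Fbb|^m \le 2p$. For $x \notin B$ we have $\Pr_h[\pi(x,h)(0) = T(x)] > 1/2$, which instantly forces $\Pr_h[\pi(x,h)[j](0) = T[j](x)] > 1/2$ for every coordinate $j$, so $T[j](x)$ is the strict majority and $\tilde T[j](x) = T[j](x)$. Thus $\tilde T = T$ outside $B$, giving $\Delta(T, \tilde T) \le 2p$. In the remaining regime $p \ge \eta_0$, we simply use $p \ge \eta_0 \ge \eta_0 \cdot \delta$ directly; combining both regimes and absorbing constants yields the stated bound $p \ge 2^{-40}\delta$.

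The main technical obstacle will be carefully verifying that the BSVW single-coordinate decoding lemma goes through with our specific choices of $\varepsilon$ and $|H|$, and under the field-size hypothesis $|\Fbb| \ge \max\{6md, 2^{100} m\log|\Fbb|\}$: this hypothesis is what allows a pseudorandom direction from $H$ to mimic a uniform direction in $\Fbb^m$ against the Fourier-analytic tests underpinning BSVW's analysis, and also ensures that the degree-$d$ restriction on each line is non-degenerate and that Schwartz--Zippel-type bounds remain strong. The parallel lift itself is essentially free thanks to the vector-valued coupling in (ii), which sidesteps the naive union-bound factor of $t$ that a coordinate-by-coordinate closeness argument would otherwise incur.
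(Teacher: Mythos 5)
Your plan correctly identifies the high-level blueprint — derandomize the line–point test via a small-biased set in the style of BSVW, self-correct, and use a vector-valued coupling to avoid the naive $\times t$ union-bound loss. This does match the spirit of the paper's proof (see \Cref{sec:lowdeg}), but there are two concrete gaps in the argument as written.

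First, the verifier you propose only checks $T(x)$ against the line polynomial $\pi(x,h)$ at $x$ for a pseudorandom direction $h\sim H$. This test alone cannot certify parallel degree $d$: with directions restricted to $H$, even acceptance with probability $1$ only implies parallel degree $md$ (this is precisely the content of \Cref{thm:derand_pldt_detail}, and the paper warns explicitly before \Cref{def:aug_dpldt} that "even if $\LDTest_S^{f,g}$ succeeds with probability $1$, it is not guaranteed that $f$ is parallel degree-$d$"). The paper's verifier is the \emph{augmented} derandomized test of \Cref{def:aug_dpldt}, which with probability $1/2$ additionally picks a uniformly random $z\sim\Fbb^m$ and checks agreement with the line through the origin in direction $z$. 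That extra check, combined with a Schwartz–Zippel argument on the leading coefficient in the proof of \Cref{thm:derand_augpldt_detail}, is what brings the degree down from $md$ to $d$. Your verifier omits this entirely, so an adversary could pass with probability $1$ while $T$ is far from degree $d$.

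Second, the claim that "each $\tilde T[j]$ is a genuine degree-$d$ polynomial provided $p<\eta_0$" overstates what the BSVW decoding lemma gives. The lemma delivers $O(\delta)$-\emph{closeness} to a low degree polynomial for $\delta$ below a constant threshold, not exactness of the one-shot majority-corrected function. To reach exactness, the paper's proof iterates the self-correction step (\Cref{lem:derand_pldt}) roughly $2\lceil m\log|\Fbb|\rceil$ times to shrink the rejection probability below $|\Fbb|^{-2m}$, which is below the granularity of the test's randomness and therefore forces it to be exactly $0$; only then does the derandomized characterization \Cref{thm:exact_derand_ldp} apply. A single majority step with a constant threshold $\eta_0$ does not reach that granularity, so your step (i) is not established. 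Your coupling via the bad set $B$ to avoid the union bound is a reasonable device and resembles the role of $B$ in \Cref{lem:derand_pldt}, but the paper carries out the entire self-correction and Polishchuk–Spielman bivariate argument directly in the parallel (vector-valued) setting rather than reducing to single coordinates, and the "essentially free" lift you describe rests on the unestablished (i).
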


Given a Boolean circuit $C:\bin^k\to \{0,1\}$, we say a word $x\in (\{0,1\}^t)^k$ \emph{parallel satisfies} $C$ iff $C(x[j]) = 1$ holds for every coordinate $j\in[t]$. 

Extracting from the verifier in \Cref{thm:ldt-maintext}, we obtain a circuit of small size that describes the codeword testing procedure, the proof of which is also in \Cref{sec:lowdeg}.

\begin{theorem}\label{thm:pldt-circuit}
Assume $\charF(\Fbb)=2$ and $|\Fbb|\ge\max\cbra{6md,2^{100}m\log|\Fbb|}$.
There exists a Boolean circuit $C_\ldt$ of size $\cktsize$ for $T\in (\Fbb^t)^{|\Fbb|^m}$, where we encode $\Fbb$ as $\bin^{\log|\Fbb|}$, such that $T$ is codeword of $\prm$ iff $T$ parallel satisfies $C_\ldt$.    
\end{theorem}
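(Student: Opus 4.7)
The plan is to convert the PCPP verifier $\pcppldt$ from \Cref{thm:ldt-maintext} into a Boolean circuit by explicit enumeration of its random coins. Since $\pcppldt$ uses only $r = m\log|\Fbb|+O(\log\log|\Fbb|+\log m)$ random bits and makes two queries, I can iterate over all $2^r = |\Fbb|^m\cdot\poly(m\log|\Fbb|)$ random strings $\rho$. For each $\rho$, the verifier reads two positions of the parallel input $T\circ\pi$ (each entry a vector in $\Fbb^t$ or $\Sigma^t$) and evaluates a local acceptance predicate $\varphi_\rho$. Both the query-index computation and the predicate evaluation compile to a subcircuit of size $\poly|\Fbb|$, because the verifier's internal work consists of a constant number of field operations over $\Fbb$ (polynomial evaluation in $\Sigma=\Fbb^{d+1}$, equality checks, selection of query indices), which under the $\bin^{\log|\Fbb|}$ encoding take $\poly|\Fbb|$ Boolean gates. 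Taking the AND over all $\rho$ gives one Boolean circuit of total size $|\Fbb|^m\cdot\poly|\Fbb|=\cktsize$, which I call $C_\ldt$.

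The key observation is the parallel symmetry of $\pcppldt$: each local predicate factors coordinate-wise as $\varphi_\rho=\bigwedge_{j\in[t]}\varphi_\rho^{(j)}$, because the derandomized low-degree test inspects each of the $t$ polynomial components using the same query pattern and combines them by an AND. Hence $C_\ldt$ can be written as the scalar version (the $t=1$ circuit) applied in parallel to each of the $t$ coordinate slices of $T\circ\pi$. Completeness then follows directly from the completeness of $\pcppldt$: if $T\in\eccim(\prm)$, there exists a $\pi$ such that every random string accepts, and consequently every coordinate slice is parallel-satisfied. For soundness, if $T\notin\eccim(\prm)$ then some coordinate $j$ realizes a polynomial of degree $>d$, and is thus $\delta$-far from the scalar RM codewords for $\delta\ge 1-d/|\Fbb|>0$; by the soundness clause of \Cref{thm:ldt-maintext}, for every $\pi$ the verifier rejects on the $j$-th coordinate with probability $\ge 2^{-40}\delta$, so at least one random string $\rho$ triggers a rejection on that slice and parallel satisfaction fails.

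The main obstacle is the bookkeeping that (i) the parallel slicing in the ``parallel satisfies'' definition aligns with the coordinate-wise decomposition of $\pcppldt$ under the binary encoding $\Fbb\cong\bin^{\log|\Fbb|}$, so each of the $t$ parallel slices truly inherits the scalar PCPP guarantee rather than a finer per-bit guarantee (here $\charF(\Fbb)=2$ helps since $|\Fbb|$ is a power of two and the natural bit layout is consistent across coordinates); and (ii) the auxiliary proof $\pi$ is carried as part of the parallel input on which $C_\ldt$ operates, so that the existential quantifier over $\pi$ in the PCPP statement matches the quantifier implicit in ``parallel satisfies''. Once these alignments are made explicit, both directions follow mechanically from \Cref{thm:ldt-maintext} and the size bound $\cktsize$ is immediate from the enumeration count times $\poly|\Fbb|$ per random string.
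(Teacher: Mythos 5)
There is a genuine gap, and it is exactly the point you dismiss as ``bookkeeping'' in item (ii). The theorem asks for a circuit $C_\ldt$ whose input is $T$ \emph{alone} (with $\Fbb$ encoded as $\bin^{\log|\Fbb|}$), and the equivalence ``$T\in\eccim(\prm)$ iff $T$ parallel satisfies $C_\ldt$'' must hold with no auxiliary data: by definition, ``parallel satisfies'' just means every coordinate slice satisfies the circuit, and it contains no existential quantifier. Your construction hard-wires the coin enumeration but keeps the PCPP proof $\pi$ as part of the circuit's input, and then claims the $\exists\pi$ of \Cref{thm:ldt-maintext} ``matches the quantifier implicit in parallel satisfies.'' It does not: with $\pi$ in the input you can only prove ``$T\in\eccim(\prm)$ iff \emph{there exists} $\pi$ such that $T\circ\pi$ parallel satisfies $C_\ldt$,'' which is a different (and for the paper's purposes unusable) statement — in the construction of $C'$ in \Cref{sec:single_checking}, $C_\ldt$ is evaluated on the table alone, and \Cref{clm:c'_distance} needs the exact degree characterization with no extra witness. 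Indeed, for a codeword $T$ and an arbitrary $\pi$, $T\circ\pi$ will generally fail your circuit, so no circuit taking $T\circ\pi$ as input can realize the stated iff as a property of $T$. The missing idea, which is the crux of the paper's proof, is to eliminate $\pi$ by canonically instantiating the line oracle as $g=f_\Lbb$ and letting each per-coin subcircuit \emph{recompute} the relevant line polynomial by interpolating the $|\Fbb|$ entries of $T$ along the queried line (then checking it has degree $\le d$ and agrees with $T$ at the queried point). This is query-inefficient but costs only $\poly|\Fbb|$ gates per coin, so the total size $\cktsize$ survives, and the iff then follows from \Cref{thm:ldt-maintext} applied with this specific $g$.

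Two smaller inaccuracies: your soundness step asserts that a non-codeword coordinate is $\bigl(1-d/|\Fbb|\bigr)$-far from the scalar RM code — false in general (a codeword corrupted in one entry is $|\Fbb|^{-m}$-close); you only get, and only need, some $\delta\ge|\Fbb|^{-m}>0$ so that a strictly positive rejection probability forces at least one coin to reject. Also, \Cref{thm:ldt-maintext} states soundness for the parallel word $T$, not per coordinate, so you should pass through the observation that one far coordinate makes the whole parallel word far before invoking it. These are fixable, but the treatment of $\pi$ is not a cosmetic issue: as written, the proposal proves a statement with an extra existential witness rather than \Cref{thm:pldt-circuit}.
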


\subsection{Pair Language and Probabilistic Checkable Proof of Proximity}

We will perform various satisfiability testing for assignments of Boolean circuits where both are given as input. 
This motivates the notion of pair language, where the input is naturally divided into two parts corresponding to circuits and assignments respectively.

\paragraph{Pair Language}
Formally, $L$ is a pair language over the alphabet $\Sigma_\pairx$ and $\Sigma_\pairy$ if all words of $L$ are in the form $(x,y)$ where $x\in \Sigma_\pairx^*,y\in \Sigma_\pairy^*$. For each $x\in \Sigma_\pairx^*$, we define $L(x)=\{y\in \Sigma_\pairy^*| (x,y)\in L\}$ as the restriction of $L$ on $x$.

\paragraph{Probabilistic checkable proof of proximity (PCPP)}
PCPP provides robust testing of satisfiability for pair languages.

\begin{definition}[PCPP]\label{def:pcpp}
Given a pair language $L$ with alphabet $\Sigma$, a $(q, r, \delta, \varepsilon, \Sigma_2)$-PCPP verifier $\mathcal P$ takes as input a pair of words $(x,y)$ and an auxiliary proof $\pi$ with alphabet $\Sigma_2$, such that:
\begin{enumerate}[label=\textbf{(T\arabic*)}]
    \item \label{def-pcpp-procedure} The verifier $\mathcal P$ reads all bits of $x$, tosses at most $r(|x|)$ many unbiased coins, makes at most $q(|x|)$ queries on $y$ and $\pi$, and then decides to accept or reject within runtime $(|x|+|\Sigma| + |\Sigma_2|)^{O(1)}$. 
    
    We use $I_r$ to denote the query positions on $y\circ \pi$ under randomness $r$, and use $\mathcal P(x,y,r,a)$ to indicate the behavior (i.e., accept or reject) of $\mathcal P$ under randomness $r$ and query answer $a$.
    \item \label{def-pcpp-completeness} If $(x,y)\in L$, then there exists some $\pi$ such that $\mathcal P$ always accepts.
    \item \label{def-pcpp-soundness} If $y$ is $\delta$-far from $L(x)$, then $\mathcal P$ rejects with probability at least $\varepsilon$ for every $\pi$.
\end{enumerate}
We say $\mathcal P$ is a $(q,r,\varepsilon, \Sigma_2)$-PCP verifier if it is a $(q,r,1,\varepsilon, \Sigma_2)$-PCPP verifier.
\end{definition}

\begin{remark}[Proof Length]\label{rmk:pcpp}
We can always upper bound $|\pi|$ by $2^{r(|x|)}q(|x|)$, for which reason we do not put an additional parameter in \Cref{def:pcpp}.
\end{remark}

\paragraph{From PCPP to CSP}
PCPPs are tightly connected with CSPs. The following standard reduction establishes the connection.
\begin{definition}[From PCPP to CSP]\label{def:csp-view-pcpp}
    Given a $(q,r,\delta,\varepsilon,\Sigma_2)$-PCPP verifier $\mathcal P$ for the pair language $L = \{(x,y) | x, y\in \Sigma^*\}$. We define a CSP instance $G' = (V', E', \Sigma', \{\Pi_{e}'\}_{e\in E'})$, where $V'=V_\pairy\dot\cup V_{\pi}\dot\cup V_{{\rm pcpp}}$ and $\Sigma'=\Sigma\cup \Sigma_2$, by the following steps:
    \begin{itemize}
        \item First, we treat each position of $y$ (resp., $\pi$) as a single variable in $V_\pairy$ and (resp., $V_{\pi}$) with alphabet $\Sigma$ (resp., $\Sigma_2$). Note that $|V_{y}| = |y|$ and $|V_{\pi}|\le 2^{r(|x|)}q(|x|)$ by \Cref{rmk:pcpp}.
        \item Then, for each choice of random coins $r\in \bin^{r(|x|)}$, let $S_r$ be the set of query positions over $y\circ \pi$ under randomness $r$; and we add a variable $z_r$ to $V_{\rm pcpp}$ whose alphabet is $(\Sigma\cup \Sigma_2)^{|S_r|}$, i.e., all possible configurations of the query result.
        Note that $|V_{\rm pcpp}|\le 2^{r(|x|)}$.
        
        \item Finally, we add constraints between $z_r$ and every query position $i\in S_r$. The constraint checks whether $z_r$ is an accepting configuration, and the assignment of the position $i$ is consistent with the assignment of $z_r$.
    \end{itemize}
\end{definition}

By construction, the completeness and soundness are preserved up to a factor of $q$ under this reduction, where the loss comes from splitting $q$ queries into $q$ consistency checks. In addition, the reduction from $\mathcal P$ to $G'$ is an FPT reduction.

\begin{fact}\label{fct:pcpp}
The reduction described in \Cref{def:csp-view-pcpp} is an FPT reduction. Recall that $\mathcal P$ is a $(q,r,\delta,\varepsilon,\Sigma_2)$-PCPP verifier for the pair language $L$ over the alphabet $\Sigma$.
We have the following properties for $G'$: 
    \begin{itemize}
        \item \textsc{Alphabet.}
        The alphabet of $G'$ is $\left(\Sigma\cup \Sigma_2\right)^q$.
        \item \textsc{Parameter Blowup.}
        The number of varialbes $|V'|\le |y|+2^{r(|x|)}q(|x|)+2^{r(|x|)}$. The number of constraints $|E'| = O(V'\cdot q(|x|))$.
        \item \textsc{Completeness.} If $(x,y)\in L$, then there exists a solution $\sigma'$ of $G'$ assigning $y$ to $V_\pairy$.
        \item \textsc{Soundness.} If $y$ is $\delta$-far from $L(x)$, then any assignment $\sigma'$ assigning $y$ to $V_\pairy$ satisfies at most $1-\frac{\varepsilon}{q}$ fraction of the constraints in $G'$. Note that if $L(x) = \emptyset$, then $\val(G')\le 1-\frac{\varepsilon}{q}$.
    \end{itemize}    
\end{fact}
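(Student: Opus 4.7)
\medskip\noindent\textbf{Proof Plan.}
The plan is to verify each of the four claimed properties by direct inspection of the construction in \Cref{def:csp-view-pcpp}. The construction is essentially the standard ``verifier-to-CSP'' translation: one variable per proof location, one variable per random string, and consistency-style constraints between each randomness variable and the locations it queries. Since nothing deep is happening beyond bookkeeping, I expect no serious obstacle; the only point requiring attention is the soundness clause, where the PCPP soundness (probability of rejection) must be correctly converted into a fraction of unsatisfied constraints, paying a factor of $q$ for the splitting of a single query-pattern check into $q$ consistency checks.

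First I would handle the alphabet and parameter blowup. Each variable in $V_\pairy$ uses $\Sigma$, each variable in $V_\pi$ uses $\Sigma_2$, and each $z_r\in V_{\rm pcpp}$ uses $(\Sigma\cup\Sigma_2)^{|S_r|}$. Since $|S_r|\le q(|x|)$, embedding everything into a common alphabet of size $|\Sigma\cup\Sigma_2|^{q(|x|)}$ suffices. For the count of variables, $|V_\pairy|=|y|$, $|V_\pi|\le 2^{r(|x|)}q(|x|)$ by the proof-length bound in \Cref{rmk:pcpp}, and $|V_{\rm pcpp}|\le 2^{r(|x|)}$ since there is one $z_r$ per random string. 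Each $z_r$ contributes at most $q(|x|)$ constraints (one per queried position), giving $|E'|\le 2^{r(|x|)}q(|x|)=O(|V'|\cdot q(|x|))$.

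Next I would prove completeness. If $(x,y)\in L$, pick a proof $\pi$ witnessing \ref{def-pcpp-completeness} and define $\sigma'$ by assigning $y$ to $V_\pairy$, $\pi$ to $V_\pi$, and for each $r\in\bin^{r(|x|)}$, $\sigma'(z_r)$ to be the accepting configuration induced by $(y,\pi)$ under randomness $r$. Every constraint in $E'$ is then satisfied: each $z_r$ holds an accepting configuration and matches the corresponding values on $V_\pairy\cup V_\pi$.

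For soundness, consider any $\sigma'$ that assigns $y$ to $V_\pairy$. Let $\pi^*$ be the induced assignment of $V_\pi$. By \ref{def-pcpp-soundness} (applied to the proof $\pi^*$), the verifier $\mathcal{P}$ rejects under at least an $\varepsilon$-fraction of random strings. Fix any such rejecting $r$: either $\sigma'(z_r)$ is itself a rejecting configuration (in which case \emph{all} $|S_r|$ constraints involving $z_r$ are violated), or $\sigma'(z_r)$ is an accepting configuration that differs from the actual answers $(y\circ\pi^*)_{S_r}$ on at least one coordinate (in which case at least one consistency constraint involving $z_r$ is violated). Thus each rejecting $r$ yields at least one violated constraint, producing at least $\varepsilon\cdot 2^{r(|x|)}$ violations out of at most $q(|x|)\cdot 2^{r(|x|)}$ total, so $\val(G',\sigma')\le 1-\varepsilon/q(|x|)$. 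Finally, the reduction runs in $2^{r(|x|)}\cdot(|x|+|\Sigma|+|\Sigma_2|)^{O(1)}$ time since we simulate $\mathcal{P}$ on every random string and by \ref{def-pcpp-procedure} each simulation is polynomial, which is FPT once the usual parameter dependencies of $r,q,|\Sigma_2|$ are accounted for.
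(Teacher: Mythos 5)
Your proof is correct and follows exactly the argument the paper intends: the paper states this as a fact with only the one-line remark that completeness and soundness are preserved up to a factor of $q$ from splitting each query pattern into $q$ consistency checks, and your soundness case analysis (rejecting configuration on $z_r$ versus accepting-but-inconsistent configuration) is precisely the standard justification of that remark, with the parameter counting and FPT runtime handled as expected.
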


\paragraph{Circuit Value Problem (\pcircuit{})}
\pcircuit{} is a standard pair language widely used in the classical PCP theorems (see e.g., \cite{BGH06}).
It directly checks if a binary string is a solution to a Boolean circuit.

\begin{definition}[\pcircuit{}]
\label{def:cktval}
    \pcircuit{} is a pair language over $\bin$ consisting of all words in the form of $w = (C,z)$, where $C$ is a Boolean predicate with $|z|$ input bits and $z$ is a binary string that satisfies $C$. 
    We define the input length $|w|$ to be the size of $C$ plus $|z|$.
\end{definition}

We quote the following almost-linear PCPP result for \pcircuit{}, which follows from \cite[Theorem~3.3]{BGH06} by setting $t$ to be a constant sufficiently large.

\begin{theorem}[PCPP for \pcircuit{}, \cite{BGH06}]\label{thm:pcpp-cktsat}
There exists an absolute constant $0<\pcppdelta\le2^{-100}$ such that \pcircuit{} has 
$$
\text{an $\pbra{O(1), \log|w| + O(\log^{0.1}|w|), \pcppdelta, \frac12, \bin}$-PCPP verifier $\pcppckt$,}
$$
where $|w|$ is the input length of \pcircuit{}.
\end{theorem}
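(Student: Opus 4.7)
The plan is to invoke the almost-linear PCPP for circuit value from Ben-Sasson, Goldreich, Harsha, Sudan, and Vadhan \cite{BGH06}, so the proposal below sketches the three-stage blueprint of their construction, which is also the natural approach from scratch. The three stages are (a) algebraization of \pcircuit{}, (b) an algebraic PCPP built on low-degree extensions and Reed--Solomon proximity testing, and (c) iterated proof composition to drive the query complexity down to $O(1)$ and the alphabet down to $\bin$.

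First, I would reduce \pcircuit{} to a structured algebraic language by routing the circuit $C$ through a de Bruijn-like network so that every gate constraint is encoded by a single small-arity predicate applied at many positions. The satisfying witness becomes a tableau over a domain $H \subseteq \mathbb{F}$ embedded in a field of size $|\mathbb{F}| = 2^{O(\log^{0.1}|w|)}$, and correctness of computation becomes a constant number of low-degree polynomial identities that vanish on structured subsets of $\mathbb{F}$. This step is essentially a succinct arithmetization of the wire-value/gate-value predicate using properties of the routing network.

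Second, I would build a PCPP for this algebraic language. The prover supplies the low-degree extension of the tableau together with quotient polynomials witnessing that each identity vanishes on its intended subset; the verifier performs a Reed--Solomon low-degree test (using a nearly-linear-length RS-PCPP) plus a few local identity checks at a random point of $\mathbb{F}$. With the parameters above this yields randomness $\log|w| + O(\log^{0.1}|w|)$, polylogarithmic query complexity, a constant proximity-to-soundness ratio, and alphabet $\mathbb{F}$. I would then compose iteratively with an inner PCPP over constant-size inputs (for example, a Hadamard-based verifier) to reduce queries to a constant and the alphabet to $\bin$; since each level of composition introduces only a subpolynomial blow-up, the overall proof length stays within $|w|\cdot 2^{O(\log^{0.1}|w|)}$.

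The main obstacle is the parameter bookkeeping. Each round of composition degrades the proximity parameter by a small factor and multiplies the proof length by the size of the inner PCPP, so the field size, the exponent $0.1$ in the randomness budget, the number of composition rounds, and the inner PCPP's parameters must all be tuned simultaneously so that the final $\pcppdelta$ is an absolute constant below $2^{-100}$ while the randomness stays $\log|w|+O(\log^{0.1}|w|)$. This calibration is exactly what is carried out in \cite[Theorem~3.3]{BGH06} by setting their trade-off parameter $t$ to a large enough constant, and we appeal to their theorem as a black box to instantiate $\pcppckt$.
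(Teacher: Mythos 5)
Your proposal matches the paper exactly: the paper does not prove this theorem but simply cites it as \cite[Theorem~3.3]{BGH06} with the trade-off parameter $t$ set to a sufficiently large constant, which is precisely the black-box invocation you land on at the end. Your sketch of the internal machinery of \cite{BGH06} (arithmetization via routing networks, RS proximity testing, iterated composition) is a reasonable summary of that cited proof but is not reproduced in the paper.
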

\begin{remark}
In the parallel setting, we want to check if a circuit $C$ is satisfied on $t$ different inputs $z_1,z_2,\ldots,z_t$. 
Note that according to the \cref{def:pcpp}, given $(C,z_i)$ and an auxiliary proof $\pi_i$ for every $i\in[t]$, the PCPP reads the whole $C$ and then queries $z_i$ and $\pi_i$. In other words, the query locations depend only on $C$.  
So, when applied in parallel, the PCPP  queries the same locations for different $i\in[t]$.
\end{remark}

\section{Proof of the Main Theorems}\label{sec:pf_main}

This section is devoted to giving a landscape of the proof of \Cref{thm:main-intro} and \Cref{thm:new-pcp}. To depict a clear picture, we relegate the proof of technical lemmas in subsequent sections. Below, we first present the formal statement of \Cref{thm:main-intro} and \Cref{thm:new-pcp}.

\begin{theorem}[Formal Version of \Cref{thm:main-intro}]
\label{thm:main}
    Assume ETH. Then for some constant $\eps\in(0,1)$ and any computable function $f(K)$, no algorithm can take as input a 2CSP instance $\Lambda$ with $K$ variables, $O(K)$ constraints and alphabet $[N]$, distinguish between the following two cases in $f(K)N^{\lowerboundK}$ time:
    \begin{itemize}
        \item $\Lambda$ is satisfiable;
        \item any assignment satisfies at most $1-\varepsilon$ fraction of the constraints in $\Lambda$.
    \end{itemize}
\end{theorem}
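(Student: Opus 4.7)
The plan is to follow the two-stage pattern of \Cref{sec:tech-overview}. Starting from {\sc 4-Regular 3-Coloring}, which requires $2^{\Omega(n)}$ time under ETH by \Cref{thm:eth_3color}, I would first apply the sparsification reductions of~\cite{Mar10, KMPS23} to obtain an ETH-hard sparse parameterized 2CSP with $k$ variables, $O(k)$ constraints, and alphabet $[N_0]$, requiring $N_0^{\Omega(k)}$ time. Refining the reduction of~\cite{GLRSW24}, I would then convert this sparse CSP into an \svcsp{} instance exhibiting the three structural features advertised in the technical overview: the variables split as $\{x_1,\dots,x_k\}\dot\cup\{y_1,\dots,y_k\}$; every parallel constraint is governed by a single sub-constraint applied simultaneously on all $t$ coordinates; and each linear constraint only acts between a matched pair $(x_i,y_i)$, specified by some matrix $M_i$. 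The sparseness of the starting CSP, together with careful variable duplication and constraint relocation, keeps the parameter blowup at $O(k)$. This reduction is the content of \Cref{sec:prepro}.

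The second stage builds a PCPP verifier for the satisfiability of the resulting \svcsp{}. The input proof consists of the parallel Reed-Muller encodings $\hat x, \hat y$ of the two variable groups, with field $\Fbb$ of characteristic $2$, degree $d$, and number of variables $m$ chosen so that $|\Fbb|^m = k^{1+o(1)}$ while still satisfying the size conditions $|\Fbb|\ge 6md$ and $|\Fbb|\ge 2^{100} m \log |\Fbb|$ required by \Cref{thm:ldt-maintext}. The verifier executes three subroutines in parallel: (i) the derandomized parallel low-degree test $\pcppldt$ of \Cref{thm:ldt-maintext} on both $\hat x$ and $\hat y$; (ii) a single invocation of the almost-linear PCPP for \pcircuit{} (\Cref{thm:pcpp-cktsat}) on a Boolean circuit encoding the unified sub-CSP, exploiting the remark that this PCPP queries the same positions across all $t$ parallel coordinates, so one proof suffices for all coordinates at once; (iii) a linear-constraint test using an auxiliary parallel-RM-encoded proof $\hat z$ supposed to equal $\hat y - \hat M \hat x$, where $\hat M$ is the parallel encoding of $\{M_1,\dots,M_k\}$. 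Test (iii) itself uses a low-degree test on $\hat z$, another copy of \Cref{thm:pcpp-cktsat} to enforce that the systematic entries of $\hat z$ vanish on the $k$ indices corresponding to the linear constraints, and a single random-point evaluation of the identity $\hat z + \hat M \hat x = \hat y$, whose soundness follows from Schwartz--Zippel because non-identical parallel RM codewords disagree on a $1 - d/|\Fbb|$ fraction of points.

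Tracking parameters, each subroutine uses $m\log|\Fbb| + O(\log\log|\Fbb| + \log m)$ random coins and $O(1)$ queries, for a combined proof of length $k^{1+o(1)}$ over alphabet $\Fbb^{O(d)}$. Plugging this PCPP into the reduction of \Cref{def:csp-view-pcpp} and invoking \Cref{fct:pcpp} yields a gap-2CSP with $K = k\cdot 2^{O(\sqrt{\log k}\log\log k)}$ variables, $O(K)$ constraints, alphabet of size $N = 2^{\poly(d\log|\Fbb|)}$, and a constant soundness gap $\varepsilon=\Theta(\pcppdelta)$. A hypothetical algorithm running in $f(K)N^{\lowerboundK}$ time on this output would decide the original {\sc 4-Regular 3-Coloring} instance in time $f'(k)\cdot 2^{o(n)}$, contradicting \Cref{thm:eth_3color}.

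The chief obstacle is the randomness budget of the PCPP. To obtain the $K^{1-o(1)}$ exponent in the final lower bound one needs $m\log|\Fbb| + O(\log\log|\Fbb|+\log m) \le \log k + O(\sqrt{\log k}\log\log k)$, which squeezes $m$ and $d$ into a narrow regime once the field-size constraints of \Cref{thm:ldt-maintext} are taken into account; the $\log\log|\Fbb|+\log m$ overhead from the derandomized low-degree test is precisely what costs the $2^{O(\sqrt{\log k}\log\log k)}$ slack in $K$ and prevents a truly tight $N^{\Omega(k)}$ bound. A second delicate point is the index-alignment of linear constraints in the \svcsp{} stage: without aligning the $M_i$ to matched pairs $(x_i, y_i)$, the auxiliary-proof trick in test (iii) would demand $\Omega(k^2)$ codewords rather than a single $\hat z$, blowing up the proof length quadratically and defeating the almost-linear analysis.
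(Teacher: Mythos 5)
Your proposal correctly identifies the two-stage architecture and most of the verifier's moving parts; the route (sparsify, engineer into \svcsp{}, encode via the parallel Reed--Muller code, assemble the PCPP from a derandomized parallel low-degree test, a parallelized CktVal test, and a linear side via $\hat z$ with a Schwartz--Zippel point check) is exactly the paper's route through \Cref{thm:reduction_svcsp}, \Cref{thm:pcp-svcsp}, and the verifier tests \Cref{itm:pcpp-1}--\Cref{itm:pcpp-3}, and the final parameter accounting $K = k\cdot 2^{O(\sqrt{\log k}\log\log k)}$ is correct. However, there is a genuine soundness gap in your step (ii).

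You invoke the CktVal PCPP (\Cref{thm:pcpp-cktsat}) directly on ``a Boolean circuit encoding the unified sub-CSP,'' but that PCPP only rejects when the queried word is $\Omega(1)$-far from \emph{some} accepting assignment of the circuit being tested. After the low-degree test you only know each of $\hat x,\hat y$ is $\delta$-close to its nearest codeword; the perturbations are adversarial, so the flattened coordinate slice $\hat x[i]\circ\hat y[i]$ can be $o(1)$-close to \emph{some} accepting input of the sub-CSP circuit even when the true codewords at coordinate $i$ are rejecting. Moreover, the CktVal verifier works over $\bin$, so $\Fbb$ must be flattened to $\bin^{\log|\Fbb|}$, and this flattening can shrink relative distances by a factor of up to $\log|\Fbb|=\omega(1)$: two distinct elements of $\Fbb$ may differ in a single bit of their binary encoding. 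The paper fixes both problems (\Cref{sec:single_checking}) by augmenting the tested circuit to $C'$, which per coordinate first decodes a constant-rate binary ECC lifting $\Enc\colon\Fbb\to\bin^{7\log|\Fbb|}$ (so the flattening preserves constant relative distance), then runs the codeword-testing circuit $C_\ldt$ from \Cref{thm:pldt-circuit} (so that any slice that is not the truth table of a degree-$d$ polynomial is rejected outright), and only then evaluates the sub-CSP constraints. \Cref{clm:c'_distance} then converts a coordinate failing the last step into $2^{-60}$-farness from every accepting configuration of $C'$, which is what \Cref{thm:pcpp-cktsat} actually needs. Your proposal omits both the ECC lifting and the embedded codeword test, so the soundness argument for steps (ii) and (iii) does not go through. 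Two smaller corrections: the low-degree test on $\hat z$ must be at degree $2d$ rather than $d$, since $\hat M\hat x$ multiplies two degree-$d$ codewords; and the alphabet size must carry the factor $t=\Theta\pbra{\tfrac{n\log k}{k}}$ in the exponent, i.e.\ $N=\exp\!\pbra{\tfrac nk \cdot 2^{O(\sqrt{\log k})}}$---your expression $N=2^{\poly(d\log|\Fbb|)}$ drops the dependence on $n$, which is precisely what produces the final contradiction with ETH.
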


\begin{theorem}[Formal Version of \Cref{thm:new-pcp}]\label{thm:new-pcp-restate}
For any integer $k\ll n$ sufficiently large, {\sc 3SAT} has a constant-query PCP verifier of alphabet size $|\Sigma| = \exp\!\pbra{\frac nk \cdot 2^{O(\sqrt {\log k})}}
$, runtime ${\rm poly}(|\Sigma|, n)$, and $
\log k+O\pbra{\sqrt{\log k}\log \log k}
$ random coins, which has perfect completeness and soundness $\frac12$.
\end{theorem}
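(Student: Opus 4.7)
The plan is to exploit the standard equivalence between gap-CSP inapproximability and constant-query PCPs. The 2CSP instance constructed in the proof of \Cref{thm:main} is essentially a PCP verifier for {\sc 3SAT} in disguise: its random constraint plays the role of the verifier's randomness, its two endpoint variables play the role of the queries, and its predicate plays the role of the accept/reject decision.

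First, I would apply the reduction underlying \Cref{thm:main} to a given {\sc 3SAT} instance $\phi$ on $n$ variables with parameter $k$, producing in time $\poly(n,|\Sigma|)$ a 2CSP instance $\Lambda$ with $K = k \cdot 2^{O(\sqrt{\log k}\log\log k)}$ variables, $O(K)$ constraints, and alphabet $[N]$ of size $N = \exp((n/k)\cdot 2^{O(\sqrt{\log k})})$. The reduction guarantees that $\phi$ is satisfiable iff $\val(\Lambda) = 1$, while $\phi$ unsatisfiable implies $\val(\Lambda) \le 1-\varepsilon$ for some explicit constant $\varepsilon \in (0,1)$, inherited (after the PCPP-to-CSP conversion of \Cref{fct:pcpp}) from the PCPP soundness in \Cref{thm:pcpp-cktsat}. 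Viewing $\Lambda$ as a PCP: the proof is an assignment $\sigma\colon V(\Lambda) \to [N]$; the verifier tosses $\log|E(\Lambda)| = \log K + O(1) = \log k + O(\sqrt{\log k}\log\log k)$ random bits to choose a constraint $e$, queries $u_e$ and $v_e$, and accepts iff $\Pi_e(\sigma(u_e), \sigma(v_e)) = 1$. This yields a $2$-query PCP for {\sc 3SAT} with perfect completeness, soundness $1-\varepsilon$, and the claimed randomness, alphabet, and $\poly(|\Sigma|, n)$ runtime.

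Second, I would amplify the soundness from $1-\varepsilon$ down to $1/2$ using the randomness-efficient expander-walk amplification in the style of \cite{AFWZ95}: walk for $t = O(\log(1/(1-\varepsilon))) = O(1)$ steps on a constant-degree expander with vertex set $\{0,1\}^{r}$ (where $r$ is the original randomness), and run the verifier on each walk vertex. This gives a new constant-query PCP with soundness at most $(1-\varepsilon)^{\Omega(t)} \le 1/2$. Crucially, the alphabet is preserved exactly, the randomness grows only by an additive $O(t \log d) = O(1)$ (absorbed into the $O(\sqrt{\log k}\log\log k)$ term), the query count stays constant, and perfect completeness is preserved. The resulting PCP therefore has the parameters stated in \Cref{thm:new-pcp-restate}.

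The main obstacle is verifying precisely that the reduction in the proof of \Cref{thm:main} produces a 2CSP with the stated parameter form: variable count $k \cdot 2^{O(\sqrt{\log k}\log\log k)}$ and alphabet $\exp((n/k)\cdot 2^{O(\sqrt{\log k})})$. These quantitative blowups must be tracked through the preprocessing reduction from ETH-hard {\sc 4-Regular 3-Coloring} to $\svcsp$ (\Cref{sec:prepro}) and the parallel-Reed--Muller-based PCPP (\Cref{sec:pcpp-probttt,sec:lowdeg}), where the field size and degree choices of the parallel Reed--Muller code are the main quantitative levers (the $\sqrt{\log k}$ and $\sqrt{\log k}\log\log k$ arise from balancing field size, degree, and number of dimensions so that both the derandomized low-degree test and the aggregation of constraints fit together efficiently). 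This parameter tracking is essentially the quantitative heart of \Cref{thm:main} itself, so once that theorem is established, \Cref{thm:new-pcp-restate} follows by the above template.
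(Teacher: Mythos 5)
Your proposal is correct and follows essentially the same route as the paper: take the $2$CSP instance $\Lambda$ (equivalently, the underlying PCP verifier $\Pcal'$) produced in the proof of \Cref{thm:main}, read off the alphabet size $\exp\pbra{\frac{n}{k}\cdot 2^{O(\sqrt{\log k})}}$ and randomness $\log k + O(\sqrt{\log k}\log\log k)$ from the parameter choices $m=\sqrt{\log k}$, $d=2^{30}\sqrt{\log k}\cdot 2^{\sqrt{\log k}}$, $|\Fbb|=2^{1000}\log k\cdot 2^{\sqrt{\log k}}$, and then boost the implicit constant soundness to $\frac12$ via a constant number of randomness-efficient repetitions. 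The paper presents this by invoking $\Pcal'$ directly and citing \cite[Lemma 2.11]{BGH06} for the amplification (and prepends a linear-size reduction from {\sc 3SAT} to {\sc 4-Regular 3-Coloring}), whereas you phrase the same argument by round-tripping through the CSP $\Lambda$ and citing expander-walk amplification \cite{AFWZ95} — these are cosmetic, not substantive, differences.
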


The proof of \Cref{thm:main} follows a similar idea in the previous work~\cite{GLRSW24}.
\begin{itemize}
    \item First, we reduce from {\sc 4-Regular 3-Coloring}, which has the lower bound $2^{\Omega(|V|)}$ under ETH, to a CSP with specific vector structures, termed as the special vector-valued CSP (SVecCSP for short). This step is somewhat similar to the previous approach~\cite{GLRSW24}, in the sense that both approaches engineer non-parameterized CSP problems into parameterized CSPs with vector structure. However, the vector structure in~\cite{GLRSW24} is not enough for obtaining the almost-optimal lower bound for $k$-Variable CSP. This paper obtains a much more refined structure. 
    \item Second, we design a PCPP verifier for SVecCSPs. To obtain an almost-optimal lower bound, we encode the solution of SVecCSPs via Reed-Muller code with an almost-linear blowup. Below, we present the details of our proof.
\end{itemize}

\subsection{Reduction from {\sc 4-Regular 3-Coloring} to \svcsp{}}

In this section, we define \textit{special vector-valued CSP} (\svcsp{} for short). The notion of vector-valued CSPs was first defined in \cite{GLRSW24}, and here we consider a better-structured variant.

\begin{definition}[\svcsp{}]\label{def:svcsp}
A CSP instance $G = (V,E=E_\parallelc\dot\cup E_\linearc,\Sigma,\{\Pi_{e}\}_{e\in E})$ is an \svcsp{} if the following properties hold.
\begin{enumerate}[label=\textbf{(S\arabic*)}]
    \item $V = \{x_1,\ldots,x_k\}\dot\cup \{y_1,\ldots,y_k\}$.
    \item $\Sigma=\Fbb^t$ is a $t$-dimensional vector space over a finite field $\Fbb$ with $\charF(\Fbb)=2$.
    \item \label{itm:def:svcsp_2} For each constraint $e=\{u,v\}\in E_\parallelc$  where $u = (u_1,\dots,u_t)$ 
       and $v = (v_1,\ldots,v_t)$ are two variables in $V$, there is a sub-constraint $\Pi_{e}^{sub}: \Fbb\times \Fbb\to \{0,1\}$ such that $\Pi_{e}(u,v)$ checks $\Pi_{e}^{sub}$ on all coordinates, i.e., 
        $$
        \Pi_e(u,v) = \bigwedge_{i\in [t]} \Pi_e^{sub}(u_i,v_i).
        $$
    \item \label{itm:def:svcsp_1} $E_\linearc=\{\{x_i,y_i\}_{i\in[k]}\}$. For each $i\in [k]$, there exists a matrix $M_i\in \Fbb^{t\times t}$ and a constraint $\Pi_i(x_i,y_i)$ check if $y_i=M_ix_i$, i.e.,
\[
\Pi_i(x_i,y_i)=\indicator_{y_i = M_ix_i}.
\]
\end{enumerate}
\end{definition}

The following theorem establishes the hardness of \svcsp{}.

\begin{theorem}\label{thm:reduction_svcsp}
    There is a reduction algorithm that holds the following. Given as input an integer $6\le k\le n$ and an $n$-variable {\sc 4-Regular 3-Coloring} instance $\Gamma$, it produces an \svcsp{} instance $G=(V,E,\Sigma,\{\Pi_e\}_{e \in E})$ where:
    \begin{enumerate}[label=\textbf{(R\arabic*)}]
        \item\label{itm:thm:3sat2csp_1} \textsc{Variables and Constraints.} $|V|=O(k)$ and $|E|=O(k)$.
            \item\label{itm:thm:3sat2csp_2} \textsc{Runtime.} The reduction runs in time $\poly(n,2^{n \log k/k})$.
        \item\label{itm:thm:3sat2csp_3} \textsc{Alphabet.} $\Sigma = \Fbb_4^{t}$ where $t=O\pbra{\frac{n\log k}k}$.
        \item\label{itm:thm:3sat2csp_5} \textsc{Satisfiability.} $G$ is satisfiable iff $\Gamma$ is satisfiable.
    \end{enumerate}
\end{theorem}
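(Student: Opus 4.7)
My plan follows the two-step blueprint sketched in \Cref{sec:tech-overview}: first reduce $\Gamma$ to an intermediate sparse vector-valued CSP via a Marx-style sparsification, then reshape this intermediate CSP into \svcsp{} form by duplicating variables and relocating constraints.

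For the first step, I partition the $n$ vertices of $\Gamma$ into $\Theta(k)$ groups of size $\Theta(n/k)$ and represent each group's coloring as an element of $\Fbb_4^{n/k}$, embedded with padding in $\Fbb_4^t$ for $t=\Theta(n\log k/k)$. Applying a \cite{Mar10, KMPS23}-style sparsification to the $4$-regular constraint graph, I bundle the $2n$ edges into $O(k)$ vector-edges, each being a coordinate-wise conjunction of ``different color'' checks on $\Theta(n/k)$ coordinate-pairs between two group supervariables. The $\log k$ slack in the dimension $t$ is exactly what enables the bundling: it permits duplicating supervariables up to $O(\log k)$ times and rerouting edges so that the quotient multigraph carries only $O(k)$ thick edges. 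This yields a vector-valued CSP $G_0$ on $O(k)$ variables with $O(k)$ constraints over $\Fbb_4^t$, equisatisfiable with $\Gamma$, constructed in $\poly(n, 2^{n\log k/k})$ time.

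For the second step, I massage $G_0$ into \svcsp{} form. The parallel constraints of $G_0$ may act only on a subset of coordinates, so I extend each sub-constraint to act trivially on the padding and hence on \emph{all} $t$ coordinates, as required. For any linear relation $v=Mu$ that arises between mis-aligned variables of $G_0$, the \svcsp{} template only permits linear constraints between aligned pairs $(x_i,y_i)$; I therefore introduce a fresh $y$-slot $y_u$ with the designated linear constraint $y_u=Mu$ and enforce $y_u=v$ via a parallel equality sub-constraint (a legal coordinate-wise check). After reshaping, $V$ partitions as $\{x_1,\dots,x_K\}\dot\cup\{y_1,\dots,y_K\}$ with $K=O(k)$, all linear constraints take the form $y_i=M_i x_i$, and all parallel constraints act on all $t$ coordinates. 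Completeness is immediate, since any $3$-coloring of $\Gamma$ yields the $x_i$'s from the partition and the $y_i$'s via $y_i=M_i x_i$; soundness follows because the padding coordinates are automatically satisfied and the parallel equalities faithfully enforce the original $G_0$ relations.

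The main technical obstacle lies in the sparsification step. A direct use of the \cite{Mar10, KMPS23}-style reduction yields a sparse classical 2CSP, but not one with the parallel vector structure demanded by \svcsp{}; lifting it so that $O(k)$ parallel batched constraints jointly cover all $2n$ coloring edges requires the $\log k$-factor duplication and a careful combinatorial routing, in the spirit of \cite{GLRSW24}. Aligning the constants so that $t=O(n\log k/k)$ dimensions suffice, while ensuring the bundling produces only sub-constraints matching the ``different color'' check on $\Fbb_4$, is where the delicate bookkeeping lives.
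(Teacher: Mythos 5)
Your overall route (sparsify via \cite{Mar10,KMPS23} into $O(k)$ supernodes with $t=O(n\log k/k)$, then reshape into \svcsp{} by duplication and auxiliary variables, handling mis-aligned linear relations by a fresh $y$-slot $y_u=Mu$ tied back by a parallel equality) is the same blueprint as the paper's \Cref{lem:sparse_csp}, \Cref{prop:reduction_vcsp} and \Cref{prop:vcsp_to_svcsp}. However, there is a genuine gap in your treatment of the parallel constraints. In \svcsp{} (\Cref{itm:def:svcsp_2}) a parallel constraint is a \emph{single} sub-constraint $\Pi_e^{sub}\colon\Fbb\times\Fbb\to\bin$ applied identically to \emph{every} coordinate, comparing $u_i$ against $v_i$. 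You cannot "extend each sub-constraint to act trivially on the padding": per-coordinate behavior is exactly what the definition forbids, and weakening the coloring check to accept a spare symbol on designated coordinates would either be illegal or, if done uniformly, destroy soundness (every variable could take the spare value). Worse, after the bundling the coordinates of a supernode that are inactive in a given constraint are generally \emph{not} padding—they hold the colors of other vertices of $\Gamma$ that participate in other constraints—so naively imposing the "different colors" check on all $t$ coordinates creates spurious constraints between non-adjacent vertices and breaks \emph{completeness}, not soundness. Your claim that "soundness follows because the padding coordinates are automatically satisfied" therefore does not address the real issue.

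Fixing this is the substantive content of the paper's reduction and is missing from your proposal: (i) the bundled checks pair coordinate $i$ of $u$ with coordinate $j\ne i$ of $v$, so one must first decompose the sub-constraints into partial matchings, duplicate each supernode a \emph{constant} number of times (using $4$-regularity of $\Gamma$ and $3$-regularity of the quotient graph, there are at most five sub-constraints per coordinate), permute coordinates so each matching becomes coordinate-aligned, and add permutation-matrix linear constraints between consecutive duplicates for consistency; (ii) to then promote a constraint acting on a subset $Q_e\subseteq[t]$ to one acting on all coordinates, each variable is split into a parallel copy and a linear copy, the parallel copy's coordinates outside $Q_e$ are left genuinely free (this uses that each variable meets at most one parallel constraint), and consistency is enforced only on $Q_e$ via linear constraints $x^{\sf a}=M_{Q_e}x^{\parallelc}$ and $x^{\sf a}=M_{Q_e}x^{\linearc}$ with a projection matrix and an auxiliary variable. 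Without these steps your intermediate object is not an \svcsp{} instance, and the satisfiability equivalence \ref{itm:thm:3sat2csp_5} does not follow. (Two smaller points: duplicating supernodes $O(\log k)$ times as literally stated would give $|V|=O(k\log k)$, violating \ref{itm:thm:3sat2csp_1}—the $\log k$ belongs in the dimension $t$ from \Cref{lem:sparse_csp}, not in the number of variables; and you still need to pad the linear side with identity constraints so that $E_\linearc$ is exactly the matching $\{\{x_i,y_i\}\}_{i}$ required by \Cref{itm:def:svcsp_1}.)
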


We defer the proof of \Cref{thm:reduction_svcsp} to \Cref{sec:prepro}. 
Compared with vector-valued CSPs (\vcsp{}) used in the previous work~\cite[Definition 3.3]{GLRSW24}, \svcsp{} offers more structural properties.
\begin{itemize}
    \item For parallel constraints, the previous work~\cite{GLRSW24} sets up the sub-constraint over \emph{a subset of the coordinates}. 
    By construction, there might be $2^{|E_{\sf p}|}$ different sub-CSP instances over all coordinates. Each of the $2^{|E_{\sf p}|}$ sub-CSPs requires an individual PCPP verifier. To simultaneously check the satisfiability of all these sub-instances, they tuple these verifiers into a giant verifier, resulting in an exponential blowup of the proof length. Hence, an almost-linear size of proof is impossible.

    In contrast, \svcsp{} sets up the sub-constraint on \emph{all} coordinates, which implies a unified sub-CSP instance among all coordinates. As a result, we avoid the tuple procedure, enabling a highly succinct proof.
    \item For linear constraints,
    \vcsp{} defined in ~\cite{GLRSW24} allow them to be over arbitrary pairs of variables. They introduce auxiliary variables for all pairs of variables and their corresponding linear constraints to check such unstructured constraints. 
    The number of auxiliary variables is $|V|\cdot|E_{\sf l}|$, which means that the proof length is at least quadratic, making an almost linear size of proof impossible.
    
    However, \svcsp{} only sets up linear constraints over $x_i$ and $y_i$, with the same index $i$. After encoding $x$ and $y$ by the parallel Reed-Muller code, we can leverage this alignment and introduce auxiliary proofs, which is also a codeword of the parallel Reed-Mulle code, to check the validity of linear constraints efficiently, with an almost-linear blowup.
\end{itemize}

Though \svcsp{} has a more refined structure than \vcsp{}, we can obtain it from \vcsp{} by properly duplicating variables and relocating constraints. 
However, the \vcsp{} instance in \cite{GLRSW24} has parameters $|V| = O(k^2), |E| = O(k^2)$, which are too large to obtain~\Cref{thm:reduction_svcsp}.
To get around this, we combine results from \cite{Mar10,KMPS23} and the reduction in \cite{GLRSW24} to obtain a sparse \vcsp{} instance with $|V| = O(k), |E| = O(k)$, which ultimately leads to \Cref{thm:reduction_svcsp}.

\subsection{Reduction from SVecCSPs to \texorpdfstring{$\varepsilon$}{ε}-{\sc Gap} \texorpdfstring{$k$}{k}-{\sc Variable CSPs}}

With the ETH-hardness of \svcsp{} by \Cref{thm:reduction_svcsp}, the remaining work is to reduce \svcsp{} to \gapkcsp{}. 
To this end, we follow the same idea of previous works~\cite{ALM+98, GLRSW24}.
\begin{enumerate}[label=\textbf{(L\arabic*)}]
    \item \label{red:1} First, we encode the solution using an error correcting code.
    \item \label{red:2} Then, we design a constant-query PCPP verifier to check whether the given proof is the codeword of some solution with the aid of auxiliary proofs. 
    \item \label{red:3} Finally, we obtain \Cref{thm:main} by converting the PCPP verifier into an instance of \gapkcsp{} by \Cref{fct:pcpp}.
\end{enumerate}

\begin{remark}
In our actual construction of \Cref{red:2}, parallel constraints and linear constraints will be processed separately, building on a (supposedly) same solution.
Hence we need to encode the solution using error correcting codes in \Cref{red:1} and design PCPP verifier (instead of PCP verifier) for \Cref{red:2} on top of the shared encoding of the solution.
\end{remark}

For the first step~(\Cref{red:1}), we need error correcting codes with almost-linear length blowup. In detail, we choose the parallel Reed-Muller code~(see \Cref{def:prm}) with suitable choice of parameters. This motivates us to define the following pair language \svcsp{} Satisfiability (\svsat{}).

\begin{definition}[\svsat{}]\label{def:svsat}
$(\Fbb, m, d, t)$-\svsat{} is a pair language consisting of $w=(G, (\hat{x}, \hat{y}))$ where:
\begin{itemize}
    \item $G = (V=\{x_1,\ldots,x_k\} \dot\cup \{y_1,\ldots,y_k\},E = E_\parallelc\dot\cup E_\linearc,\Sigma = \Hbb^t, \{\Pi_e\}_{e\in E})$ is an \svcsp{} instance. We require that $k\le \binom{m+d}{d}$ and $\Hbb$ is a subfield of $\Fbb$. 
    \item $\hat{x}, \hat{y}$ are codewords of $\prm$. Suppose $\hat{x} = \prm(\sigma_x)$ and $\hat{y} = \prm(\sigma_y)$ for some $\sigma_x, \sigma_y\in (\Fbb^t)^{\binom{m+d}{d}}$. 
    Define assignment $\sigma: V\to \Fbb^t$ by
    $$
    \sigma(v) := \begin{cases}
        \sigma_{x}(i) & v = x_i,\\
        \sigma_{y}(i) & v = y_i.
    \end{cases}
    $$
    We further require that $\sigma$ is a solution of $G$, which implicitly demands $\sigma(v)\in\Hbb^t$ for $v\in V$.
\end{itemize}
\end{definition}

\begin{remark}
Since $\binom{m+d}{d}\ge k$, the index $i$ in defining $\sigma$ is at most the length of $\sigma_x,\sigma_y$ and thus $\sigma$ is well defined in \Cref{def:svsat}. 
The parameters $\Fbb,m,d,t$ come from the parameters of the parallel Reed-Muller code. 
\end{remark}

In this pair language, $G$ is the starting \svcsp{} instance in our reduction, and $(\hat{x}, \hat{y})$ serves as the encoding of a solution. 
We have the following connection between $(\Fbb, m, d, t)$-\svsat{} and the satisfiability of \svcsp{}.

\begin{fact}
\label{fct:svsat}
    Let $G = (V=\{x_1,\ldots,x_k\} \dot\cup \{y_1,\ldots,y_k\},E = E_\parallelc\dot\cup E_\linearc,\Sigma = \Hbb^t, \{\Pi_e\}_{e\in E})$ be an \svcsp{} instance.
    Assume $k\le \binom{m+d}{d}$ and $\Hbb$ is a subfield of $\Fbb$. Then, $G$ is satisfiable iff there exists $(\hat{x}, \hat{y})$ such that $(G, (\hat{x}, \hat{y}))\in(\Fbb, m, d, t)$-\svsat{}.
\end{fact}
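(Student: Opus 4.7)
The plan is to prove the two directions of the equivalence essentially directly from the definitions, since \Cref{def:svsat} has been set up so that \svsat{} membership and \svcsp{} satisfiability encode the same assignment data through the systematic part of the parallel Reed-Muller code.

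For the forward direction, I would start with a solution $\sigma\colon V\to \Hbb^t$ of $G$. Using the hypothesis $k\le \binom{m+d}{d}$, I would define two padded message vectors $\sigma_x,\sigma_y\in (\Fbb^t)^{\binom{m+d}{d}}$ by letting $\sigma_x(i)=\sigma(x_i)$ and $\sigma_y(i)=\sigma(y_i)$ for $i\in[k]$ and filling the remaining coordinates arbitrarily, for example with $\vec 0\in\Hbb^t\subseteq\Fbb^t$. Then I would set $\hat x=\prm(\sigma_x)$ and $\hat y=\prm(\sigma_y)$, which are legal codewords because $\Hbb\subseteq\Fbb$ implies $\sigma_x,\sigma_y$ have entries in $\Fbb^t$. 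By construction, the assignment reconstructed from $(\hat x,\hat y)$ through the recipe in \Cref{def:svsat} coincides with $\sigma$ on $V$, so the required solution condition holds and $(G,(\hat x,\hat y))\in(\Fbb,m,d,t)$-\svsat{}.

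For the backward direction, I would simply unpack the definition: from $(G,(\hat x,\hat y))\in(\Fbb,m,d,t)$-\svsat{}, the messages $\sigma_x,\sigma_y$ are uniquely determined by reading off the systematic coordinates of $\hat x,\hat y$ at the interpolation set $\{\xi_1,\dots,\xi_{\binom{m+d}{d}}\}$. The induced map $\sigma\colon V\to\Fbb^t$ is then, by the explicit requirement in \Cref{def:svsat}, a solution of $G$; in particular, the implicit constraint $\sigma(v)\in \Hbb^t$ ensures $\sigma$ is an admissible assignment for $G$ (whose alphabet is $\Hbb^t$). Hence $G$ is satisfiable.

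There is no real obstacle here beyond careful bookkeeping: the definition of \svsat{} was crafted so that the systematic part of the parallel RM encoding recovers exactly an \svcsp{} assignment, and the extra message coordinates beyond index $k$ are free padding that do not participate in any constraint. The only points that deserve a line of justification are (i) well-definedness of the encoding, which uses $|\Fbb|>d$ and $|\Fbb|^m\ge\binom{m+d}{d}$ implicit in \Cref{def:prm}, together with $k\le\binom{m+d}{d}$ to index $\sigma$ into $\sigma_x,\sigma_y$; and (ii) the fact that $\sigma(v)\in\Hbb^t$ is preserved in both directions because $\Hbb\subseteq\Fbb$ allows $\Hbb^t$-valued messages to be encoded as $\Fbb$-valued polynomials without changing the systematic entries.
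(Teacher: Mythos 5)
Your proposal is correct and coincides with what the paper intends: the paper states \Cref{fct:svsat} without proof precisely because it is the routine definition-unpacking you carry out, namely padding the solution to a message of length $\binom{m+d}{d}$ (using $k\le\binom{m+d}{d}$ and $\Hbb\subseteq\Fbb$) and encoding with $\prm$ for the forward direction, and reading the systematic part of $\hat x,\hat y$ for the backward direction. Nothing is missing.
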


\begin{remark}[Encoding Choice]
One may wonder the necessity for encoding $\hat x$ and $\hat y$ separately, as opposed to encoding them jointly as $\hat{x\circ y}$.
This is due to the bipartite structure of the linear constraints $E_\linearc$ (see \Cref{itm:def:svcsp_1}) that we will need to ensure proximity for the encoding on both $x$ and $y$. This will be clear in \Cref{itm:pcpp-3} and \Cref{lem:pcpp-s}.
\end{remark}

For the second step~(\Cref{red:2}), we need to construct a PCPP verifier for $(\Fbb, m, d, t)$-\svsat{}. Formally, we construct a PCPP verifier with the following parameters.

\begin{theorem}\label{thm:pcp-svcsp}
 Assume $\charF(\Fbb) = 2$ and $|\Fbb|\ge\max\{12md, 2^{101} m \log |\Fbb| \}$. Then for any $\delta \in[0,1]$, $(\Fbb, m, d, t)$-\svsat{} has 
 $$
 \text{ and $\pbra{O(1), \log(|E|+|\Fbb|^m)+O\pbra{\log^{0.1}(|E|+|\Fbb|^m)+\log|\Fbb|} , \delta, \Omega(\delta), \Sigma^t}$-PCPP verifier $\Pcal$,}
 $$ 
 where $|E|$ is the number of constraints in $(\Fbb,m,d,t)$-\svsat{} and $\Sigma=\Fbb^{d+1}$.
\end{theorem}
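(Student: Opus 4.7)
The plan is to let the auxiliary proof consist of one supposed degree-$2d$ parallel RM codeword $\hat{z}$, intended to equal $\hat{y}-\hat{M}\hat{x}$ where $\hat{M}$ is the parallel RM encoding of the linear-constraint matrices $\{M_i\}_{i\in[k]}$ padded to length $\binom{m+d}{d}$, together with auxiliary proofs for several sub-PCPPs. The verifier flips $O(1)$ coins to pick among three tests and runs the chosen one: (i) codeword tests on $\hat{x},\hat{y},\hat{z}$; (ii) a parallel-constraint test; (iii) a linear-constraint test. A standard coin-sharing argument keeps total randomness equal to the maximum branch randomness plus $O(1)$, and each sub-test uses $O(1)$ queries.

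For (i), I apply $\pcppldt$ from \Cref{thm:ldt-maintext} to $\hat{x},\hat{y}$ at degree $d$ and to $\hat{z}$ at degree $2d$; the hypothesis $|\Fbb|\ge 12md$ is precisely what the degree-$2d$ invocation needs. For (ii), I exploit \Cref{itm:def:svcsp_2}: every parallel constraint applies the same sub-constraint coordinate-wise, so I build a single Boolean circuit $C_\parallelc$ of size $\poly(k,|E_\parallelc|)$ that, given a single-coordinate slice of the systematic parts of $\sigma_x,\sigma_y$, checks all parallel sub-constraints and also that the values lie in $\Hbb\subseteq\Fbb$. I then invoke $\pcppckt$ from \Cref{thm:pcpp-cktsat} in parallel across the $t$ coordinates: by the remark following \Cref{thm:pcpp-cktsat}, the query positions depend only on $C_\parallelc$, so each queried position of $\hat{x}\circ\hat{y}$ simultaneously supplies the input for all $t$ coordinate instances, while a matching $\Sigma^t$-entry of the auxiliary proof carries the parallelized inner PCPP witness.

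For (iii), within this branch I pick with constant probability between (a) a $\pcppckt$ verifying that the systematic entries $\hat{z}(\xi_1),\ldots,\hat{z}(\xi_k)$ are the all-zero vector in $\Hbb^t$ (a trivial parallelized circuit), and (b) a random-point consistency test that samples $r\sim\Fbb^m$, queries $\hat{x}(r),\hat{y}(r),\hat{z}(r)$, computes $\hat{M}(r)$ directly from the instance, and accepts iff $\hat{z}(r)=\hat{y}(r)-\hat{M}(r)\hat{x}(r)$. Conditioned on the codeword tests passing, $\hat{z}-(\hat{y}-\hat{M}\hat{x})$ is a parallel degree-$2d$ polynomial, so by Schwartz-Zippel any mismatch is caught by (b) with probability at least $1-2d/|\Fbb|\ge\tfrac12$; and when that mismatch is absent, the systematic part of $\hat{z}$ exactly records the residuals $y_i-M_ix_i$ for $i\in[k]$, so any violated linear constraint is caught by (a).

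The hard part is not conceptual but quantitative. I need to show that every circuit fed to $\pcppckt$ has size $\poly(|E|,|\Fbb|^m)$, so that the circuit-PCPP branch contributes only $\log(|E|+|\Fbb|^m)+O(\log^{0.1}(|E|+|\Fbb|^m))$ random coins, matching the claimed randomness budget after absorbing the $\pcppldt$ contribution $m\log|\Fbb|+O(\log\log|\Fbb|+\log m)$; that the proof alphabet $\Sigma^t=(\Fbb^{d+1})^t$ is large enough to hold the parallelized $\pcppckt$ witnesses (which live in $\{0,1\}^t$), the codeword $\hat{z}$, and the $\pcppldt$ witnesses simultaneously; and that the case analysis above, when combined with the constant-probability selection among (i)--(iii), preserves an $\Omega(\delta)$ rejection probability whenever the input is $\delta$-far from $(\Fbb,m,d,t)$-\svsat{}. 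Once these bounds are verified, the claimed parameters follow directly from \Cref{thm:ldt-maintext} and \Cref{thm:pcpp-cktsat}.
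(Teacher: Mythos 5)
Your high-level scaffolding matches the paper: introduce the auxiliary codeword $\hat z$ supposed to equal $\hat y-\hat M\hat x$, test $\hat x,\hat y$ for parallel degree $d$ and $\hat z$ for parallel degree $2d$, verify parallel constraints coordinate-wise via a single circuit $C_\parallelc$ fed to a parallel simulation of $\pcppckt$, verify linear constraints by checking the systematic part of $\hat z$ vanishes plus a random-point consistency check whose soundness is Schwartz--Zippel, and use $|\Fbb|\ge 12md$ precisely for the degree-$2d$ branch. That decomposition is a valid flattening of the paper's \doubletest{}/\singletest{} abstraction (the paper nests the codeword tests inside those sub-PCPPs; you surface them as a separate branch). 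The randomness and alphabet bookkeeping you defer to the end is also genuinely routine.

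However, the soundness of your circuit tests (ii) and (iii)(a) has a real gap that you incorrectly classify as ``quantitative.'' You feed $\pcppckt$ a circuit that takes only the systematic part of $\hat x,\hat y$ (or $\hat z$), but $\pcppckt$ is a \emph{proximity} tester: it only promises rejection when the input \emph{as it sees it} is $\pcppdelta$-far from the circuit's satisfying set, for a fixed constant $\pcppdelta$. Consider the adversary who provides $\hat x=\prm(\sigma_x),\hat y=\prm(\sigma_y)$ with $\sigma_x,\sigma_y$ violating exactly one parallel constraint. All codeword tests pass exactly; the random-point test (iii)(b) passes exactly for $\hat z := \hat y - \hat M\hat x$; and the systematic slice fed to $\pcppckt$ is $O(1/k)$-close to a satisfying input, so $\pcppckt$ has no rejection guarantee at all. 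Yet $(\hat x,\hat y)$ is $\Omega(1)$-far from the language (any other codeword pair differs in $\ge 1-d/|\Fbb|$ fraction of positions by Schwartz--Zippel), so the required $\Omega(\delta)$ rejection fails. Even if you made the circuit read the full $|\Fbb|^m$-length table, the nearest satisfying input still changes only $O(1/|\Fbb|^m)$ fraction of entries. The paper's fix --- and the core technical content of its \Cref{thm:pcpp-main} --- is to \emph{augment} the circuit with the low-degree-test sub-circuit $C_\ldt$ from \Cref{thm:pldt-circuit} so that every satisfying input is itself a low-degree table, making any two distinct satisfying inputs constant-far over $\Fbb$ (Schwartz--Zippel), and to apply a constant-rate/constant-distance binary ECC (\Cref{prop:ecc}) to the flattened alphabet so this $\Omega(1)$ distance over $\Fbb$ survives as $\Omega(1)$ over bits and is not diluted by the $\log|\Fbb|$ flattening factor (see \Cref{clm:c'_distance}). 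Your proposal omits both ingredients, so the $\Omega(\delta)$ soundness does not follow from \Cref{thm:ldt-maintext} and \Cref{thm:pcpp-cktsat} as stated.
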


Fix some $(G, (\hat{x}, \hat{y}))$ supposed to be in $(\Fbb, m, d, t)$-\svsat{}. We use $\sigma$ to denote the assignment recovered (if succeeded) from $\hat{x}$ and $\hat{y}$. Our goal is to verify whether $\sigma$ is the solution of $G$.
To this end, we will handle parallel and linear constraints in \Cref{sec:svsat_parallel} and \Cref{sec:svsat_whole} separately, then combine them together in \Cref{sec:svsat_whole} to prove \Cref{thm:pcp-svcsp}.

\subsubsection{Verification of Parallel Constraints}\label{sec:svsat_parallel}

We first consider verifying whether $\sigma$ satisfies all parallel constraints by probing the indices of $(\hat{x}, \hat{y})$. Since $\prm$ is a systematic code~(\Cref{def:prm}), we can recover the value of an index in the assignment $\sigma$ directly from probing an index of $\hat{x}\circ \hat{y}$. In addition, recall that parallel constraints set up the same sub-constraints over all coordinates~(\Cref{itm:def:svcsp_2}), we can build up a unified Boolean circuit $C_\parallelc: \bin^{2|\Fbb|^m\log |\Fbb|}\to \bin$ that checks whether the systematic part of $\hat{x}$ and $\hat{y}$ satisfies the sub-constraints in a single coordinate. In detail, the circuit $C_\parallelc$ is defined as follows.

\begin{definition}[The Circuit $C_\parallelc$]
\label{def:cp}
    Let $\tilde x,\tilde y$ be words of length $|\Fbb|^m$ over alphabet\footnote{For ease of presentation, the circuit's input is described as alphabet $\Fbb$. We take the trivial conversion from an element of $\Fbb$ into a binary string of length $\log |\Fbb|$.} $\Fbb$.
    Assume $\tilde x,\tilde y$ are codewords of $\prm$.
    The circuit $C_\parallelc$ executes the following.
\begin{itemize}
    \item $C_\parallelc$ recovers the messages $\sigma_{\tilde x}, \sigma_{\tilde y}$ from the systematic part of $\tilde x,\tilde y$ respectively.
    \item After that, $C_\parallelc$ checks whether the assignment $\tilde\sigma$ specified by $\sigma_{\tilde x},\sigma_{\tilde y}$ has the correct subfield entries and satisfies all constraints in the $E_{\parallelc}$ part of $G$, at the single-coordinate level. 
    Specifically, $\tilde\sigma$ is the assignment of $V$ defined by 
    \begin{equation*}
    \sigma(v) = \begin{cases}
        \sigma_{\tilde x}(i) & v = x_i,\\
        \sigma_{\tilde y}(i) & v = y_i.
    \end{cases}
    \end{equation*}
    For every $v \in V$, $C_\parallelc$ checks whether $\sigma(v) \in \Hbb$; and for every constraint $e=\{u,v\} \in E_{\textsf p}$, $C_\parallelc$ checks whether $\Pi_e^{sub}(\sigma(u),\sigma(v))=1$. 
\end{itemize}
The size of circuit $C_{\sf p}$ is bounded by $(|\Fbb|^m + k + |E_\parallelc|)\cdot\poly |\Fbb|=(|\Fbb|^m + |E|)\cdot\poly |\Fbb|$.
\end{definition}

\paragraph{Double Test Problem (\doubletest{})}
In light of \Cref{def:cp}, $(\hat{x}, \hat{y})$ satisfies all parallel constraints iff $\hat x,\hat y$ are correct codewords and $C_\parallelc(\hat{x}[i]\circ \hat{y}[i]) = 1$ for each $i\in [t]$. This motivates us to consider the following pair language \doubletest{} related to \pcircuit{}.

\begin{definition}[\doubletest{}]\label{def:doubletest}
Assume $\charF(\Fbb)=2$. $(\Fbb,m,d,t)$-\doubletest{} is a pair language over $\Sigma_\pairx=\bin,\Sigma_\pairy=\Fbb^t$ consisting of $w=(C,T_1\circ T_2)$ where
\begin{itemize}
\item $C$ is a Boolean circuit with $2|\Fbb|^m\log|\Fbb|$ input bits and $T_1,T_2\in(\Fbb^t)^{|\Fbb|^m}$ are codewords of $\prm$;
\item if we view $\Fbb$ as $\bin^{\log|\Fbb|}$, $T_1\circ T_2$ parallel satisfies $C$.
\end{itemize}
We define the input length $|w|$ to be the size of $C$ plus $2|\Fbb|^m\log|\Fbb|$. Note that the dimension $t$ is reflected on the alphabet, not on the length.
\end{definition}

In short, \doubletest{} extends \pcircuit{} by allowing the assignment to have more dimensions (i.e., $t$), allowing the assignment to be partitioned into two parts (i.e., $T_1,T_2$), and assuming each part is encoded by parallel RM code.

Given \Cref{def:doubletest} and \Cref{def:cp}, we have the following statement.
Note that the assumption that $\tilde x,\tilde y$ are codewords of $\prm$ in \Cref{def:cp} is guaranteed in \Cref{def:doubletest}.
\begin{fact}\label{fct:parallel-checking}
Parallel constraints $E_\parallelc$ in $G$ are satisfied iff $(C_\parallelc, \hat{x}\circ \hat{y})\in (\Fbb, m, d, t)$-\doubletest{}.  
\end{fact}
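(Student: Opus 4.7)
The plan is to unwind both sides of the biconditional against the two relevant definitions (the circuit $C_\parallelc$ from \Cref{def:cp} and the language \doubletest{} from \Cref{def:doubletest}) and observe that they describe exactly the same conjunction of atomic checks, arranged under different orderings of the quantifiers over $i\in[t]$ and $e\in E_\parallelc$. Both sides presuppose that $\hat x$ and $\hat y$ are codewords of $\prm$ (this is part of \Cref{def:svsat} on the left, and explicit in \Cref{def:doubletest} on the right), so I may fix messages $\sigma_x,\sigma_y\in(\Fbb^t)^{\binom{m+d}d}$ with $\hat x=\prm(\sigma_x)$, $\hat y=\prm(\sigma_y)$, and work with the assignment $\sigma$ they determine in \Cref{def:svsat}.

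The key observation to carry out first is that the parallel RM encoding commutes with coordinate projection: for every $j\in[t]$, the single-output word $\hat x[j]\in\Fbb^{|\Fbb|^m}$ is itself an $(\Fbb,m,d,1)$-RM codeword, and its systematic part is precisely $\sigma_x(\cdot)[j]$ (similarly for $\hat y[j]$). Consequently, when $C_\parallelc$ is applied to $\hat x[j]\circ\hat y[j]$, the recovered single-coordinate assignment $\tilde\sigma$ is literally the $j$-th coordinate of $\sigma$, i.e.\ $\tilde\sigma(v)=\sigma(v)[j]$ for every $v\in V$.

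With this identification in hand, the forward direction is direct: if $\sigma$ satisfies every parallel constraint of $G$, then $\sigma(v)\in\Hbb^t$ for all $v\in V$ (so each coordinate lies in $\Hbb$), and by \Cref{itm:def:svcsp_2}, $\Pi_e(u,v)=\bigwedge_{i\in[t]}\Pi_e^{sub}(u_i,v_i)=1$, which gives $\Pi_e^{sub}(\sigma(u)[j],\sigma(v)[j])=1$ for every $j\in[t]$ and every $e\in E_\parallelc$. These are exactly the checks performed by $C_\parallelc$ on coordinate $j$, so $\hat x\circ\hat y$ parallel satisfies $C_\parallelc$, i.e.\ $(C_\parallelc,\hat x\circ\hat y)\in(\Fbb,m,d,t)$-\doubletest{}. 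The backward direction is the same argument read upward: parallel satisfaction of $C_\parallelc$ yields, for every $j$, that each $\sigma(v)[j]\in\Hbb$ and each $\Pi_e^{sub}(\sigma(u)[j],\sigma(v)[j])=1$; conjoining over $j\in[t]$ gives $\sigma(v)\in\Hbb^t$ and $\Pi_e(u,v)=1$ for every $e\in E_\parallelc$.

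There is no real obstacle here; the only point that warrants care is the quantifier swap ``$\forall j\in[t]\ \forall e\in E_\parallelc$'' versus ``$\forall e\in E_\parallelc\ \forall j\in[t]$,'' which is valid since both are finite conjunctions, together with the bookkeeping that the systematic recovery inside $C_\parallelc$ really does produce the $j$-th coordinate of $\sigma$ (handled by the coordinate-projection observation above). Thus the fact reduces to a line-by-line comparison of \Cref{def:cp,def:doubletest} against \Cref{itm:def:svcsp_2}.
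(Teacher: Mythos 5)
Your proof is correct; the paper states this Fact without an explicit proof, treating it as immediate from \Cref{def:cp}, \Cref{def:doubletest}, and \Cref{itm:def:svcsp_2}, and your argument is precisely the definitional unwinding that is intended. The one observation worth spelling out—that the parallel RM encoding commutes with coordinate projection, so the systematic recovery inside $C_\parallelc$ on input $\hat x[j]\circ\hat y[j]$ yields exactly the $j$-th coordinate of $\sigma$—is the observation you correctly isolate and use.
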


Thus, to verify that $\sigma$ satisfies all parallel constraints, if suffices to construct a PCPP verifier for \doubletest{}. The verifier is formally stated as follows and will be proved in \Cref{sec:pcpp-probttt}.

\begin{theorem}[PCPP for \doubletest{}{}]\label{thm:pcpp-doubletest}    
Assume $\charF(\Fbb)=2$ and $|\Fbb|\ge\max\cbra{6md,2^{100}m\log|\Fbb|}$.
For any $\delta \in[0,1]$, $(\Fbb, d, m, t)$-\doubletest{} has 
$$
\text{an $\pbra{O(1), \log|w| + O\pbra{\log^{0.1}|w|+\log|\Fbb|}, \delta,\Omega(\delta), \Sigma^{t}}$-PCPP verifer $\pcppdoubletest$,}
$$
where $|w|$ is the input length of $(\Fbb,d,m,t)$-\doubletest{} and $\Sigma=\Fbb^{d+1}$.
\end{theorem}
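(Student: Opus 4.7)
My plan is to combine the low-degree codeword tester $\pcppldt$ from \Cref{thm:ldt-maintext} with the \pcircuit{}-PCPP $\pcppckt$ from \Cref{thm:pcpp-cktsat}, applied in a parallel mode through a single ``super-circuit'' $C^\star$ that simultaneously enforces codeword-ness and satisfaction of $C$. Concretely, set
\[
C^\star(u\circ v) := C_\ldt^{(1)}(u) \wedge C_\ldt^{(2)}(v) \wedge C(u\circ v)
\]
on $2|\Fbb|^m\log|\Fbb|$ input bits, where $C_\ldt^{(1)}, C_\ldt^{(2)}$ are two copies of the circuit from \Cref{thm:pldt-circuit}. By that theorem together with \Cref{def:doubletest}, $T_1\circ T_2$ parallel satisfies $C^\star$ if and only if $(C,T_1\circ T_2)\in(\Fbb,m,d,t)$-\doubletest{}, and $|C^\star|\le|C|+|\Fbb|^m\cdot\poly|\Fbb|$. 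The verifier $\pcppdoubletest$ then spends $O(1)$ extra random bits to select one of three subtests uniformly: (i) run $\pcppldt$ on $T_1$, (ii) run $\pcppldt$ on $T_2$, or (iii) run $\pcppckt$ on $(C^\star, T_1\circ T_2)$ in the parallel mode justified by the remark following \Cref{thm:pcpp-cktsat}, bundling the auxiliary proof into alphabet $\bin^t\hookrightarrow\Sigma^t$ and accepting iff every one of the $t$ coordinate-instances accepts. Completeness is immediate by supplying, for each subtest, the honest auxiliary proof guaranteed by \Cref{thm:ldt-maintext} or \Cref{thm:pcpp-cktsat}.

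For soundness, suppose $T_1\circ T_2$ is $\delta$-far from $L(\doubletest)(C)$ at the $\Fbb^t$-symbol level, and set $\alpha=\delta/4$. If some $T_\ell$ is $\alpha$-far from $\eccim(\prm)$, then subtest $(\ell)$ rejects with probability at least $2^{-40}\alpha=\Omega(\delta)$. Otherwise both $T_1,T_2$ are $\alpha$-close to uniquely determined codewords $T_1',T_2'\in\eccim(\prm)$, using that $\delta(\prm)\ge 5/6\gg 2\alpha$. The pair $(T_1',T_2')$ must fail the \doubletest{} condition, since otherwise $T_1\circ T_2$ would be $\alpha$-close to $L(\doubletest)(C)$, contradicting $\delta>\alpha$; hence some coordinate $j_0\in[t]$ satisfies $C(T_1'[j_0]\circ T_2'[j_0])=0$. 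Any element of $L(C^\star)$ then differs from $(T_1'[j_0],T_2'[j_0])$ by at least $\delta(\srm)\ge 5/6$ in one component, placing $T_1[j_0]\circ T_2[j_0]$ at constant distance from $L(C^\star)$, comfortably above the threshold $\pcppdelta\le 2^{-100}$. The single-instance soundness of $\pcppckt$ at coordinate $j_0$ therefore rejects with probability at least $\tfrac12$, and this propagates to subtest (iii) because the parallel verifier rejects whenever any coordinate rejects. In both cases $\pcppdoubletest$ rejects with probability $\Omega(\delta)$.

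The remaining parameters fit cleanly. $\pcppldt$ uses $2$ queries and $m\log|\Fbb|+O(\log\log|\Fbb|+\log m)\le\log|w|+O(\log^{0.1}|w|+\log|\Fbb|)$ coins; parallel $\pcppckt$ on input of length $|w^\star|\le|w|\cdot\poly|\Fbb|$ uses $O(1)$ queries and $\log|w|+O(\log^{0.1}|w|+\log|\Fbb|)$ coins, absorbing the $\poly|\Fbb|$ blowup into the additive $\log|\Fbb|$ slack. The main obstacle I anticipate is a careful formal justification of the parallel composition of $\pcppckt$: I must verify that reusing random coins and query addresses across the $t$ coordinates preserves both completeness and soundness (this ultimately rests on the query positions of $\pcppckt$ depending only on the shared circuit $C^\star$), and track the relationship between $\Fbb^t$-symbol-level distances at the $\pcppdoubletest$ interface and the bit-level distances at which the single-instance $\pcppckt$ soundness is phrased.
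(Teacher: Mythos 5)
Your overall architecture is the right one and matches the paper's (the paper proves the $u$-table generalization \Cref{thm:pcpp-main}, with the same three ingredients: a separate run of $\pcppldt$ on each table for parallel codeword proximity, plus a parallel simulation of $\pcppckt$ on an augmented circuit that contains $C_\ldt$ and $C$). But the soundness step for subtest (iii) has a genuine gap, and it is exactly the step you flagged at the end without resolving. Your super-circuit $C^\star$ reads $T_1[j_0]\circ T_2[j_0]$ through the \emph{naive} flattening of $\Fbb$ into $\bin^{\log|\Fbb|}$, while the soundness of $\pcppckt$ (\Cref{thm:pcpp-cktsat}) is phrased in \emph{bit-level} relative distance with an absolute constant threshold $\pcppdelta\le 2^{-100}$. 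The Schwartz--Zippel bound $\delta(\srm)\ge 5/6$ lives at the $\Fbb$-symbol level: two field strings that disagree on $5/6$ of their symbols may disagree on only one bit per differing symbol, so after flattening the relative Hamming distance can be as small as roughly $\frac{5/6}{2\log|\Fbb|}$, which is $o(1)$ since $|\Fbb|\ge 2^{100}m\log|\Fbb|$ forces $\log|\Fbb|$ to grow (in the paper's application $\log|\Fbb|\approx\sqrt{\log k}$). Hence the claim that $T_1[j_0]\circ T_2[j_0]$ is ``at constant distance from $L(C^\star)$, comfortably above the threshold $\pcppdelta$'' is false in the relevant parameter regime, and the rejection probability you can extract this way degrades by a $1/\log|\Fbb|$ factor (or vanishes entirely once the bit distance drops below $\pcppdelta$), whereas the theorem demands soundness $\Omega(\delta)$ uniformly in $|\Fbb|$.

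The missing idea is the alphabet lifting the paper performs before feeding the tables to the circuit: compose the flattening with a constant-rate, constant-distance binary error-correcting code (\Cref{prop:ecc}), i.e.\ replace each field entry by $\Enc(\cdot)\in\bin^{7\log|\Fbb|}$ via the entrywise map $\Enc^\odot$, and augment the circuit to (S1) verify that each block is a valid codeword and decode it, (S2) run $C_\ldt$ from \Cref{thm:pldt-circuit} on the decoded tables, and (S3) run $C$. With this lifting, two distinct degree-$d$ truth tables that are $\ge 1-\frac d{|\Fbb|}$ apart over $\Fbb$ remain $\Omega(1)$ apart as bit strings (this is \Cref{clm:c'_distance}, giving $2^{-60}$-farness), so the single-instance soundness of $\pcppckt$ applies at the problematic coordinate $j_0$ and rejection propagates through the parallel simulation exactly as you describe. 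The size and randomness overhead of the lifting is only a $\poly|\Fbb|$ factor, which your parameter accounting already absorbs into the $O(\log|\Fbb|)$ additive slack, so once this modification is made the rest of your argument (the case split on $\alpha=\delta/4$, uniqueness of the close codewords, and the parallel reuse of $\pcppckt$'s coins) goes through along the same lines as the paper's proof.
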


\subsubsection{Verification of Linear Constraints}\label{sec:svsat_linear}

We then turn to verifying whether $\sigma$ satisfies all linear constraints $E_\linearc$. 
Recall from \Cref{itm:def:svcsp_1} that all linear constraints are set up between $x_i$ and $y_i$, with the constraint that $y_i = M_ix_i$. 
By defining auxiliary variables $z_i := y_i - M_ix_i$, it suffices to check whether $z_i \equiv \vec{0}_t$ for every $i\in [k]$.

Thus, we add the auxiliary proof $\hat{z}\in (\Fbb^t)^{|\Fbb|^m}$, which is supposed to fulfill the following condition
\begin{equation}\label{eq:hatz}
\hat{z}(p) = \hat{y}(p) - \hat{M}(p) \hat{x}(p)
    \quad\text{holds for all $p\in\Fbb^m$,}
\end{equation}
where $\hat M \in (\Fbb^{t\times t})^{|\Fbb|^m}$ is the parallel RM encoding of $(M_1,\ldots,M_k,0^{t\times t},\ldots,0^{t\times t})\in(\Fbb^{t\times t})^{\binom{m+d}d}$.
Here we extend \Cref{def:prm} for matrix values: the value of $\xi_i$ is $M_i$ for $i\le k$ and is $0^{t\times t}$ for $i>k$.
Equivalently, for each matrix coordinate $(i,j)\in[t]\times[t]$, $\hat M[i,j]\in\Fbb^{|\Fbb|^m}$ is the RM encoding of $(M_1[i,j],\ldots,M_k[i,j],0,\ldots,0)$.
We remark that entries of $\hat M$ can be efficiently computed on the fly by demand and are not included as a proof for the PCPP verifier. 

Based on the discussion above, we obtain the following fact.

\begin{fact}\label{fct:checking_z_linear}
Linear constraints $E_\linearc$ in $G$ are satisfied iff $\hat x,\hat y$ are codewords of $\prm$ and the systematic part of $\hat{z}$, defined by \Cref{eq:hatz}, are all $0^t$.
\end{fact}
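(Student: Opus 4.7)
The plan is to prove both directions of the iff by direct evaluation at the systematic interpolation points $\xi_1,\ldots,\xi_{\binom{m+d}{d}}$ of the parallel Reed--Muller code, exploiting two facts: that the systematic part of a $\prm$ codeword is literally its underlying message, and that $\hat M$ was chosen to be the $\prm$ encoding of $(M_1,\ldots,M_k,0^{t\times t},\ldots,0^{t\times t})$, so at $\xi_i$ it produces exactly $M_i$ for $i\le k$ and $0^{t\times t}$ for $i>k$. Once equation~\labelcref{eq:hatz} is specialized to $p=\xi_i$, the condition on the systematic part of $\hat z$ becomes a direct arithmetic statement about the message coordinates of $\hat x$ and $\hat y$.

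The easier direction is $(\Leftarrow)$. Assuming $\hat x=\prm(\sigma_x)$, $\hat y=\prm(\sigma_y)$, and that the systematic part of $\hat z$ is identically $0^t$, I would evaluate \labelcref{eq:hatz} at $\xi_i$ for each $i\in[k]$ to obtain $0^t=\hat z(\xi_i)=\sigma_y(i)-M_i\,\sigma_x(i)=y_i-M_i x_i$, which is exactly the $i$-th linear constraint in $E_\linearc$. For the $(\Rightarrow)$ direction, given a satisfying assignment of $G$, I would take the natural extension $\sigma_x,\sigma_y\in(\Hbb^t)^{\binom{m+d}d}$ of the assignment on $\{x_i\},\{y_i\}$ by padding the coordinates $i>k$ with $0^t$, and let $\hat x=\prm(\sigma_x)$, $\hat y=\prm(\sigma_y)$, which are codewords by construction. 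Evaluating \labelcref{eq:hatz} at $\xi_i$ then gives $y_i-M_i x_i=0^t$ for $i\le k$ from the linear constraint, and $\sigma_y(i)-0^{t\times t}\cdot\sigma_x(i)=0^t$ for $i>k$ from the matched zero-padding of both $\sigma_y$ and $\hat M$.

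I do not expect a real obstacle here; the entire content is bookkeeping about the systematic coordinates and the padding convention. The only place where one has to be careful is the forward direction, which is implicitly existential (we need to exhibit some codewords $\hat x,\hat y$), and where it is crucial that the zero padding chosen for $\sigma_x,\sigma_y$ on the coordinates $i>k$ matches the zero padding used when defining $\hat M$, so that the non-systematic ``message slack'' contributes nothing to the systematic part of $\hat z$. Once this alignment is noted, both implications reduce to reading off a single entry of each codeword at a prescribed point.
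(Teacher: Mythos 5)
Your core computation is the same one the paper implicitly relies on — evaluate \Cref{eq:hatz} at the interpolation points $\xi_i$, using the systematic property of $\prm$ and $\hat M(\xi_i)=M_i$ for $i\le k$ — and your $(\Leftarrow)$ direction is correct as written. The mismatch is in what ``the systematic part of $\hat z$'' means. From the technical overview (``all systematic parts of $\hat z$, i.e., the codeword entries corresponding to $z_i:=y_i-M_ix_i$ for $i\in[k]$'') and from \Cref{def:cl}, which has $C_\linearc$ check only $\hat z(\xi_i)=0$ for $i\in[k]$, the Fact means only the first $k$ systematic coordinates $\hat z(\xi_1),\ldots,\hat z(\xi_k)$, not the full $\binom{m+d}{d}$ (nor $\binom{m+2d}{2d}$) entries. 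Under this narrower reading the Fact is a direct pointwise equivalence for the \emph{given} $\hat x,\hat y$ fixed at the start of Section 4.2.2: both directions are the single identity $\hat z(\xi_i)=\hat y(\xi_i)-\hat M(\xi_i)\hat x(\xi_i)=\sigma_y(i)-M_i\sigma_x(i)$ for $i\in[k]$, and the coordinates $i>k$ place no constraint at all. Your broader reading forces an existential treatment of $(\Rightarrow)$ — you construct $\hat x,\hat y$ by zero-padding $\sigma_x,\sigma_y$ so that their padding matches that of $\hat M$ — which is harmless but unnecessary, and it proves a slightly different (existential) statement than the one later invoked for the fixed $\bar x,\bar y,\bar z$ in the soundness argument of \Cref{lem:pcpp-s}, where you cannot choose the padding.
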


Recall that $\hat x,\hat y$ being correct codewords are guaranteed by the analysis of parallel constraints above.
Hence we focus on testing the systematic part of $\hat z$, which amount to $\hat z$ parallel satisfying the following circuit.

\begin{definition}[The Circuit $C_\linearc$]\label{def:cl}
The circuit $C_\linearc$ receives as input a word $\tilde z$ of length $|\Fbb|^m$ over alphabet $\Fbb$.
It checks if $\tilde z(\xi_i) = 0$ holds for all $i\in[k]$, where we recall that $\xi_i$ from \Cref{def:prm}.

The size of the circuit $C_\linearc$ is bounded by $(|\Fbb|^m + k)\cdot\poly |\Fbb|$.
\end{definition}

We will use a variant of \doubletest{}, denoted \singletest{}, on $C_\linearc$ and $\hat z$. But before that, we give a degree bound for $\hat z$.

\begin{claim}\label{clm:hatz}
    If $\hat{x}, \hat{y}$ are codewords of $\prm$, then the $\hat{z}$ defined by \Cref{eq:hatz} is a codeword of ${\rm RM}^{\Fbb,m,2d,t}$.
\end{claim}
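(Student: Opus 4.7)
The plan is to unpack the definitions of the parallel RM encodings of $\hat{x}$, $\hat{y}$, $\hat{M}$ and then verify that the entries of $\hat{z}$ are precisely the evaluations of a single parallel-degree-$2d$ polynomial over $\Fbb^m$. Since $\hat{x},\hat{y}\in\eccim(\prm)$, there exist polynomials $f_x,f_y\colon\Fbb^m\to\Fbb^t$ of parallel degree $d$ such that $\hat{x}(p)=f_x(p)$ and $\hat{y}(p)=f_y(p)$ for all $p\in\Fbb^m$. By construction of $\hat M$ in \Cref{eq:hatz}, each scalar entry $\hat M[i,j]$ is the $\srm$ encoding of the sequence $(M_1[i,j],\ldots,M_k[i,j],0,\ldots,0)\in\Fbb^{\binom{m+d}d}$, so there is a matrix-valued polynomial $f_M\colon\Fbb^m\to\Fbb^{t\times t}$ whose every entry $f_M[i,j]$ is a degree-$d$ polynomial and $\hat M(p)=f_M(p)$.

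Next I would define $f_z\colon\Fbb^m\to\Fbb^t$ coordinate-wise by
\[
f_z[i](p) \;:=\; f_y[i](p) \;-\; \sum_{j=1}^{t} f_M[i,j](p)\cdot f_x[j](p), \qquad i\in[t],
\]
and observe that by \Cref{eq:hatz} we have $\hat z(p)=f_z(p)$ for all $p\in\Fbb^m$. For each $i\in[t]$, the term $f_y[i]$ has degree at most $d\le 2d$, and every product $f_M[i,j]\cdot f_x[j]$ has degree at most $d+d=2d$, so $f_z[i]$ is a polynomial of total degree at most $2d$. Hence $f_z$ has parallel degree $2d$, which means $\hat z$ is the truth table over $\Fbb^m$ of a parallel-degree-$2d$ polynomial, i.e.\ a codeword of $\nrm^{\Fbb,m,2d,t}$.

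The only real things to check are that the ambient parameters in \Cref{def:prm} are still satisfied for degree $2d$ and that $\hat M$ genuinely arises from a single matrix-valued polynomial of parallel degree $d$; both follow from the assumption $|\Fbb|\ge 12md$ (so $|\Fbb|>2d$ and $|\Fbb|^m\ge\binom{m+2d}{2d}$) together with the coordinate-wise description of $\hat M$ above. I do not anticipate any conceptual obstacle: the statement is essentially the algebraic fact that the pointwise product of two parallel-degree-$d$ codewords lives in the parallel RM code of degree $2d$, and the linear correction by $\hat y$ stays within this space.
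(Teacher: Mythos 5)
Your argument is correct and matches the paper's proof: both expand the matrix product coordinate-wise to write $\hat z[i](p) = \hat y[i](p) - \sum_{j} \hat M[i,j](p)\,\hat x[j](p)$ and observe that each term is (a product of) degree-$d$ polynomials, hence each $\hat z[i]$ has degree at most $2d$. Your additional remark about the parameter conditions for $\nrm^{\Fbb,m,2d,t}$ being well-defined is a sensible (if unstated in the paper) sanity check, but the core reasoning is identical.
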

\begin{proof}
Expanding the matrix multiplication, for each coordinate $i \in [t]$, we have 
$$
\hat z[i](p) = \hat y[i](p) - \sum_{j \in [t]} \hat M[i,j](p)\cdot\hat x[j](p)
\quad\text{for all $p\in\Fbb^m$,}
$$ 
where $\hat M[i,j](p)$ is the $(i,j)$-th entry of $\hat M(p)$.
Since $\hat y[i],\hat M[i,j],\hat x[j]$ are truth tables of degree-$d$ polynomials, $\hat z[j]$ is the truth table of a degree-$2d$ polynomial, which means $\hat z\in\eccim(\textsf{RM}^{\Fbb,m,2d,t})$.
\end{proof}

\paragraph{Single Test Problem (\singletest{})}
At this point, checking \Cref{fct:checking_z_linear} can be safely handled by the following \singletest{} with proper degree conditions.

\begin{definition}[\singletest{}]\label{def:singletest}
Assume $\charF(\Fbb)=2$. $(\Fbb,m,d,t)$-\singletest{} is a pair language over $\Sigma_\pairx=\bin,\Sigma_\pairy=\Fbb^t$ consisting of $w=(C,T_1)$ where
\begin{itemize}
\item $C$ is a Boolean circuit with $2|\Fbb|^m\log|\Fbb|$ input bits and $T_1\in(\Fbb^t)^{|\Fbb|^m}$ is a codeword of $\prm$;
\item if we view $\Fbb$ as $\bin^{\log|\Fbb|}$, $T_1$ parallel satisfies $C$.
\end{itemize}
We define the input length $|w|$ to be the size of $C$ plus $|\Fbb|^m\log|\Fbb|$.
\end{definition}

In comparison, \singletest{} simply removes the second table $T_2$ from \doubletest{}.
Combining \Cref{clm:hatz} and \Cref{fct:checking_z_linear}, we have the following result.

\begin{fact}\label{fct:hatz}
Linear constraints $E_\linearc$ in $G$ are satisfied iff $\hat x,\hat y$ are codewords of $\prm$, $\hat z$ satisfies \Cref{eq:hatz}, and $(C_{\linearc}, \hat{z})\in(\Fbb,m,2d,t)$-\singletest{}.
\end{fact}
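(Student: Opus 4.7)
The plan is to derive this as a direct corollary of \Cref{fct:checking_z_linear} and \Cref{clm:hatz}, with the main work being to unpack what membership in $(\Fbb,m,2d,t)$-\singletest{} actually says about $\hat z$. Recall \singletest{} membership requires two things: that $\hat z$ is a codeword of $\prm$ (at degree $2d$), and that $\hat z$ parallel satisfies the circuit $C_\linearc$. The circuit $C_\linearc$ in \Cref{def:cl} checks only that $\tilde z(\xi_i)=0$ for every $i\in[k]$, where $\xi_i$ are the interpolation points from \Cref{def:prm}. Parallel satisfaction by $\hat z \in (\Fbb^t)^{|\Fbb|^m}$ therefore amounts exactly to the assertion that each coordinate slice $\hat z[j]$ vanishes at every $\xi_i$ with $i\in[k]$, i.e.\ $\hat z(\xi_i)=0^t$ for all $i\in[k]$. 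Since $\xi_1,\dots,\xi_k$ are precisely the systematic coordinates corresponding to the variables $z_1,\dots,z_k$, this is the same as saying that the systematic part of $\hat z$ on $[k]$ is all $0^t$.

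For the forward direction, I would assume the linear constraints $E_\linearc$ in $G$ are satisfied. By \Cref{fct:checking_z_linear}, this implies that $\hat x,\hat y$ are codewords of $\prm$ and that the systematic part of $\hat z$ (as defined by \Cref{eq:hatz}) is all $0^t$, which gives conditions (a) and (b). Since $\hat x,\hat y$ are codewords, \Cref{clm:hatz} upgrades $\hat z$ to a codeword of ${\nrm}^{\Fbb,m,2d,t}$. Combining this with the vanishing of the systematic part (translated into parallel satisfaction of $C_\linearc$ as described above) gives $(C_\linearc,\hat z)\in (\Fbb,m,2d,t)$-\singletest{}, which is (c).

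For the backward direction, I would start from (a), (b), and (c). Condition (c) directly yields that $\hat z$ parallel satisfies $C_\linearc$, which translates to $\hat z(\xi_i)=0^t$ for every $i\in[k]$. Together with (a) and (b), the hypothesis of \Cref{fct:checking_z_linear} is met, so the linear constraints of $G$ are satisfied. The only subtlety is making sure that ``parallel satisfies'' in \singletest{} (defined over the alphabet $\Fbb^t$ via per-coordinate slices $\hat z[j]$) lines up with ``systematic part is $0^t$'' (a vector statement); this is where I would be careful, but it is a straightforward unpacking since vanishing in every coordinate $j\in[t]$ at $\xi_i$ is equivalent to $\hat z(\xi_i)=0^t$. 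No new quantitative work is needed and no step is genuinely difficult; the purpose of the fact is to rephrase \Cref{fct:checking_z_linear} in the \singletest{} format so that it can be fed into the PCPP machinery that is about to be invoked in \Cref{sec:svsat_whole}.
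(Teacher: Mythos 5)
Your proof is correct and matches the paper's intended derivation: the paper simply states the fact after the remark ``Combining \Cref{clm:hatz} and \Cref{fct:checking_z_linear}, we have the following result,'' and your argument is exactly that combination, with the additional careful check that ``$\hat z$ parallel satisfies $C_\linearc$'' in the $\singletest{}$ sense is equivalent to ``the systematic part of $\hat z$ is all $0^t$.'' No discrepancy.
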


Analogous to \Cref{thm:pcpp-doubletest}, we also have an efficient PCPP verifier for \singletest{} as follows, which will also be proved in \Cref{sec:pcpp-probttt}.

\begin{theorem}[PCPP for \singletest{}{}]\label{thm:pcpp-singletest}    
Assume $\charF(\Fbb)=2$ and $|\Fbb|\ge\max\cbra{6md,2^{100}m\log|\Fbb|}$.
For any $\delta \in[0,1]$, $(\Fbb, d, m, t)$-\singletest{} has 
$$
\text{an $\pbra{O(1), \log|w| + O\pbra{\log^{0.1}|w|+\log|\Fbb|}, \delta,\Omega(\delta), \Sigma^{t}}$-PCPP verifer $\pcppsingletest$,}
$$
where $|w|$ is the input length of $(\Fbb,d,m,t)$-\singletest{} and $\Sigma=\Fbb^{d+1}$.
\end{theorem}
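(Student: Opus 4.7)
The plan is to build $\pcppsingletest$ by randomly composing two sub-verifiers: the parallel low-degree test $\pcppldt$ of \Cref{thm:ldt-maintext}, which tests $T_1 \in \eccim(\prm)$, and the near-linear PCPP for \pcircuit{} of \Cref{thm:pcpp-cktsat}, applied to the combined circuit $C' := C \wedge C_\ldt$, where $C_\ldt$ is the codeword-testing circuit from \Cref{thm:pldt-circuit}. Writing $L$ for $(\Fbb,m,d,t)$-\singletest{} and $L(C)$ for its restriction to the first coordinate $C$, the key observation is that a word $T \in (\Fbb^t)^{|\Fbb|^m}$ parallel satisfies $C'$ iff $T \in \eccim(\prm)$ and $T$ parallel satisfies $C$; equivalently, the parallel-satisfying assignments of $C'$ are exactly $L(C)$. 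Because the \pcircuit{}-PCPP queries the same positions for every coordinate (see the remark after \Cref{thm:pcpp-cktsat}), its $t$ parallel invocations can be packaged into a single auxiliary proof over alphabet $\bin^t$, which embeds into $\Sigma^t = (\Fbb^{d+1})^t$, matching the LDT proof alphabet.

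\textbf{Construction and completeness.} Writing the auxiliary proof as $\pi = \pi_\ldt \circ \pi_\ckt$, the verifier flips a coin: with probability $1/2$ it runs $\pcppldt(T_1 \circ \pi_\ldt)$, and with probability $1/2$ it runs the parallel \pcircuit{}-PCPP on $(C', T_1, \pi_\ckt)$. If $(C, T_1) \in L$, then $T_1 \in \eccim(\prm)$ so a certifying $\pi_\ldt$ exists, and $T_1$ parallel satisfies $C \wedge C_\ldt$ so a certifying $\pi_\ckt$ exists, giving completeness. The query count stays $O(1)$; the randomness is dominated by the \pcircuit{}-PCPP applied to the combined instance of size $|C| + |\Fbb|^m\poly|\Fbb|$, whose logarithm is $\log|w| + O(\log|\Fbb|)$, yielding the claimed $\log|w| + O(\log^{0.1}|w| + \log|\Fbb|)$.

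\textbf{Soundness and main obstacle.} Suppose $T_1$ is $\delta$-far from $L(C)$. If $T_1$ is $\delta/2$-far from $\eccim(\prm)$, the low-degree test rejects with probability $\Omega(\delta)$ by \Cref{thm:ldt-maintext}, so the verifier rejects with probability $\Omega(\delta)$. Otherwise $T_1$ is $\delta/2$-close to a codeword $T_1^\star \in \eccim(\prm)$, which is unique whenever $\delta < 1 - d/|\Fbb|$ because $\prm$ has relative distance $1 - d/|\Fbb| \ge 5/6$ (as $|\Fbb| \ge 6md$). The triangle inequality forces $T_1^\star \notin L(C)$, and every other codeword in $L(C)$ is at distance $\ge (1 - d/|\Fbb|) - \delta/2 \ge \pcppdelta$ from $T_1$; hence $T_1$ is $\pcppdelta$-far from the parallel-satisfying assignments of $C'$, and the \pcircuit{}-PCPP rejects with probability $\ge 1/2$. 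The complementary regime $\delta \ge 1 - d/|\Fbb|$ is handled by the \pcircuit{}-PCPP branch directly since $\delta \ge \pcppdelta$ already. The principal obstacle is precisely this small-$\delta$ regime: the \pcircuit{}-PCPP only certifies constant proximity $\pcppdelta$, so obtaining $\Omega(\delta)$ soundness for every $\delta$ requires the LDT branch to absorb words that are far from every codeword, while the \pcircuit{}-PCPP branch leverages the minimum distance of $\prm$ to rule out the intermediate regime where $T_1$ sits near, but not at, a non-satisfying codeword.
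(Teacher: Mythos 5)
Your architecture matches the paper's: the paper proves this statement as the $u=1$ case of a \probttt{} verifier $\pcpptabletest$ that, exactly as you propose, runs with equal probability either the parallel low-degree test $\pcppldt$ on $T_1$ or a parallel simulation of $\pcppckt$ against a single-coordinate circuit that conjoins $C$ with the codeword-testing circuit $C_\ldt$. The per-coordinate reasoning in your soundness case analysis (unique decoding to $T_1^\star$, Schwartz--Zippel forcing any satisfying low-degree word to be $\ge 1-d/|\Fbb|$-far from a non-satisfying one) is also the same as the paper's \Cref{clm:c'_distance} and \Cref{clm:tabletest_2}.

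However, there is a genuine gap in the step ``hence $T_1$ is $\pcppdelta$-far from the parallel-satisfying assignments of $C'$, and the \pcircuit{}-PCPP rejects with probability $\ge 1/2$.'' The soundness of $\pcppckt$ (\Cref{thm:pcpp-cktsat}) is with respect to \emph{binary} Hamming distance on its input string, whereas your distance bound $(1-d/|\Fbb|)-\delta/2$ lives over the alphabet $\Fbb$. When you flatten $\Fbb$ as $\bin^{\log|\Fbb|}$, two differing field elements may differ in a single bit, so the relative distance can shrink by a factor of $\log|\Fbb|=\omega(1)$: a word at constant $\Fbb$-distance $\approx 1/3$ from every satisfying assignment of $C\wedge C_\ldt$ may be only $O(1/\log|\Fbb|)$-far in bits, which falls below the absolute constant $\pcppdelta\le 2^{-100}$ once $|\Fbb|$ is large (as it is in the intended parameter regime), and then \Cref{thm:pcpp-cktsat} gives no rejection guarantee at all. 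The paper identifies exactly this alphabet mismatch and fixes it by lifting the flattening through a constant-rate, constant-distance binary code (\Cref{prop:ecc}): the circuit $C'$ first checks that each block is a codeword of $\Enc^\odot$ and decodes it, then runs $C_\ldt$ and $C$ on the decoded word (steps \ref{itm:c'_1}--\ref{itm:c'_3}), and the verifier answers $\pcppckt$'s queries to $\Enc^\odot(T_1[i])$ by reading the corresponding entry of $T_1$ and encoding it on the fly; \Cref{clm:c'_distance} then yields a constant ($2^{-60}$) binary distance from all solutions of $C'$, which survives the $\delta$-perturbation and triggers rejection. Your proposal, taking $C'=C\wedge C_\ldt$ over the naive bit-flattening, omits this lifting, so the soundness argument in both your ``close to a bad codeword'' regime and your ``$\delta\ge 1-d/|\Fbb|$'' regime does not go through as written; adding the ECC layer (or an equivalent distance-preserving encoding of $\Fbb$) is the missing idea.
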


Finally, we briefly sketch how to check if $\hat z$ satisfies \Cref{eq:hatz}, as needed in \Cref{fct:hatz}. This will be done by randomly picking an index $p$ and checking whether \Cref{eq:hatz} holds on that $p$. The soundness will be analyzed by Schwartz-Zippel Lemma. This will be formalized in \Cref{sec:svsat_whole}.

\subsubsection{The Whole Construction and Analysis}\label{sec:svsat_whole}

Based on the above discussion, we are now ready to construct the PCPP verifier $\Pcal$ for \svsat{}.
We invoke \Cref{thm:pcpp-doubletest} to obtain PCPP verifiers $\pcppdoubletest$ and $\pcppsingletest$ for $(\Fbb,m,d,t)$-\doubletest{} and $(\Fbb,m,2d,t)$-\singletest{} respectively. 

Recall that the input of $(\Fbb,m,d,t)$-\svsat{} is $(G,(\hat x,\hat y))$.
The auxiliary proof consists of $\hat z$, $\pi_1$, and $\pi_2$, where
\begin{itemize}
    \item $\hat z$ is supposed to be a codeword in $\textsf{RM}^{\Fbb,m,2d,t}$ and $\hat z(p) = \hat y(p)-\hat M(p)\hat x(p)$ for all $p\in\Fbb^m$;
    \item $\pi_1$ is supposed to be the auxiliary proof to convince $\pcppdoubletest$ that $(C_{\parallelc},\hat x \circ \hat y)$ belongs to the pair language $(\Fbb,m,d,t)$-\doubletest{}.
    \item $\pi_2$ is supposed to be the auxiliary proof to convince $\pcppsingletest$ that $(C_{\linearc}, \hat z)$ belongs to the pair language $(\Fbb,m,2d,t)$-\singletest{}.
\end{itemize} 

The verifier $\Pcal$ performs one of the following three tests with equal probability.
\begin{enumerate}[label=\textbf{(T\arabic*)}]
    \item \label{itm:pcpp-1} Feed the pair of words $(C_{\parallelc},\hat x \circ \hat y)$ and the auxiliary proof $\pi_1$ into $\pcppdoubletest$. Reject if $\pcppdoubletest$ rejects.
    \item \label{itm:pcpp-2} Feed the pair of words $(C_{\linearc},\hat z)$ and the auxiliary proof $\pi_2$ into $\pcppsingletest$. Reject if $\pcppsingletest$ rejects.
    \item \label{itm:pcpp-3} Generate a random point $p \in \Fbb^m$, reject if $\hat z(p) \ne \hat y(p)-\hat M(p)\hat x(p)$.
\end{enumerate}

At this point, we are ready to analyze the PCPP verifier $\Pcal$. In particular, \Cref{thm:pcp-svcsp} follows from the combination of the following \Cref{lem:pcpp-para}, \Cref{lem:pcpp-c}, and \Cref{lem:pcpp-s}.

\begin{lemma}[Parameters]\label{lem:pcpp-para}
Assume $\charF(\Fbb)=2$ and $|\Fbb|\ge\max\cbra{12md,2^{101}m\log|\Fbb|}$. 
Then $\Pcal$ tosses 
$$
\log(|E|+|\Fbb|^m) + O\pbra{\log^{0.1}(|E|+|\Fbb|^m) +\log |\Fbb|}.
$$ 
unbiased coins and makes $O(1)$ queries. The alphabet of the auxiliary proof is $\Sigma^{t}$ where $\Sigma=\Fbb^{d+1}$.
\end{lemma}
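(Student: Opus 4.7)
The plan is to read off each of the three claimed bounds (random coins, query count, auxiliary-proof alphabet) by a test-by-test analysis of the three subroutines \Cref{itm:pcpp-1,itm:pcpp-2,itm:pcpp-3} of $\Pcal$, then combine them with the $O(1)$ extra bits needed to pick which subroutine is run.

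First I would verify the preconditions for invoking \Cref{thm:pcpp-doubletest} and \Cref{thm:pcpp-singletest}. The doubletest PCPP is applied at parameters $(\Fbb,m,d,t)$ on the input $(C_\parallelc,\hat x\circ\hat y)$, and the singletest PCPP is applied at parameters $(\Fbb,m,2d,t)$ on $(C_\linearc,\hat z)$; the jump to $2d$ is justified by \Cref{clm:hatz}. The required field lower bounds $|\Fbb|\ge 6m\cdot 2d=12md$ and $|\Fbb|\ge 2^{100}m\log|\Fbb|$ are both implied by the hypothesis $|\Fbb|\ge\max\cbra{12md,2^{101}m\log|\Fbb|}$.

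For randomness, I would use the circuit size bounds of \Cref{def:cp} and \Cref{def:cl}, which give pair-language input sizes $|w_1|,|w_2|\le(|\Fbb|^m+|E|)\cdot\poly|\Fbb|$ (using $k=|E_\linearc|\le|E|$ for the linear side). Plugging into the coin counts of the two PCPP theorems yields
$$
\log|w_i|+O\pbra{\log^{0.1}|w_i|+\log|\Fbb|}=\log(|E|+|\Fbb|^m)+O\pbra{\log^{0.1}(|E|+|\Fbb|^m)+\log|\Fbb|},
$$
where the simplification uses $\log(XY)=\log X+\log Y$ together with the sub-additivity inequality $\log^{0.1}(XY)\le 2^{0.1}\pbra{\log^{0.1}X+\log^{0.1}Y}$ to absorb the $\poly|\Fbb|$ slack into the additive $\log|\Fbb|$ term. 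Subroutine \Cref{itm:pcpp-3} uses only $m\log|\Fbb|$ coins, well within the bound, and the uniform choice of subroutine adds $O(1)$ further bits.

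For queries, each of $\pcppdoubletest$ and $\pcppsingletest$ makes $O(1)$ queries, and \Cref{itm:pcpp-3} makes exactly three queries (namely $\hat x(p),\hat y(p),\hat z(p)$, with $\hat M(p)$ recomputed from $G$ on the fly), so $\Pcal$ is $O(1)$-query overall. For the auxiliary-proof alphabet, the three parts $\hat z,\pi_1,\pi_2$ live in $\Fbb^t$, $(\Fbb^{d+1})^t$, and $(\Fbb^{2d+1})^t$ respectively, and I would embed them all into the coarsest $\Sigma^t$, matching the claim after absorbing the constant factor $2$ into the choice of $d$. The main obstacle I expect is the clean handling of the $\poly|\Fbb|$ factor in $|w_i|$: one needs the unusual exponent $0.1$ to behave sub-additively under products so that the resulting $\log^{0.1}$ term folds back into the stated form $\log^{0.1}(|E|+|\Fbb|^m)+O(\log|\Fbb|)$; the rest of the argument is routine bookkeeping.
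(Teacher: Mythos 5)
Your proof is correct and follows essentially the same route as the paper's: bound the input lengths $|w_1|,|w_2|$ via \Cref{def:cp} and \Cref{def:cl}, plug into \Cref{thm:pcpp-doubletest} and \Cref{thm:pcpp-singletest}, and observe that \Cref{itm:pcpp-3} adds only three queries and $m\log|\Fbb|$ coins. You were in fact slightly more careful than the paper on two points — verifying the field-size preconditions at degree $2d$, and noticing that $\pi_2$ formally lives over $(\Fbb^{2d+1})^t$ rather than $(\Fbb^{d+1})^t$ (which the paper glosses over but which, as you say, is harmlessly absorbed by rescaling $d$).
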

\begin{proof}
    By \Cref{thm:pcpp-doubletest} and \Cref{thm:pcpp-singletest}, both $\pcppdoubletest$ and $\pcppsingletest$ make constant queries. Also in \Cref{itm:pcpp-3}, $\Pcal$ makes 3 queries on $\hat x, \hat y, \hat z$. Thus $\Pcal$ makes $O(1)$ total queries.

    By \Cref{def:cl,def:cp}, the input lengths of $\pcppdoubletest$ and $\pcppsingletest$ in \Cref{itm:pcpp-1} and \Cref{itm:pcpp-2} are
    $$\left\{\begin{aligned}
        |w_1| & = (|\Fbb|^m + |E|) \cdot \poly |\Fbb| + 2 |\Fbb|^m \log |\Fbb| & \le (|E|+|\Fbb|^m) \poly |\Fbb|, \\
        |w_2| & = (|\Fbb|^m +k)\cdot \poly |\Fbb| + |\Fbb|^m \log |\Fbb| & \le (|E|+|\Fbb|^m) \poly |\Fbb|,
    \end{aligned}\right.$$
    respectively.
    Putting this into \Cref{thm:pcpp-doubletest} and \Cref{thm:pcpp-singletest}, the number of unbiased coins used in $\pcppdoubletest$ and $\pcppsingletest$ is
    $$
    \log(|E|+|\Fbb|^m) + O\pbra{\log^{0.1}(|E|+|\Fbb|^m) +\log |\Fbb|}. 
    $$
    In \Cref{itm:pcpp-3}, $\Pcal$ tosses $m \log |\Fbb|$ coins. Since $\Pcal$ only executes one of the three tests, the randomness is bounded by their maximum. 

    The auxiliary proofs $\pi_1,\pi_2$ have alphabet $\Sigma$ by \Cref{thm:pcpp-doubletest} and \Cref{thm:pcpp-singletest}. The alphabet of $\hat z$ is $\Fbb^t$, which can also be embedded into the larger $\Sigma^t$.
\end{proof}

\begin{lemma}[Completeness]\label{lem:pcpp-c}
Assume $\charF(\Fbb)=2$ and $|\Fbb|\ge\max\cbra{12md,2^{101}m\log|\Fbb|}$. 
Suppose $\hat x= \prm(\sigma_x),\hat y = \prm(\sigma_y)$, and the assignment $\sigma$ given by $\sigma_x$ and $\sigma_y$ (recall \Cref{def:svsat}) is a solution to $G$. 
Then there exist $\hat z,\pi_1,\pi_2$ which $\Pcal$ accepts with probability $1$.
\end{lemma}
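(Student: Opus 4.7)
The plan is to exhibit explicit choices of $\hat z, \pi_1, \pi_2$ and then verify that each of the three subtests \Cref{itm:pcpp-1}, \Cref{itm:pcpp-2}, \Cref{itm:pcpp-3} accepts deterministically. The construction is essentially forced by the honest semantics of the proof.

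First, I would define $\hat z$ to be the table given by \Cref{eq:hatz}, i.e., $\hat z(p) = \hat y(p) - \hat M(p)\hat x(p)$ for every $p \in \Fbb^m$. With this choice, \Cref{itm:pcpp-3} accepts with probability $1$ by construction, regardless of the random point $p$. Moreover, by \Cref{clm:hatz}, since $\hat x, \hat y$ are codewords of $\prm$, the table $\hat z$ is a codeword of $\textsf{RM}^{\Fbb,m,2d,t}$, which is the degree regime required to invoke \Cref{thm:pcpp-singletest}.

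Next, for \Cref{itm:pcpp-1}, I would verify $(C_\parallelc, \hat x \circ \hat y) \in (\Fbb,m,d,t)$-\doubletest{}. The tables $\hat x, \hat y$ are codewords of $\prm$ by hypothesis. Since $\sigma$ is a solution of $G$, it satisfies all parallel constraints in $E_\parallelc$, and thus by \Cref{fct:parallel-checking}, $(C_\parallelc,\hat x\circ\hat y) \in (\Fbb,m,d,t)$-\doubletest{}. The completeness clause of \Cref{thm:pcpp-doubletest} then yields a $\pi_1$ on which $\pcppdoubletest$ always accepts. For \Cref{itm:pcpp-2}, I would verify $(C_\linearc,\hat z) \in (\Fbb,m,2d,t)$-\singletest{}. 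Again $\hat z$ is a codeword of $\textsf{RM}^{\Fbb,m,2d,t}$ by \Cref{clm:hatz}. Because $\sigma$ satisfies every linear constraint, we have $\sigma(y_i) - M_i\sigma(x_i) = 0^t$ for each $i\in[k]$, so the systematic entries of $\hat z$ at $\xi_1,\ldots,\xi_k$ are all $0^t$, which means $\hat z$ parallel satisfies $C_\linearc$ (cf.\ \Cref{def:cl}). Thus $(C_\linearc,\hat z)\in(\Fbb,m,2d,t)$-\singletest{}, so by the completeness clause of \Cref{thm:pcpp-singletest} there exists $\pi_2$ on which $\pcppsingletest$ always accepts.

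Combining the three items, under the honest choice of $\hat z$ and the $\pi_1,\pi_2$ guaranteed above, every one of the three branches of $\Pcal$ accepts deterministically, so $\Pcal$ accepts with probability $1$. There is no real obstacle here: the only things to double-check are that $\hat z$ indeed lives in the correct parallel RM code (handled by \Cref{clm:hatz}) and that the systematic part of $\hat z$ encodes the honest residuals $\sigma(y_i)-M_i\sigma(x_i)$ (immediate from \Cref{def:prm} and \Cref{eq:hatz}). Both are bookkeeping; the lemma is essentially the ``designed to work'' direction of the construction.
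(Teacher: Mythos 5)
Your proposal is correct and follows essentially the same route as the paper's proof: define $\hat z$ honestly via \Cref{eq:hatz} (so \Cref{itm:pcpp-3} passes deterministically and \Cref{clm:hatz} puts $\hat z$ in $\mathrm{RM}^{\Fbb,m,2d,t}$), invoke \Cref{fct:parallel-checking} plus the completeness of \Cref{thm:pcpp-doubletest} for $\pi_1$, and check the systematic entries $\hat z(\xi_i)=0^t$ to get membership in \singletest{} and $\pi_2$ from \Cref{thm:pcpp-singletest}. No gaps; this matches the paper's argument step for step.
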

\begin{proof}
    By \Cref{def:cp}, $(\hat x,\hat y)$ parallel satisfies the circuit $C_{\parallelc}$. Thus $(C_{\parallelc},\hat x\circ \hat y)\in(\Fbb,m,d,t)$-\doubletest{} by \Cref{fct:parallel-checking}. 
    Hence there exists an auxiliary proof $\pi_1$ which makes $\pcppdoubletest$ accepts with probability 1. \Cref{itm:pcpp-1} therefore always passes.

    For each $p \in \Fbb^m$, define $\hat z(p) = \hat y(p)-\hat M(p)\hat x(p)$. By \Cref{clm:hatz}, $\hat z\in\eccim(\textsf{RM}^{\Fbb,m,2d,t})$. 
    Let $\sigma_x,\sigma_y,\sigma_z$ be the messages $\hat x,\hat y,\hat z$ encodes respectively. 
    Note that $(\sigma_x,\sigma_y)$ satisfies the linear constraints $E_{\mathsf l}$ in $G$, i.e., $\sigma_y(i)-M_i\sigma_x(i)=0^t$ for all $i\in[k]$.
    Hence all $i \in [k]$, we have
    $$
    \sigma_z(i) = \hat z(\xi_i)= \hat y(\xi_i)-\hat M(\xi_i)\hat x(\xi_i)= \sigma_y(i)-M_i \sigma_x(i) = 0^t,
    $$ 
    where $\{\xi_1,\ldots,\xi_{\binom{m+d}{d}}\}$ are the distinct points defining the encoding of parallel RM code (see \Cref{def:prm}). 
    Therefore, $\hat z$ parallel satisfies $C_{\linearc}$. 
    By \Cref{fct:hatz}, there exists an auxiliary proof $\pi_2$, which makes $\pcppsingletest$ accepts with probability 1, and \Cref{itm:pcpp-2} always passes.

    Finally \Cref{itm:pcpp-3} always passes due to the definition of $\hat z$. This completes the proof.
\end{proof}

\begin{lemma}[Soundness]\label{lem:pcpp-s}
Assume $\charF(\Fbb)=2$ and $|\Fbb|\ge\max\cbra{12md,2^{101}m\log|\Fbb|}$. 
Let $\delta\in[0,1]$ be arbitrary.
If $(\hat x,\hat y)$ is $\delta$-far from satisfying, i.e., $\delta$-far from the restriction of $(\Fbb,m,d,t)$-\textsc{SVSat} on $G$, then $\Pcal$ rejects with probability $\Omega(\delta)$.
\end{lemma}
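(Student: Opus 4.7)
I will prove the contrapositive: assuming the overall rejection probability of $\Pcal$ is less than some small absolute constant times $\delta$, I construct a satisfying assignment $\sigma$ of $G$ whose parallel RM encoding lies strictly within distance $\delta$ of $(\hat x,\hat y)$, contradicting the $\delta$-farness hypothesis. Let $\delta_1,\delta_2,\delta_3$ denote the rejection probabilities of the three sub-tests \Cref{itm:pcpp-1}, \Cref{itm:pcpp-2}, \Cref{itm:pcpp-3}. Since $\Pcal$ picks one uniformly at random, it suffices to assume each $\delta_i\le c\delta$ for a small absolute constant $c>0$ to be calibrated at the end. Applying \Cref{thm:pcpp-doubletest} to \Cref{itm:pcpp-1} then extracts codewords $\tilde x,\tilde y\in\eccim(\prm)$ such that $\tilde x\circ\tilde y$ parallel satisfies $C_\parallelc$ and $\Delta(\hat x\circ\hat y,\tilde x\circ\tilde y)=O(c\delta)$; in particular $\Delta(\hat x,\tilde x),\Delta(\hat y,\tilde y)=O(c\delta)$. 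Let $\sigma_x,\sigma_y$ be the messages encoded by $\tilde x,\tilde y$ and let $\sigma$ be the induced assignment on $V$ as in \Cref{def:svsat}. By \Cref{def:cp} together with \Cref{itm:def:svcsp_2}, parallel satisfaction of $C_\parallelc$ by $\tilde x\circ\tilde y$ forces $\sigma(v)\in\Hbb^t$ for every $v\in V$ and guarantees that $\sigma$ satisfies every constraint in $E_\parallelc$.

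\textbf{Forcing the linear constraints via $\hat z$.} Define $z^* := \tilde y - \hat M \tilde x$, which is a codeword of $\textsf{RM}^{\Fbb,m,2d,t}$ by the same calculation as \Cref{clm:hatz}. Since $\hat y - \hat M \hat x$ and $z^*$ can only differ at positions where $\hat x\ne\tilde x$ or $\hat y\ne\tilde y$, we have
\[
\Delta(\hat y - \hat M\hat x,\, z^*) \le \Delta(\hat x,\tilde x)+\Delta(\hat y,\tilde y) = O(c\delta),
\]
and the rejection probability of \Cref{itm:pcpp-3} equals $\Delta(\hat z,\hat y-\hat M\hat x)$, so combining gives $\Delta(\hat z,z^*) = O(c\delta)$. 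On the other hand, applying \Cref{thm:pcpp-singletest} to \Cref{itm:pcpp-2} yields a codeword $\bar z\in\eccim(\textsf{RM}^{\Fbb,m,2d,t})$ parallel satisfying $C_\linearc$ with $\Delta(\hat z,\bar z) = O(c\delta)$, hence $\Delta(z^*,\bar z)=O(c\delta)$. Taking $c$ small enough, this is strictly less than the relative distance $1-2d/|\Fbb|\ge 5/6$ of $\textsf{RM}^{\Fbb,m,2d,t}$ (valid by $|\Fbb|\ge 12md$), so Schwartz-Zippel collapses $z^*=\bar z$. Parallel satisfaction of $C_\linearc$ by $\bar z$ then means the first $k$ systematic entries of $z^* = \bar z$ are $0^t$; evaluating at $\xi_i$ gives $\sigma_y(i)-M_i\sigma_x(i)=0^t$ for every $i\in[k]$, so $\sigma$ satisfies every linear constraint as well. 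Therefore $\sigma$ is a solution to $G$, and $(\prm(\sigma_x),\prm(\sigma_y))=(\tilde x,\tilde y)$ is a valid satisfying encoding lying within distance $O(c\delta)$ of $(\hat x,\hat y)$, which is $<\delta$ for sufficiently small $c$, contradicting the $\delta$-farness assumption.

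\textbf{Main obstacle.} The delicate point is the joint calibration of $c$: a single choice must simultaneously be small enough that \Cref{thm:pcpp-doubletest} and \Cref{thm:pcpp-singletest} extract codewords $O(c\delta)$-close to the inputs, and also small enough that the triangle-inequality bound on $\Delta(z^*,\bar z)$ sits strictly below the relative distance of the degree-$2d$ parallel Reed-Muller code so that $z^*=\bar z$. This is precisely what the hypothesis $|\Fbb|\ge 12md$ affords (double the $6md$ used in \Cref{thm:ldt-maintext}): it provides constant slack for the degree-$2d$ code that appears in $\hat z$. A secondary bookkeeping issue is that the PCPP farness is defined on the concatenation $\hat x\circ\hat y$, so translating to the individual distances $\Delta(\hat x,\tilde x)$ and $\Delta(\hat y,\tilde y)$ costs a factor of two which is absorbed into the $O(\cdot)$.
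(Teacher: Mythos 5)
Your proposal is correct and essentially mirrors the paper's proof: both extract codewords $(\tilde x,\tilde y)$ and $\bar z$ from the two sub-PCPPs, both use the \Cref{itm:pcpp-3} test together with Schwartz--Zippel to force the linear-constraint identity, and both conclude by exhibiting a satisfying encoding within distance $<\delta$ of $(\hat x,\hat y)$. The only cosmetic difference is that you introduce $z^*=\tilde y-\hat M\tilde x$ explicitly and invoke the minimum distance of $\textsf{RM}^{\Fbb,m,2d,t}$ to collapse $z^*=\bar z$, whereas the paper unrolls the same Schwartz--Zippel step as a lower bound on the rejection probability of \Cref{itm:pcpp-3} under the contradiction hypothesis; the underlying argument is identical.
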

\begin{proof}
Let $\kappa\ge1$ be a large constant, the specific value of which depends on the hidden constants in \Cref{thm:pcpp-doubletest} and \Cref{thm:pcpp-singletest}.
By modifying the hidden constant in $\Omega(\cdot)$ here and noticing that $\delta$-far implies $\delta'$-far for any $\delta'\le\delta$, we safely assume $\delta\le1/\kappa^2$.

Fix arbitrary $(\hat x,\hat y)$ that is $\delta$-far from satisfying. Assume that $\Pcal$ rejects with probability at most $\kappa\cdot\delta$, since otherwise the statement already holds.
Then each of the tests \Cref{itm:pcpp-1}, \Cref{itm:pcpp-2}, and \Cref{itm:pcpp-3} reject with probability at most $3\kappa\cdot\delta$.
By choosing $\kappa$ sufficiently large and according to soundness guarantee of $\pcppdoubletest$ and $\pcppsingletest$ in \Cref{thm:pcpp-doubletest} and \Cref{thm:pcpp-singletest}, we know
\begin{enumerate}
    \item\label{itm:svsat_soundness_1} $(\hat x,\hat y)$ is $\delta$-close to $(\bar x,\bar y)$, which is a pair of codewords of $\prm$ that parallel satisfies $C_{\parallelc}$. 
    
    Since $\hat x,\hat y$ have the same length, this alse implies that $\hat x$ is $2\delta$-close to $\bar x$ and $\hat y$ is $2\delta$-close to $\bar y$. 
    \item\label{itm:svsat_soundness_2} $\hat z$ is $\delta$-close to $\bar z$, which is a codeword of $\textsf{RM}^{\Fbb,m,2d,t}$ that parallel satisfies $C_{\linearc}$.
\end{enumerate} 
We aim to show that $(G,(\bar x,\bar y))\in(\Fbb,m,d,t)$-\svsat{}, which contradicts to the assumption that $(\hat x,\hat y)$ is $\delta$-far from satisfying and completes the proof.

Let $\sigma_{\bar x}$, $\sigma_{\bar y}$, $\sigma_{\bar z}$ be the messages of $\bar x,\bar y, \bar z$ respectively.
It now suffices to prove the assignment $\sigma$ given by 
\begin{equation*}
\sigma(v) = \begin{cases}
    \sigma_{\bar x}(i) & v = x_i\\
    \sigma_{\bar y}(i) & v = y_i
\end{cases}
\end{equation*}
satisfies all constraints in $E_{\parallelc}\dot\cup E_{\linearc}$.

By \Cref{itm:svsat_soundness_1} and \Cref{fct:parallel-checking}, $\sigma$ satisfies all constraints in $E_{\parallelc}$. 
To analyze constraints in $E_\linearc$, we first prove that $\bar z=\bar y-\hat M\bar x$ in accordance with \Cref{eq:hatz}.
Assume this is false for some entry $p\in\Fbb^m$, i.e.,
\begin{equation}\label{eq:barz_soundness}
\bar z(p) \neq \bar y(p)-\hat M(p)\bar x(p).
\end{equation}
Note that $\bar x,\bar y,\hat M$ are all of parallel degree-$d$ and $\bar z$ is of parallel degree-$2d$.
Then by Schwartz-Zippel lemma, \Cref{eq:barz_soundness} actually happens for at least $1-\frac{2d}{|\Fbb|}$ fraction of points $p\in\Fbb^m$.
Now recall the test in \Cref{itm:pcpp-3}, which checks precisely the above for a random $p\sim\Fbb^m$ with $\bar x,\bar y,\bar z$ replaced by $\hat x,\hat y,\hat z$.
By \Cref{itm:svsat_soundness_1,itm:svsat_soundness_2} and a union bound, with probability at least $1-5\delta-\frac{2d}{|\Fbb|}$, on this random $p$ we have $\hat x(p)=\bar x(p),\hat y(p)=\bar y(p),\hat z(p)=\bar z(p)$ and \Cref{eq:barz_soundness} happens, which makes \Cref{itm:pcpp-3} reject.
By our assumption on $|\Fbb|$ and $\delta\le1/\kappa^2$ with $\kappa$ sufficiently large, this rejection probability is at least $0.9>3\kappa\cdot\delta$ and contradicts to our assumption on the rejection probability of \Cref{itm:pcpp-3}.
In short, \Cref{eq:barz_soundness} can never happen.

Finally we are ready to show that constraints in $E_\linearc$ are satisfied by $\sigma$.
By \Cref{itm:svsat_soundness_2} and \Cref{fct:hatz}, $\sigma_{\bar z}(i) = 0^t$ holds for all $i\in[k]$, and thus
$$
\sigma_{\bar y}(i)-M_i \sigma_{\bar x}(i)
= \bar y(\xi_i) - \hat M(\xi_i) \bar x(\xi_i) 
= \bar z(\xi_i)
= \sigma_{\bar z}(i) = 0.
$$
Therefore, all constraints in $E_{\linearc}$ are also satisfied. This completes the whole soundness proof.
\end{proof}

\subsection{Putting Everything Together}

Now, we are ready to prove the main theorems. 

\begin{proof}[Proof of \Cref{thm:main}]
We start with an arbitrary $n$-variable {\sc 4-Regular 3-Coloring} instance $\Gamma$, which prohibits algorithms of runtime $2^{o(n)}$ in the worst case by \Cref{thm:eth_3color} assuming ETH.
By \Cref{thm:reduction_svcsp}, we obtain an \svcsp{} instance $G=(V,E,\Fbb_4^t,\{\Pi_e\}_{e\in E})$ in time $\poly(n,2^{n\log k/k})$ which preserves the satisfiability of $\Gamma$.
In addition, $t=O\pbra{\frac{n\log k}k}$ and $|V|=O(k),|E|=O(k)$.

Let $m$ and $d$ be integers to be chosen later satisfying
\begin{equation}\label{eq:thm:main_1}
k\le\binom{m+d}d.
\end{equation}
Let $\Fbb$ be a field of characteristic two which contains $\Fbb_4$ as a subfield and satisfies
\begin{equation}\label{eq:thm:main_2}
|\Fbb|\ge\max\cbra{12md,2^{101}m\log|\Fbb|}.
\end{equation}
By \Cref{fct:svsat}, $G$ is satisfiable iff there exists $\hat x,\hat y\in(\Fbb^t)^{|\Fbb|^m}$ such that $(G,(\hat x,\hat y))\in(\Fbb,m,d,t)$-\svsat{}.
Then we construct the PCPP verifier $\Pcal$ for $(\Fbb,m,d,t)$-\svsat{} from \Cref{thm:pcp-svcsp} with $\delta=1$ to obtain a PCP verifier (recall \Cref{def:pcpp}) $\Pcal'$ for the satisfiability of $G$.

The query complexity, completeness, alphabet, and randomness of $\Pcal'$ follow from those of $\Pcal$ in \Cref{thm:pcp-svcsp}; and the soundness of $\Pcal'$ is the (unspecified) constant soundness parameter by setting $\delta=1$ in \Cref{thm:pcp-svcsp}.
In particular, 
$$
\text{the alphabet size is }|\Sigma|^t=|\Fbb|^{(d+1)\cdot t}=|\Fbb|^{O(dn\log k/k)}
$$
and
$$
\text{the randomness is }\log(k+|\Fbb|^m)+O\pbra{\log^{0.1}(k+|\Fbb|^m)+\log|\Fbb|}\text{ coins.}
$$
Then we apply \Cref{fct:pcpp} and obtain a 2CSP instance $\Lambda$ preserving the satisfiability of $G$ (and thus $\Gamma$) where
\begin{itemize}
\item the size of the alphabet of $\Lambda$ is 
$$
N=|\Fbb|^{O(dn\log k/k)},
$$
\item the number of variables in $\Lambda$ is at most
$$
K=2|\Fbb|^m+2^{\log(k+|\Fbb|^m)+O\pbra{\log^{0.1}(k+|\Fbb|^m)+\log|\Fbb|}}\cdot O(1)=(k+|\Fbb|^m)\cdot\poly\pbra{|\Fbb|,2^{\log^{0.1}(k+|\Fbb|^m)}},
$$
\item the number of constraints in $\Lambda$ is a constant multiple of the number of variables in $\Lambda$.
\end{itemize}

Finally we optimize the choice of $m,d,|\Fbb|$.
Assume $\log k$ is a perfect square and is sufficiently large and set
$$
|\Fbb|=2^{1000}\log k\cdot 2^{\sqrt{\log k}},
\quad
m=\sqrt{\log k},
\quad
d=2^{30}\sqrt{\log k}\cdot 2^{\sqrt{\log k}}.
$$
Then
$$
\binom{m+d}d\ge\pbra{\frac dm}^m=\pbra{2^{30}\cdot 2^{\sqrt{\log k}}}^{\sqrt{\log k}}\ge k,
$$
which is consistent with \Cref{eq:thm:main_1}.
We also have
\begin{align*}
12md=12\cdot 2^{30}\cdot \log k\cdot 2^{\sqrt{\log k}}\le|\Fbb|
\quad\text{and}
\\2^{101}m\log|\Fbb|\le2^{101}\sqrt{\log k}\pbra{\sqrt{\log k} + \log \log k + 1000} \le|\Fbb|,
\end{align*}
which is consistent with \Cref{eq:thm:main_2}.
Moreover, $\Lambda$ has alphabet size $N$ and the number of variables $K$ as follows.
$$N=\pbra{\log k\cdot 2^{\sqrt{\log k}}}^{O\pbra{\frac{n\cdot \log^{1.5} k\cdot 2^{\sqrt{\log k}}}{k}}} \quad\text{and}\quad K=k\cdot2^{O(\sqrt{\log k}\log\log k)}$$ 
where $N$ can be furthered simplified to
$$
N = \pbra{2^{n/k}}^{2^{O(\sqrt {\log k})}},
$$
Since $\Gamma$ has no $2^{o(n)}$-time algorithm by assumption, $\Lambda$ has the lower bound $f(K)\cdot N^{\lowerboundK}$-time for any computable function $f$ as claimed.


\end{proof}

The PCP statement \Cref{thm:new-pcp-restate} follows directly from the proof above.

\begin{proof}[Proof of \Cref{thm:new-pcp-restate}]
From any instance of {\sc 3SAT}, there is a linear-size reduction to an instance of {\sc 4-Regular 3-Coloring} \cite{Mar10}.
Therefore we only need to construct the desired PCP for {\sc 4-Regular 3-Coloring}.
This follows directly from the verifier $\Pcal'$ in the proof of \Cref{thm:main}.
In particular, we stick to the parameter choice there and obtain a PCP verifier with alphabet size
$$
2^{n\cdot 2^{O(\sqrt{\log k})}/k}
$$
and randomness
$$
\log k+O\pbra{\sqrt{\log k}\log \log k}.
$$
The implicit constant soundness of $\Pcal'$ can be boosted to $\frac12$ by a constant number of randomness-efficient query repetitions~(see e.g., \cite[Lemma 2.11]{BGH06}).
\end{proof}
\section{From {\sc 4-Regular 3-Coloring} to {\sf SVecCSP}}\label{sec:prepro}

The goal of this section is to reduce {\sc 4-Regular 3-Coloring}, which is known to be ETH-hard \Cref{thm:eth_3color}, to \svcsp{}.

\begin{theorem*}[\Cref{thm:reduction_svcsp} Restated]
    There is a reduction algorithm such that the following holds. Given as input an integer $6\le k\le n$ and an $n$-variable {\sc 4-Regular 3-Coloring} instance $\Gamma$, it produces an \svcsp{} instance $G=(V,E,\Sigma,\{\Pi_e\}_{e \in E})$ where:
    \begin{enumerate}[label=\textbf{(R\arabic*)}]
        \item\textsc{Variables and Constraints.} $|V|=O(k)$ and $|E|=O(k)$.
        \item\textsc{Runtime.} The reduction runs in time $\poly(n,2^{n\log k/k})$. 
        \item\textsc{Alphabet.} $\Sigma = \Fbb_4^{t}$ where $t=O\pbra{\frac{n\log k}k}$.
        \item\textsc{Satisfiability.} $G$ is satisfiable iff $\Gamma$ is satisfiable.
    \end{enumerate}
\end{theorem*}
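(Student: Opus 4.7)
The plan is to carry out the reduction in three stages. \emph{Stage 1 (sparse preprocessing):} Starting from the $n$-variable {\sc 4-Regular 3-Coloring} instance $\Gamma$, I would invoke the sparsification of \cite{Mar10, KMPS23} to produce an equivalent 2CSP $G_0$ with $O(k)$ super-variables (each of alphabet size $2^{O(n/k)}$) and $O(k)$ super-constraints, running in $\poly(n, 2^{n/k})$ time. Conceptually, one partitions the $n$ vertices of $\Gamma$ into $k$ blocks of size $\lceil n/k \rceil$, lets each block become a super-variable whose alphabet records a $3$-coloring of its vertices, and uses the sparsification to ensure the block-level constraint graph has only $O(k)$ edges. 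The alphabet embeds into $\Fbb_4^{t_0}$ with $t_0 = O(n/k)$.

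\emph{Stage 2 (sparse \vcsp{}):} I would then apply the GLRSW24 template \cite{GLRSW24} to $G_0$ to obtain a \vcsp{} instance $G_1$ with $|V_1|, |E_1| = O(k)$ and alphabet $\Fbb_4^{t_1}$ for $t_1 = O(n\log k / k)$. For each super-variable one introduces a constant number of vector-valued variables carrying a suitable linear-algebraic encoding of the super-assignment; cross-block coloring constraints become parallel \vcsp{} constraints inspecting coordinate subsets $Q \subseteq [t_1]$, and linear \vcsp{} constraints enforce agreement between the encoded copies. Sparsity of $G_0$ transfers directly to $G_1$.

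\emph{Stage 3 (reshaping into \svcsp{}):} Finally I would produce $G$ from $G_1$ by reshaping. After padding $E_{1,\linearc}$ with identity constraints so that it contains exactly $k$ entries indexed $1, \dots, k$, for each $i$ I introduce a fresh pair $x_i, y_i$ carrying the $i$-th matrix $M_i$, and wire them to the original linear endpoints via parallel equality constraints (whose sub-constraint $\indicator_{a=b}$ is trivially uniform across all coordinates). This installs the bipartition $V = \{x_1, \dots, x_k\} \dot\cup \{y_1, \dots, y_k\}$ and the paired linear constraints required by \Cref{itm:def:svcsp_1}. For each parallel constraint from $E_{1,\parallelc}$ acting on a coordinate subset $Q$, I would introduce projection-type auxiliary variables pinned outside $Q$ to a reserved neutral symbol via $0/1$-matrix linear constraints, then extend the sub-constraint to accept this neutral symbol; the sub-constraint is thereby applicable uniformly on all coordinates as demanded by \Cref{itm:def:svcsp_2}. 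A modest enlargement of $t$ above $t_1$ keeps $t = O(n\log k / k)$. The principal obstacle I foresee is precisely this final stage: simultaneously enforcing the bipartite split, the paired linear constraints, and the all-coordinate uniformity of parallel sub-constraints, while staying within the $O(k)$ budgets on variables and constraints and the $O(n\log k / k)$ budget on $t$. Once the wiring is in place, satisfiability transfers cleanly in both directions through the equality and projection constraints, and the runtime $\poly(n, 2^{n\log k / k})$ is dominated by the cost of writing down the alphabet $\Fbb_4^t$.
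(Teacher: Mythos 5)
Your Stages 1 and 2 essentially retrace the paper's own route: the grouping lemma of \cite{Mar10, KMPS23} already delivers $k$ super-variables, a $3$-regular constraint graph, and $t = O(n\log k/k)$ in one shot (the $\log k$ factor originates there, not in a subsequent stage), and \Cref{prop:reduction_vcsp} then duplicates variables and permutes coordinates to reach a sparse \vcsp{}.

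Your Stage 3, by contrast, contains a soundness gap. Since \svcsp{} linear constraints have the form $y_i = M_i x_i$ (genuinely linear, not affine), a $0/1$-diagonal projection $x' = M_Q x$ can only pin coordinates outside $Q$ to $0 \in \Fbb_4$. Once you extend the sub-constraint to accept $(0,0)$, that acceptance holds \emph{uniformly} over all coordinates, including coordinates inside $Q$, where the original coloring sub-constraint rejects $(0,0)$ (either because $0\notin\Lambda$, or because equal colors violate the disequality). A cheating prover then sets every color to $0$ across all duplicates: equality sub-constraints accept $(0,0)$ anyway, your extended coloring sub-constraints now also accept $(0,0)$, and all linear consistency constraints hold on the all-zero vector. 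The produced \svcsp{} instance becomes satisfiable regardless of whether $\Gamma$ is $3$-colorable. The paper's \Cref{prop:vcsp_to_svcsp} avoids this by \emph{not} modifying the sub-constraint: it splits each $x$ into a parallel copy $x^\parallelc$ and a linear copy $x^\linearc$, enforces the original $\Pi_e^{sub}$ on \emph{all} coordinates of $x^\parallelc$, and ties $x^\parallelc$ to $x^\linearc$ only on the coordinates in $Q$ via a projection constraint. Outside $Q$, the coordinates of $x^\parallelc$ are fresh, unpinned degrees of freedom; a legitimate prover fills them with any satisfying pair of $\Pi_e^{sub}$, while a cheating prover is still forced to satisfy the unmodified $\Pi_e^{sub}$ on the $Q$-coordinates that are anchored to the real assignment through $x^\linearc$.
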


The reduction starts by grouping vertices into supernodes, which take vector values.
Then the constraints between supernodes correspond to (possibly multiple) constraints in the original instance.
To make sure the new instance has small size, we need a grouping method (\Cref{lem:sparse_csp}) that produces as few supernodes and constraints as possible.
Then we make duplicates of variables and rearrangements of their coordinates to make sure parallel constraints are scattered properly (\Cref{prop:reduction_vcsp}).
Finally we make more duplicates to ensure that linear constraints form a matching and parallel constraints are applied on all coordinates (\Cref{prop:vcsp_to_svcsp}).

\subsection{From {\sc 4-Regular 3-Coloring} to {\sf VecCSP}}

The following \Cref{lem:sparse_csp} serves the purpose of the grouping method and can be seen as a parameterized version of the \emph{sparsification lemma} \cite{IP01, IPZ01} in classical computation complexity.

\begin{lemma}[\cite{Mar10, KMPS23}]\label{lem:sparse_csp}
    There is an algorithm $\mathcal A$ such that the following holds.
    $\Acal$ takes as input a 2CSP instance $G=(V,E,\Sigma,\{\Pi_e\}_{e \in E})$ and an integer $6\le k\le |V|$, outputs a 2CSP instance $G'=(V',E',\Sigma^t,\{\Pi'_e\}_{e \in E'})$ in time $\poly\pbra{|V|,|\Sigma|^t}$ where $|V'|=k$ and $t \le O\pbra{(|V|+|E|) \cdot \frac{\log k}k}$, such that $G$ is satisfiable iff $G'$ is satisfiable.
    In addition, the constraint graph of $G'$ is a 3-regular graph.
    
    In the actual construction, each $x\in V'$ corresponds to a subset $S(x)\subseteq V$ of size $t$ and takes values in $\Sigma^{S(x)}$ as assignments to  variables in $S(x)$. 
    For each $e=\cbra{x,y}\in E'$, the constraint $\Pi_e$ is the conjunction of the following: 
    \begin{enumerate}
        \item\label{itm:lem:sparse_csp_1} equality constraints on common variables of $S(x)$ and $S(y)$;
        \item\label{itm:lem:sparse_csp_2}  constraints across\footnote{The construction in \cite{KMPS23} ensures that each original constraint $(u,v) \in E$ is covered by some new constraint $(x,y) \in E'$ in the sense that $u \in S(x), v \in S(y)$. This means that we only need to check the cross constraints in $G'$ and omit the ones purely inside $S(x)$ or inside $S(y)$.} $S(x)$ and $S(y)$ in $G$. 
    \end{enumerate}
\end{lemma}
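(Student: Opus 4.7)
The plan is to follow the parameterized sparsification strategy of \cite{Mar10,KMPS23}. The core combinatorial ingredient is a \emph{covering family}: a collection $\{S(x)\}_{x \in V'}$ of $k$ subsets of $V$, each of size $t$, together with a set $E'$ of $O(k)$ pairs of indices such that (a) every original variable $v \in V$ lies in at least one $S(x)$, and (b) every original constraint $e = \{u,v\} \in E$ is \emph{covered} by some $\{x,y\} \in E'$ in the sense that $u \in S(x)$ and $v \in S(y)$. Given such a family, we define $G'$ by letting each $x \in V'$ take values in $\Sigma^{S(x)}$ and, for each $\{x,y\} \in E'$, letting $\Pi'_{\{x,y\}}$ be the conjunction of equality on $S(x) \cap S(y)$ (item \ref{itm:lem:sparse_csp_1}) and of the original $\Pi_e$ for every $e = \{u,v\} \in E$ with $u \in S(x)$, $v \in S(y)$ (item \ref{itm:lem:sparse_csp_2}). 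Satisfiability is then clearly preserved: a solution to $G$ induces a solution to $G'$ by restriction, and a solution to $G'$ can be glued into a solution to $G$ because the equality constraints force consistency on overlaps of supernodes.

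The next step is to build such a covering family with the claimed size bound $t \le O((|V|+|E|)\log k/k)$. Here I would invoke the explicit construction in \cite{KMPS23} (or, if one prefers a self-contained version, use a standard random cover: place each $v \in V$ in each of the $k$ subsets independently with probability $p = \Theta(\log k/k)$, which covers every vertex and every edge with high probability by a Chernoff plus union bound argument). A symmetric calculation gives expected supernode size $p\cdot|V| + p^2 \cdot |E| \le O((|V|+|E|)\log k/k)$, which can then be derandomized, as done in \cite{KMPS23}. The additional requirement that the pair set $E'$ has size $O(k)$ is what allows the constraint graph to be made $3$-regular at constant cost.

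Finally, I would post-process the constraint graph into a $3$-regular one. Since $|E'| = O(k)$ by construction, the average degree in $G'$ is already a constant; one reduces an arbitrary bounded-degree graph to a $3$-regular graph by standard padding with dummy supernodes and trivially satisfiable equality constraints on duplicate copies of variables, incurring only a constant blowup in $k$ and $t$ (and we absorb this into the big-$O$). The runtime bound $\mathrm{poly}(|V|, |\Sigma|^t)$ is immediate since the bottleneck is writing down each new constraint, whose truth table has size $|\Sigma|^{2t}$. I expect the main technical obstacle to be the derandomization together with the simultaneous control of all three quantities (supernode size $t$, pair count $|E'|$, and $3$-regularity); it is precisely this combination that is encapsulated in \cite{KMPS23}, so invoking their construction directly is the cleanest route.
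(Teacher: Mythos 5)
The paper does not prove this lemma at all: it is imported verbatim from \cite{Mar10,KMPS23}, so your primary route (invoke the construction of \cite{KMPS23} as a black box) is exactly what the paper does, and to that extent the proposal is fine. The problem is with the ``self-contained'' alternative you sketch, which does not work and also hides a gap in the gluing argument, so it cannot serve as a substitute if the citation were to be unpacked.

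First, the random cover fails to meet the edge-coverage requirement under the stated size constraints. Since the constraint graph of $G'$ must be $3$-regular, you have $|E'|=O(k)$ pairs. With independent membership probability $p=\Theta(\log k/k)$, a fixed original constraint $\{u,v\}$ is covered by a fixed pair $\{x,y\}\in E'$ with probability $\Theta(p^2)=\Theta(\log^2 k/k^2)$, so $O(k)$ pairs cover a given edge with probability only $O(\log^2 k/k)$; almost all edges of $E$ go uncovered. Coverage with a sparse $E'$ needs the correlated structure of \cite{Mar10,KMPS23}: partition $V$ into $k$ groups, take a $3$-regular expander-like template on the $k$ supernodes, and route each original constraint along an $O(\log k)$-length path in the template, charging the routed variables to every supernode on the path; the congestion of this routing is exactly where the factor $\log k$ in $t\le O((|V|+|E|)\log k/k)$ comes from. (Relatedly, your ``expected supernode size $p\cdot|V|+p^2\cdot|E|$'' is not the right quantity; supernode size is governed by the routing load, not by independent vertex membership.) Second, your soundness gluing is incomplete as stated: properties (a) and (b) of a covering family do not imply that a solution of $G'$ projects to one of $G$. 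Equality is only enforced on $S(x)\cap S(y)$ for $\{x,y\}\in E'$, so copies of a variable $v$ are forced to agree only along $E'$-paths of supernodes all containing $v$; you additionally need that, for every $v\in V$, the set $\{x\in V':v\in S(x)\}$ induces a connected subgraph of $(V',E')$ (a running-intersection-type condition). An independent random cover does not provide this, whereas the routing construction does automatically, since the copies of $v$ appear exactly along contiguous paths. Both issues are precisely what the construction of \cite{KMPS23} is engineered to handle, so citing it (as the paper does) is not just the cleanest route but, as written, the only correct one in your proposal.
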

\begin{remark}
Note that since the constraint graph of $G'$ is 3-regular, the total number of constraints in $G'$ is linear in $|V'|$. In addition, $t$ only incurs an extra $\log k=k^{o(1)}$ blowup over the information theoretic limit $(|V|+|E|)/k$.
This is crucial for us to get almost tight hardness. Indeed, naive approaches (e.g., the reduction in \cite{GLRSW24}) will incur a polynomial loss.
We also remark that it is an open problem whether the extra $\log k$ can be further removed, which, if true, implies a precise parameterized analog of the sparsification lemma and has many applications. See \cite{Mar10} for detailed discussions.
\end{remark}

Given \Cref{lem:sparse_csp}, we first obtain a vector-valued CSP (\vcsp{}) instance (\Cref{prop:reduction_vcsp}), which we will soon convert into \svcsp{} (\Cref{prop:vcsp_to_svcsp}).

\begin{definition}[\vcsp{}, \cite{GLRSW24}]\label{def:vcsp}
A CSP instance $G = (V,E,\Sigma,\{\Pi_{e}\}_{e\in E})$ is a \vcsp{} if the following properties hold.
\begin{itemize}
    \item $\Sigma=\Fbb^t$ is a $t$-dimensional vector space over a finite field $\Fbb$ with $\charF(\Fbb)=2$.
    \item For each constraint 
    $e=\{u,v\}\in E$ where $u = (u_1,\dots,u_t)$ 
       and $v = (v_1,\ldots,v_t)$ are two variables in $V$, the constraint validity function $\Pi_e$ is classified as one of the following cases:
    \begin{itemize}
        \item \textsc{Linear.} There exists a  matrix $M_e\in \Fbb^{t\times t}$ such that 
        $$
        \Pi_e(u,v) = \indicator_{u = M_ev}.
        $$
        \item \textsc{Parallel.} There exists a \emph{sub-constraint}  $\Pi_e^{sub}: \Fbb\times \Fbb\to \bin$ and a subset of coordinates $Q_e\subseteq[t]$ such that $\Pi_e$ checks $\Pi_e^{sub}$ for every coordinate in $Q_e$, i.e., 
        $$
        \Pi_e(u,v) = \bigwedge_{i\in Q_e} \Pi_e^{sub}(u_i,v_i).
        $$
    \end{itemize}
    \item Each variable is related to at most one parallel constraint.
\end{itemize}
\end{definition}

Note that \svcsp{}, the special case of \vcsp{} we use, additionally enforces linear constraints to be a matching and enforces parallel constraints to operate on all coordinates (i.e., $Q_e=[t]$).

\begin{proposition}[\vcsp{} Intermediate Instance]\label{prop:reduction_vcsp}
    There is a reduction algorithm such that the following holds. Given as input an integer $6\le k\le n$ and an $n$-variable {\sc 4-regular 3-Coloring} instance $\Gamma$, it produces an \vcsp{} instance $G=(V,E,\Sigma,\{\Pi_e\}_{e \in E})$ where:
    \begin{itemize}
        \item\textsc{Variables and Constraints.} $|V|=O(k)$ and $|E|=O(k)$.
        \item\textsc{Runtime.} The reduction runs in time $\poly(n,2^{n\log k/k})$.
        \item\textsc{Alphabet.} $\Sigma = \Fbb_4^{t}$ where $t=O\pbra{\frac{n\log k}k}$.
        \item\textsc{Satisfiability.} $G$ is satisfiable iff $\Gamma$ is satisfiable.
    \end{itemize}
\end{proposition}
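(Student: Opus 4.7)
The plan is to apply the sparsification lemma (\Cref{lem:sparse_csp}) to obtain a $3$-regular $2$CSP on $k$ supernodes, and then reshape each edge constraint into the \vcsp{} format of \Cref{def:vcsp} by using auxiliary duplicate variables and coordinate-copying linear maps. First, I would invoke \Cref{lem:sparse_csp} on $\Gamma$ with parameter $k$: since $\Gamma$ has $n$ variables and $O(n)$ constraints (the constraint graph is $4$-regular), this yields a $2$CSP $G'=(V',E',\Fbb_4^{t},\cbra{\Pi_e}_{e\in E'})$ with $|V'|=k$, $t=O\pbra{n\log k/k}$, and a $3$-regular constraint graph so $|E'|=O(k)$. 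Each $x\in V'$ corresponds to a subset $S(x)\subseteq V(\Gamma)$ of size $t$, and for every $e=\cbra{x,y}\in E'$ the constraint $\Pi_e$ is the conjunction of (i) equality of the coordinates indexed by $S(x)\cap S(y)$ and (ii) the original $3$-coloring constraints between vertices in $S(x)$ and $S(y)$.

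Next, for each edge $e=\cbra{x,y}\in E'$ I would replace $\Pi_e$ by a constant-size gadget of linear and parallel constraints. For the equality part (i), introduce a single auxiliary variable $w_e\in\Fbb_4^{t}$ together with two linear constraints $w_e=P_{e,x}\,u_x$ and $w_e=P_{e,y}\,u_y$, where the $t\times t$ matrices $P_{e,x},P_{e,y}$ copy the coordinates indexed by $S(x)\cap S(y)$ into a common block of positions of $w_e$ (and zero out the remaining positions); both linear equations are satisfied by the same $w_e$ iff $u_x$ and $u_y$ agree on $S(x)\cap S(y)$. For the coloring part (ii), note that since $\Gamma$ is $4$-regular the number of cross coloring constraints that live on a single $e$ is at most $4t$, so I can partition them into a constant number of groups of size at most $t$. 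For each such group $g$, introduce two duplicates $u_x^{(e,g)},u_y^{(e,g)}\in\Fbb_4^{t}$ with linear constraints $u_x^{(e,g)}=M_{e,x,g}\,u_x$ and $u_y^{(e,g)}=M_{e,y,g}\,u_y$ whose $t\times t$ matrices copy coordinates so that each pair $(a,b)\in S(x)\times S(y)$ in group $g$ lands at a matching coordinate of $u_x^{(e,g)}$ and $u_y^{(e,g)}$. Then attach a single parallel constraint between $u_x^{(e,g)}$ and $u_y^{(e,g)}$ whose sub-constraint $\Pi_e^{sub}\colon\Fbb_4\times\Fbb_4\to\bin$ is the $3$-coloring predicate (distinct values from $\Lambda\subset\Fbb_4$), with $Q_e$ the set of aligned positions.

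Finally I would check the bounds and correctness. Each $e\in E'$ contributes $O(1)$ new variables ($w_e$ plus a constant number of duplicate pairs) and $O(1)$ new constraints, so $|V|=k+O(|E'|)=O(k)$ and $|E|=O(k)$; the originals $u_x$ and the auxiliaries $w_e$ appear only in linear constraints, while each duplicate $u_x^{(e,g)}$ participates in exactly one parallel constraint, so the ``at most one parallel constraint per variable'' condition of \Cref{def:vcsp} is met. Satisfiability is preserved in both directions: any $3$-coloring of $\Gamma$ yields a satisfying assignment of $G'$ which uniquely extends to $G$ via the linear constraints, and any satisfying assignment of $G$ restricts to a solution of $G'$ and hence to a proper $3$-coloring of $\Gamma$. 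The total runtime is dominated by \Cref{lem:sparse_csp}, namely $\poly(n,|\Fbb_4|^{t})=\poly(n,2^{n\log k/k})$, the gadget construction only contributing polynomial-in-output overhead. The main subtlety is the tension between the \vcsp{} restriction that each variable joins at most one parallel constraint and the need to keep $|V|,|E|=O(k)$; the duplicate-plus-copy trick resolves this, relying crucially on the fact that each edge of $E'$ carries only $O(t)$ cross constraints, which in turn uses both the $4$-regularity of $\Gamma$ and the supernode size guarantee from \Cref{lem:sparse_csp}.
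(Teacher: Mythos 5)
Your proof is correct and follows the same high-level strategy as the paper: invoke \Cref{lem:sparse_csp} to get a $3$-regular $2$CSP on $k$ supernodes, then replace each cross-supernode constraint by a constant-size gadget of linear and parallel constraints while keeping each variable in at most one parallel constraint. The mechanism of the gadget differs in an interesting way. The paper first distributes the sub-constraints of each $G'$-edge onto constant-many duplicate variables so that each duplicate carries a single type of sub-constraint forming a partial \emph{matching} across coordinates, then permutes coordinates of each copy to align the matchings into a genuinely parallel constraint, and finally connects the copies by a chain of linear consistency checks. Your construction instead keeps the original supernode variable $u_x$ untouched, introduces star-shaped duplicates $u_x^{(e,g)}$ linked to $u_x$ by explicit fan-out copy matrices $M_{e,x,g}$, and handles the equality block separately via a single auxiliary $w_e$ and two linear projections. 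This sidesteps the matching-decomposition step entirely: a coordinate of $u_x$ may be copied into several aligned positions of $u_x^{(e,g)}$ to accommodate a vertex appearing in multiple cross-constraints within one group, which is allowed because \Cref{def:vcsp} permits arbitrary $t\times t$ matrices in linear constraints, not just permutations. Both routes give $|V|,|E|=O(k)$, the same alphabet and runtime, and preserve satisfiability, so yours is a correct and modestly simpler variant of the paper's argument.
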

\begin{proof}
    We first plug the {\sc 4-Regular 3-Coloring} instance $\Gamma$ from \Cref{thm:eth_3color} into \Cref{lem:sparse_csp} and obtain a 2CSP instance $G'=(V',E',\Fbb_4^t, \{\Pi'_e\}_{e \in E'})$. 
    By the construction in \Cref{lem:sparse_csp}, each $x\in V'$ corresponds to a set $S(x)$ of $t$ variables in $\Gamma$. 
    We fix an arbitrary order in $S(x)$ and use $x[i]$ to denote the $i$-th variable in $S(x)$. 
    From the fact that $\Gamma$ is 4-regular and the construction in \Cref{lem:sparse_csp}, the produced 2CSP instance $G'$ has the following properties:
    \begin{itemize}
        \item for each $\cbra{x,y}\in E'$ and $i\in[t]$, there are at most five sub-constraints between $x[i]$ and $\{y[j] : j \in [t]\}$: at most one equality check (\Cref{itm:lem:sparse_csp_1} of \Cref{lem:sparse_csp}) and at most four coloring checks (\Cref{itm:lem:sparse_csp_2} of \Cref{lem:sparse_csp}); 
        \item the constraint graph of $G'$ is $3$-regular;
        \item $|V'|=k$, $|E'|=O(k)$, $t=O\pbra{\frac{n\log k}k}$, and the runtime is $\poly(n,k,4^t)=\poly(n,2^{n\log k/k})$;
        \item $G'$ is satisfiable iff $\Gamma$ is satisfiable.
    \end{itemize}

    \begin{figure}[ht]
        \centering
        \includegraphics[height=13cm]{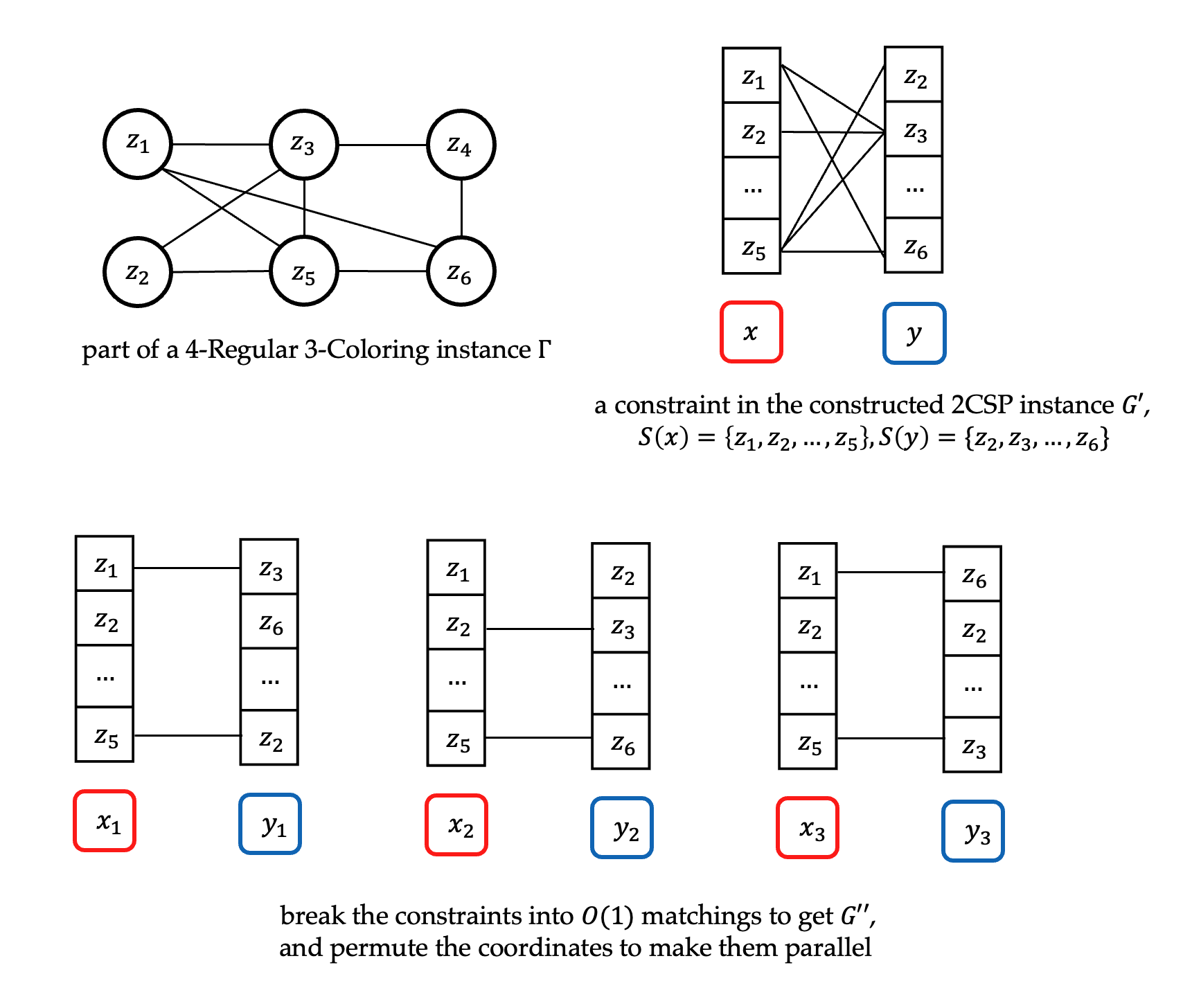}
        \caption{An example of $G'$ and $G''$ and the permutation to parallelize sub-constraints.}
        \label{fig:svcsp-1}
    \end{figure}

    Thus, we duplicate each $x \in V'$ into constant many copies, and distribute the sub-constraints in $G'$ onto different copies.
    This produces another 2CSP instance $G''=(V'',E'',\Fbb_4^t,\{\Pi_e''\}_{e\in E''})$ where
    \begin{itemize}
        \item for each $e=\cbra{u,v}\in E''$, the constraint $\Pi''_e$ has exactly one type of sub-constraint (i.e., equality or coloring), which forms a partial (non-parallel) matching across the coordinates (i.e., $S(u),S(v)$) of $u,v$;

        (Note that the consistency checks among duplicates will be added later.)
        \item each variable in $V''$ is related to exactly one constraint;
        \item $|V''|=O(|V'|)=O(k)$ and $|E''|=|E'|=O(k)$. 
    \end{itemize}
    The above procedure is efficient: we only need to perform matching decompositions for each $\cbra{x,y}\in E'$ separately for equality checks and coloring checks.

    Before adding the consistency checks among duplicates, we first permute coordinates of each variable in $V''$ to parallelize the partial matchings. This is possible since each variable in $V''$ is related to exactly one constraint in $G''$.
    For a fixed $x\in V'$, let $x_1,\ldots,x_m\in V''$ be the duplicates of $x \in V'$. 
    After the permutation, we add linear constraints between $x_i$ and $x_{i+1}$ for $1 \le i < m$ to check whether they are consistent (i.e., the correct permuted copies of each other).
    See \Cref{fig:svcsp-1} for a streamlined presentation.

    The construction of the \vcsp{} instance $G=(V,E,\Sigma=\Fbb_4^t,\{\Pi_e\}_{e\in E})$ is completed after the permutation and adding the consistency checks among duplicated.
    The satisfiability is naturally preserved, $|V|=|V''|=O(k)$, and $|E|\le|E''|+|V''|=O(k)$.
    In terms of \Cref{def:vcsp}, the consistency checks are linear constraints and the constraints in $G''$ after permutation are parallel constraints.
\end{proof}

\subsection{From {\sf VecCSP} to {\sf SVecCSP}}

Given \Cref{prop:reduction_vcsp}, \Cref{thm:reduction_svcsp} follows directly by the following general reduction from \vcsp{} to \svcsp{}.

\begin{proposition}[\vcsp{} to \svcsp{}]\label{prop:vcsp_to_svcsp}
    There is a reduction algorithm such that the following holds. Given as input a \vcsp{} instance $G=(V,E,\Fbb^t,\{\Pi_e\}_{e\in E})$, it produces an \svcsp{} instance $G'=(V',E',\Fbb^t,\{\Pi'_e\}_{e \in E'})$ where:
    \begin{itemize}
        \item\textsc{Variables and Constraints.} $|V'|=O(|V|+|E|)$ and $|E'|=O(|V|+|E|)$.
        \item\textsc{Runtime.} The reduction runs in time $\poly(|V|,|E|,|\Fbb|^t)$.
        \item\textsc{Satisfiability.} $G'$ is satisfiable iff $G$ is satisfiable.
    \end{itemize}
\end{proposition}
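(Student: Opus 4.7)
Our plan is a three-step reduction that addresses the two structural gaps between an arbitrary \vcsp{} instance and a \svcsp{} one: parallel constraints must apply on all $t$ coordinates (not on an arbitrary subset $Q_e$), and linear constraints must form a perfect bipartite matching of the form $y_i = M_i x_i$ between $\{x_1,\ldots,x_k\}$ and $\{y_1,\ldots,y_k\}$.

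The main obstacle is the first gap: the naive approach of ``padding'' coordinates outside $Q_e$ with values that trivially satisfy $\Pi_e^{sub}$ is problematic, since for an arbitrary sub-constraint no universally accepting pair need exist, and since linear constraints in \svcsp{} have the rigid form $y = Mx$ and cannot inject constants. Our Step 1 sidesteps this with a coordinate-duplication trick. For each $e=\{u,v\}\in E_\parallelc$ with $Q_e \subsetneq [t]$, fix any function $\sigma_e\colon [t]\to[t]$ with $\sigma_e(i) = i$ for $i \in Q_e$ and $\sigma_e(i) \in Q_e$ otherwise, and let $D_e$ be the $0/1$ matrix with $(D_e)_{ij} = \indicator_{j = \sigma_e(i)}$. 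Introduce fresh variables $u^*_e, v^*_e \in \Fbb^t$, add linear constraints $u^*_e = D_e u$ and $v^*_e = D_e v$, and replace $\Pi_e$ by the all-coordinate parallel constraint $\bigwedge_{i\in[t]}\Pi_e^{sub}(u^*_e[i],v^*_e[i])$ between $u^*_e$ and $v^*_e$. For $i \in Q_e$ this is exactly the original check $\Pi_e^{sub}(u[i],v[i])$; for $i \notin Q_e$ it merely repeats the check at coordinate $\sigma_e(i) \in Q_e$, so the new constraint is logically equivalent to the original.

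In Step 2 we force every variable to be incident to exactly one linear constraint, after which the linear-constraint graph is $1$-regular and thus automatically a perfect matching. For any $v$ with current linear-degree $d_v = 0$ we add a fresh partner $v'$ and the identity linear constraint $v' = I \cdot v$; for any $v$ with $d_v \ge 2$ we split $v$ into copies $v_1,\ldots,v_{d_v}$ (one per incident linear constraint) glued together by a chain of parallel equality constraints with sub-constraint $\indicator_{a = b}$. These equality constraints are valid \svcsp{} parallel constraints and may be added freely, since only linear constraints face a matching restriction. In Step 3 we orient each matched pair $\{a,b\}$ by placing the matrix-target side of the linear constraint into $Y$ and the other side into $X$, producing the required form $y_i = M_i x_i$.

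Completeness is immediate: extend any satisfying assignment $\sigma$ of $G$ by setting $u^*_e \leftarrow D_e \sigma(u)$, $v^*_e \leftarrow D_e \sigma(v)$, and by copying $\sigma(v)$ into every duplicate of $v$. For soundness, the parallel equality chains force all duplicates of a variable to coincide, and then the Step 1 construction implies the original parallel constraint on $Q_e$. Size bounds are routine: Step 1 adds $O(|E|)$ variables and linear constraints, and Step 2's duplicates and partners together contribute $O(|V|+|E|)$ more variables and constraints, so $|V'|,|E'| = O(|V|+|E|)$ and the reduction is explicit in time $\poly(|V|,|E|,|\Fbb|^t)$.
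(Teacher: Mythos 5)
Your proof is correct and takes a genuinely different route from the paper's, in both steps. For the $Q_e\to[t]$ gap, the paper splits each variable $x$ into a parallel copy $x^\parallelc$ and a linear copy $x^\linearc$, applies $\Pi_e^{sub}$ to \emph{all} coordinates of the parallel copies, and ties them back to the linear copies only on $Q_e$ via an auxiliary variable $x^{\sf a}$ with two linear constraints $x^{\sf a} = M_{Q_e} x^\parallelc$ and $x^{\sf a} = M_{Q_e} x^\linearc$, where $M_{Q_e}$ is the projection onto $Q_e$; this leaves the off-$Q_e$ coordinates of $x^\parallelc$ free, and the paper deals in a footnote with the corner case that $\Pi_e^{sub}$ may be unsatisfiable. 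Your coordinate-duplication matrix $D_e$ instead \emph{forces} the off-$Q_e$ coordinates to repeat $Q_e$-coordinates, so the new all-coordinate constraint is logically equivalent to the original and the unsatisfiable-sub-constraint corner case never arises. (Your stated reason why naive padding is ``problematic'' slightly overstates it---the paper's padding does work once the footnoted edge case is handled---but your trick is indeed cleaner.) For the matching structure, the paper duplicates the whole variable set into disjoint $X,Y$ copies linked by identity constraints and then attaches a fresh pair $(x_e,y_e)$ per linear constraint, whereas you do local degree surgery: split high-degree variables into chains glued by parallel equalities and give degree-zero variables an identity partner, then orient the resulting perfect matching. Both give $|V'|,|E'|=O(|V|+|E|)$ in time $\poly(|V|,|E|,|\Fbb|^t)$. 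Two small points you should make explicit: you must delete (or otherwise trivialize) parallel constraints with $Q_e=\emptyset$ before defining $\sigma_e$, since $\sigma_e$ is required to map into $Q_e$; and you should note that SVecCSP---unlike VecCSP---imposes no ``one parallel constraint per variable'' restriction, which is what makes your equality chains legal.
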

\begin{proof}
    Now that we get a \vcsp{} instance $G$, we show how to modify it to to obtain an \svcsp{} instance $G'$ that satisfies properties \Cref{itm:def:svcsp_2} and \Cref{itm:def:svcsp_1}. The construction consists of two steps and see \Cref{fig:svcsp-2} for a streamlined presentation.
    \begin{itemize}
        \item First, from $G$, get another \vcsp{} instance $\hat G$ which satisfies \Cref{itm:def:svcsp_2}.
        \item Next, based on $\hat G$, build a \svcsp{} instance $G'$ which in addition satisfies \Cref{itm:def:svcsp_1}.
    \end{itemize}

    To satisfy \Cref{itm:def:svcsp_2}, we split each variable $x$ in $V$ into a parallel variable $x^\parallelc$ and a linear variable $x^\linearc$ in $\hat G$ for parallel and linear constraints separately.
    Then we construct the constraints $\hat E$ in $\hat G$.
    \begin{itemize}
    \item For each linear constraint $e=\{x,y\}\in E$, we add the same linear constraint on $\{x^\linearc,y^\linearc\}$ in $\hat E$. 
    \item For each parallel constraint $e=\{x,y\}\in E$ with sub-constraint $\Pi_e^{sub}$ and subset of coordinates $Q_e$, we add a parallel constraint on $\{x^\parallelc,y^\parallelc\}$ in $\hat E$, which has $\Pi_e^{sub}$ applied on \emph{all} coordinates.
    \item We need to additionally check in $\hat G$ the partial equality between $x^\parallelc$ and $x^\linearc$ \emph{only} on the subset of coordinates $Q_e$. 
    Since each variable $x$ is related to at most one parallel constraint as guaranteed in \Cref{def:vcsp}, this additional check is well-defined. 
    
    This check can be written as $M_{Q_e} x^\parallelc = M_{Q_e} x^\linearc$, where $M_{Q_e}$ is a matrix which projects on coordinates inside $Q_e$ and zeroing out coordinates outside $Q_e$. 
    To have the matrix only on one side, we introduce an additional variable $x^{\sf a}$ in $\hat G$, and add two linear constraints $x^{\sf a} = M_{Q_e} x^\parallelc $ and $x^{\sf a} = M_{Q_e} x^\linearc$ to $\hat E$.
    \end{itemize}
We first show that the above construction preserves the satisfiability as follows.
    \begin{itemize}
    \item Given a solution $\sigma$ for the original $G$, assign $\sigma(x)$ to $x^\linearc$ and $x^{\sf a}$, which satisfies all linear constraints in $\hat E$.
    Then assign $\sigma(x)$ to $x^\parallelc$ on the subset $Q_e$ of coordinates, which satisfies all the partial equality checks in $\hat E$.
    Finally, assign arbitrary solution\footnote{Technically it is possible that $\Pi_e^{sub}$ is not satisfiable. If $Q_e=\emptyset$, then we can simply replace $\Pi_e^{sub}$ by any satisfiable sub-constraint. If otherwise $Q_e\ne\emptyset$, then the original $G$ is not satisfiable and $\hat G$ is also not satisfiable. Therefore the construction still works.} of $\Pi_e^{sub}$ to $x^\parallelc$ on coordinates outside $Q_e$, which satisfies all the parallel constraints in $\hat E$. 
    \item Given a solution $\sigma'$ of $\hat G$, assign $\sigma'(x^\linearc)$ to every $x$ in $G$, which satisfies all linear constraints in $E$.
    Since $\sigma'(x^\parallelc)$ satisfies the parallel constraint in $G'$ for all coordinates and the partial equality check guarantees consistency between $\sigma'(x^\linearc)$ and $\sigma'(x^\parallelc)$ on the coordinates $Q_e$, the corresponding parallel constraints in $E$ are satisfied as well.
    \end{itemize}
Moreover, the variable set $\hat V\subseteq\bigcup_{x\in V}\cbra{x^\parallelc,x^\linearc,x^{\sf a}}$ has size $|\hat V|=O(|V|)$ and the constraint set $\hat E$ has size $|\hat E|\le|E|+2\cdot|V|=O(|V|+|E|)$.

    Now we construct $G'$ from $\hat G$ to satisfy \Cref{itm:def:svcsp_1}. 
    The final variable set of $G'$ will be $V' = X \dot \cup Y$, which is constructed along with the constraint set $E'=E'_\linearc\dot\cup E'_\parallelc$ as follows.
    \begin{itemize}
    \item Initialize $X,Y$ as disjoint copies of $\hat V$ and initialize $E'_\linearc=E'_\parallelc=\emptyset$. For a variable $u \in \hat V$, we denote as $x_u, y_u$ its $X$-copy and $Y$-copy in $V'$, respectively.
    \item For each $u \in \hat V$, we add an equality, which is a linear constraint with the identity matrix, in $E'_\linearc$ between $x_u$ and $y_u$.

    Note that this is consistent with \Cref{itm:def:svcsp_1}.
    \item Then for each parallel constraint $e=\cbra{x^\parallelc,y^\parallelc}\in\hat E$, we add the same constraint in $E'_\parallelc$ on the $X$-copies of $x^\parallelc$ and $y^\parallelc$.
    \item Finally for each linear constraint\footnote{In particular, $e=\cbra{u,v}$ can be $\cbra{x^\linearc,y^\linearc}$ or $\cbra{x^\linearc,x^{\sf a}}$ or $\cbra{x^\parallelc,x^{\sf a}}$.} $e=\{u,v\}\in\hat E$ that checks $ u = M_e v$, we add new variables $x_e$ to $X$ and $y_e$ to $Y$.
    Then we impose a linear constraint $y_e = M_e x_e$ between them in $E'_\linearc$, which is consistent with \Cref{itm:def:svcsp_1}.
    We further add two equality constraints, which are identified as parallel constraints in $E'_\parallelc$, between $x_e$ and $x_u$, as well as between $y_e$ and $x_v$. 
    \end{itemize}

    The construction of $G'$ preserves the satisfiability of $\hat G$ as all the duplicates in $G'$ of variables in $\hat G$ are connected by identity constraints.
    Moreover, $|X|=|Y|\le|\hat V|+|\hat E|=O(|V|+|E|)$ and $|E'|\le|\hat V|+|\hat E|+3\cdot|\hat E|=O(|V|+|E|)$ as desired. 
\end{proof}

\begin{figure}[ht]
    \centering
    \includegraphics[width = \textwidth]{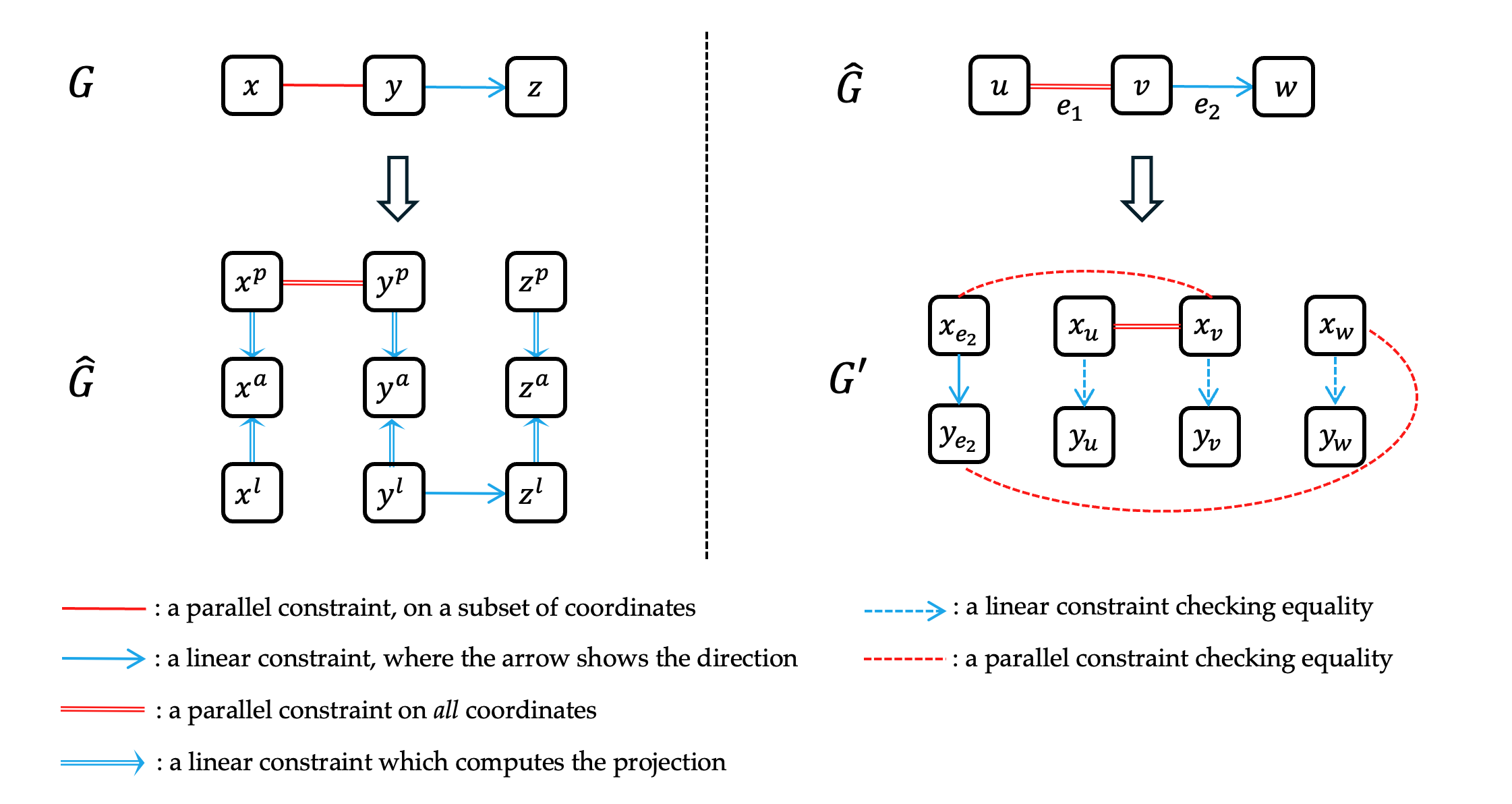}
    \caption{An illustration of the reduction from $G$ to $\hat G$, and from $\hat G$ to $G'$.}
    \label{fig:svcsp-2}
\end{figure}
\section{PCPP for Multi-Test Problem}
\label{sec:pcpp-probttt}

In this section we prove \Cref{thm:pcpp-doubletest} for \doubletest{} and \Cref{thm:pcpp-singletest} for \singletest{}. For convenience, we define the following multi-test problem (\probttt{}) which generalizes \doubletest{} and \singletest{} in a straightforward fashion.

\begin{definition}[\probttt{}]\label{def:probttt}
    Assume $\charF(\Fbb)=2$.
    $(\Fbb, m, d, t,u)$-\probttt{} is a pair language over $\Sigma_\pairx = \bin, \Sigma_\pairy = \Fbb^t$ consisting of all words in the form of $w=(C,T_1\circ \cdots \circ T_u)$, where
    \begin{itemize}
        \item $C$ is a Boolean circuit with $u\cdot|\Fbb|^m \log |\Fbb|$ input bits and
        $T_1,\dots, T_u\in (\Fbb^t)^{|\Fbb|^m}$ are codewords of $\prm$;
        \item if we view $\Fbb$ as $\bin^{\log|\Fbb|}$, $T_1\circ\cdots\circ T_u$ parallel satisfies $C$.
    \end{itemize}
    We define the input length $|w|$ to be the size of $C$ plus $u\cdot|\Fbb|^m \log |\Fbb|$.
\end{definition}

By setting $u=1$ or $u=2$, we immediately obtain $(\Fbb,m,d,t)$-\singletest{} or $(\Fbb,m,d,t)$-\doubletest{}.
Hence \Cref{thm:pcpp-singletest} and \Cref{thm:pcpp-doubletest} follows directly from the following result for \probttt{}.

\begin{theorem}[PCPP for \probttt{}]\label{thm:pcpp-main} 
Assume $\charF(\Fbb)=2$ and $|\Fbb|\ge\max\cbra{6md,2^{100}m\log|\Fbb|}$.
Assume $u\le2^{50}$ is a positive integer.
Then for any $\delta \in[0,1]$, $(\Fbb, d, m, t,u)$-\probttt{} has 
$$
\text{an $\pbra{O(1), \log|w| + O\pbra{\log^{0.1}|w|+\log|\Fbb|}, \delta,\Omega(\delta), \Sigma^{t}}$-PCPP verifer $\pcpptabletest$,}
$$
where $|w|$ is the input length of $(\Fbb,d,m,t,u)$-\probttt{} and $\Sigma=\Fbb^{d+1}$.
\end{theorem}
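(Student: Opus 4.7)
The plan is to reduce $\probttt$ to a parallelized invocation of the almost-linear PCPP for $\pcircuit$ on a single circuit that jointly enforces Reed--Muller codeword structure and circuit satisfaction.

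First, I combine the given circuit $C$ with the codeword-testing circuit $C_\ldt$ from \Cref{thm:pldt-circuit} into
\[
\tilde C(x_1,\ldots,x_u)\;:=\;C(x_1,\ldots,x_u)\ \wedge\ \bigwedge_{i=1}^{u} C_\ldt(x_i).
\]
Since $u\le 2^{50}$, the circuit $\tilde C$ has size $(|w|+|\Fbb|^m)\cdot\poly|\Fbb|$. By \Cref{thm:pldt-circuit} together with \Cref{def:probttt}, a word $T_1\circ\cdots\circ T_u$ parallel satisfies $\tilde C$ iff each $T_i\in\eccim(\prm)$ and $T_1\circ\cdots\circ T_u$ parallel satisfies $C$, i.e., iff $(C,T_1\circ\cdots\circ T_u)\in\probttt$.

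Second, I invoke the PCPP $\pcppckt$ from \Cref{thm:pcpp-cktsat} on $\tilde C$ and parallelize it across the $t$ coordinates. The remark following \Cref{thm:pcpp-cktsat} is essential here: the $O(1)$ query positions on $y$ and $\pi$ depend only on $\tilde C$ and the randomness, so one sequence of queries serves all $t$ parallel runs. The $t$ per-coordinate BGH proofs are therefore bundled into a single auxiliary proof with alphabet $\bin^t$, which embeds naturally into the required $\Sigma^t$. The verifier $\pcpptabletest$ reads $\tilde C$, tosses $\log|\tilde C|+O(\log^{0.1}|\tilde C|)=\log|w|+O(\log^{0.1}|w|+\log|\Fbb|)$ coins, makes $O(1)$ queries to the bundled object, and for each coordinate $j\in[t]$ applies the BGH acceptance predicate to the $j$-th projection of the read data; it rejects iff any coordinate rejects. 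Completeness follows immediately, since honest per-coordinate BGH proofs, bundled together, always pass.

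The hardest step will be establishing the $\Omega(\delta)$ soundness \emph{uniformly in $t$}. A naive analysis only yields $\Omega(\delta/t)$: an input of $\Fbb^t$-alphabet distance $\delta$ from $\probttt$ may have every coordinate-projection only $\delta/t$-close (in $\bin$-alphabet distance) to some $\tilde C$-satisfying word, by union bound. To close this gap I plan to exploit the Reed--Muller structure baked into the first conjunct of $\tilde C$, whose relative distance $1-d/|\Fbb|$ is close to $1$ under the hypothesis $|\Fbb|\ge 6md$. In the contrapositive, if $\pcpptabletest$ rejects with probability only $o(\delta)$, then the gradient soundness of $\pcppckt$ forces every coordinate $(T_1\circ\cdots\circ T_u)^{(j)}$ to be $o(\delta)$-close to some $\tilde C$-satisfying word; inside the unique-decoding radius these satisfying words are uniquely determined by their $u$ single-output Reed--Muller components, allowing a coordinate-by-coordinate reassembly into a legitimate parallel Reed--Muller tuple $(\bar T_1,\ldots,\bar T_u)\in\probttt$. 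The remaining and most delicate task is to argue that the $\Fbb^t$-alphabet distance from the input to the reassembled tuple is $o(\delta)$ rather than $o(t\delta)$; for this I expect to need either a random-$\alpha\in\Fbb^t$ projection trick, showing that $\alpha\cdot(T_1\circ\cdots\circ T_u)$ is $\Omega(\delta)$-far from the corresponding $\alpha$-combined target code with constant probability via a Schwartz--Zippel-style support preservation argument, or a direct robust PCPP amplification that exploits the positive correlation of mismatches across coordinates under the BGH randomness. Either route delivers $\Omega(\delta)$ rejection with constants independent of $t$, from which \Cref{thm:pcpp-doubletest} and \Cref{thm:pcpp-singletest} follow as the $u=2$ and $u=1$ specializations.
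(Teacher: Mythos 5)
Your proposal is incomplete: you correctly isolate the central difficulty but then leave it unresolved, and the speculative ways you float for closing it are not the route the paper takes.

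The first gap is minor but real: your $\tilde C=C\wedge\bigwedge_i C_\ldt(x_i)$ omits the error-correcting lift of $\Fbb$ into $\bin^{O(\log|\Fbb|)}$. Without it, the ``unique decoding'' step you want to invoke fails once $\log|\Fbb|$ exceeds $1/\pcppdelta$: two distinct degree-$d$ polynomials over $\Fbb$ have $\Fbb$-alphabet distance $1-d/|\Fbb|$, but after the naive flattening into bits their binary distance can shrink by a factor of $\log|\Fbb|$, which in the paper's parameter regime ($\log|\Fbb|=\Theta(\sqrt{\log k})$) eventually drops below $\pcppdelta$. The paper fixes this by routing each $\Fbb$-symbol through a constant-rate, constant-distance binary ECC ($\Enc^\odot$) before feeding into the combined circuit $C'$; this ensures distinct degree-$d$ polynomials remain $\Omega(1)$-far in binary and makes \Cref{clm:c'_distance} go through.

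The second gap is the one you yourself flag as ``the most delicate task,'' and it is fatal as written. Sharing randomness across the $t$ coordinates and rejecting if any coordinate fails does give you that each coordinate projection is $\pcppdelta$-close to a $\tilde C$-satisfying word, but stacking those per-coordinate corrections can produce a tuple whose $\Fbb^t$-alphabet distance from the input is as large as $t\cdot\pcppdelta$, because the bad positions need not align across coordinates. The random-$\alpha$ projection and ``robust amplification'' ideas you sketch do not obviously repair this inside the BGH black box, since the BGH queries never see $\alpha$ and its soundness is stated only at a fixed proximity parameter $\pcppdelta$. What the paper does instead is run a \emph{separate} derandomized \emph{parallel} low degree test ($\pcppldt$, \Cref{thm:ldt-maintext}) on each $T_j$, whose soundness is proved directly in the $\Fbb^t$-alphabet: if $T_j$ is $\delta$-far from $\eccim(\prm)$, $\pcppldt$ rejects with probability $\Omega(\delta)$, with constants independent of $t$. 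This pins down a single nearby parallel codeword $T_j^*$ at the correct alphabet granularity. Only then does the paper invoke the parallelized BGH check on $C'$, and the coordinate-wise analysis is now safe because the $\delta$-closeness to $(T_1^*,\ldots,T_u^*)$ is already established at the $\Fbb^t$-level, and \Cref{clm:c'_distance} with $\delta\le 2^{-100}$ gives a constant ($\Omega(1)=\Omega(\delta)$) rejection probability whenever some coordinate of $(T_1^*,\ldots,T_u^*)$ violates $C$. The paper's remark after \Cref{thm:pcpp-main} states exactly this point: without the separate parallel LDT one loses a factor of $t$. Your proposal, as it stands, has not replaced that missing ingredient.
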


\paragraph{Proof sketch} Let $(C,T_1\circ \cdots \circ T_u)$ be an input for $(\Fbb,d,m,t,u)$-\probttt{}. Our goal is to check whether $T_1\circ \cdots \circ T_u$ is $\delta$-close to some $T_1^*\circ \cdots \circ T_u^*\in$ \probttt{}$(C)$, i.e., the restriction of the pair language \probttt{} on $C$. 
In other words, the following two conditions hold:
\begin{enumerate}[label=(C\arabic*)]
    \item \label{itm:check:c1} for each $j\in[u]$, $T_j^*\in\Fbb^{|\Fbb|^m}$ is the truth table of a polynomial of parallel degree $d$;
    \item \label{itm:check:c2} $T_1^*\circ \cdots \circ T_u^*$, viewed as a word in $\bin^{u|\Fbb|^m\log|\Fbb|}$, parallel satisfies the Boolean circuit $C$.
\end{enumerate}

To guarantee \Cref{itm:check:c1}, we use the PCPP verifier $\pcppldt$ from the codeword testing of $\prm$ (see \Cref{thm:ldt-maintext}).
Given $T_1^*\circ\cdots\circ T_u^*$ satisfying \Cref{itm:check:c1}, we aim to test that it also satisfies \Cref{itm:check:c2}.
That is, for each fixed $i\in[t]$, $T_1^*[i]\circ\cdots\circ T_u^*[i]$ satisfies the Boolean circuit $C$.

To this end, we will use the PCPP verifier $\pcppckt$ of \pcircuit{} (see \Cref{thm:pcpp-cktsat}).
This alone, however, is not sufficient as $\pcppckt$ cannot rule out the case that changing $o(1)$ fraction of entries in $T_1^*[i]\circ\cdots\circ T_u^*[i]$ satisfying the circuit $C$.
To fix this issue, we have to exploit the fact that each $T_j^*[i]$ is supposed to be the truth table of a degree-$d$ polynomial, which, by Schwartz-Zippel lemma and the fact that $u$ is a constant, forbids such attacks.
As a result, we need to incorporate the codeword testing circuit (see \Cref{thm:pldt-circuit}) to enforce the low degree condition.

Unfortunately, this still does not work due to a subtle alphabet mismatch: the codeword testing works over $\Fbb$ but \Cref{itm:check:c2} needs to flat $\Fbb$ as $\bin^{\log|\Fbb|}$.
Therefore, the distance guaranteed by the low degree condition can be dilated by a worst-case factor of $\log|\Fbb|=\omega(1)$ after converting $\Fbb$ to $\bin^{\log|\Fbb|}$, for which reason the mentioned attack can still be carried out.
To address this issue, we employ a standard approach to lift the conversion of $\Fbb$ via error correcting codes~\cite{BGH06, ALM+98, Din07}.
More formally, after flattening $\Fbb$ as $\bin^{\log|\Fbb|}$, we take it through an error correcting code with constant rate and distance, which produces a codeword in $\bin^{O(\log|\Fbb|)}$ and, more importantly, has constant relative distance against other codewords.

In summary, to handle \Cref{itm:check:c2}, we need to use $\pcppckt$ to check the validity of the combination of (1) the original circuit $C$, (2) the codeword testing procedure, and (3) the error correcting lifting.
This is parallel for each coordinate and is presented in \Cref{sec:single_checking}.
Then in \Cref{sec:tabletest_everything}, we put together the argument for \Cref{itm:check:c1} and prove \Cref{thm:pcpp-main}.

\begin{remark}
One may wonder the necessity of using a separate codeword testing for \Cref{itm:check:c1}, as we anyway need to use it for \Cref{itm:check:c2}.
The difference lies in the proximity: the former uses \Cref{thm:ldt-maintext} which guarantees the parallel proximity (i.e., including all the dimensions $[t]$), whereas the latter uses \Cref{thm:pldt-circuit} which only implies coordinate-wise proximity (i.e., individually for each coordinate $i\in[t]$).

Without the former, we can only get an $\Omega(t\cdot\delta)$ final proximity via a union bound. 
Upgrading the latter needs to generalize the construction of $\pcppckt$ to the parallel setting, which arguably requires more work. Hence we stick to our current presentation for simplicity.
\end{remark}

\subsection{Single-Coordinate Checking Circuit}\label{sec:single_checking}

As sketched above, we construct another Boolean circuit $C'$ for \Cref{itm:check:c2}, which augments the original circuit $C$ with an alphabet lifting via error correcting code and a low-degree check.
Later, this single-coordinate checking circuit will be applied in parallel for every coordinate.

\paragraph{Lifted Flattening of $\Fbb$}
We will use the following standard error correcting code to achieve such an alphabet lifting.

\begin{proposition}[Binary ECC with Constant Rate and Distance~\cite{Justesen1972ClassOC}]
\label{prop:ecc}
    For every $n\ge 1$, there exists an efficiently computable error correcting code with the encoding map ${\rm ECC}_n: \bin^n\to \bin^{7n}$ of relative distance $\delta({\rm ECC_n})\ge 0.01$. 
\end{proposition}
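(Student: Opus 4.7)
The plan is to use the classical code concatenation recipe, which is how Justesen originally established a result of this kind. Fix $b = \lceil \log_2(3n/b+1) \rceil = \Theta(\log n)$ so that $\Fbb_{2^b}$ is large enough to admit a Reed-Solomon outer code $\mathrm{RS}_{\mathrm{out}}\colon \Fbb_{2^b}^{K_o} \to \Fbb_{2^b}^{N_o}$ with $K_o = \lceil n/b \rceil$ and $N_o = 3K_o$, yielding rate $1/3$ and relative distance at least $2/3$. For the inner code, I would choose a binary linear code $\mathrm{ECC}_{\mathrm{in}}\colon \bin^b \to \bin^{L}$ with $L = \lceil 7b/3 \rceil$ and relative distance at least $0.02$; such a code exists by the Gilbert-Varshamov bound, since $H(0.02) \approx 0.14$ is comfortably below the rate slack $1 - b/L \approx 4/7$, and it can be located in time $2^{O(b)} = \poly(n)$ by brute-force enumeration of generator matrices (or by picking one uniformly at random and verifying distance).

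The final code $\mathrm{ECC}_n$ is the concatenation: encode the message (grouped as $K_o$ symbols in $\Fbb_{2^b}$) via $\mathrm{RS}_{\mathrm{out}}$, then apply $\mathrm{ECC}_{\mathrm{in}}$ independently to each of the $N_o$ outer symbols. The block length is $N_o \cdot L \approx 7n$ bits, and can be made exactly $7n$ by padding the source message and/or puncturing a constant number of outer positions. The relative distance is at least $(2/3) \cdot 0.02 > 0.01$, via the standard argument that two distinct outer codewords differ in $\ge 2/3$ of the outer positions, and at each such position the inner encodings differ in $\ge 0.02$ of the bits. Efficiency is immediate: $\mathrm{RS}_{\mathrm{out}}$ runs in $\poly(n)$ time and each of the $N_o = \poly(n)$ inner encodings takes $\poly(b) = \polylog(n)$ time.

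The only delicate point, and really it is pure bookkeeping, is the exact rounding to hit block length $7n$ and relative distance $0.01$ on the nose. The chosen parameters leave ample slack (distance $0.02$ versus target $0.01$, and the $3$ versus $7/3$ ratio in the rate budget) to absorb losses from rounding, truncation, and padding. No deeper ideas are needed beyond the standard concatenation template; alternatively one may cite Justesen's construction directly, which uses the Wozencraft ensemble of inner codes and gives the same qualitative guarantee without requiring the brute-force inner-code search.
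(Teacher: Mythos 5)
The paper does not actually prove this statement; it is quoted as a black box with a citation to Justesen's explicit construction, so any self-contained argument is necessarily a different route. Your concatenation template (Reed--Solomon outer code over $\Fbb_{2^b}$ with $b=\Theta(\log n)$, composed with a Gilbert--Varshamov inner code of length $O(\log n)$) is a standard and essentially correct way to get such a code, and your parameter slack (inner distance $0.02$, outer distance $2/3$, target $0.01$) is indeed ample enough to absorb the rounding needed to hit block length exactly $7n$ (handling $n\le 100$ separately, say by $7$-fold repetition). What the citation buys over your argument is full explicitness: Justesen's construction (via the Wozencraft ensemble) needs no search at all, whereas your argument must justify that the inner code can be found deterministically in polynomial time.

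Two small inaccuracies in your write-up are worth flagging, though neither is fatal. First, brute-force enumeration of generator matrices of a $[L,b]$ inner code takes $2^{O(bL)}=n^{O(\log n)}$ time, not $2^{O(b)}$; to stay within $\poly(n)$ deterministically you should instead use the greedy (or column-by-column parity-check) construction meeting the GV bound, which runs in $2^{O(L)}=\poly(n)$ time, while your ``pick at random and verify'' variant is only a randomized construction. Second, the excess length from the ceilings is $\Theta(n/b+b)$ bits, i.e.\ $\Theta(n/\log^2 n)$ outer positions rather than a constant number; puncturing that many positions is still harmless because it is an $o(1)$ fraction of the $\Theta(n)$ coordinates and your relative-distance slack is a constant factor, but the bookkeeping should be stated that way.
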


Let $\Enc: \Fbb\to \{0,1\}^{7\log |\Fbb|}$ to be ${\rm ECC}_{\log |\Fbb|}$ composed with the natural flattening of $\Fbb$ into $\bin^{\log |\Fbb|}$.
Let $\Dec$ be the corresponding decoding function.

Later putting back into the construction of $\pcpptabletest$, we will apply $\Enc$ to every entry of $T_1[i]\circ\cdots\circ T_u[i]\in\Fbb^{u|\Fbb|^m}$ for every coordinate $i\in[t]$.
For convenience, we define the parallel encoding map $\Enc^\odot: \Fbb^{|\Fbb|^m}\to \bin^{7|\Fbb|^m\log |\Fbb|}$ as
\begin{equation}\label{eq:lift_ecc_parallel}
\Enc^\odot(z)(k,\cdot)=\Enc(z_k)
\quad\text{for $z\in\Fbb^{|\Fbb|^m}$ and $k\in[|\Fbb|^m]$,}
\end{equation}
where we view $\Enc^\odot(z)\in\bin^{7|\Fbb|\log|\Fbb|}$ as a function $[|\Fbb|^m]\times[7\log|\Fbb|]\to\bin$ and the index $k$ points to the lifted flattening of the $k$-th entry of $z$.
Given the definition of $\Enc^\odot$, the lifted flattening of $T_1[i]\circ\cdots\circ T_u[i]$ can be simply represented as $\Enc^\odot(T_1[i])\circ\cdots\circ\Enc^\odot(T_u[i])$.

\paragraph{The Construction of $C'$}
Given the lifted flattening of alphabet, we now present the construction.
The circuit $C'$ takes as input $y_1\circ\cdots\circ y_u$ where each $y_j\in\bin^{7|\Fbb|^m\log|\Fbb|}$ is supposed to be $\Enc^\odot(\hat y_j)$ for some $\hat y_j\in\Fbb^{|\Fbb|^m}$.
Note that $\hat y_j$ will be $T_j[i]$, rolling over all coordinates $i\in[t]$.
The circuit $C'$ check the following three things in order:
\begin{enumerate}[label=(S\arabic*)]
    \item\label{itm:c'_1} Check if each $y_j$ is a codeword of $\Enc^\odot$ and, if so, compute each $\hat y_j$ and view it as a binary string via the standard flattening of $\Fbb$. 
    
    This requires to decode a total number of $u\cdot|\Fbb|^m$ words in $\bin^{7\log|\Fbb|}$, which can be has circuit size $u|\Fbb|^m\cdot\polylog|\Fbb|\le|\Fbb|^m\poly|\Fbb|$.
    \item\label{itm:c'_2} Check if each $\hat{y_i}$ satisfies $C_{\ldt}$ from \Cref{thm:pldt-circuit}, i.e., $\hat y_i$ is the truth table of a degree-$d$ polynomial.

    This has size $|\Fbb|^m\poly|\Fbb|$ by \Cref{thm:pldt-circuit}.
    \item\label{itm:c'_3} Check whether $\hat{y_1}\circ \cdots \circ \hat{y_u}$ satisfies the original circuit $C$. 

    This has size precisely the size of $C$.
\end{enumerate}

Now we list properties of $C'$.

\begin{fact}[Satisfiability]\label{fct:c'_sat}
If $\Enc^{\odot}(\hat{y_1})\circ \cdots \circ \Enc^{\odot}(\hat{y_u})$ satisfies the circuit $C'$, then $\hat{y_1}\circ \cdots \circ \hat{y_u}$ satisfies the circuit $C$.
\end{fact}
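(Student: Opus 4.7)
The plan is to unwind the definition of $C'$ step by step. By construction, $C'$ accepts its input $y_1\circ\cdots\circ y_u$ if and only if all three subchecks \ref{itm:c'_1}, \ref{itm:c'_2}, \ref{itm:c'_3} pass in sequence. Hence, to prove the fact, I will assume the hypothesis that $\Enc^{\odot}(\hat y_1)\circ\cdots\circ\Enc^{\odot}(\hat y_u)$ is accepted and then chase the three subchecks forward.

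First I would handle \ref{itm:c'_1}: since each coordinate block of $\Enc^{\odot}(\hat y_j)$ is by definition an $\Enc$-codeword, the codeword test inside \ref{itm:c'_1} succeeds and the decoder produces the unique preimage, which is exactly $\hat y_j$ (viewed as a binary string under the flattening of $\Fbb$ to $\bin^{\log|\Fbb|}$). This is where injectivity of $\Enc$ matters, and it follows immediately from the fact that $\Enc$ is an error correcting code with positive relative distance (\Cref{prop:ecc}). Next, \ref{itm:c'_2} plays no role in this direction of the implication, so I would simply note that it adds no new obligation beyond those already present. Finally, \ref{itm:c'_3} applies the original circuit $C$ to the decoded string $\hat y_1\circ\cdots\circ\hat y_u$, and since \ref{itm:c'_3} must also pass by the hypothesis, we conclude $C(\hat y_1\circ\cdots\circ\hat y_u)=1$, as desired.

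There is no real obstacle here; the statement is a pure bookkeeping consequence of how $C'$ was constructed. The only subtle point worth flagging is the identification between the output of the decoder and the intended message $\hat y_j$, which relies on the fact that $\Enc^{\odot}$ is defined entrywise (see \Cref{eq:lift_ecc_parallel}) and that $\Enc$ is injective, so the parallel decoding of a valid $\Enc^{\odot}$-codeword recovers the original $\Fbb^{|\Fbb|^m}$-word unambiguously. With that observation in place, the proof is a one-line chain through \ref{itm:c'_1}--\ref{itm:c'_3}.
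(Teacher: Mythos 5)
Your proof is correct and is exactly the intended reasoning. The paper states this as a \emph{Fact} without a written proof precisely because it is the bookkeeping consequence you describe: if the input to $C'$ is of the form $\Enc^{\odot}(\hat y_1)\circ\cdots\circ\Enc^{\odot}(\hat y_u)$ and $C'$ accepts, then \ref{itm:c'_1} passes and, since $\Enc^{\odot}$ applies the injective map $\Enc$ entrywise (\Cref{eq:lift_ecc_parallel}), the internal decoder recovers exactly $\hat y_1,\ldots,\hat y_u$; then the acceptance of \ref{itm:c'_3} on these decoded strings is literally the statement that $C(\hat y_1\circ\cdots\circ\hat y_u)=1$. Your observation that \ref{itm:c'_2} is irrelevant to this direction of the implication (it only strengthens the premise) is also correct.
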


\begin{fact}[Size]\label{fct:c'_size}
Recall from \Cref{def:probttt} that $|w|$ is the input length of $(\Fbb,m,d,t)$-\probttt{}, which equals the size of $C$ plus $u\cdot|\Fbb|^m\log|\Fbb|$.
The size of the circuit $C'$ is at most $|w|\cdot\poly|\Fbb|$ and the input of $C'$ has length $O(|w|)$.
\end{fact}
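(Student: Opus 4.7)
The plan is to verify both bounds directly from the construction of $C'$, by summing the circuit sizes of the three stages \ref{itm:c'_1}, \ref{itm:c'_2}, \ref{itm:c'_3} already described and by reading off the bit length of its binary input. There is no real mathematical content beyond careful bookkeeping; all non-trivial ingredients (ECC decoding complexity via \Cref{prop:ecc}, the size of $C_{\ldt}$ from \Cref{thm:pldt-circuit}, and the trivial bound $|C|\le|w|$) are available as quoted facts.

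For the input length I observe that $C'$ receives $y_1\circ\cdots\circ y_u$ with each $y_j\in\bin^{7|\Fbb|^m\log|\Fbb|}$, so its total input length is $7u|\Fbb|^m\log|\Fbb|$. By \Cref{def:probttt}, the input length of the underlying \probttt{} instance satisfies $|w|\ge u|\Fbb|^m\log|\Fbb|$, hence the input to $C'$ has length at most $7|w|=O(|w|)$.

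For the circuit size I bound the three stages separately. Stage \ref{itm:c'_1} performs $u|\Fbb|^m$ independent operations, each decoding a length-$7\log|\Fbb|$ binary word via $\Dec$ and checking that it lies in the image of $\Enc$; by \Cref{prop:ecc} this decoder is polynomial-time computable, contributing $\polylog|\Fbb|\le\poly|\Fbb|$ gates per block, for a total of $u|\Fbb|^m\poly|\Fbb|$. Stage \ref{itm:c'_2} invokes the circuit $C_{\ldt}$ on each of the $u$ decoded tables, each copy of size $|\Fbb|^m\poly|\Fbb|$ by \Cref{thm:pldt-circuit}, for a total of $u|\Fbb|^m\poly|\Fbb|$. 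Stage \ref{itm:c'_3} simulates the original circuit $C$ once on the decoded bits, adding exactly $|C|\le|w|$ gates.

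Using the standing hypothesis $u\le 2^{50}$ (an absolute constant), the three contributions sum to at most $|\Fbb|^m\poly|\Fbb|+|w|$. From $|w|\ge u|\Fbb|^m\log|\Fbb|\ge|\Fbb|^m$ we get $|\Fbb|^m\poly|\Fbb|\le|w|\cdot\poly|\Fbb|$, so the overall size of $C'$ is bounded by $|w|\cdot\poly|\Fbb|$, as claimed. I do not anticipate any genuine obstacle; the only mild subtlety is that stage \ref{itm:c'_1} must also explicitly reject malformed inputs that fail to lie in $\eccim(\Enc)$, but a syndrome test fits within $u|\Fbb|^m\poly|\Fbb|$ gates and is absorbed into the same bound.
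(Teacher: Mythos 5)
Your proposal is correct and follows essentially the same route as the paper: the paper proves this fact implicitly by annotating each stage \ref{itm:c'_1}, \ref{itm:c'_2}, \ref{itm:c'_3} of the construction of $C'$ with its size ($u|\Fbb|^m\cdot\poly|\Fbb|$ for decoding, $|\Fbb|^m\poly|\Fbb|$ per copy of $C_\ldt$, and $|C|$ for the final stage) and summing, exactly as you do, with the same observations that $u\le 2^{50}$ is constant and $|w|\ge u|\Fbb|^m\log|\Fbb|$. Your input-length bound $7u|\Fbb|^m\log|\Fbb|\le 7|w|$ is likewise the intended argument.
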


\begin{claim}[Distance]\label{clm:c'_distance}
If $y_1\circ\cdots\circ y_u$ passes \Cref{itm:c'_1,itm:c'_2} but not \Cref{itm:c'_3}, then it is $2^{-60}$-far from solutions of $C'$.
\end{claim}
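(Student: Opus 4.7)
The plan is to argue that any solution $y' = y'_1\circ\cdots\circ y'_u$ of $C'$ must be very different from $y = y_1\circ\cdots\circ y_u$ due to the distance properties of both the Reed-Muller code (via Schwartz-Zippel) and the outer binary ECC.

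First, I would unpack the structure of solutions of $C'$ by walking through the three checks in \Cref{itm:c'_1,itm:c'_2,itm:c'_3}. Any $y' = y'_1\circ\cdots\circ y'_u$ accepted by $C'$ must, by \Cref{itm:c'_1}, have each $y'_j$ be a valid $\Enc^\odot$-codeword, so $y'_j=\Enc^\odot(\hat y'_j)$ for uniquely-determined $\hat y'_j\in\Fbb^{|\Fbb|^m}$; by \Cref{itm:c'_2} each $\hat y'_j$ is the truth table of a polynomial of degree at most $d$; and by \Cref{itm:c'_3} the tuple $(\hat y'_1,\ldots,\hat y'_u)$ parallel satisfies $C$. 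On the other hand, our given $y$ passes \Cref{itm:c'_1,itm:c'_2}, so it too decodes to degree-$d$ polynomials $\hat y_1,\ldots,\hat y_u$, but fails \Cref{itm:c'_3}. Hence $(\hat y_1,\ldots,\hat y_u)\ne(\hat y'_1,\ldots,\hat y'_u)$, and there exists some index $j^*\in[u]$ with $\hat y_{j^*}\ne\hat y'_{j^*}$.

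Next I would lower-bound the distance between $y_{j^*}$ and $y'_{j^*}$. Since $\hat y_{j^*}$ and $\hat y'_{j^*}$ are distinct polynomials of degree at most $d$ over $\Fbb^m$, Schwartz-Zippel gives that they disagree on at least $(1-d/|\Fbb|)$ fraction of the $|\Fbb|^m$ entries; our assumption $|\Fbb|\ge 6md$ yields $d/|\Fbb|\le 1/6$ and hence disagreement on at least a $5/6$ fraction of entries. For each such entry, the two corresponding length-$(7\log|\Fbb|)$ blocks of $y_{j^*}$ and $y'_{j^*}$ are distinct $\Enc$-codewords and therefore differ in at least a $0.01$ fraction of bits by \Cref{prop:ecc}. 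Consequently $y_{j^*}$ and $y'_{j^*}$ disagree on at least a $(5/6)\cdot 0.01\ge 2^{-8}$ fraction of the $7|\Fbb|^m\log|\Fbb|$ bits within the $j^*$-th block.

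Finally I would translate this block-level distance to a global distance on the input of $C'$. The full input has length $u\cdot 7|\Fbb|^m\log|\Fbb|$, while disagreement is guaranteed only within the single block of index $j^*$ (the other blocks could coincide between $y$ and $y'$), so the relative Hamming distance is at least $\frac{2^{-8}}{u}\ge \frac{1}{2^{8}\cdot 2^{50}}=2^{-58}>2^{-60}$, using the hypothesis $u\le 2^{50}$. Since $y'$ was an arbitrary solution of $C'$, this shows $y$ is $2^{-60}$-far from every solution of $C'$, which is exactly the claim. The only mildly subtle step is the book-keeping in the second paragraph to convert fractional agreement at the level of $\Fbb$-entries into fractional agreement at the bit level after applying $\Enc$; everything else is bookkeeping of constants.
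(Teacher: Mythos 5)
Your proof is correct and follows essentially the same route as the paper's: locate the index $j^*$ where the decoded polynomials differ (which must exist since $y$ fails \Cref{itm:c'_3} while any solution passes it), apply Schwartz--Zippel to get a constant fraction of disagreeing $\Fbb$-entries, lift through the constant-distance $\Enc$, and divide by $u\le 2^{50}$ for the concatenation. The only cosmetic difference is in the intermediate constants ($5/6$ vs. the paper's $1/2$ for the Schwartz--Zippel bound), which both comfortably clear the $2^{-60}$ target.
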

\begin{proof}
Let $z_1\circ\cdots\circ z_u$ be an arbitrary solution of $C'$, which means it passes \Cref{itm:c'_1,itm:c'_2,itm:c'_3}.
Let $\hat z_1,\ldots,\hat z_u$ be the decoding outcome from \Cref{itm:c'_1}.
Then there exists $j\in[u]$ such that $\hat y_j\ne\hat z_j$.
Since they both pass \Cref{itm:c'_2}, $\hat y_j$ and $\hat z_j$, viewed as an element from $\Fbb^{|\Fbb|^m}$, correspond to truth tables of distinct degree-$d$ polynomials.
Hence $\Delta(\hat y_j,\hat z_j)\ge1-\frac d{|\Fbb|}>\frac12$ by Schwartz-Zippel lemma and our assumption on $|\Fbb|$.
Recall from \Cref{eq:lift_ecc_parallel} that $\Enc^\odot$ applies $\Enc$ entrywise.
Since $y_j=\Enc^\odot(\hat y_j)$ and $z_j=\Enc^\odot(\hat z_j)$, we now have $\Delta(y_j,z_j)>0.01\cdot\frac12$ by \Cref{prop:ecc}.
Since $u\le2^{50}$, we have
$$
\Delta(y_1\circ\cdots\circ y_u,z_1\circ\cdots\circ z_u)\ge\frac{\Delta(y_j,z_j)}u>2^{-60}
$$
as desired.
\end{proof}

\subsection{Combining Codeword Testing}\label{sec:tabletest_everything}

Now we are ready to combine the codeword testing for \Cref{itm:check:c1} and the single-coordinate checking circuit $C'$ for \Cref{itm:check:c2} to prove \Cref{thm:pcpp-main}.

\begin{proof}[Proof of \Cref{thm:pcpp-main}]
The auxiliary proof for $\pcpptabletest$ consists of two parts. 
    \begin{itemize}
        \item The first part is $\pi_{\ldt,1}, \dots, \pi_{\ldt,u}$. Each $\pi_{\ldt,i}$ has alphabet $\Sigma^t$ and is constructed by \Cref{thm:ldt-maintext}, which is supposed to be the auxillary proof for codeword testing of $\prm$ on $T_i$. 
        
        \item The second part is denoted by $\pi_\ckt$, which has the alphabet $\bin^t$ naturally embedded into $\Sigma^t$. 
        For each coordinate $i\in [t]$, $\pi_\ckt[i]$ is constructed by \Cref{thm:pcpp-cktsat} for $\pcppckt$ to check if $\Enc^{\odot}(T_1[i])\circ \cdots \circ \Enc^{\odot}( T_u[i])$ satisfies the circuit $C'$.
    \end{itemize}

\paragraph{Testing Procedure of $\pcpptabletest$}
Now we describe the testing procedure. $\pcpptabletest$ executes one of the following two tests with equal probability.
    \begin{itemize}
        \item For each $i\in[u]$, $\pcpptabletest$ invokes $\pcppldt$ to run the codeword testing for $\prm$ on $T_i\circ \pi_{\ldt,i}$. This checks whether $T_i\in\eccim(\prm)$.
        \item $\pcpptabletest$ parallel simulates $\pcppckt$ to test if $\Enc^\odot(T_1[i])\circ \cdots \circ \Enc^\odot(T_u[i])\circ \pi_\ckt[i]$ satisfies $C'$ for all coordinates $i\in[t]$.
        
        In detail, for each coordinate $i\in[t]$, $\pcpptabletest$ tosses random coins as $\pcppckt$ does,      
        and probes entries of $\pi_\ckt[i]$ if needed.
        Whenever $\pcppckt$ needs to probe some bit of $\Enc^\odot(T_j[i])$, $\pcpptabletest$ queries the corresponding entry (i.e., the index $k$ in \Cref{eq:lift_ecc_parallel}) of $T_j[i]$, performs the lifted flattening of $\Fbb$ for that entry, and obtains the desired bit.

        We emphasize that the randomness used to simulate $\pcppckt$ is the same for all coordinates. Therefore, the queries by $\pcppckt$ are simulated in parallel for all coordinates $i\in[t]$, as the query locations are uniquely determined by the randomness.
    \end{itemize}

\paragraph{Parameters of $\pcpptabletest$} 
 Since $u$ is a constant and both $\pcppldt,\pcppckt$ have constant queries (see \Cref{thm:ldt-maintext} and \Cref{thm:pcpp-cktsat}), $\pcpptabletest$ makes constant queries.

 Recall that the input length $|w|$ equals the size of $C$ plus $u|\Fbb|^m\log|\Fbb|$.
 By \Cref{thm:ldt-maintext}, the first part tosses 
 $$
 m\log|\Fbb|+O(\log\log|\Fbb|+\log m)\le\log|w|+O(\log m)\le\log|w|+O(\log|\Fbb|)
 $$ 
 coins, where we used the assumption on $|\Fbb|$.
 By \Cref{thm:pcpp-cktsat} and \Cref{fct:c'_size}, the second part tosses 
 $$
 \log(|w|\cdot\poly|\Fbb|) + O\pbra{\log^{0.1}(|w|\cdot\poly|\Fbb|)}\le\log|w|+O\pbra{\log^{0.1}|w|+\log|\Fbb|}
 $$ 
 coins. Since we only execute one of them, the number of random coins is the maximum of the above two as desired.

  \paragraph{Completeness and Soundness} 
  The completeness is straightforward by the completeness of $\pcppldt$ (see \Cref{thm:ldt-maintext}) and $\pcppckt$ (see \Cref{thm:pcpp-cktsat}) and the construction of $C'$ (see \Cref{sec:single_checking}). 
  We focus on the soundness analysis:
  assuming that $T_1\circ\cdots\circ T_u$ is $\delta$-far from \probttt{}$(C)$ (i.e., we do not have \Cref{itm:check:c1,itm:check:c2}), $\pcpptabletest$ rejects with probability $\Omega(\delta)$.
  By modifying the hidden constant in $\Omega(\cdot)$ and noticing that $\delta$-far implies $\delta'$-far for any $\delta'\le\delta$, we additionally assume $\delta\le2^{-100}$.
  
  Assume towards contradiction that the above soundness statement is false. We first show that each $T_j$ is close to being parallel degree-$d$.
  This comes from the following \Cref{fct:tabletest_1}, which can be deduced directly from \Cref{thm:ldt-maintext}. 

    \begin{fact}\label{fct:tabletest_1}
        If for some $j\in[u]$, $T_j$ is $\delta$-far from $\eccim(\prm)$, then $\pcpptabletest$ rejects with probability $\Omega(\delta)$.
    \end{fact}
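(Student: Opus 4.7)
The plan is to treat this fact as an immediate corollary of the soundness of the codeword testing verifier $\pcppldt$ from \Cref{thm:ldt-maintext}, combined with the branching structure of $\pcpptabletest$. Since $u\le 2^{50}$ is constant, any constant-factor loss (from choosing the codeword-testing branch, or from picking one of the $u$ tables) is absorbed into $\Omega(\cdot)$.

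The key steps, in order, are as follows. First, I would directly invoke \Cref{thm:ldt-maintext} on the specific table $T_j$ that is assumed to be $\delta$-far from $\eccim(\prm)$: for \emph{every} purported auxiliary proof $\pi_{\ldt,j}$, the verifier $\pcppldt$ rejects on input $T_j\circ\pi_{\ldt,j}$ with probability at least $2^{-40}\delta$. Second, I would appeal to the construction of $\pcpptabletest$ given just before the fact: it selects the codeword-testing branch with probability exactly $1/2$, and in that branch it runs $\pcppldt$ on every $T_i\circ\pi_{\ldt,i}$ for $i\in[u]$, rejecting if any individual invocation rejects. In particular, rejection of the $j$-th invocation alone forces $\pcpptabletest$ to reject. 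Composing these observations gives an overall rejection probability of at least $\tfrac{1}{2}\cdot 2^{-40}\delta=\Omega(\delta)$, as required.

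There is essentially no obstacle here: the argument is a one-line black-box composition, and no new analysis of the parallel Reed--Muller code is needed (all of that is packaged inside \Cref{thm:ldt-maintext}). The reason to isolate this as a separate fact is structural: it cleanly separates the ``each $T_i$ is close to a parallel RM codeword'' component of the soundness analysis from the subsequent component that handles circuit satisfaction through $\pcppckt$ applied to the single-coordinate checking circuit $C'$. The only subtlety to flag in writing it up is that the ``$\delta$-far'' hypothesis is used against the worst-case auxiliary proof $\pi_{\ldt,j}$, which is exactly the form of the soundness guarantee in \Cref{thm:ldt-maintext}.
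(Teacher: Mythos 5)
Your proposal is correct and matches the paper's intent: the paper states this fact as following "directly from \Cref{thm:ldt-maintext}", and your argument—apply the soundness of $\pcppldt$ to the far table $T_j$ against an arbitrary auxiliary proof, note that the codeword-testing branch is chosen with probability $\tfrac12$ and rejects whenever the $j$-th invocation rejects, and conclude a rejection probability of at least $\tfrac12\cdot2^{-40}\delta=\Omega(\delta)$—is exactly that deduction spelled out.
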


    Now we assume each $T_j$ is $\delta$-close to some $T_j^*\in\eccim(\prm)$, which corresponds to \Cref{itm:check:c1}.
    Next, we show $T_1^*\circ\cdots\circ T_u^*$ parallel satisfies $C$, which corresponds to \Cref{itm:check:c2}.
    By \Cref{fct:c'_sat}, it suffices to show for each coordinate $i\in[t]$ that $\Enc^\odot(T_1^*[i])\circ\cdots\circ\Enc^\odot(T_u^*[i])$ satisfies $C'$, which is precisely the following \Cref{clm:tabletest_2}.

    \begin{claim}\label{clm:tabletest_2} 
    For each coordinate $i\in[t]$, $\Enc^\odot(T_1^*[i])\circ\cdots\circ\Enc^\odot(T_u^*[i])$ satisfies \Cref{itm:c'_1,itm:c'_2,itm:c'_3}.
    \end{claim}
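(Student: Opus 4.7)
The plan is to verify the three checks \ref{itm:c'_1}, \ref{itm:c'_2}, \ref{itm:c'_3} defining $C'$ separately for each coordinate $i\in[t]$. The first two are structural and should follow directly from what we already know about $T_j^*$; the third is where all the interesting work happens and is where I expect the main obstacle to arise.

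For \ref{itm:c'_1}, the string $\Enc^\odot(T_j^*[i])$ is by definition a codeword of $\Enc^\odot$, so the decoding step of $C'$ succeeds and recovers $T_j^*[i]$ viewed as a binary string via the standard flattening of $\Fbb$. For \ref{itm:c'_2}, since $T_j^*\in\eccim(\prm)$, each single-output component $T_j^*[i]\in\Fbb^{|\Fbb|^m}$ is the truth table of a degree-$d$ polynomial over $\Fbb^m$; by \Cref{thm:pldt-circuit} this is precisely what $C_\ldt$ tests, so $T_j^*[i]$ parallel satisfies $C_\ldt$ and \ref{itm:c'_2} passes.

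The bulk of the proof will be \ref{itm:c'_3}: namely, for every coordinate $i$, $T_1^*[i]\circ\cdots\circ T_u^*[i]$ satisfies $C$. I would argue by contradiction: suppose there is some coordinate $i^*$ on which this fails. Then $\Enc^\odot(T_1^*[i^*])\circ\cdots\circ\Enc^\odot(T_u^*[i^*])$ passes \ref{itm:c'_1} and \ref{itm:c'_2} but fails \ref{itm:c'_3}, so \Cref{clm:c'_distance} puts it $2^{-60}$-far from every satisfying assignment of $C'$. Next I would transfer this distance back to the actual input the verifier queries. Since $\Delta(T_j,T_j^*)\le\delta$ in the parallel (vector-valued) Hamming metric, projecting to the single coordinate $i^*$ only decreases the distance, so $\Delta(T_j[i^*],T_j^*[i^*])\le\delta$ for every $j\in[u]$; because $\Enc^\odot$ is entrywise and applies $\Enc$ blockwise, the relative bitwise Hamming distance between $\Enc^\odot(T_j[i^*])$ and $\Enc^\odot(T_j^*[i^*])$ is still at most $\delta$. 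Concatenating over $j$ and applying the triangle inequality shows that $\Enc^\odot(T_1[i^*])\circ\cdots\circ\Enc^\odot(T_u[i^*])$ is at least $\pbra{2^{-60}-\delta}$-far from satisfying assignments of $C'$, which under our assumption $\delta\le2^{-100}$ comfortably exceeds the proximity parameter $\pcppdelta$ of \Cref{thm:pcpp-cktsat}.

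At this point, $\pcppckt$ on coordinate $i^*$ rejects with probability at least $\tfrac12$; because the parallel simulation uses shared randomness and rejects if any coordinate rejects, the second test of $\pcpptabletest$ rejects with probability at least $\tfrac12$, so overall $\pcpptabletest$ rejects with probability at least $\tfrac14$. This contradicts the standing assumption (to be established in the enclosing proof of \Cref{thm:pcpp-main}) that $\pcpptabletest$ rejects with probability only a small constant times $\delta\le2^{-100}$. The main obstacle in this argument is precisely the alphabet-mismatch issue emphasized in the proof sketch: without the lifted encoding $\Enc^\odot$, a naive flattening would dilate distances by a factor of $\log|\Fbb|$ and prevent us from reaching the $\pcppdelta$ threshold. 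Everything else — the distance transfer, the triangle inequality, and the shared-randomness amplification — is routine once $\Enc^\odot$ is in place.
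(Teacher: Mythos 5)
Your proposal is correct and follows essentially the same route as the paper: (S1) and (S2) are immediate from $T_j^*\in\eccim(\prm)$, and (S3) is proved by contradiction via \Cref{clm:c'_distance}, transferring the $\delta$-closeness of $T_j$ to $T_j^*$ through the entrywise encoding $\Enc^\odot$ and the triangle inequality, and then invoking the soundness of $\pcppckt$ to force rejection with constant probability, contradicting the standing assumption in the proof of \Cref{thm:pcpp-main}. The only differences are cosmetic (you project to coordinate $i^*$ before applying $\Enc^\odot$, the paper bounds the concatenated distance first), so no further comment is needed.
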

    
    Given \Cref{clm:tabletest_2}, we arrive at a contradiction and complete the soundness analysis.
\end{proof}

Finally we prove \Cref{clm:tabletest_2}.
\begin{proof}[Proof of \Cref{clm:tabletest_2}]
    Note that \Cref{itm:c'_1} is already satisfied.
    Since each $T_j^*\in\eccim(\prm)$, each $T_j^*[i]$ is the truth table of a degree-$d$ polynomial, which means \Cref{itm:c'_2} is also satisfied.
    If \Cref{itm:c'_3} is false for some $i\in[t]$, then by \Cref{clm:c'_distance}, $\Enc^\odot(T_1^*[i])\circ\cdots\circ\Enc^\odot(T_u^*[i])$ is $2^{-60}$-far from any solution of $C'$.
    Observe that
    \begin{align*}
        \Delta(\Enc^\odot(T_1[i])\circ\cdots\circ T_u[i],\Enc^\odot(T_1^*[i])\circ\cdots\circ T_u^*[i])
        &\le\Delta(\Enc^\odot(T_1)\circ\cdots\circ T_u,\Enc^\odot(T_1^*)\circ\cdots\circ T_u^*)\\
        &\le\max_{j\in[u]}\Delta(\Enc^\odot(T_j),\Enc^\odot(T_j^*))
        \tag{since $\Delta$ is relative distance}\\
        &\le\delta.
        \tag{by the choice of $T_j^*$}
    \end{align*}
    Since $\delta\le2^{-100}$, we know that $\Enc^\odot(T_1[i])\circ\cdots\circ T_u[i]$ is $2^{-70}$-far from solutions of $C'$.
    By \Cref{thm:pcpp-cktsat}, this means $\pcpptabletest$, which executes $\pcppckt$ with half probability, rejects with probability $\frac12\cdot\frac12=\Omega(\delta)$.
    Recall that we assumed that $\pcpptabletest$ does not reject with probability $\Omega(\delta)$, which is a contradiction.
    Thus \Cref{itm:c'_3} should also be satisfied and this completes the proof of \Cref{clm:tabletest_2}.
\end{proof}

\section*{Acknowledgement}
We thank Eli Ben-Sasson for clarifying questions regarding \cite{ben2003randomness} and thank Karthik C.S. for pointing out the application to {\sc Max $k$-Coverage}~(\Cref{hardness:maxkcoverage}). 

\bibliographystyle{alpha} 
\bibliography{ref}

\appendix
\section{Derandomized Parallel Low Degree Test}\label{sec:lowdeg}

In this section, we design a derandomized parallel low degree test to prove \Cref{thm:ldt-maintext} and \Cref{thm:pldt-circuit}. This is obtained by combining the derandomized low degree test \cite[Theorem 4.1]{ben2003randomness} with the parallel low degree test \cite[Lemma 7.7]{LRSW23}, where the latter builds on \cite{friedl1995some}.
While the combination is standard, we decide to expand the proof sketch in \cite{ben2003randomness} to a full proof for completeness.

\begin{theorem*}[\Cref{thm:ldt-maintext} Restated]
    Assume $\charF(\Fbb)=2$ and $|\Fbb|\ge\max\cbra{6md,2^{100}m\log|\Fbb|}$.
    Let $\Sigma=\Fbb^{d+1}$ be the set of univariate degree-$d$ polynomials over $\Fbb$.
    There exists an efficient verifier $\pcppldt$ with the following properties.
    \begin{itemize}
        \item The input of $\pcppldt$ is $T\circ \pi$, where $T\in (\Fbb^t)^{|\Fbb|^m}$ is supposed to be a codeword of $\prm$ and $\pi\in (\Sigma^t)^{|\Fbb|^m\cdot (m\log |\Fbb|)^{O(1)}}$ is the auxiliary proof.
        \item $\pcppldt$ tosses $m\log |\Fbb|+O\pbra{\log \log |\Fbb| +\log m}$ unbiased coins and makes $2$ queries on $T\circ \pi$.
        \item If $T\in \eccim(\prm)$, then there exists some $\pi$ such that $\pcppldt(T\circ \pi)$ always accepts.
        \item If $T$ is $\delta$-far from $\eccim(\prm)$, then $\Pr[\pcppldt(T\circ \pi)\ \text{rejects}]\ge 2^{-40}\delta$ for any $\pi$.
    \end{itemize}
\end{theorem*}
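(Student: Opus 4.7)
The approach combines the derandomized single-output low-degree test of \cite{ben2003randomness} with a parallel query schedule in the style of \cite{friedl1995some,LRSW23}. The former cuts randomness from the naive $2m\log|\Fbb|$ down to $m\log|\Fbb| + O(\log\log|\Fbb| + \log m)$ by sampling lines through a random base point from a small pseudorandom family $\Lcal_x$ of size $(m\log|\Fbb|)^{O(1)}$; the latter replays the same test across all $t$ output coordinates with a single proof query per line, and the key point is that done at the vector level the soundness loses no factor of $t$.

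First I would set up the verifier. Index the auxiliary proof $\pi$ by pairs $(x,\ell)$ with $x\in\Fbb^m$ and $\ell\in\Lcal_x$, storing $\pi[x,\ell]\in\Sigma^t=(\Fbb^{d+1})^t$ as $t$ univariate degree-$d$ polynomials encoding the supposed restriction of $T$ to $\ell$, one per output coordinate. The verifier samples $x\sim\Fbb^m$, $\ell\sim\Lcal_x$, $y\sim\ell$ uniformly, queries $T(y)\in\Fbb^t$ and $\pi[x,\ell]\in\Sigma^t$, evaluates all $t$ stored polynomials at $y$ to obtain a vector in $\Fbb^t$, and rejects iff that vector disagrees with $T(y)$. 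Sampling $x$ uses $m\log|\Fbb|$ coins and sampling $(\ell,y)$ uses $O(\log\log|\Fbb| + \log m)+\log|\Fbb|$ additional coins, matching the stated budget with exactly two queries. Completeness is immediate: if $T$ encodes a parallel degree-$d$ polynomial $f$, letting $\pi[x,\ell]$ be the honest restriction of $f$ to $\ell$ passes every check.

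The main technical step is the soundness analysis, lifting \cite[Theorem~4.1]{ben2003randomness} to vector-valued polynomials. I would self-correct $T$ to $T^*$ by defining $T^*(y)$ to be the plurality \emph{vector} in $\Fbb^t$ among $\{\pi[x,\ell](y) : x,\ell \text{ with } y\in\ell\in\Lcal_x\}$. Taking plurality at the vector level rather than coordinate-wise is the trick behind \cite[Lemma~7.7]{LRSW23}: a bad line contributes a single disagreement event irrespective of how many coordinates witness the disagreement, so the potential factor of $t$ coming from a coordinate-wise union bound is absorbed for free. A Markov-style argument then shows that if $\pcppldt$ rejects with probability $<2^{-40}\delta$, then $\Delta(T, T^*)<\delta$. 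The remaining step is to prove $T^*\in\eccim(\prm)$, which imports the pair-consistency bootstrap from \cite{ben2003randomness}; once the plurality has been pinned at vector level, the argument upgrading local degree-$d$ on lines to global degree-$d$ on $\Fbb^m$ is $\Fbb$-linear and factors coordinate-by-coordinate with no extra blowup, yielding the claimed contradiction with $\delta$-farness of $T$.

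The hard part will be verifying that the specific pseudorandom family $\Lcal_x$ of \cite{ben2003randomness} supports the pair-consistency argument under our parameter regime $|\Fbb|\ge\max\{6md,2^{100}m\log|\Fbb|\}$ and delivers the absolute constant $2^{-40}$ in the soundness rather than some parameter-dependent quantity; this is essentially an exercise in reading off explicit constants from \cite{ben2003randomness}, but it is where the proof sketch in that paper leaves the most room for expansion. Once this is in hand, \Cref{thm:pldt-circuit} is a direct corollary: extract the explicit union of line families $\bigcup_x \Lcal_x$, whose size is $|\Fbb|^m\cdot\poly(m,\log|\Fbb|)$, and build a Boolean circuit $C_\ldt$ on input $w\in\bin^{|\Fbb|^m\log|\Fbb|}$ that for each line $\ell$ in the family checks whether the restriction of $w$ (viewed as an element of $\Fbb^{|\Fbb|^m}$) to $\ell$ is a univariate degree-$d$ polynomial, e.g., by interpolating from $d+1$ points on $\ell$ and verifying agreement at the remaining $|\Fbb|-d-1$ points. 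The resulting circuit has the claimed $|\Fbb|^m\cdot\poly|\Fbb|$ size, and $T$ parallel satisfies $C_\ldt$ exactly when each coordinate of $T$ lies in the single-output RM code, which is equivalent to $T\in\eccim(\prm)$.
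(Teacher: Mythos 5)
Your verifier only performs the derandomized test (random line with direction from the $\lambda$-biased family), and this alone cannot certify parallel degree $d$. The paper is explicit about this: its \Cref{thm:derand_pldt_detail}, the parallel analogue of \cite[Theorem~4.1]{ben2003randomness}, concludes only that $f$ is close to parallel degree-$md$, because a biased set has too few directions to pin the degree below $md$. Your step ``the argument upgrading local degree-$d$ on lines to global degree-$d$ on $\Fbb^m$ is $\Fbb$-linear and factors coordinate-by-coordinate'' is exactly where this fails: local degree $d$ along lines with directions drawn from a sparse set $S$ only implies global degree $md$ (see also \Cref{thm:exact_derand_ldp}). The paper's verifier $\pcppldt$ is the \emph{augmented} test $\AugLDTest_S^{f,g}$ of \Cref{def:aug_dpldt}, which with probability $1/2$ instead checks agreement with the line polynomial on a line through the origin in a uniformly random direction. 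It is this second sub-test, analyzed via a Schwartz--Zippel argument in the proof of \Cref{thm:derand_augpldt_detail}, that collapses the degree bound from $md$ to $d$. Without it, you can only show closeness to $\eccim(\nrm^{\Fbb,m,md,t})$, not to $\eccim(\prm)$, so the soundness claim as stated does not follow.

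A secondary issue: your verifier draws $x\sim\Fbb^m$, then $\ell\sim\Lcal_x$, then \emph{also} $y\sim\ell$, which costs an extra $\log|\Fbb|$ coins beyond the stated budget; under the hypothesis $|\Fbb|\ge 2^{100}m\log|\Fbb|$ we have $\log|\Fbb| \gg \log\log|\Fbb|+\log m$, so this overshoots $m\log|\Fbb|+O(\log\log|\Fbb|+\log m)$. The paper's test avoids this by querying the line polynomial at the base point $x$ itself (which is already uniform on the line once you average over $x$), so no additional coins are needed for the evaluation point. Your indexing of $\pi$ by $(x,\ell)$ pairs is likewise redundant by a factor of $|\Fbb|$ compared with indexing by lines, though this still fits within the stated $|\Fbb|^m\cdot(m\log|\Fbb|)^{O(1)}$ bound. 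The parallel-plurality self-correction idea you describe is in the right spirit (it matches \Cref{lem:derand_pldt}), but the paper applies it iteratively $\Theta(m\log|\Fbb|)$ times to drive the soundness gap below the granularity $|\Fbb|^{-2m}$ before invoking the exact characterization; a single plurality step does not suffice on its own.
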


\begin{theorem*}[\Cref{thm:pldt-circuit} Restated]
Assume $\charF(\Fbb)=2$ and $|\Fbb|\ge\max\cbra{6md,2^{100}m\log|\Fbb|}$.
There exists a Boolean circuit $C_\ldt$ of size $\cktsize$ for $T\in (\Fbb^t)^{|\Fbb|^m}$, where we encode $\Fbb$ as $\bin^{\log|\Fbb|}$, such that $T$ is codeword of $\prm$ iff $T$ parallel satisfies $C_\ldt$.    
\end{theorem*}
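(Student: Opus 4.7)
The plan is to sidestep the PCPP machinery of \Cref{thm:ldt-maintext} entirely and build $C_\ldt$ from the direct algebraic characterization of $\prm$. The key observation is that, since $|\Fbb|\ge 6md$ forces $d<|\Fbb|$, every function $\tilde T\colon\Fbb^m\to\Fbb$ admits a unique expansion $\tilde T(x)=\sum_{0\le a_i<|\Fbb|} c_a\cdot x_1^{a_1}\cdots x_m^{a_m}$ in the individual-degree-$<|\Fbb|$ monomial basis, and $\tilde T$ is the truth table of a total-degree-$\le d$ polynomial iff every ``high-degree'' coefficient (those with $\sum_i a_i>d$) vanishes. Since codewords of $\prm$ are precisely $t$-tuples $T=(T[1],\ldots,T[t])$ such that each $T[j]$ is such a truth table, it suffices to build a single-coordinate circuit $C_\ldt$ realizing the coefficient-vanishing test; then $T$ parallel satisfies $C_\ldt$ iff $T\in\eccim(\prm)$.

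The construction of $C_\ldt$ has two stages. Stage one computes $(c_a)_a$ from $\tilde T$ by the tensorized inverse Vandermonde, applied one dimension at a time: in dimension $i$, for each of the $|\Fbb|^{m-1}$ axis-parallel fibers, apply the fixed $\Fbb$-linear map $V^{-1}\colon\Fbb^{|\Fbb|}\to\Fbb^{|\Fbb|}$. Since $\charF(\Fbb)=2$, each univariate inversion is a fixed linear map realizable by $\poly|\Fbb|$ Boolean gates, so one dimension costs $|\Fbb|^{m-1}\cdot\poly|\Fbb|$ gates; summing over the $m$ dimensions and using $m\le|\Fbb|$ (implied by the hypothesis) yields a total of $|\Fbb|^m\cdot\poly|\Fbb|$. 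Stage two conjuncts the tests $c_a=0$ over all multi-indices $a\in\{0,\ldots,|\Fbb|-1\}^m$ with $\sum_i a_i>d$; there are at most $|\Fbb|^m$ such indices, each test of size $O(\log|\Fbb|)$, giving a further $|\Fbb|^m\cdot\poly|\Fbb|$ gates and matching the claimed $\cktsize$ bound.

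The main obstacle is just bookkeeping---one must confirm that no stage inflates the size to $|\Fbb|^{2m}\cdot\poly|\Fbb|$, which is the cost of a naive codeword test via the $|\Fbb|^m\times\binom{m+d}d$ generator matrix of $\prm$. The dimension-by-dimension sweep avoids this inflation precisely because the full value-to-coefficient transform $V^{\otimes m}$ is a Kronecker product, so each factor touches only one coordinate at a time. For correctness, the one nontrivial point is that the coefficient test faithfully captures the Reed--Muller degree condition, which holds thanks to $d<|\Fbb|$: any total-degree-$\le d$ polynomial has individual degrees strictly below $|\Fbb|$ and is therefore its own reduced monomial-basis representative, so coefficient vanishing on the high-total-degree locus is equivalent to membership in $\srm$.
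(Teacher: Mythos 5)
Your proposal is correct, and it is a genuinely different construction from the one in the paper.

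The paper builds $C_\ldt$ by implementing the PCPP verifier $\pcppldt=\AugLDTest_S^{f,g}$ from \Cref{thm:ldt-maintext} with the auxiliary prover replaced by $g=f_\Lbb$: $C_\ldt$ is the conjunction, over all $|\Fbb|^m\cdot|S|+|\Fbb|^m$ coin-toss outcomes, of a $\poly|\Fbb|$-size sub-circuit that interpolates the relevant line polynomial of $f$, checks it has degree $d$, and tests consistency at the queried point. Correctness is then inherited from the exact ($\delta=0$) case of the derandomized augmented low-degree test (essentially \Cref{thm:exact_derand_ldp} plus the degree-reduction step in \Cref{thm:derand_augpldt_detail}), and $|S|=\poly(m,\log|\Fbb|)\le\poly|\Fbb|$ gives the claimed $\cktsize$ bound. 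You instead bypass the testing machinery entirely: compute the full coefficient expansion of $T[j]$ in the reduced (individual-degree $<|\Fbb|$) monomial basis via $m$ axis-parallel sweeps of the $|\Fbb|\times|\Fbb|$ inverse Vandermonde, then conjunct the $c_a=0$ checks for total degree $>d$. The two ingredients you invoke are exactly right: $d<|\Fbb|$ (from $|\Fbb|\ge 6md$) ensures total-degree-$\le d$ polynomials are already in reduced form, so the coefficient-vanishing condition is equivalent to membership in $\srm$; and $m\le|\Fbb|$ (from $|\Fbb|\ge 2^{100}m\log|\Fbb|$) makes the sweep cost $m\cdot|\Fbb|^{m-1}\cdot\poly|\Fbb|\le\cktsize$. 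Your route is more elementary and self-contained — it needs no soundness analysis of any low-degree test, only Lagrange interpolation on the grid $\Fbb^m$ — while the paper's route has the advantage of reusing an already-established verifier with no new algebraic argument, at the cost of leaning on \Cref{thm:exact_derand_ldp}. Both yield the same $\cktsize$ bound. (A minor note: $\charF(\Fbb)=2$ is not actually needed for your Vandermonde argument — a fixed $\Fbb$-scalar multiplication is a $\poly(\log|\Fbb|)$-gate map over any characteristic — but citing it is harmless since it is in the hypothesis.)
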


\subsection{Extra Notation}\label{sec:ldt_extra_notation}

We first set up some necessary notation.

\paragraph{Parallel Low Degree Polynoimal}
We first recall the notion of parallel polynomial from \Cref{sec:prelim}.
Let $\Fbb$ be a finite field.
For a parallel-output function $f\colon\Fbb^m\to\Fbb^t$, we denote $f[1],\ldots,f[t]\colon\Fbb^m\to\Fbb$ as its single-output components, i.e., $f(x)=(f[1](x),\ldots,f[t](x))$.
We aim to test if $f[1],\ldots,f[t]$ are consistent with degree-$d$ polynomials on a large common set of inputs.
Formally, we say $f$ is $\delta$-close to parallel degree-$d$ iff $f$ is $\delta$-close to $\prm$; and we say $f$ is parallel degree-$d$ if $\delta=0$.

A simple union bound shows that if each $f[i]$ is $\delta$-close to degree-$d$ (e.g., from the standard low degree test), then $f=(f[1],\ldots,f[t])$ is $t\delta$-close to parallel degree-$d$. However for our purposes, such a loss is not affordable since $t$ is typically large.
Therefore we need to open-box the specific low degree test and show that it implies consistency on $1-\delta$ common fraction of inputs simultaneously for all $f[i]$.

\paragraph{Parallel Low Degree Test}
For $x,y\in\Fbb^m$, the line crossing $x$ in direction $y$ is the set
$$
\ell_{x,y}:=\cbra{x+t\cdot y\colon t\in\Fbb}.
$$
Note that if $y=0^m$ then $\ell_{x,y}=\cbra{x}$, otherwise $\ell_{x,y}$ has $|\Fbb|$ points.
Let $\Lbb$ be the set of lines in $\Fbb^m$.
For a parallel function $f\colon\Fbb^m\to\Fbb^t$ and any $\ell\in\Lbb$, we use $f|_\ell\colon\ell\to\Fbb^t$ to denote the restriction of $f$ on the line $\ell$.

Let $\Fbb^{(d,t)}$ be the set of univariate parallel degree-$d$ polynomials, i.e.,
$$
\Fbb^{(d,t)}:=\cbra{h\colon\Fbb\to\Fbb^t\colon h_i\text{ is a univariate degree-$d$ polynomial for each }i\in[t]}.
$$
The standard low degree test $\LDTest^{f,g}$ \cite{rubinfeld1996robust}, translated to the parallel setting \cite{LRSW23}, is given oracle access to $f\colon\Fbb^m\to\Fbb^t$ and $g\colon\Lbb\to\Fbb^{(d,t)}$, which later tests their consistency\footnote{The term consistency has been called correlated agreement, e.g., in recent literature of proximity gaps for RS codes and related literature in FRI protocol/SNARKs.}.
For each line $\ell\in\Lbb$, we denote $g(\ell)\in\Fbb^{(d,t)}$ as its parallel line polynomial, which we interpret as a univariate parallel degree-$d$ polynomial mapping elements in $\ell$ to $\Fbb^t$.
Thus for a line $\ell$ and a point $z\in\ell$, $g(\ell)(z)\in\Fbb^t$ is well defined.
We say $f$ agrees with $g(\ell)$ on $z$ if $f(z)=g(\ell)(z)$.

For each $f\colon\Fbb^m\to\Fbb^t$ and $d\in\Nbb$, we use $f_\Lbb\colon\Lbb\to\Fbb^{(d,t)}$ to denote the restriction of $f$ on each line to its closest parallel degree-$d$ polynomial. That is, for each line $\ell$, we set $f_\Lbb(\ell)$ to be the parallel degree-$d$ polynomial closest to $f|_\ell$, where we break tie arbitrarily.
Note that we will always assume the degree to be tested is $d$, and thus we omit $d$ in defining $f_\Lbb$ for simplicity.

In the completeness case (i.e., $f$ is indeed parallel degree-$d$), we can pick $f_\Lbb(\ell)=f|_\ell$ which is also parallel degree-$d$.
The parallel low degree test $\LDTest^{f,g}$ explores the reverse direction:
independently select $x,y\sim\Fbb^m$ and accept iff $f(x)=g(\ell_{x,y})(x)$, i.e., $f$ agrees with the parallel line polynomial $g(\ell_{x,y})$ on point $x$.
The parallel low degree test \cite{LRSW23} shows that if $\LDTest^{f,g}$ accepts with probability $1-\delta$, then $f$ is $O(\delta)$-close to parallel degree-$d$. Note that this bound does not depend on $t$.

\paragraph{Derandomized Parallel Low Degree Test}
Now we introduce the derandomized version of the parallel low degree test.
Following \cite{ben2003randomness}, this simply replaces the uniformly random direction $y$ by a pseudorandom $y\sim S$ for a much smaller set $S\subseteq\Fbb^m$.
Hence the number of randomness drops from $|\Fbb|^{2m}$ to $|S|\cdot|\Fbb|^m$.

\begin{definition}[Derandomized Parallel Low Degree Test]\label{def:derand_pldt}
Let $S\subseteq\Fbb^m$ and $f\colon\Fbb^m\to\Fbb^t,g\colon\Lbb\to\Fbb^{(d,t)}$.\footnote{Technically, $g$ only needs to be defined on lines whose directions are in $S$. We choose to assume $g$ is defined over all lines $\Lbb$ for simplicity. \label{fn:derandom_lines}}
The derandomized parallel low degree test $\LDTest^{f,g}_S$ is executed as follows:
independently select $x\sim\Fbb^m$ and $y\sim S$, then accept iff $f(x)=g(\ell_{x,y})(x)$.
\end{definition}

Later the set $S$ is chosen to be a $\lambda$-biased set as in \cite{ben2003randomness}.

\begin{definition}[$\lambda$-Biased Set]\label{def:eps_biased_set}
$S\subseteq\Fbb^m$ is a $\lambda$-biased set iff
\begin{itemize}
\item $S$ is symmetric, i.e., if $y\in S$ then $-y\in S$;\footnote{In some literature this symmetric assumption is not imposed. The parameter $\lambda$ in that case is comparable with the one here by a multiplicative factor of $2$.}
\item $\abs{\E_{y\sim S}\sbra{\chi(y)}}\le\lambda$ holds for any non-trivial homomorphism\footnote{This homomorphism is usually referred as character. It is trivial if it maps everything to $1$. An illustrative example is when $\Fbb=\Fbb_2$ and $\chi$ is a parity function.} $\chi\colon\Fbb^m\to\mu_p$, where $\mu$ is the multiplicative group of $p$-th unit root and $p$ is the characteristic of $\Fbb$.
\end{itemize}
\end{definition}

In a graph theoretical reformulation, $S$ is $\lambda$-biased iff the graph $G_S$ is an (undirected) expander graph with expansion factor $1-\lambda$, where the vertex set of $G_S$ is $\Fbb^m$ and $x,y\in G_S$ is connected iff $x-y\in S$ (see e.g., \cite{jalan2021near}).
For our purposes, we quote the following results derived directly from the expanding property of biased sets.

\begin{lemma}[{\cite[Lemma 4.3]{ben2003randomness}}]\label{lem:derand_sampling}
Suppose $S\subseteq\Fbb^m$ is $\lambda$-biased. Then for any $B\subseteq\Fbb^m$ of density $\mu=\frac{|B|}{|\Fbb|^m}$ and any $\eps>0$, we have
$$
\Pr_{x\sim\Fbb^m,y\sim S}\sbra{\abs{\frac{\abs{\ell_{x,y}\cap B}}{\abs{\ell_{x,y}}}-\mu}>\eps}\le\pbra{\frac1{|\Fbb|}+\lambda}\cdot\frac\mu{\eps^2}.
$$
\end{lemma}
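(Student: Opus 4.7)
The plan is to apply Chebyshev's inequality to the random variable $Z:=\abs{\ell_{x,y}\cap B}/\abs{\ell_{x,y}}$ under the joint distribution $x\sim\Fbb^m,\,y\sim S$. First, for each fixed $t\in\Fbb$ the point $x+ty$ is uniformly distributed in $\Fbb^m$ regardless of $y$, so $\Pr[x+ty\in B]=\mu$ and averaging over $t\in\Fbb$ yields $\E[Z]=\mu$. The whole task thus reduces to establishing the variance bound $\Var[Z]\le\pbra{\frac{1}{|\Fbb|}+\lambda}\mu$, from which Chebyshev immediately gives the stated tail bound.

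To estimate the second moment, I would expand
$$
\E[Z^2]=\frac{1}{|\Fbb|^2}\sum_{t_1,t_2\in\Fbb}\Pr_{x,y}\sbra{x+t_1y\in B\text{ and }x+t_2y\in B}.
$$
The diagonal contribution $t_1=t_2$ is exactly $\mu/|\Fbb|$. For each off-diagonal pair, setting $s=t_2-t_1\ne 0$, the inner probability rewrites as $\E_{x,y}\sbra{\indicator_{x\in B}\indicator_{x+sy\in B}}$, so it suffices to show this quantity is within $\lambda\mu$ of $\mu^2$ for every fixed $s\ne 0$. This is where the biased hypothesis enters, via Fourier analysis on the additive group $\Fbb^m$: writing $\indicator_B=\sum_\chi\hat B(\chi)\chi$ in the character basis, one has $\hat B(\mathbf 1)=\mu$ and $\sum_\chi|\hat B(\chi)|^2=\mu$ by Parseval, and a standard shift computation gives
$$
\E_x\sbra{\indicator_B(x)\indicator_B(x+sy)}=\mu^2+\sum_{\chi\ne\mathbf 1}|\hat B(\chi)|^2\,\overline{\chi(sy)}.
$$
For every non-trivial $\chi$, the map $y\mapsto\chi(sy)$ is itself a non-trivial additive character of $\Fbb^m$ (because $s\ne 0$), so averaging over $y\sim S$ yields $\abs{\E_{y\sim S}\sbra{\chi(sy)}}\le\lambda$ by the biased hypothesis. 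Combined with $\sum_{\chi\ne\mathbf 1}|\hat B(\chi)|^2=\mu-\mu^2\le\mu$, this bounds the off-diagonal average by $\mu^2+\lambda\mu$. Summing the $|\Fbb|$ diagonal and $|\Fbb|^2-|\Fbb|$ off-diagonal contributions and normalizing by $|\Fbb|^2$ then gives $\E[Z^2]\le\mu^2+\mu/|\Fbb|+\lambda\mu$, which is precisely the required variance bound.

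The only thing to be careful about is translating the biased condition of \Cref{def:eps_biased_set} (stated for characters into the $p$-th roots of unity, where $p=\charF(\Fbb)$) to arbitrary non-trivial additive characters of $\Fbb^m$; since every such character factors through $\Fbb_p^{m\log_p|\Fbb|}$, the hypothesis applies directly. Apart from keeping careful track of Fourier conventions, the argument is a routine second-moment computation, so I do not anticipate any genuine obstacle.
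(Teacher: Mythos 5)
Your proof is correct. Note that the paper itself gives no proof of \Cref{lem:derand_sampling}: it is quoted verbatim from \cite{ben2003randomness} (Lemma 4.3 there), with the surrounding remark that it follows "directly from the expanding property of biased sets", i.e.\ from the spectral sampling property of the Cayley graph $G_S$ on $\Fbb^m$ with connection set $S$. Your self-contained argument --- Chebyshev applied to $Z=\frac1{|\Fbb|}\sum_{t\in\Fbb}\indicator_{x+ty\in B}$, with the second moment controlled by splitting into the diagonal term $\mu/|\Fbb|$ and off-diagonal terms bounded by $\mu^2+\lambda\mu$ via the character expansion of $\indicator_B$ --- is precisely that spectral argument in Fourier language, since the additive characters are the eigenvectors of $G_S$ and the bias condition is the eigenvalue bound; so the two routes coincide in substance. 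The details check out: $x+ty$ is uniform for every fixed $t$ (giving $\E[Z]=\mu$), the identification of the line-density with the average over $t$ is valid even in the degenerate case $y=0^m$ (where $\ell_{x,y}$ is a single point), for $s\neq0$ the map $y\mapsto\chi(sy)$ is again a non-trivial character so \Cref{def:eps_biased_set} applies, and your closing observation that every additive character of $\Fbb^m$ lands in $\mu_p$ (so the paper's definition, stated only for homomorphisms into $\mu_p$, covers all of them) disposes of the one genuine translation issue. Summing $|\Fbb|$ diagonal and at most $|\Fbb|^2$ off-diagonal contributions indeed yields $\Var[Z]\le\pbra{\frac1{|\Fbb|}+\lambda}\mu$, and Chebyshev gives the stated bound.
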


Finally we remark that $\lambda$-biased sets exist for $|S|=\Omega\pbra{\frac{m\log|\Fbb|}{\lambda^2}}$ \cite{alon1994random} and efficient explicit constructions of sizes $\poly(m,\log|\Fbb|,\frac1\lambda)$ are also obtained.

\begin{fact}[See e.g., \cite{jalan2021near}]\label{fct:biased_construction}
For any finite field $\Fbb$, positive integer $m$, and parameter $\lambda\in(0,1]$, a $\lambda$-biased set of size $\poly(m,\log|\Fbb|,\frac1\lambda)$ can be constructed efficiently in time $\poly(m,|\Fbb|,\frac1\lambda)$. 
\end{fact}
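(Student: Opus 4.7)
The plan is to reduce the general finite-field setting to the prime-field setting, invoke a Weil-bound-based small-bias construction over the prime field, and symmetrize. First I would write $\Fbb = \Fbb_{p^c}$ with $p = \charF(\Fbb)$, fix an $\Fbb_p$-linear isomorphism $\iota \colon \Fbb^m \to \Fbb_p^{n}$ with $n = cm = m \log_p |\Fbb|$, and observe that $\Fbb^m$ is a vector space over $\Fbb_p$. Hence every element has additive order dividing $p$, so any non-trivial homomorphism $\chi \colon \Fbb^m \to \mu_p$ factors uniquely as $\chi = \chi' \circ \iota$ for some non-trivial additive character $\chi'$ of $\Fbb_p^n$. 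It therefore suffices to construct a $\lambda$-biased set $S_0 \subseteq \Fbb_p^n$ of size $\poly(n, 1/\lambda)$ in time $\poly(n, p, 1/\lambda)$, and pull it back through $\iota^{-1}$, which costs $\poly(m, \log|\Fbb|)$ per element.

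For the prime-field construction I would use the AGHP-style Reed-Solomon plus dual-character recipe. Let $q = p^k$ be the smallest power of $p$ with $q \ge n/\lambda$, and fix the field-trace $\mathrm{Tr} \colon \Fbb_q \to \Fbb_p$. Define
\[
S_0 := \cbra{v_{\alpha,\beta} \colon \alpha,\beta \in \Fbb_q},
\qquad (v_{\alpha,\beta})_i := \mathrm{Tr}(\alpha^{i-1}\beta), \ i \in [n].
\]
For any non-zero $s \in \Fbb_p^n$, the polynomial $f_s(X) := \sum_{i=1}^{n} s_i X^{i-1} \in \Fbb_q[X]$ is non-zero of degree at most $n-1$. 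Using the orthogonality identity $\sum_{\beta \in \Fbb_q} \omega_p^{\mathrm{Tr}(\beta \gamma)} = q \cdot \indicator_{\gamma = 0}$ and the fact that $f_s$ has at most $n-1$ roots in $\Fbb_q$, a short character-sum computation yields
\[
\abs{\E_{\alpha,\beta \sim \Fbb_q}\sbra{\omega_p^{\abra{s,\, v_{\alpha,\beta}}}}} \le \frac{n-1}{q} \le \lambda.
\]
Hence $S_0$ is $\lambda$-biased with $|S_0| = q^2 = \poly(n, 1/\lambda)$, and enumerating it takes $\poly(q, n) = \poly(n, 1/\lambda)$ arithmetic operations in $\Fbb_q$.

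Finally, the definition in the excerpt demands that the set be symmetric ($S = -S$). I would simply replace $S_0$ by $S := S_0 \cup (-S_0)$, which at most doubles the size and preserves the bias: for any non-trivial $\chi$,
$\abs{\E_S\sbra{\chi(y)}} \le \tfrac12 \pbra{\abs{\E_{S_0}\sbra{\chi(y)}} + \abs{\E_{S_0}\sbra{\bar{\chi}(y)}}} \le \lambda$,
using $\chi(-y) = \bar\chi(y)$ and that $|\E_{S_0}\sbra{\bar\chi(y)}| = |\E_{S_0}\sbra{\chi(y)}|$. Pulling $S$ back through $\iota^{-1}$ to $\Fbb^m$ yields the required symmetric $\lambda$-biased subset, of size $|S| \le 2q^2 = \poly(m, \log|\Fbb|, 1/\lambda)$; writing down its elements as vectors in $\Fbb^m$ costs $\poly(m, |\Fbb|, 1/\lambda)$ time since each element has bit length $O(m \log|\Fbb|)$.

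The main obstacle is the character-sum analysis in the middle paragraph: one must ensure the AGHP construction transfers uniformly to arbitrary prime characteristic $p$, whereas textbook treatments typically focus on $p = 2$ and use $\pm 1$ in place of $\omega_p$. The extension is routine once the trace-orthogonality identity and the degree bound on $f_s$ are stated in the right generality, but this is where the actual content lies; the reduction to prime fields, the symmetrization, and the parameter bookkeeping are all syntactic.
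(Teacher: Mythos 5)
The paper does not actually prove this fact: it is quoted from the cited work of Jalan--Moshkovitz \cite{jalan2021near}, whose construction works for arbitrary abelian groups and in particular gives sets of size $\poly(\log|G|,1/\lambda)=\poly(m,\log|\Fbb|,1/\lambda)$ for $G=\Fbb^m$ in \emph{every} characteristic. Your AGHP-style powering construction is a different, self-contained route, and its analysis is correct as far as it goes: the reduction to additive characters of $\Fbb_p^n$, the identity $\sum_i s_i\,\mathrm{Tr}(\alpha^{i-1}\beta)=\mathrm{Tr}(\beta f_s(\alpha))$, orthogonality over $\beta$, and root counting for $f_s$ (note no Weil bound is actually needed, despite your framing) do give bias at most $(n-1)/q$.

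The genuine gap is the size bound for general $\Fbb$, which is what the statement claims (``for any finite field''). You take $q$ to be a power of $p=\charF(\Fbb)$, so $q\ge p$ no matter how large $n/\lambda$ is, and $|S_0|=q^2\ge p^2$. When the characteristic is large --- e.g.\ $\Fbb=\Fbb_p$ a prime field, so $p=|\Fbb|$ --- your set has size at least $|\Fbb|^2$, whereas only the running time, not the set size, is allowed to depend polynomially on $|\Fbb|$. Thus your argument establishes the fact only when $p\le\poly(m,\log|\Fbb|,1/\lambda)$; this covers the characteristic-2 fields the paper actually uses, but not the statement as written, and repairing it in general requires a construction whose size is polylogarithmic in $p$ (exactly what the cited reference supplies). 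A minor secondary point: in odd characteristic your symmetrization $S=S_0\cup(-S_0)$ should be taken as a multiset, or you must account for overlaps between $S_0$ and $-S_0$, since the uniform distribution on the set union need not equal the half-half mixture your bias estimate uses; in characteristic 2 this is vacuous because $-y=y$.
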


\paragraph{Augmented Derandomized Parallel Low Degree Test}
Unfortunately the derandomized parallel low degree test is not sufficient to guarantee the exact quantitative degree condition, as the directions $S$ have fairly limited possibilities.
More concretely, even if $\LDTest_S^{f,g}$ succeeds with probability $1$, it is not guaranteed that $f$ is parallel degree-$d$.
To compensate the missing directions, we need to augment \Cref{def:derand_pldt} with an additional test that checks the consistency of $f$ and $g$ on a purely random direction from the origin \cite{ben2003randomness}.

\begin{definition}[Augmented Derandomized Parallel Low Degree Test]\label{def:aug_dpldt}
Let $S\subseteq\Fbb^m$ and $f\colon\Fbb^m\to\Fbb^t,g\colon\Lbb\to\Fbb^{(d,t)}$.
The augmented derandomized parallel low degree test $\AugLDTest_S^{f,g}$ is executed as follows: with equal probability we perform one of the following two tests:
\begin{itemize}
\item Independently select $x\sim\Fbb^m$ and $y\sim S$, then accept iff $f(x)=g(\ell_{x,y})(x)$.
\item Select $z\sim\Fbb^m$, then accept iff $f(z)=g(\ell_{0^m,z})(z)$.
\end{itemize}
\end{definition}

The first test in $\AugLDTest_S^{f,g}$ is simply $\LDTest_S^{f,g}$, for which we will later show that it guarantees that $f$ is close to parallel degree-$md$.
Then the second test allows us to further bring the degree down to $d$.

\subsection{Codeword Testing}

\Cref{thm:ldt-maintext} and \Cref{thm:pldt-circuit} follow directly from the following result, combined with the explicit constructions for biased sets.

\begin{theorem}\label{thm:derand_augpldt_detail}
Assume $|\Fbb|\ge6md$, $|\Fbb|\ge2^{100}\cdot m\log|\Fbb|$, and $\lambda\le\frac1{2^{100}\cdot m\log|\Fbb|}$.
Let $S\subseteq\Fbb^m$ be a $\lambda$-biased set.
If
$$
\Pr\sbra{\AugLDTest_S^{f,g}\text{ accepts}}\ge1-\delta,
$$
then $f$ is $2^{40}\delta$-close to parallel degree-$d$.
\end{theorem}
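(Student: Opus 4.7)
Let $\rho_1$ and $\rho_2$ denote the rejection probabilities of the two sub-tests comprising $\AugLDTest_S^{f,g}$, so $\rho_1+\rho_2\le 2\delta$. The plan is to extract a self-corrected function $F\colon\Fbb^m\to\Fbb^t$ by plurality decoding, establish $\Delta(f,F)=O(\delta)$, then use the derandomized sub-test together with the $\lambda$-biased expansion to show $F$ is a parallel polynomial of total degree at most $md$, and finally use the origin-line sub-test to collapse the total degree down to $d$. A key point is that we preserve parallel structure ``for free'' because every plurality vote is taken over full vectors in $\Fbb^t$, so we never need a $t$-fold union bound.

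Define $F(x):=\mathrm{plurality}_{y\sim S}\,g(\ell_{x,y})(x)\in\Fbb^t$. A standard Markov argument applied to the derandomized sub-test shows that outside a ``bad'' set $B\subseteq\Fbb^m$ of density $O(\rho_1)$, more than half of $y\in S$ satisfy $g(\ell_{x,y})(x)=f(x)$; on such good $x$, $F(x)=f(x)$, yielding $\Delta(f,F)=O(\rho_1)$. It therefore suffices to show $F$ is parallel degree-$d$.

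For the individual-degree step, I would follow the template of \cite[Theorem~4.1]{ben2003randomness}, lifted to the parallel setting of \cite{LRSW23}. For random $x\in\Fbb^m$ and independent $y_1,y_2\sim S$, the derandomized test passing with probability $1-\rho_1$ implies pair consistency: with probability $1-O(\rho_1)$, both $g(\ell_{x,y_1})(x)$ and $g(\ell_{x,y_2})(x)$ equal $f(x)$ and hence each other (as vectors in $\Fbb^t$). Invoking the expansion of $S$ through \Cref{lem:derand_sampling} with $\lambda\le 1/(2^{100}m\log|\Fbb|)$ amplifies this pair consistency to line consistency: for every direction $y\in S$ and almost every $x$, the restriction of $F$ to $\ell_{x,y}$ coincides on a $1-O(\lambda)$ fraction of points with a common parallel degree-$d$ univariate polynomial. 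Since a $\lambda$-biased set with $\lambda<1$ spans $\Fbb^m$, applying this along $m$ independent directions and using $|\Fbb|>md$ forces $F$ to agree everywhere with a multivariate parallel polynomial of total degree at most $md$.

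Finally I would invoke the origin-line sub-test, which passes with probability $\ge 1-\rho_2$. For a uniform $z\in\Fbb^m$, combining this with $\Delta(f,F)=O(\delta)$ gives $F(z)=g(\ell_{0^m,z})(z)$ on a $1-O(\delta)$ fraction of $z$. Averaged over the line direction $z$, most lines through the origin see $F$ agreeing with the parallel degree-$d$ polynomial $g(\ell_{0^m,z})$ on more than $d+1$ points, hence identically. Expanding $F$ around $0^m$, each homogeneous component of degree $>d$ must vanish along a $1-o(1)$ fraction of lines through the origin; since $|\Fbb|>2md$, Schwartz--Zippel forces every such high-degree component to vanish identically, so $F$ is parallel degree-$d$. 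Absorbing the constants in $\Delta(f,F)=O(\delta)$ into the factor $2^{40}$ yields the claim.

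\textbf{Main obstacle.} The technically delicate step is the individual-degree promotion: extending \cite{ben2003randomness}'s derandomized analysis to parallel outputs without losing a factor of $t$. A coordinate-wise union bound would only give $O(t\delta)$ proximity, which is unaffordable. The gain comes from defining $F$ by a \emph{vector-valued} plurality, so that pair consistency at $x$ automatically holds in all $t$ coordinates simultaneously; threading this joint-agreement invariant through the Cauchy--Schwarz / expander amplification of \cite{ben2003randomness} and verifying it survives the $\lambda$-biased sampling lemma is where the bulk of the proof sits.
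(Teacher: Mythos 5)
Your high-level decomposition matches the paper: first use the derandomized sub-test to land near an exact parallel degree-$md$ polynomial, then use the origin-line sub-test to collapse the degree to $d$. However, there is a genuine gap in the first stage. You define $F$ by a single plurality vote over $y\sim S$ and then assert that combining pair-consistency with \Cref{lem:derand_sampling} yields ``for every direction $y\in S$ and almost every $x$, the restriction of $F$ to $\ell_{x,y}$ coincides on a $1-O(\lambda)$ fraction of points with a common parallel degree-$d$ univariate polynomial,'' and that this plus interpolation ``forces $F$ to agree everywhere with a multivariate parallel polynomial of total degree at most $md$.'' That last step does not follow: $1-O(\lambda)$ agreement with $\lambda\approx(m\log|\Fbb|)^{-1}$ is nowhere near exact agreement, and a function whose line restrictions are only approximately low degree cannot be interpolated from $m$ directions into an exact degree-$md$ polynomial. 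Precisely because $S$ is only pseudorandom, a one-shot plurality correction is not enough; the paper's actual argument (\Cref{thm:derand_pldt_detail}, driven by the self-correction \Cref{lem:derand_pldt}) iterates $O(m\log|\Fbb|)$ rounds to drive the rejection probability strictly below $|\Fbb|^{-2m}$, at which point granularity forces the test to accept with probability exactly $1$, and only then can an exact characterization (\Cref{thm:exact_derand_ldp}) certify that the corrected function is genuinely parallel degree-$md$. This iterative bootstrapping is the technical core you are skipping.

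A secondary issue is the degree-reduction step. You write that agreement on ``more than $d+1$ points'' forces $F|_\ell$ to coincide identically with $g(\ell_{0^m,z})$; since $F|_\ell$ has degree up to $md$ while $g(\ell)$ has degree $d$, identity actually requires agreement on more than $md$ points. This is fixable because the Markov-extracted agreement fraction on good lines is $1-O(\sqrt\delta)$, which for $\delta\le 2^{-40}$ and $|\Fbb|\ge 6md$ exceeds $md$ points. The paper avoids this bookkeeping altogether by arguing by contradiction directly in probability: if the corrected polynomial $f'$ had degree $d'\in(d,md]$, the top coefficient of $f'(i\cdot w)$ in $i$ is a nonzero degree-$d'$ polynomial in $w$, so by Schwartz--Zippel the origin test would accept with probability at most $2^{31}\delta + 2md/(|\Fbb|-1)<1-\delta/2$, contradicting the hypothesis; no Markov decomposition over lines is needed. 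Your parallel-plurality observation (that taking plurality over full $\Fbb^t$ vectors avoids the $t$-fold union bound) is correct and is indeed what makes the parallel lifting work, but that observation alone does not supply the iterative-correction machinery.
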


\begin{proof}[Proof of \Cref{thm:ldt-maintext}]
We first note that $|\Fbb|^m>\binom{m+d}d$ and $|\Fbb|>d$ are satisfied assuming the conditions on $|\Fbb|$ in \Cref{thm:ldt-maintext}, thus $\prm$ is well defined.

Then we instantiate \Cref{thm:derand_augpldt_detail} with $\lambda=\frac1{2^{100}\cdot m\log|\Fbb|}$ and the $\lambda$-biased set $S$ by \Cref{fct:biased_construction}.
The construction of $S$ is efficient in time $\poly(|\Fbb|,m)$ and $S$ has size $(m\log|\Fbb|)^{O(1)}$.
Then we define $\pcppldt$ as $\AugLDTest_S^{f,g}$ where $T=f$ and $\pi$ is defined to be the entries of $g$ that can be possibly queried. Recall \Cref{def:aug_dpldt}. Then we have
$$
|\pi|\le|\Fbb|^m\cdot|S|+|\Fbb|^m=|\Fbb|^m\cdot(m\log|\Fbb|)^{O(1)}.
$$
In addition, by merging the randomness of the two tests in $\AugLDTest_S^{f,g}$, $\pcppldt$ tosses 
$$
1+\log\pbra{|\Fbb|^m|S|}=m\log|\Fbb|+O\pbra{\log\log|\Fbb|+\log m}
$$ 
total coins.
The completeness is obvious and the soundness follows from \Cref{thm:derand_augpldt_detail}.
\end{proof}

\begin{proof}[Proof of \Cref{thm:pldt-circuit}]
Now we turn to \Cref{thm:pldt-circuit}.
By \Cref{thm:ldt-maintext}, it suffices to implement $\pcppldt$ purely on $f$.
In addition, since $\pcppldt=\AugLDTest_S^{f,g}$ and by \Cref{def:aug_dpldt}, it performs the same check in parallel for each coordinate $i\in[t]$.
This means that we only need to instantiate $\pcppldt$ for a single coordinate (or equivalently, think of $t=1$) to design the circuit $C_\ldt$.

To get rid of the extra proof $g$, we simply set $g=f_\Lbb$.
Then, whenever we need information about entries in $g$ (i.e., a line polynomial), we can probe the entries along the line in $f$ to compute it.
We remark that this is inefficient in terms of the query complexity, but it is still efficient in terms of the circuit complexity.

Now we describe the construction of $C_\ldt$ for a fixed coordinate $i\in[t]$.
Based on the coin toss of $\pcppldt$, it checks the consistency of an $\Fbb$-valued point (i.e., $f[i](x)$ or $f[i](z)$) with an evaluation point (i.e., $x$ or $z$) of a degree-$d$ line polynomial over $\Fbb$ (i.e., $f_{\ell_{x,y}}[i]$ or $f_{\ell_{0^m,z}}[i]$).
To implement it as a circuit, we take the conjunction of all the sub-circuit outcome from coin toss possibilities, where each sub-circuit performs the following computation.
\begin{itemize}
\item It first interpolates the line polynomial using entries of $f[i]$ along the line, and checks if this line polynomial is degree-$d$.

This can be efficiently done with $\poly|\Fbb|$ gates.
\item Then it evaluates the value of the desired point of the line polynomial, and checks if it is the same as the one directly obtained from $f[i]$.

This also requires $\poly|\Fbb|$ gates only.
\end{itemize}

The correctness of $C_\ldt$ follows directly from \Cref{thm:ldt-maintext} and our choice of $g=f_\Lbb$.
Since the number of coin toss possibilities is $|\Fbb|^m\poly(m,\log|\Fbb|)$, by our assumption on $|\Fbb|$, the size of $C_\ldt$ is $\cktsize$ as claimed.
\end{proof}

To prove the statement about $\AugLDTest$, we need to analyze $\LDTest$ first, which guarantees a weaker degree bound.

\begin{theorem}\label{thm:derand_pldt_detail}
Assume $|\Fbb|\ge3d$, $|\Fbb|\ge2^{100}\cdot m\log|\Fbb|$, and $\lambda\le\frac1{2^{100}\cdot m\log|\Fbb|}$.
Let $S\subseteq\Fbb^m$ be a $\lambda$-biased set.
If
$$
\Pr\sbra{\LDTest_S^{f,g}\text{ accepts}}\ge1-\delta,
$$
then $f$ is $2^{30}\delta$-close to parallel degree-$md$.
\end{theorem}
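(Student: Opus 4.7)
The plan is to follow the blueprint of the derandomized low-degree test of \cite{ben2003randomness}, lifted to the parallel vector-valued setting of \cite{friedl1995some, LRSW23}. The strategy is to construct a self-corrected function $\tilde f \colon \Fbb^m \to \Fbb^t$ that is both $O(\delta)$-close to $f$ and provably of parallel degree at most $md$; the triangle inequality then yields the stated bound.

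First, I would define $\tilde f(x) := \arg\max_{v \in \Fbb^t} \Pr_{y \sim S}[g(\ell_{x,y})(x) = v]$, with the plurality taken over the entire vector in $\Fbb^t$ rather than coordinate-wise. This global plurality is critical: it forces the self-correction to be simultaneously consistent across all $t$ coordinates, which is what allows us to avoid any dependence on $t$ in the final proximity bound. Applying a Markov-type argument to the acceptance probability $\Pr_{x,y}[f(x) = g(\ell_{x,y})(x)] \ge 1 - \delta$, the set of $x$ for which fewer than half of $y \in S$ yield $f(x) = g(\ell_{x,y})(x)$ has density at most $2\delta$; outside this set $\tilde f(x) = f(x)$, so $\Delta(f, \tilde f) \le 2\delta$.

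Next, for each direction $y \in S$, I would show that $\tilde f$ is \emph{exactly} a univariate parallel degree-$d$ polynomial along every line of direction $y$. The idea is to transfer pointwise agreement of $f$ with the line polynomials $g(\ell_{x,y})$ into line-wise agreement with $\tilde f$, using the sampling property of $\lambda$-biased sets (\Cref{lem:derand_sampling}) in conjunction with the plurality definition. Two parallel line polynomials in $\Fbb^{(d,t)}$ that agree at more than $d$ points must be identical, so a typical line of direction $y$ will have $\tilde f|_{\ell}$ coincide with a unique parallel degree-$d$ polynomial; the biased-set property then propagates this to every line in direction $y$, using that neighboring lines in the same direction share enough common structure through the plurality correction. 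Finally, since $S$ is $\lambda$-biased with $\lambda \ll 1/m$, the set $S$ spans all of $\Fbb^m$ and in particular contains $m$ linearly independent directions. A function on $\Fbb^m$ that is univariate degree-$d$ along $m$ independent directions through every point is a polynomial of total degree at most $md$, so $\tilde f$ is parallel degree-$md$.

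The main obstacle is the second-to-last step, where one must upgrade \emph{approximate} direction-wise low-degreeness into \emph{exact} line polynomial agreement for $\tilde f$. Unlike the undersampled, fully-random setting, the directions are restricted to the small set $S$, so we cannot rely on averaging over truly random lines; instead, we must delicately combine the $\lambda$-biased sampling inequality with the plurality definition to certify that $\tilde f$'s restriction to each direction-$y$ line is a true parallel degree-$d$ polynomial rather than merely close to one. The parallel aspect complicates this further, since we need the certification to hold for all $t$ coordinates simultaneously; the vector-valued (rather than coordinate-wise) plurality is precisely what avoids a $t$-factor union bound at this point and keeps the final proximity independent of $t$.
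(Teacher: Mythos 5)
Your proposal correctly identifies the single-shot self-correction $\tilde f(x) := \arg\max_v \Pr_{y\sim S}[g(\ell_{x,y})(x)=v]$ and the Markov argument giving $\Delta(f,\tilde f)\le 2\delta$, and it correctly identifies what you call the "main obstacle": upgrading approximate direction-wise low-degreeness into \emph{exact} low-degreeness along every line of every direction in $S$. But you do not actually resolve that obstacle, and I believe it genuinely cannot be resolved in a single self-correction round, which is exactly why the paper's proof takes a different shape.

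The issue is that a single plurality correction does not force $\tilde f|_{\ell_{x,y}}$ to be a degree-$d$ polynomial for \emph{every} $x\in\Fbb^m$ and $y\in S$. The sampling inequality of \Cref{lem:derand_sampling} only gives statements that hold for a $(1-O(\gamma(1/|\Fbb|+\lambda)))$-fraction of lines, not all of them, and your sentence "the biased-set property then propagates this to every line in direction $y$" has no justification: there is nothing preventing $\tilde f$ from being wrong on a sparse but nonzero fraction of lines, since those failures can be spread around so that no single point sees a majority of bad lines. The paper handles this by an \emph{iterative} argument: \Cref{lem:derand_pldt} shows that one round of correction produces $f'$ with $\Delta(f',f)\le 4\delta$ and a strictly smaller rejection probability $2^{40}\gamma(1/|\Fbb|+\lambda)$ — at the cost of discarding a small fraction of directions (so the working set shrinks from $T$ to $T'\subsetneq T$). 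This is then applied about $2m\log|\Fbb|$ times with a geometrically decaying sequence $\delta_i$ until the rejection probability drops below the granularity threshold $|\Fbb|^{-2m}$, at which point it is forced to be exactly zero, and only then can one invoke the exact characterization \Cref{thm:exact_derand_ldp} (which, note, only requires $|T'|>\frac{1+\lambda}{2}|S|$, precisely because directions were lost along the way). Your constant $2\delta$ for the distance is symptomatic: the correct constant $2^{30}\delta$ arises from summing $4\delta_i$ over all the iterations, and getting only $2\delta$ should be a warning sign that the single-shot argument is too good to be true.

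There is also a missing technical ingredient. The heart of each correction round (item 3 of \Cref{lem:derand_pldt}) is not an averaging argument over $\lambda$-biased directions — it is a Polishchuk--Spielman-style \emph{bivariate} low-degree test (\Cref{lem:PS94_zeros} and \Cref{lem:PS94_EPQ}), applied to the two line families through a common point $x$ in directions $y_1,y_2\sim T'$. The $\lambda$-biased sampling lemma is used only to show that these two pencils of lines densely sample a set of "good" points (\Cref{clm:derand_bivariate}); the actual propagation from "agreement on many points" to "agreement of the two line polynomials at $x$" goes through the bivariate machinery. Without it, your appeal to "neighboring lines share enough common structure" is not a proof. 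If you want to salvage your outline, you would need to (i) replace "exact low-degree along every line" with a quantitative statement that the rejection probability of the corrected function strictly decreases, (ii) bring in the bivariate test to prove that decrease, and (iii) iterate — at which point you have reproduced the paper's argument.
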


Assuming \Cref{thm:derand_pldt_detail}, we conclude the proof of \Cref{thm:derand_augpldt_detail}.

\begin{proof}[Proof of \Cref{thm:derand_augpldt_detail}]
First we assume $\delta\le2^{-40}$ since otherwise $2^{40}\delta\ge1$ and the statement trivially holds.
Recall \Cref{def:aug_dpldt} that $\AugLDTest_S^{f,g}$ executes $\LDTest_S^{f,g}$ with probability $1/2$.
Hence $\LDTest_S^{f,g}$ must accept with probability at least $1-2\delta$.
By \Cref{thm:derand_pldt_detail}, this means that $f$ is $2^{31}\delta$-close to a parallel degree-$md$ polynomial $f'$. It suffices to show that $f'$ is acutally parallel degree-$d$.

Assume towards contradiction that $f'$ is not parallel degree-$d$.
Now we consider the second half of $\AugLDTest_S^{f,g}$, which checks if $f(z)=g(\ell_{0^m,z})(z)$ for $z\sim\Fbb^m$.
Then we have
\begin{align}
\Pr_{z\sim\Fbb^m}\sbra{f(z)=g(\ell_{0^m,z})(z)}
&\le\Pr_{z\sim\Fbb^m}\sbra{f(z)\ne f'(z)}+\Pr_{z\sim\Fbb^m}\sbra{f'(z)=g(\ell_{0^m,z})(z)}
\notag\\
&\le2^{31}\delta+\Pr_{z\sim\Fbb^m}\sbra{f'(z)=g(\ell_{0^m,z})(z)}.
\label{eq:thm:derand_augpldt_detail_1}
\end{align}
To analyze \Cref{eq:thm:derand_augpldt_detail_1}, we consider the following quantity:
\begin{equation}\label{eq:thm:derand_augpldt_detail_2}
\Pr_{w\sim\Fbb^m,i\sim\Fbb}\sbra{f'(i\cdot w)=g(\ell_{0^m,w})(i\cdot w)}.
\end{equation}

On the one hand, conditioned on $i\ne0$, we have $\ell_{0^m,w}=\ell_{0^m,i\cdot w}$ and $i\cdot w$ being uniform in $\Fbb^m$.
Therefore we can relate \Cref{eq:thm:derand_augpldt_detail_1} with \Cref{eq:thm:derand_augpldt_detail_2}:
\begin{equation}\label{eq:thm:derand_augpldt_detail_3}
\Cref{eq:thm:derand_augpldt_detail_2}
\ge\pbra{1-\frac1{|\Fbb|}}\cdot\Pr_{z\sim\Fbb^m}\sbra{f'(z)=g(\ell_{0^m,z})(z)}.
\end{equation}
On the other hand, conditioned on $w$, $f'(i\cdot w)$ is a univariate parallel degree-$md$ polynomial in $i$.
Let $d<d'\le md$ be the parallel degree of $f'$.
Then the coefficient of $i^{d'}$ in $f'(i\cdot w)$ is a non-zero parallel degree-$d'$ polynomial.
By Schwartz–Zippel lemma, it vanishes on at most $\frac{d'}{|\Fbb|}$ fraction of choices of $w$.
For each $w$ that the top coefficient of $i^{d'}$ does not vanish, by Schwartz–Zippel lemma, $f'(i\cdot w)$ agrees with $g(\ell_{0^m,w})(i\cdot w)$ on at most $d'$ choices of $i$ since $g(\ell_{0^m,w})$ is parallel degree-$d$ and $d<d'$.
This means
\begin{align*}
\Cref{eq:thm:derand_augpldt_detail_2}
&\le\Pr_{w\sim\Fbb^m}\sbra{\text{coeff of }i^{d'}\text{ in }f'(i\cdot w)\text{ vanishes}}
+\Pr_{w\sim\Fbb^m,i\sim\Fbb}\sbra{f'(i\cdot w)=g(\ell_{0^m,w})(i\cdot w)\mid\text{not vanish}}\\
&\le\frac{d'}{|\Fbb|}+\frac{d'}{|\Fbb|}\le\frac{2md}{|\Fbb|}.
\tag{since $d'\le md$}
\end{align*}
Combining this with \Cref{eq:thm:derand_augpldt_detail_3} and \Cref{eq:thm:derand_augpldt_detail_1}, we have
\begin{align*}
\Pr_{z\sim\Fbb^m}\sbra{f(z)=g(\ell_{0^m,z})(z)}
&\le2^{31}\delta+\frac{2md}{|\Fbb|-1}\le\frac12,
\end{align*}
where we used the fact that $\delta\le2^{-40}$ and $|\Fbb|\ge6md$.
Since this is half of the actual $\AugLDTest_S^{f,g}$, it means
$$
\Pr\sbra{\AugLDTest_S^{f,g}\text{ accepts}}\le\frac12+\frac12\cdot\frac12<1-\delta,
$$
which is a contradiction.

In conclusion, $f'$ must be parallel degree-$d$.
\end{proof}

Our \Cref{thm:derand_pldt_detail} is the parallel version of \cite[Theorem 4.1]{ben2003randomness}.
Its proof follows from iteratively applying the following lemma, which is the parallel version of \cite[Lemma 4.4]{ben2003randomness}.
\begin{lemma}\label{lem:derand_pldt}
Assume $|\Fbb|\ge3d$.
Let $S\subseteq\Fbb^m$ be a $\lambda$-biased set and $T\subseteq S$ of size $|T|\ge|S|/2$.
Let $f\colon\Fbb^m\to\Fbb^t$.
If
$$
\Pr\sbra{\LDTest_T^{f,f_\Lbb}\text{ accepts}}\ge1-\delta,
$$
then for any $2\delta\le\gamma\le2^{-20}$, there exists $f'\colon\Fbb^m\to\Fbb^t$ and $T'\subseteq T$ with the following properties:
\begin{enumerate}
\item\label{itm:lem:derand_pldt_1}
$|T'|\ge\pbra{1-\frac\delta\gamma}|T|\ge\frac{|T|}2$.
\item\label{itm:lem:derand_pldt_2}
$\Delta(f',f)\le4\delta$.
\item\label{itm:lem:derand_pldt_3}
$\Pr\sbra{\LDTest^{f',f_\Lbb}_{T'}\text{ accepts}}\ge1-2^{40}\cdot\gamma\cdot\pbra{\frac1{|\Fbb|}+\lambda}$.
\end{enumerate}
\end{lemma}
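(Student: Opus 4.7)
I would follow the blueprint of \cite[Lemma 4.4]{ben2003randomness}, adapted to the parallel-output setting via the parallel Schwartz--Zippel lemma. First, define a self-corrected function $f' \colon \Fbb^m \to \Fbb^t$ by plurality decoding: for each $x$, let $f'(x)$ be the value $v \in \Fbb^t$ maximizing $\Pr_{y \sim T}[f_\Lbb(\ell_{x,y})(x) = v]$, with ties broken arbitrarily. If $f'(x) \neq f(x)$, then some $v \neq f(x)$ receives at least as many votes as $f(x)$, which forces $\Pr_{y \sim T}[f_\Lbb(\ell_{x,y})(x) \neq f(x)] \ge 1/2$. Since $\LDTest_T^{f, f_\Lbb}$ rejects with probability at most $\delta$, a Markov bound gives that the density of such $x$ is at most $2\delta \le 4\delta$, yielding \Cref{itm:lem:derand_pldt_2}.

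Next, to construct $T'$, I would call a direction $y \in T$ \emph{good} if the line test along $y$ rejects $f$ with probability at most $\gamma$ over random $x$, i.e.\ $\Pr_{x \sim \Fbb^m}[f(x) \neq f_\Lbb(\ell_{x,y})(x)] \le \gamma$, and bad otherwise. Another Markov argument over $y \sim T$ shows that at most $\delta/\gamma$ fraction of $T$ is bad, so setting $T'$ to be the set of good directions gives \Cref{itm:lem:derand_pldt_1} directly from $2\delta \le \gamma$.

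The heart of the proof is \Cref{itm:lem:derand_pldt_3}: for $y \in T'$ and random $x \sim \Fbb^m$, the parallel line polynomial $f_\Lbb(\ell_{x,y})$ should agree with $f'$ at $x$ except with probability $2^{40} \gamma (1/|\Fbb| + \lambda)$. I would establish this via a triangle/consistency argument. For a random auxiliary point $z \in \ell_{x,y}$ and an auxiliary direction $y' \sim T$, using (i) goodness of $y$ (so that on $1 - \gamma$ fraction of $z \in \ell_{x,y}$ we have $f(z) = f_\Lbb(\ell_{x,y})(z)$), (ii) the sampling/expansion property of the $\lambda$-biased set $T$ via \Cref{lem:derand_sampling} (so that pairs $(z, y')$ with $z$ ranging over a random line behave close to uniform on $\Fbb^m \times T$, with error $1/|\Fbb| + \lambda$), and (iii) the plurality definition of $f'$ together with a Markov argument over $z$ (so that $f'(z) = f_\Lbb(\ell_{z,y'})(z)$ for most $(z,y')$), I would conclude that $f_\Lbb(\ell_{x,y})(z) = f'(z)$ on more than $d$ points of $\ell_{x,y}$. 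The parallel Schwartz--Zippel lemma (applied to each of the $t$ coordinates of the univariate degree-$d$ polynomial $f_\Lbb(\ell_{x,y})$) then forces $f_\Lbb(\ell_{x,y})$ to be uniquely determined by $f'$ restricted to the line, and in particular $f'(x) = f_\Lbb(\ell_{x,y})(x)$.

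\textbf{Main obstacle.} The delicate bookkeeping is in step three: the error must scale as $\gamma \cdot (1/|\Fbb| + \lambda)$ and must not pick up any factor depending on the parallel dimension $t$ or on the relative size $|T|/|S|$. The $|\Fbb|^{-1} + \lambda$ term enters only through \Cref{lem:derand_sampling} when comparing a random line's intersection with a fixed small ``bad'' set against the expected density, so the threshold defining bad directions ($\gamma$ rather than a constant) is what controls the final bound. For the dimension $t$, the point is that $f_\Lbb(\ell)$ is defined as the closest \emph{parallel} degree-$d$ polynomial on the line rather than coordinate-by-coordinate; combined with the parallel Schwartz--Zippel lemma, this lets us avoid any union bound over the $t$ coordinates, matching the dimension-independent bound in the non-derandomized parallel test of \cite[Lemma 7.7]{LRSW23}.
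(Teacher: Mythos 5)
Your treatment of Properties~1 and~2 is essentially correct and matches the paper: you define $T'$ via the $\gamma$-threshold and apply a Markov argument to get $|T'|\ge(1-\delta/\gamma)|T|$; you define $f'$ by plurality decoding (the paper takes the plurality over $T'$ rather than over $T$, which is why the paper pays a factor $|T|/|T'|\le 2$ and gets $4\delta$ instead of your $2\delta$, but either way the bound holds).

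Your argument for Property~3 has a genuine gap, and it is exactly at the step the paper spends the most effort on. You argue that for a good $y$ and random $z\in\ell_{x,y}$, the chain $f_\Lbb(\ell_{x,y})(z)=f(z)=f'(z)$ holds for more than $d$ values of $z$, and you then invoke Schwartz--Zippel to conclude that $f_\Lbb(\ell_{x,y})$ is ``uniquely determined by $f'$ restricted to the line, and in particular $f'(x)=f_\Lbb(\ell_{x,y})(x)$.'' This does not follow: Schwartz--Zippel lets you identify two degree-$d$ polynomials that agree on more than $d$ points, but $f'$ restricted to $\ell_{x,y}$ is \emph{not} known to be parallel degree-$d$, so agreement with $f_\Lbb(\ell_{x,y})$ at $d+1$ random points of the line tells you nothing about what happens at the specific point $x$. (Concretely, $f'|_{\ell_{x,y}}$ could coincide with some degree-$d$ polynomial $h$ everywhere except at $x$, and $f_\Lbb(\ell_{x,y})$ could equal $h$.) Passing from ``agree on $99\%$ of a line/grid'' to ``agree at a designated corner'' is precisely what requires the bivariate machinery. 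The paper first reduces Property~3 to bounding $\Pr_{x,\,y_1,y_2\sim T'}\bigl[f_\Lbb(\ell_{x,y_1})(x)\ne f_\Lbb(\ell_{x,y_2})(x)\bigr]$ via the $\sum_b(|U_b|/|T'|)^2$ inequality, then establishes (via \Cref{clm:derand_bivariate} and \Cref{lem:derand_sampling}) that the row polynomial $R(i,j)=f_\Lbb(\ell_{x+iy_1,y_2})(x+iy_1+jy_2)$ and column polynomial $C(i,j)$ agree on at least $99\%$ of the $\Fbb\times\Fbb$ grid, and finally invokes the Polishchuk--Spielman bivariate testing lemmas (\cite[Lemmas 3 and 8]{polishchuk1994nearly}) to multiply by a low-degree vanishing polynomial $E$, factor, and deduce $R(0,0)=C(0,0)$. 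Your sketch never introduces the bivariate grid or any analogue of the $E\cdot P$ divisibility argument, so the claimed conclusion at the point $x$ is unsupported. A secondary issue, smaller but related: ``goodness of $y$'' gives an average over all lines in direction $y$, not a pointwise bound for the specific line $\ell_{x,y}$; transferring it to a specific line also requires the sampling lemma, which is accounted for in the paper's Claim~\ref{clm:derand_bivariate} but not explicitly in your step (i).
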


Assuming \Cref{lem:derand_pldt}, we first conclude \Cref{thm:derand_pldt_detail}.
\begin{proof}[Proof of \Cref{thm:derand_pldt_detail}]
First we assume $\delta\le2^{-30}$ since otherwise $2^{30}\delta\ge1$ and the statement trivially holds.
Next, we can also assume without loss of generality that $g=f_\Lbb$.
This is because, for each possible\footnote{A line is possible in $\LDTest_S^{f,g}$ if its direction lies in $S$.} line $\ell\in\Lbb$, $\LDTest^{f,g}_S$ conditioned on this line checks a uniformly random point $x\in\ell$ whether $f(x)=g(\ell)(x)$. Since $g(\ell)$ is parallel degree-$d$, the success probability maximized when $g(\ell)=f_\Lbb(\ell)$.

We will repeatedly apply \Cref{lem:derand_pldt} to bring the soundness gap down to $\ll|\Fbb|^{-2m}$, at which point it is actually zero by granularity.
Then we use the following characterization of parallel low degree polynomials similar to \cite{rubinfeld1996robust} to arrive at an actual parallel degree-$md$ polynomial.

\begin{theorem}\label{thm:exact_derand_ldp}
Let $S\subseteq\Fbb^m$ be a $\lambda$-biased set and $T\subseteq S$ of size $|T|>\frac{1+\lambda}2\cdot|S|$.
Then $f\colon\Fbb^m\to\Fbb^t$ is parallel degree-$md$ if $f|_{\ell_{x,y}}$ is parallel degree-$d$ for every $x\in\Fbb^m,y\in T$.
\end{theorem}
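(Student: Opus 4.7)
The plan is to reduce to the single-coordinate case and then combine a spanning argument coming from the bias property of $T$ with a standard multivariate interpolation argument. Being parallel degree-$md$ is defined coordinatewise (each $f[i]\colon\Fbb^m\to\Fbb$ must be degree-$md$) and the hypothesis on line restrictions decomposes coordinatewise as well, so it suffices to prove the statement for a scalar-valued $f\colon\Fbb^m\to\Fbb$.

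The first step is to show that $T$ spans $\Fbb^m$ as an $\Fbb$-vector space. Suppose for contradiction that $T$ is contained in some proper $\Fbb$-subspace; then $T\subseteq\ker\phi$ for some non-zero $\Fbb$-linear functional $\phi\colon\Fbb^m\to\Fbb$. Composing $\phi$ with a non-trivial additive character $\psi$ of $\Fbb$ yields a non-trivial character $\chi=\psi\circ\phi$ of $\Fbb^m$ with $\chi(y)=1$ for every $y\in T$, so $\bigl|\sum_{y\in T}\chi(y)\bigr|=|T|$. On the other hand, the $\lambda$-bias of $S$ together with $T\subseteq S$ gives
\[
\Bigl|\sum_{y\in T}\chi(y)\Bigr|\ \le\ \Bigl|\sum_{y\in S}\chi(y)\Bigr|+|S\setminus T|\ \le\ \lambda|S|+|S|-|T|\ <\ \tfrac{1+\lambda}{2}|S|\ <\ |T|,
\]
a contradiction. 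Hence $T$ is not contained in any proper subspace, and by greedily selecting linearly independent vectors from $T$ we obtain a basis $y_1,\ldots,y_m\in T$ of $\Fbb^m$.

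Next I apply a linear change of coordinates: let $M$ be the invertible linear map with $Me_i=y_i$ and set $g=f\circ M$. Since $M$ sends $\ell_{x,e_i}$ to $\ell_{Mx,y_i}$ with the same parameterization, $g\big|_{\ell_{x,e_i}}(t)=f(Mx+ty_i)$ is a univariate degree-$d$ polynomial in $t$ for every $x\in\Fbb^m$ and $i\in[m]$; in other words, $g$ is degree-$d$ in each coordinate separately (with the others fixed). Picking a product grid $A_1\times\cdots\times A_m$ with $|A_i|=d+1$ (possible since $|\Fbb|\ge 3d$), Lagrange interpolation produces the unique polynomial $P$ of individual degree at most $d$ in each variable that agrees with $g$ on the grid. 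Extending one variable at a time (fix $a_2,\ldots,a_m$ in the grid; then $g(\cdot,a_2,\ldots,a_m)$ and $P(\cdot,a_2,\ldots,a_m)$ are both degree-$d$ in the first variable and agree on $A_1$ of size $d+1$, so they coincide on all of $\Fbb$; repeat for the remaining variables) yields $g=P$ on $\Fbb^m$. Thus $g$ has individual degree at most $d$ in each coordinate and hence total degree at most $md$, and $f=g\circ M^{-1}$ inherits the same total degree bound.

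The only genuinely new ingredient is the spanning step, which is a direct consequence of the threshold $|T|>\tfrac{1+\lambda}{2}|S|$ combined with the bias estimate; once a basis is located inside $T$, the remainder reduces to classical coordinate-wise polynomial interpolation, so no real obstacle is expected beyond carefully tracking the change of variables.
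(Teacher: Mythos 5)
Your proof is correct and follows essentially the same approach as the paper: reduce to the scalar case, use the bias property of $S$ together with $|T|>\frac{1+\lambda}{2}|S|$ to show $T$ spans $\Fbb^m$, change coordinates so that a basis from $T$ becomes axis-parallel, and conclude via multivariate interpolation. The only difference is cosmetic: you spell out the Lagrange interpolation step (extending coordinate by coordinate from a $(d+1)^m$ grid) that the paper delegates to \cite[Lemma~28]{rubinfeld1996robust}, and you phrase the bias contradiction with sums and the forward triangle inequality where the paper uses expectations and the reverse triangle inequality.
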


\Cref{lem:derand_pldt} will be proved in \Cref{sec:derand_char_ldp}. Now we focus on reducing the soundness gap.

\paragraph{The $\delta\ge\frac1{2^{60}\cdot m\log|\Fbb|}$ Case}
In this case, we first perform a pre-processing round to bring down the soundness gap.
By \Cref{lem:derand_pldt} with $T=S$ and $\gamma=2^{-20}$, we have $S_1\subseteq S$ and $f^{(1)}\colon\Fbb^m\to\Fbb^t$ with the following property:
\begin{enumerate}
\item\label{itm:thm:derand_pldt_detail_1}
$|S_1|\ge\pbra{1-2^{-10}}\cdot|S|$, since $\delta\le2^{-30}$.
\item\label{itm:thm:derand_pldt_detail_2}
$\Delta(f^{(1)},f)\le4\delta$.
\item\label{itm:thm:derand_pldt_detail_3}
$\Pr\sbra{\LDTest^{f^{(1)},f_\Lbb}_{S_1}\text{ accepts}}\ge1-\frac1{2^{60}\cdot m\log|\Fbb|}$, since $|\Fbb|\ge2^{100}\cdot m\log|\Fbb|$ and $\lambda\le\frac1{2^{100}\cdot m\log|\Fbb|}$.
\end{enumerate}
Now define 
$$
\delta_i=\frac{2^{-i}}{2^{50}\cdot m\log|\Fbb|}
\quad\text{and}\quad
\gamma_i=\delta_i\cdot2^{10}\cdot m\log|\Fbb|=\frac{2^{-i}}{2^{40}}.
$$
For each $i=1,\ldots,2\ceilbra{m\log|\Fbb|}$, we apply \Cref{lem:derand_pldt} on $\delta_i,S_i,f^{(i)}$ and obtain $\delta_{i+1},S_{i+1},f^{(i+1)}$.
To show the correctness of this process, we verify by induction on $i$ that the conditions in \Cref{lem:derand_pldt} are satisfied, i.e.,\footnote{In \Cref{lem:derand_pldt} we only require $|S_i|\ge|S|/2$. Here we strengthen it for convenience of \Cref{thm:exact_derand_ldp}.}
\begin{equation}\label{eq:thm:derand_pldt_detail_1}
|S_i|\ge0.9|S|
\quad\text{and}\quad
\Pr\sbra{\LDTest_{S_i}^{f^{(i)},f_\Lbb}}\ge1-\delta_i,
\end{equation}
where we omit $2\delta_i\le\gamma_i\le2^{-20}$ since it holds by the definition of $\delta_i,\gamma_i$.

The base case $i=1$ is valid by \Cref{itm:thm:derand_pldt_detail_1} and \Cref{itm:thm:derand_pldt_detail_3}.
For the inductive cases $i\ge2$, we first observe that the first condition in \Cref{eq:thm:derand_pldt_detail_1} follows from the following calculation:
\begin{align*}
|S_i|
&\ge\pbra{1-\frac{\delta_{i-1}}{\gamma_{i-1}}}|S_{i-1}|
=\pbra{1-\frac1{2^{10}\cdot m\log|\Fbb|}}|S_{i-1}|
\tag{by \Cref{lem:derand_pldt}}\\
&\ge\cdots\ge\pbra{1-\frac1{2^{10}\cdot m\log|\Fbb|}}^{i-1}|S_1|
\tag{by \Cref{lem:derand_pldt} iteratively}\\
&\ge\pbra{1-\frac1{2^{10}\cdot m\log|\Fbb|}}^{i-1}\cdot\pbra{1-2^{-10}}|S|
\tag{by \Cref{itm:thm:derand_pldt_detail_1}}\\
&\ge0.9|S|.
\tag{since $i\le2\ceilbra{m\log|\Fbb|}$}
\end{align*}
The second condition in \Cref{eq:thm:derand_pldt_detail_1} follows from last round of \Cref{lem:derand_pldt}, which establishes that
\begin{align*}
\Pr\sbra{\LDTest_{S_i}^{f^{(i)},f_\Lbb}\text{ accepts}}
&\ge1-2^{40}\cdot\gamma_{i-1}\cdot\pbra{\frac1{|\Fbb|}+\lambda}
=1-2^{-i+1}\cdot\pbra{\frac1{|\Fbb|}+\lambda}\\
&\ge1-2^{-i+1}\cdot2\cdot\frac1{2^{100}\cdot m\log|\Fbb|}
\ge1-\delta_i.
\end{align*}

Let $k=2\ceilbra{m\log|\Fbb|}$. Then the above analysis shows 
$$
|S_k|\ge0.9|S|>\frac{1+\lambda}2\cdot|S|
\quad\text{and}\quad
\Pr\sbra{\LDTest_{S_k}^{f^{(k)},f_\Lbb}\text{ accepts}}\ge1-\frac{2^{-k}}{2^{50}\cdot m\log|\Fbb|}>1-|\Fbb|^{-2m}.
$$
Since $\LDTest^{f^{(k)},f_\Lbb}_{S_k}$ samples $x\sim\Fbb^m$ and $y\sim S_k\subseteq\Fbb^m$ then perform a deterministic check whether $f^{(k)}(x)=f_\Lbb(\ell_{x,y})(x)$, its accepting probability is an integer multiple of $\frac1{|\Fbb|^m\cdot|S_k|}\ge|\Fbb|^{-2m}$.
Therefore, we actually have $\Pr\sbra{\LDTest_{S_k}^{f^{(k)},f_\Lbb}\text{ accepts}}=1$, which means that $f^{(k)}$ is parallel degree-$d$ on $\ell_{x,y}$ for all $x\in\Fbb^m,y\in S_k$. 
By \Cref{thm:exact_derand_ldp}, this means that $f^{(k)}$ is parallel degree-$md$.
In addition, its distance from the original $f$ is
\begin{align*}
\Delta(f^{(k)},f)
&\le\Delta(f^{(1)},f)+\sum_{i=1}^{k-1}\Delta(f^{(i+1)},f^{(i)})
\le4\delta+\sum_{i=1}^{k-1}\Delta(f^{(i+1)},f^{(i)})
\tag{by \Cref{itm:thm:derand_pldt_detail_2}}\\
&\le4\delta+\sum_{i=1}^{k-1}4\cdot\delta_i=4\delta+\sum_{i=1}^{k-1}\frac{4\cdot2^{-i}}{2^{50}\cdot m\log|\Fbb|}
\tag{by \Cref{lem:derand_pldt}}\\
&\le2^{30}\delta.
\tag{since $\delta\ge\frac1{2^{60}\cdot m\log|\Fbb|}$}
\end{align*}

\paragraph{The $\delta<\frac1{2^{60}\cdot m\log|\Fbb|}$ Case}
In this case, the analysis is even simpler as we do not need pre-processing.
Let $S_1=S$ and $f^{(1)}=f$.
Define 
$$
\delta_i=2^{-i+1}\cdot\delta
\quad\text{and}\quad
\gamma_i=\delta_i\cdot2^{10}\cdot m\log|\Fbb|=2^{-i+11}\cdot\delta\cdot m\log|\Fbb|.
$$
We also apply \Cref{lem:derand_pldt} for each $i=1,\ldots,2\ceilbra{m\log|\Fbb|}$ on $\delta_i,S_i,f^{(i)}$ to obtain $\delta_{i+1},S_{i+1},f^{(i+1)}$.
The conditions in \Cref{lem:derand_pldt} along this process can be verified in the similar fashion.
Here we only highlight the difference:
for the second condition, we have
\begin{align*}
\Pr\sbra{\LDTest_{S_i}^{f^{(i)},f_\Lbb}\text{ accepts}}
&\ge1-2^{40}\cdot\gamma_{i-1}\cdot\pbra{\frac1{|\Fbb|}+\lambda}
=1-2^{-i+51}\cdot\delta\cdot m\log|\Fbb|\cdot\pbra{\frac1{|\Fbb|}+\lambda}\\
&\ge1-2^{-i-48}\cdot\delta
\ge1-\delta_i,
\end{align*}
and for the third condition, we have
\begin{align*}
\gamma_i=2^{-i+11}\cdot\delta\cdot m\log|\Fbb|\le\frac{2^{-i+11}}{2^{60}}\le2^{-20}.
\end{align*}
Then similarly, we set $k=2\ceilbra{m\log|\Fbb|}$ and obtain $f^{(k)}$ as a parallel degree-$md$ polynomial. Moreover, we have
\begin{align*}
\Delta(f^{(k)},f)
&\le\sum_{i=1}^{k-1}\Delta(f^{(i+1)},f^{(i)})
\le\sum_{i=1}^{k-1}4\cdot\delta_i
\tag{since $f^{(1)}=f$ and by \Cref{lem:derand_pldt}}\\
&=\sum_{i=1}^{k-1}4\cdot2^{-i+1}\cdot\delta\le2^{30}\delta,
\end{align*}
which completes the proof.
\end{proof}

\subsection{One Round of Correction}\label{sec:one_round_correction_pldt}

This section is devoted to proving \Cref{lem:derand_pldt}, which follows the sketch outlined in \cite{ben2003randomness}.

\begin{lemma*}[\Cref{lem:derand_pldt} Restated]
Assume $|\Fbb|\ge3d$.
Let $S\subseteq\Fbb^m$ be a $\lambda$-biased set and $T\subseteq S$ of size $|T|\ge|S|/2$.
Let $f\colon\Fbb^m\to\Fbb^t$.
If
$$
\Pr\sbra{\LDTest_T^{f,f_\Lbb}\text{ accepts}}\ge1-\delta,
$$
then for any $2\delta\le\gamma\le2^{-20}$, there exists $f'\colon\Fbb^m\to\Fbb^t$ and $T'\subseteq T$ with the following properties:
\begin{enumerate}
\item
$|T'|\ge\pbra{1-\frac\delta\gamma}|T|\ge\frac{|T|}2$.
\item
$\Delta(f',f)\le4\delta$.
\item
$\Pr\sbra{\LDTest^{f',f_\Lbb}_{T'}\text{ accepts}}\ge1-2^{40}\cdot\gamma\cdot\pbra{\frac1{|\Fbb|}+\lambda}$.
\end{enumerate}
\end{lemma*}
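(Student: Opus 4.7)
\emph{Step 1 (Definitions).} For each $x \in \Fbb^m$, let $v_x \in \Fbb^t$ denote the plurality value of the multiset $\{f_\Lbb(\ell_{x,y})(x) : y \in T\}$ (breaking ties arbitrarily), and let $\alpha_x = \Pr_{y \sim T}[f_\Lbb(\ell_{x,y})(x) = v_x]$ be its frequency. Define the corrected function
$$
f'(x) = \begin{cases} v_x & \text{if } \alpha_x \geq 1-\gamma, \\ f(x) & \text{otherwise.} \end{cases}
$$
For each $y \in T$, set $\mathrm{bad}(y) = \Pr_{x \sim \Fbb^m}[f(x) \ne f_\Lbb(\ell_{x,y})(x)]$ and let $T' = \{y \in T : \mathrm{bad}(y) \le \gamma\}$. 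Since $\E_{y \sim T}[\mathrm{bad}(y)] \le \delta$ by hypothesis, Markov's inequality yields $|T'|/|T| \ge 1 - \delta/\gamma \ge 1/2$ (using $2\delta \le \gamma$), giving part~1. For part~2, note that $f'(x) \ne f(x)$ forces $\alpha_x \ge 1-\gamma$ with $v_x \ne f(x)$, so at least $1-\gamma$ fraction of $y \in T$ give $f_\Lbb(\ell_{x,y})(x) \ne f(x)$. Averaging over $x$,
$$
(1-\gamma) \cdot \Delta(f', f) \le \E_{y \sim T}[\mathrm{bad}(y)] \le \delta,
$$
so $\Delta(f', f) \le \delta/(1-\gamma) \le 2\delta \le 4\delta$.

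\emph{Step 2 (Amplified acceptance --- the main work).} The remaining goal is
$$
\Pr_{x \sim \Fbb^m,\, y' \sim T'}\!\bigl[f'(x) \ne f_\Lbb(\ell_{x,y'})(x)\bigr] \le 2^{40}\gamma\cdot(1/|\Fbb| + \lambda).
$$
A direct case split (whether $f'(x)=v_x$ or $f'(x)=f(x)$) yields only $O(\gamma)$, which is too weak: the factor $1/|\Fbb| + \lambda$ must be harvested from the pseudorandomness of $S$. Following the BSVW strategy, the plan is to introduce an auxiliary direction $z \sim T$ independent of $y'$ and look at the $2$-plane through $x$ spanned by $y'$ and $z$. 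With probability $\ge 1-2\gamma$ over $z \sim T$, we have $f_\Lbb(\ell_{x,z})(x)=v_x$ (by the plurality property and $|T'|/|T| \ge 1/2$). Suppose the new test rejects, i.e.\ $f_\Lbb(\ell_{x,y'})(x) \ne f'(x)$; comparing the two line polynomials restricted to this plane, a Schwartz--Zippel argument on the bivariate difference (of parallel degree at most $d$) forces the discrepancy at the single point $x$ to propagate into a large fraction of the plane, which in turn forces a large fraction of points on the line $\ell_{x,y'}$ to witness $\mathrm{bad}(y')$ above $\gamma$ --- contradicting $y' \in T'$ unless the plane itself is sampled atypically. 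Invoking the $\lambda$-biased sampling bound (\Cref{lem:derand_sampling}) applied to $S$ on the set of ``bad'' base points produces exactly the amplification factor $1/|\Fbb| + \lambda$. Summing the error terms across the various case distinctions and absorbing the constants into $2^{40}$ gives part~3.

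\emph{Main obstacle.} The technical heart is executing the plane-consistency argument cleanly so that every failure mode is charged either to a Schwartz--Zippel event (bounded by $O(d/|\Fbb|)$, which merges with the $1/|\Fbb|$ factor) or to a pseudorandom-sampling deviation governed by \Cref{lem:derand_sampling}. The parallel setting (codomain $\Fbb^t$) is handled by treating each evaluation $f(x), v_x, f_\Lbb(\ell)(x) \in \Fbb^t$ as a single atomic event, thereby avoiding any union bound over the $t$ coordinates --- this is precisely what the vector-plurality in the definition of $v_x$ buys us. Balancing the thresholds $\gamma$, $\delta/\gamma$, and $1/|\Fbb|+\lambda$ against the constraints $|\Fbb| \ge 3d$ and $|T|/|S| \ge 1/2$ is the main bookkeeping challenge, and the slack is absorbed into the generous $2^{40}$ constant.
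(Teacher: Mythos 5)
Your Step 1 is fine in substance (the set $T'$ and the Markov argument match the paper, and your distance bound for $f'$ works for your definition of $f'$), but the lemma's entire content is part 3, and there your proposal has two genuine problems. First, your definition of $f'$ keeps $f(x)$ whenever the plurality frequency $\alpha_x$ over $T$ is below $1-\gamma$. On that branch the rejection event is $f(x)\ne f_\Lbb(\ell_{x,y'})(x)$ with $y'\in T'$, and nothing in your outline can push its probability below $O(\delta)$ (each $y'\in T'$ only guarantees $\mathrm{bad}(y')\le\gamma$, and the measure of such $x$ is only bounded by $\delta/\gamma$), whereas the target is $2^{40}\gamma\,(1/|\Fbb|+\lambda)\ll\delta$. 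The paper avoids this by defining $f'(x)$ as the plurality of $f_\Lbb(\ell_{x,y})(x)$ over $y\sim T'$ at \emph{every} point, which makes the collision-probability inequality $\Pr_{y\sim T'}[f'(x)=f_\Lbb(\ell_{x,y})(x)]\ge\Pr_{y_1,y_2\sim T'}[f_\Lbb(\ell_{x,y_1})(x)=f_\Lbb(\ell_{x,y_2})(x)]$ available at all $x$, reducing part 3 to a statement about two random line polynomials agreeing at $x$ with no reference to $f'$.

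Second, the plane-consistency step you sketch is exactly the technical heart, and the one concrete mechanism you name for it does not work as stated: the row and column functions $R(i,j)=f_\Lbb(\ell_{x+i\cdot y_1,y_2})(x+i\cdot y_1+j\cdot y_2)$ and $C(i,j)=f_\Lbb(\ell_{x+j\cdot y_2,y_1})(x+i\cdot y_1+j\cdot y_2)$ are \emph{not} bivariate polynomials of parallel degree $d$; each is degree $d$ in one variable but may have degree up to $|\Fbb|-1$ in the other, so Schwartz--Zippel applied to their difference proves nothing. This is precisely why the paper routes the argument through the Polishchuk--Spielman bivariate machinery: \Cref{lem:PS94_zeros} to build an error-locator polynomial $E$ vanishing on the disagreement set of $R$ and $C$, and \Cref{lem:PS94_EPQ} to divide it out and produce a genuine degree-$(d,d)$ polynomial $Q$ agreeing with both on most rows and columns, after which the $1-d/|\Fbb|$ distance between distinct degree-$d$ polynomials pins down $R(0,0)=C(0,0)$. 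The $(1/|\Fbb|+\lambda)\cdot\gamma$ factor is harvested beforehand, in \Cref{clm:derand_bivariate}, by applying the $\lambda$-biased sampling bound \Cref{lem:derand_sampling} to four separate events (agreement of $f$ with the line polynomials along $\ell_{x,y_1}$, along $\ell_{x,y_2}$, and on the rows and columns of the plane), with bad-set densities $\le\gamma$ and $\le 2^{10}\gamma$ obtained by Markov. Your sketch correctly identifies that \Cref{lem:derand_sampling} must supply the amplification, but it defers the bivariate argument entirely (you call it the "main obstacle"), so as written the proof of part 3 is missing rather than merely different.
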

\begin{proof}
Let $T'\subseteq T$ be the set of directions $y\in T$ such that for at least $1-\gamma$ fraction of $x\in\Fbb^m$, $f$ agrees with the parallel line polynomial $\ell_{x,y}$ on $x$. That is,
\begin{equation}\label{eq:lem:derand_pldt_1}
T'=\cbra{y\in T\colon|\cbra{x\in\Fbb^m\colon f(x)=f_\Lbb(\ell_{x,y})(x)}|\ge(1-\gamma)\cdot|\Fbb|^m}.
\end{equation}
Then
\begin{align*}
1-\delta
&\le\Pr\sbra{\LDTest_T^{f,f_\Lbb}\text{ accepts}}
\tag{by assumption}\\
&=\Pr_{x\sim\Fbb^m,y\sim T}\sbra{f(x)=f_\Lbb(\ell_{x,y})(x)}
\tag{by \Cref{def:derand_pldt}}\\
&=\frac{|T'|}{|T|}\!\Pr_{x\sim\Fbb^m,y\sim T}\sbra{f(x)=f_\Lbb(\ell_{x,y})(x)\mid y\in T'}+\frac{|T\setminus T'|}{|T|}\!\Pr_{x\sim\Fbb^m,y\sim T}\sbra{f(x)=f_\Lbb(\ell_{x,y})(x)\mid y\notin T'}\\
&\le\frac{|T'|}{|T|}+\pbra{1-\frac{|T'|}{|T|}}\cdot\pbra{1-\gamma},
\tag{by \Cref{eq:lem:derand_pldt_1}}
\end{align*}
which implies \Cref{itm:lem:derand_pldt_1} by rearranging.

To construct $f'\colon\Fbb^m\to\Fbb^t$, for each $x\in\Fbb^m$ we define $f'(x)\in\Fbb^t$ to be the most common value of $f_\Lbb(\ell_{x,y})(x)$ over $y\in T'$ where we break tie arbitrarily.
Now we verify \Cref{itm:lem:derand_pldt_2}.
Let
$$
B=\cbra{x\in\Fbb^m\colon\Pr_{y\sim T'}\sbra{f(x)\neq f_\Lbb(\ell_{x,y})(x)}\ge\frac12}.
$$
By the definition of $f'$, we know that $f'(x)=f(x)$ holds for any $x\notin B$.
Hence \Cref{itm:lem:derand_pldt_2} reduces to showing $\Pr_{x\sim\Fbb^m}[x\in B]\le4\delta$ as follows:
\begin{align*}
\Pr_{x\sim\Fbb^m}[x\in B]
&=\Pr_{x\sim\Fbb^m}\sbra{\Pr_{y\sim T'}\sbra{f(x)\neq f_\Lbb(\ell_{x,y})(x)}\ge\frac12}\\
&\le2\cdot\Pr_{x\sim\Fbb^m,y\sim T'}\sbra{f(x)\neq f_\Lbb(\ell_{x,y})(x)}
\tag{by Markov's inequality}\\
&\le2\cdot\frac{|T|}{|T'|}\cdot\Pr_{x\sim\Fbb^m,y\sim T}\sbra{f(x)\neq f_\Lbb(\ell_{x,y})(x)}
\tag{since $T'\subseteq T$}\\
&=2\cdot\frac{|T|}{|T'|}\cdot\Pr\sbra{\LDTest_T^{f,f_\Lbb}\text{ rejects}}
\le2\cdot\frac{|T|}{|T'|}\cdot\delta
\tag{by assumption}\\
&\le4\delta.
\tag{by \Cref{itm:lem:derand_pldt_1}}
\end{align*}

To prove \Cref{itm:lem:derand_pldt_3}, we first get rid of $f'$.
For a fixed $x\in\Fbb^m$ and each $b\in\Fbb^t$, let $U_b\subseteq T'$ be the set of directions $y$ such that $f_\Lbb(\ell_{x,y})(x)=b$.
Denote $b^*=f'(x)$, which is defined to be the most common value of $f_\Lbb(\ell_{x,y})(x)$ over $y\in T'$. Thus $|U_{b^*}|\ge|U_b|$ for any $b\in\Fbb^t$.
As a result, we have
\begin{align*}
\Pr_{y\sim T'}\sbra{f'(x)=f_\Lbb(\ell_{x,y})(x)}
&=\frac{|U_{b^*}|}{|T'|}=\frac{|U_{b^*}|}{|T'|}\cdot\sum_b\frac{|U_b|}{|T'|}
\tag{since $U_b$'s form a partition of $T'$}\\
&\ge\sum_b\pbra{\frac{|U_b|}{|T'|}}^2
\tag{since $|U_{b^*}|\ge|U_b|$ for all $b$}\\
&=\Pr_{y_1,y_2\sim T'}\sbra{f_\Lbb(\ell_{x,y_1})(x)=f_\Lbb(\ell_{x,y_2})(x)}.
\end{align*}
Now taking the negation and the expectation over random $x$, we have
\begin{align}
\Pr\sbra{\LDTest^{f',f_\Lbb}_{T'}\text{ rejects}}
&=\Pr_{x\sim\Fbb^m,y\sim T'}\sbra{f'(x)\neq f_\Lbb(\ell_{x,y})(x)}
\notag\\
&\le\Pr_{x\sim\Fbb^m,y_1,y_2\sim T'}\sbra{f_\Lbb(\ell_{x,y_1})(x)\neq f_\Lbb(\ell_{x,y_2})(x)}.
\label{eq:lem:derand_pldt_2}
\end{align}
To upper bound \Cref{eq:lem:derand_pldt_2}, we will use bivariate testing theorems \cite{polishchuk1994nearly} similar to \cite{LRSW23,friedl1995some}.
The idea is, as sketched in \cite{ben2003randomness}, to use \Cref{lem:derand_sampling} to show the following claim.

\begin{claim}\label{clm:derand_bivariate}
With probability at least
$$
1-2^{40}\cdot\gamma\cdot\pbra{\frac1{|\Fbb|}+\lambda}
$$
over $x,y_1,y_2$, we have
\begin{enumerate}[label=(\alph*)]
\item\label{itm:bivariate_1}
$\Pr_{i\sim\Fbb}\sbra{f_\Lbb(\ell_{x+i\cdot y_1,y_2})(x+i\cdot y_1)\ne f(x+i\cdot y_1)}\le\frac1{200}$.
\item\label{itm:bivariate_2}
$\Pr_{j\sim\Fbb}\sbra{f_\Lbb(\ell_{x+j\cdot y_2,y_1})(x+j\cdot y_2)\ne f(x+j\cdot y_2)}\le\frac1{200}$.
\item\label{itm:bivariate_3}
$\Pr_{i,j\sim\Fbb}\sbra{f_\Lbb(\ell_{x+i\cdot y_1,y_2})(x+i\cdot y_1+j\cdot y_2)\ne f(x+i\cdot y_1+j\cdot y_2)}\le\frac1{200}$.
\item\label{itm:bivariate_4}
$\Pr_{i,j\sim\Fbb}\sbra{f_\Lbb(\ell_{x+j\cdot y_2,y_1})(x+i\cdot y_1+j\cdot y_2)\ne f(x+i\cdot y_1+j\cdot y_2)}\le\frac1{200}$.
\end{enumerate}
\end{claim}

\Cref{clm:derand_bivariate} intuitively says that $f$ is almost parallel degree-$d$ along the line $\ell_{x,y_1}$ (\Cref{itm:bivariate_1}), the line $\ell_{x,y_2}$ (\Cref{itm:bivariate_2}), and the plane spanned by $\ell_{x,y_1},\ell_{x,y_2}$ (\Cref{itm:bivariate_3,itm:bivariate_4}).
For simplicity, define
\begin{equation}\label{eq:lem:derand_pldt_3}
R(i,j)=f_\Lbb(\ell_{x+i\cdot y_1,y_2})(x+i\cdot y_1+j\cdot y_2)
\quad\text{and}\quad
C(i,j)=f_\Lbb(\ell_{x+j\cdot y_2,y_1})(x+i\cdot y_1+j\cdot y_2).
\end{equation}
Given \Cref{clm:derand_bivariate} and \Cref{eq:lem:derand_pldt_2}, it suffices to show 
\begin{equation}\label{eq:lem:derand_pldt_5}
R(0,0)=C(0,0)
\quad\text{assuming \Cref{itm:bivariate_1,itm:bivariate_2,itm:bivariate_3,itm:bivariate_4}.}
\end{equation}

Observe that $R\colon\Fbb\times\Fbb\to\Fbb^t$ is a parallel bivariate polynomial. Recall that $f_\Lbb$ is the restriction of $f$ on each line of its closest parallel degree-$d$ polynomial. Thus for each fixed $i\in\Fbb$, $R(i,j)$ is parallel degree-$d$ in the variable $j$.
Let $R_1,\ldots,R_t\colon\Fbb\times\Fbb\to\Fbb$ be the single-output components of $R$.
Then for each $r\in[t]$ and in $R_r(i,j)$, the variable $i$ has degree at most $|\Fbb|-1$ and the variable $j$ has degree at most $d$. We say $R$ is of parallel degree $(|\Fbb|-1,d)$ for shorthand.
Similarly, $C(i,j)$ is of parallel degree $(d,|\Fbb|-1)$.

Combining \Cref{itm:bivariate_3,itm:bivariate_4}, we have
\begin{equation}\label{eq:lem:derand_pldt_4}
\Pr_{i,j\sim\Fbb}\sbra{R(i,j)\neq C(i,j)}\le\frac1{100}.
\end{equation}
We will use the following lemma to zero out the inconsistent entries of $R(i,j)$ and $C(i,j)$.

\begin{lemma}[{\cite[Lemma 3]{polishchuk1994nearly}}]\label{lem:PS94_zeros}
Let $Z\subseteq\Fbb\times\Fbb$ be arbitrary. There exists a non-zero bivariate polynomial $E\colon\Fbb\times\Fbb\to\Fbb$ of degree $\pbra{\floorbra{\sqrt{|Z|}},\floorbra{\sqrt{|Z|}}}$ such that $E(i,j)=0$ for all $(i,j)\in Z$.
\end{lemma}

By setting $Z$ to be the set of $(i,j)$ with $R(i,j)\neq C(i,j)$, we have $|Z|\le|\Fbb|^2/100$ by \Cref{eq:lem:derand_pldt_4}.
Thus by \Cref{lem:PS94_zeros}, there is a non-zero bivariate polynomial $E$ of degree at most $\pbra{\frac{|\Fbb|}{10},\frac{|\Fbb|}{10}}$ such that $E(i,j)R(i,j)=E(i,j)C(i,j)$ holds for all $i,j\in\Fbb$. Note that $R$ and $C$ are polynomials with $t$ outputs and $E$ has single output. The product $E(i,j)R(i,j)$ produces a vector of length $t$ with entries of $R(i,j)$ scaled by $E(i,j)$; same for $E(i,j)C(i,j)$.

Then we use the following lemma to show that $R$ and $C$ are close to a parallel bivariate polynomial of parallel degree $(d,d)$.

\begin{lemma}[{\cite[Lemma 8]{polishchuk1994nearly}}]\label{lem:PS94_EPQ}
Let $E,P\colon\Fbb\times\Fbb\to\Fbb$ be bivariate polynomials of degree $(b,a)$ and $(b+d,a+d)$ respectively.
Assume $n>\min\cbra{2b+2d,2a+2d}$.
Assume further there exist distinct $i_1,\ldots,i_n$ such that $E(i_k,\cdot)$ divides $P(i_k,\cdot)$ for all $k\in[n]$, and distinct $j_1,\ldots,j_n$ such that $E(\cdot,j_k)$ divides $P(\cdot,j_k)$ for all $k\in[n]$.
Then $E(\cdot,\cdot)$ divides $P(\cdot,\cdot)$, i.e., there exists a bivariate polynomial $Q\colon\Fbb\times\Fbb\to\Fbb$ of degree $(d,d)$ such that $E(i,j)Q(i,j)=P(i,j)$ holds for all $i,j\in\Fbb$.
\end{lemma}

Fix an arbitrary $r\in[t]$.
For each $i,j\in\Fbb$, define $P_r(i,j)=E(i,j)R_r(i,j)$, which also equals $E(i,j)C_r(i,j)$ since $E(i,j)R(i,j)=E(i,j)C(i,j)$.
Since $E$ is non-zero and has degree $\pbra{\frac{|\Fbb|}{10},\frac{|\Fbb|}{10}}$, there are at least $\frac{9|\Fbb|}{10}$ many distinct $i\in\Fbb$ such that $E(i,\cdot)$ is an univariate polynomial not identically zero, for which $E(i,\cdot)$ divides $P_r(i,\cdot)$; same for $E(\cdot,j)$ and $P_r(\cdot,j)$.
By \Cref{lem:PS94_EPQ} with $a=b=\frac{|\Fbb|}{10}$ and $n=\frac{9|\Fbb|}{10}$, there exists a bivariate polynomial $Q_r\colon\Fbb\times\Fbb\to\Fbb$ with parallel degree $(d,d)$ such that $E(i,j)Q_r(i,j)=P_r(i,j)=E(i,j)R_r(i,j)=E(i,j)C_r(i,j)$ holds for all $i,j\in\Fbb$, where we used the fact that $|\Fbb|\ge3d$ implies $|\Fbb|>\frac{10d}7$.\footnote{If $d\ge1$, then $3d>\frac{10d}7$; otherwise $\frac{10d}7=0<1\le|\Fbb|$.}

Let $Q\colon\Fbb\times\Fbb\to\Fbb^t$ be the parallel-output bivariate polynomial with single-output components $Q_1,\ldots,Q_r$ obtained above. Then $Q(i,\cdot)=R(i,\cdot)$ holds as long as $E(i,\cdot)$ is not identically zero, which has at least $\frac{9|\Fbb|}{10}$ possibilities out of $|\Fbb|$ total choices off $i$.
Recall \Cref{itm:bivariate_1}. For at least $\frac9{10}-\frac1{200}>\frac45$ fraction of $i$'s, we have $f(x+i\cdot y_1)=R(i,0)=Q(i,0)$.
Note that the distance between any two distinct (parallel) degree-$d$ polynomial is at least $1-\frac d{|\Fbb|}$, which is at least $\frac23<2\cdot\frac15$ since $|\Fbb|\ge3d$.
In addition, $Q(\cdot,0)$ is parallel degree-$d$.
Thus $Q(\cdot,0)$ is the closest parallel degree-$d$ polynomial of $f|_{\ell_{x,y_1}}$, i.e., $Q(i,0)=f_\Lbb(\ell_{x,y_1})(x+i\cdot y_1)=C(i,0)$ holds for all $i\in\Fbb$ where we recall \Cref{eq:lem:derand_pldt_3}.
In particular, this means $Q(0,0)=C(0,0)$.

Similarly, using \Cref{itm:bivariate_2}, we can show $Q(0,0)=R(0,0)$, which verifies \Cref{eq:lem:derand_pldt_5} and thus \Cref{itm:lem:derand_pldt_3}, and completes the proof of \Cref{lem:derand_pldt}.
\end{proof}

Finally it remains to prove \Cref{clm:derand_bivariate}.
\begin{proof}[Proof of \Cref{clm:derand_bivariate}]
We show that each item holds with probability at least $1-2^{30}\cdot\gamma\cdot\pbra{\frac1{|\Fbb|}+\lambda}$, and then \Cref{clm:derand_bivariate} follows from a union bound.

We first consider \Cref{itm:bivariate_1} and the analysis for \Cref{itm:bivariate_2} is almost identical.
For each $y\in S$, define
$$
B_y=\cbra{x\in\Fbb^m\colon f_\Lbb(\ell_{x,y})(x)\neq f(x)}.
$$
Then for any fixed $y_2\in T'$, we have $|B_{y_2}|\le\gamma\cdot|\Fbb|^m$ by \Cref{eq:lem:derand_pldt_1}, and, moreover,
\begin{align*}
&\Pr_{x\sim\Fbb^m,y_1\sim T'}\sbra{\text{\Cref{itm:bivariate_1} does not hold}}\\
=&\Pr_{x\sim\Fbb^m,y_1\sim T'}\sbra{\Pr_{i\sim\Fbb}\sbra{f_\Lbb(\ell_{x+i\cdot y_1,y_2})(x+i\cdot y_1)\ne f(x+i\cdot y_1)}>\frac1{200}}\\
=&\Pr_{x\sim\Fbb^m,y_1\sim T'}\sbra{\Pr_{i\sim\Fbb}\sbra{x+i\cdot y_1\in B_{y_2}}>\frac1{200}}
=\Pr_{x\sim\Fbb^m,y_1\sim T'}\sbra{\frac{|\ell_{x,y_1}\cap B_{y_2}|}{|\ell_{x,y_1}|}>\frac1{200}}\\
\le&\Pr_{x\sim\Fbb^m,y_1\sim T'}\sbra{\frac{|\ell_{x,y_1}\cap B_{y_2}|}{|\ell_{x,y_1}|}>\frac{|B_{y_2}|}{|\Fbb|^m}+\frac1{400}}
\tag{since $\frac{|B_{y_2}|}{|\Fbb|^m}\le\gamma$ and $\gamma\le2^{-20}$ by assumption}\\
\le&\frac{|S|}{|T'|}\cdot\Pr_{x\sim\Fbb^m,y_1\sim S}\sbra{\frac{|\ell_{x,y_1}\cap B_{y_2}|}{|\ell_{x,y_1}|}>\frac{|B_{y_2}|}{|\Fbb|^m}+\frac1{400}}
\tag{since $T'\subseteq T\subseteq S$}\\
\le&4\Pr_{x\sim\Fbb^m,y_1\sim S}\sbra{\frac{|\ell_{x,y_1}\cap B_{y_2}|}{|\ell_{x,y_1}|}>\frac{|B_{y_2}|}{|\Fbb|^m}+\frac1{400}}
\tag{by assumption and \Cref{itm:lem:derand_pldt_1}}\\
\le&2^{20}\cdot\pbra{\frac1{|\Fbb|}+\lambda}\cdot\frac{|B_{y_2}|}{|\Fbb|^m}
\tag{by \Cref{lem:derand_sampling}}\\
\le&2^{20}\cdot\gamma\cdot\pbra{\frac1{|\Fbb|}+\lambda}.
\tag{since $\frac{|B_{y_2}|}{|\Fbb|^m}\le\gamma$}
\end{align*}
By taking the expectation over $y_2$, \Cref{itm:bivariate_1} holds with the desired probability.

Now we consider \Cref{itm:bivariate_3} and the analysis for \Cref{itm:bivariate_4} is almost identical.
For each $y\in S$, define
\begin{equation}\label{eq:clm:derand_bivariate_3}
\bar B_y=\cbra{x\in\Fbb^m\colon\Pr_{j\sim\Fbb}\sbra{f_\Lbb(\ell_{x,y})(x+j\cdot y)\neq f(x+j\cdot y)}>2^{-10}}.
\end{equation}
For each $y_2\in S$, we have
\begin{align}
\frac{|\bar B_{y_2}|}{|\Fbb|^m}
&=\Pr_{x\sim\Fbb^m}\sbra{\Pr_{j\sim\Fbb}\sbra{f_\Lbb(\ell_{x,y_2})(x+j\cdot y_2)\neq f(x+j\cdot y_2)}>2^{-10}}
\notag\\
&\le2^{10}\cdot\Pr_{x\sim\Fbb^m,j\sim\Fbb}\sbra{f_\Lbb(\ell_{x,y_2})(x+j\cdot y_2)\neq f(x+j\cdot y_2)}
\tag{by Markov's inequality}\\
&=2^{10}\cdot\Pr_{x\sim\Fbb^m,j\sim\Fbb}\sbra{f_\Lbb(\ell_{x+j\cdot y_2,y_2})(x+j\cdot y_2)\neq f(x+j\cdot y_2)}
\tag{since $\ell_{x,y_2}=\ell_{x+j\cdot y_2,y_2}$}\\
&=2^{10}\cdot\Pr_{z\sim\Fbb^m}\sbra{f_\Lbb(\ell_{z,y_2})(z)\neq f(z)}=2^{10}\cdot\frac{|B_{y_2}|}{|\Fbb|^m}
\notag\\
&\le2^{10}\cdot\gamma,
\label{eq:clm:derand_bivariate_1}
\end{align}
and
\begin{align}
&\Pr_{x\sim\Fbb^m,y_1\sim T'}\sbra{\text{\Cref{itm:bivariate_3} does not hold}}
\notag\\
=&\Pr_{x\sim\Fbb^m,y_1\sim T'}\sbra{\Pr_{i,j\sim\Fbb}\sbra{f_\Lbb(\ell_{x+i\cdot y_1,y_2})(x+i\cdot y_1+j\cdot y_2)\ne f(x+i\cdot y_1+j\cdot y_2)}>\frac1{200}}
\notag\\
\le&\Pr_{x\sim\Fbb^m,y_1\sim T'}\sbra{\Pr_{i\sim\Fbb}\sbra{x+i\cdot y_1\in\bar B_{y_2}}>\frac1{300}},
\label{eq:clm:derand_bivariate_2}
\end{align}
where we use the following reasoning for the last inequality: if the event inside bracket does not happen, then
\begin{align*}
&\Pr_{i,j\sim\Fbb}\sbra{f_\Lbb(\ell_{x+i\cdot y_1,y_2})(x+i\cdot y_1+j\cdot y_2)\ne f(x+i\cdot y_1+j\cdot y_2)}\\
\le&\frac1{300}+\Pr_{i,j\sim\Fbb}\sbra{f_\Lbb(\ell_{x+i\cdot y_1,y_2})(x+i\cdot y_1+j\cdot y_2)\ne f(x+i\cdot y_1+j\cdot y_2)\mid x+i\cdot y_1\notin\bar B_{y_2}}\\
\le&\frac1{300}+2^{-10}<\frac1{200}.
\tag{by \Cref{eq:clm:derand_bivariate_3}}
\end{align*}
Then similar to the analysis for \Cref{itm:bivariate_1}, we continue upper bounding \Cref{eq:clm:derand_bivariate_2} as follows:
\begin{align*}
\text{RHS of \Cref{eq:clm:derand_bivariate_2}}
&=\Pr_{x\sim\Fbb^m,y_1\sim T'}\sbra{\frac{|\ell_{x,y_1}\cap\bar B_{y_2}|}{|\ell_{x,y_1}|}>\frac1{300}}\\
&\le\Pr_{x\sim\Fbb^m,y_1\sim T'}\sbra{\frac{|\ell_{x,y_1}\cap\bar B_{y_2}|}{|\ell_{x,y_1}|}>\frac{|\bar B_{y_2}|}{|\Fbb|^m}+\frac1{400}}
\tag{by \Cref{eq:clm:derand_bivariate_1} and $\gamma\le2^{-20}$}\\
&\le4\Pr_{x\sim\Fbb^m,y_1\sim S}\sbra{\frac{|\ell_{x,y_1}\cap\bar B_{y_2}|}{|\ell_{x,y_1}|}>\frac{|\bar B_{y_2}|}{|\Fbb|^m}+\frac1{400}}
\tag{since $T'\subseteq S$ and $|T'|\ge|S|/4$}\\
&\le2^{20}\cdot\pbra{\frac1{|\Fbb|}+\lambda}\cdot\frac{|\bar B_{y_2}|}{|\Fbb|^m}\le2^{30}\cdot\gamma\cdot\pbra{\frac1{|\Fbb|}+\lambda}.
\tag{by \Cref{lem:derand_sampling} and \Cref{eq:clm:derand_bivariate_1}}
\end{align*}
Therefore \Cref{itm:bivariate_3} holds with the desired probabilit by taking the expectation over $y_1$.
\end{proof}

\subsection{Derandomized Characterizations of Parallel Low Degree Polynomials}\label{sec:derand_char_ldp}

In this part, we extend the characterization of low degree polynomials in \cite{rubinfeld1996robust} to the derandomized setting, where we only consider lines with directions generated by a (large subset of)  small-biased set.

We use superscript $\top$ to denote vector and matrix transpose. For two vectors $u,v\in\Fbb^d$, we use $\abra{u,v}$ to denote their inner product which equals $u^\top v$ (or $v^\top u$).

\begin{theorem*}[\Cref{thm:exact_derand_ldp} Restated]
Let $S\subseteq\Fbb^m$ be a $\lambda$-biased set and $T\subseteq S$ of size $|T|>\frac{1+\lambda}2\cdot|S|$.
Then $f\colon\Fbb^m\to\Fbb^t$ is parallel degree-$md$ if $f|_{\ell_{x,y}}$ is parallel degree-$d$ for every $x\in\Fbb^m,y\in T$.
\end{theorem*}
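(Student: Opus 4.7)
The plan is to reduce to the single-output case and then build an $\Fbb$-basis of $\Fbb^m$ from $T$ via a Fourier-analytic argument.

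First, I reduce to $t=1$: being parallel degree-$d$ (resp.\ parallel degree-$md$) is simply the coordinate-wise conjunction of the single-output version, so it suffices to prove the statement for $f\colon\Fbb^m\to\Fbb$.

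The heart of the proof is to show that $T$ is not contained in any $\Fbb$-hyperplane $W\subsetneq\Fbb^m$. Suppose for contradiction that $T\subseteq W$. Since $W$ has $\Fbb$-codimension one (and hence $\Fbb_p$-codimension at least one), there exists a nontrivial additive character $\chi\colon\Fbb^m\to\mu_p$ that vanishes on $W$; thus $\chi(y)=1$ for every $y\in T$ and $\sum_{y\in T}\chi(y)=|T|$. Combining the $\lambda$-biased bound $|\sum_{y\in S}\chi(y)|\le\lambda|S|$ with the trivial bound $|\sum_{y\in S\setminus T}\chi(y)|\le|S|-|T|$ via the triangle inequality gives
\[
|T|=\left|\sum_{y\in S}\chi(y)-\sum_{y\in S\setminus T}\chi(y)\right|\le\lambda|S|+|S|-|T|,
\]
so $|T|\le\frac{1+\lambda}{2}|S|$, contradicting the hypothesis on $|T|$. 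Hence $T$ $\Fbb$-spans $\Fbb^m$ and contains an $\Fbb$-basis $y_1,\ldots,y_m$.

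Let $A\in\Fbb^{m\times m}$ be the invertible matrix with columns $y_1,\ldots,y_m$ and define $g\colon\Fbb^m\to\Fbb$ by $g(a)=f(Aa)$. For each coordinate $i$, fixing the remaining coordinates to any $a^{*}$ and varying $a_i$ traces out a line $\ell_{Aa^{*},y_i}$ with $y_i\in T$; by hypothesis $f$ restricted to this line is a univariate polynomial of degree at most $d$, so $g$ has degree at most $d$ in each variable separately. A routine induction on $m$ via univariate Lagrange interpolation (valid because $|\Fbb|\ge 6md>d$) then shows such a $g$ must be a polynomial of total degree at most $md$: write $g(a_1,\ldots,a_m)=\sum_{k=0}^{d}h_k(a_2,\ldots,a_m)a_1^{k}$ by interpolation in $a_1$; each $h_k$ is a fixed linear combination of $d+1$ slices of $g$ and hence inherits the degree-$d$-in-each-variable property, so it has total degree at most $(m-1)d$ by the inductive hypothesis. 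Pulling back through $f(x)=g(A^{-1}x)$, which is an $\Fbb$-linear substitution, gives that $f$ is a polynomial of degree at most $md$. The conceptually nontrivial step is the hyperplane-avoidance claim for $T$; everything else amounts to bookkeeping once a basis inside $T$ is secured.
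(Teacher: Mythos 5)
Your proof is correct and follows essentially the same route as the paper's: reduce to $t=1$, use the additive-character/$\lambda$-bias argument to show $T$ contains an $\Fbb$-basis of $\Fbb^m$, change coordinates so those basis directions become axis-parallel, and conclude by the degree-$d$-in-each-variable $\Rightarrow$ total-degree-$md$ interpolation fact (which the paper cites to \cite{rubinfeld1996robust} and you reprove inline by induction).
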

\begin{proof}
Assume without loss of generality that $t=1$, since we can apply the analysis individually for each single-output component $f[1],\ldots,f[t]$.

We first prove that $T$ has the full rank $m$.
Assume towards contradiction that $T$ has rank at most $m-1$.
Then there exists a non-zero vector $z\in\Fbb^m$ such that $z^\top y=0$ holds for all $y\in T$.
In light of \Cref{def:eps_biased_set}, we construct a non-trivial homomorphism $\chi\colon\Fbb^m\to\mu_p$ to derive a contradiction, where $p$ is the characteristic of $\Fbb$.
Let $\xi\colon\Fbb\to\mu_p$ be a non-trivial (group) homomorphism, where we view $\Fbb$ as an additive group.
We define for each $x\in\Fbb^m$ that 
$$
\chi(x)=\xi(x^\top z),
$$
which is a non-trivial homomorphism since $\xi$ is a non-trivial homomorphism and $z\ne0^m$.
Note that for all $y\in T$, we have
$$
\chi(y)=\xi(y^\top z)=\xi(0)=1.
$$
Thus
\begin{align*}
\abs{\E_{y\sim S}\sbra{\chi(y)}}
&=\abs{\E_{y\sim S}\sbra{\chi(y)\cdot\pbra{\indicator_{y\in T}+\indicator_{y\notin T}}}}
=\abs{\E_{y\sim S}\sbra{\indicator_{y\in T}+\chi(y)\cdot\indicator_{y\notin T}}}\\
&\ge\E_{y\sim S}\sbra{\indicator_{y\in T}}-\E_{y\sim S}\sbra{\indicator_{y\notin T}}
=\frac{|T|}{|S|}-\frac{|S|-|T|}{|S|}\\
&>\lambda,
\tag{since $|T|>\frac{1+\lambda}2\cdot|S|$}
\end{align*}
which contradicts \Cref{def:eps_biased_set}.

Now we fix a set of $m$ linearly independent directions $y_1,\ldots,y_m\in T$.
Then we can interpolate $f$ using degree-$d$ polynomials $f|_{\ell_{x,y_1}},\ldots,f|_{\ell_{x,y_m}}$ for $x\in\Fbb^m$.
For concreteness, we apply the invertible linear transform on $\Fbb^m$ such that $y_1,\ldots,y_m$ map to axis parallel directions $e_1,\ldots,e_m$, where $e_i=(\underbrace{0,\ldots,0}_{i-1},1,\underbrace{0,\ldots,0}_{m-i})$.
Let $f'$ be the polynomial after this transform, which shares the same degree with $f$ since the transform is invertible and linear.
Since $f|_{\ell_{x,y_1}},\ldots,f|_{\ell_{x,y_m}}$ are all degree-$d$, we know that $f'$ is degree-$d$ along the axis parallel lines.
Then by polynomial interpolation (see e.g., \cite[Lemma 28]{rubinfeld1996robust}), $f'$ has degree at most $md$, which in turn means that $f$ has degree at most $md$ as claimed.
\end{proof}

\end{document}